\setlist[enumerate,1]{font=\bfseries,label=\arabic*.}
\setlist[enumerate,2]{font=\bfseries,label=(\alph*)}
\numberwithin{equation}{section}
\def\@bibdataout@init{}\def\pre@bibdata{}\makeatother
\colorlet{darkblue}{blue!70!black}
\colorlet{darkgreen}{green!50!black}
\colorlet{darkred}{red!70!black}
\colorlet{lightblue}{blue!20!white}
\colorlet{lightgreen}{green!25!white}
\colorlet{halfblue}{blue!50!white}
\newtheorem{theorem}{Theorem}[section]
\newtheorem{conjecture}[theorem]{Conjecture}
\newtheorem{corollary}[theorem]{Corollary}
\newtheorem{lemma}[theorem]{Lemma}
\newtheorem{proposition}[theorem]{Proposition}
\theoremstyle{definition}
\newenvironment{algorithm}[1]
    {\algo}{\endalgo}
\theoremstyle{remark}
\newtheorem*{remark}{Remark}
\newtheorem*{example}{Example}
\newtheorem*{examples}{Examples}
\newcommand{\Conj}[1]{Conjecture~\ref{#1}}
\newcommand{\Cor}[1]{Corollary~\ref{#1}}
\newcommand{\Fig}[1]{Figure~\ref{#1}}
\newcommand{\Lem}[1]{Lemma~\ref{#1}}
\newcommand{\Prop}[1]{Proposition~\ref{#1}}
\newcommand{\Sec}[1]{Section~\ref{#1}}
\newcommand{\Thm}[1]{Theorem~\ref{#1}}
\newcommand{\Alg}[1]{Algorithm~\ref{#1}}
\newcommand{\equ}[1]{equation~\eqref{#1}}
\newcommand{\Equ}[1]{Equation~\eqref{#1}}
\newcommand{\ie}{\emph{i.e.}}
\newcommand{\Ie}{\emph{I.e.}}
\newcommand{\ccP}{\mathsf{P}}
\newcommand{\BPP}{\mathsf{BPP}}
\newcommand{\BQP}{\mathsf{BQP}}
\newcommand{\NP}{\mathsf{NP}}
\newcommand{\uSVP}{\mathsf{uSVP}}
\newcommand{\sZ}{\mathsf{Z}}
\newcommand{\PSL}{\operatorname{PSL}}
\newcommand{\GL}{\operatorname{GL}}
\newcommand{\poly}{\operatorname{poly}}
\newcommand{\Poly}{\operatorname{Poly}}
\newcommand{\Vol}{\operatorname{Vol}}
\newcommand{\lcm}{\operatorname{lcm}}
\newcommand{\tr}{\operatorname{tr}}
\renewcommand{\Pr}{\operatorname{Pr}}
\newcommand{\Ex}{\operatorname{Ex}}
\renewcommand{\wr}{\operatorname{\,wr\,}}
\providecommand{\abs}[1]{\lvert#1\rvert}
\providecommand{\norm}[1]{\lVert#1\rVert}
\newcommand{\eps}{\epsilon}
\newcommand{\into}{\hookrightarrow}
\newcommand{\normaleq}{\unlhd}
\newcommand{\onto}{\twoheadrightarrow}
\newcommand{\spar}{\shortparallel}
\renewcommand{\setminus}{\smallsetminus}
\renewcommand{\tensor}{\otimes}
\newcommand{\yes}{\text{yes}}
\newcommand{\no}{\text{no}}
\newcommand{\bit}{\text{bit}}
\newcommand{\ceil}[1]{\lceil #1 \rceil}
\newcommand{\ket}[1]{\lvert #1\rangle}
\newcommand{\bra}[1]{\langle #1\rvert}
\newcommand{\braket}[1]{\langle #1\rangle}
\newcommand{\ketbra}[1]{\ket{#1}\bra{#1}}
\newcommand{\Z}{\mathbb{Z}}
\newcommand{\R}{\mathbb{R}}
\newcommand{\C}{\mathbb{C}}
\newcommand{\Q}{\mathbb{Q}}
\newcommand{\hA}{\widehat{A}}
\newcommand{\hB}{\widehat{B}}
\newcommand{\hP}{\widehat{P}}
\newcommand{\cP}{\mathcal{P}}
\newcommand{\cS}{\mathcal{S}}
\newcommand{\cX}{\mathcal{X}}
\newcommand{\cY}{\mathcal{Y}}
\newcommand{\hdelta}{\widehat{\delta}}
\newcommand{\htau}{\widehat{\tau}}
\newcommand{\hphi}{\widehat{\phi}}
\newcommand{\hpsi}{\widehat{\psi}}
\newcommand{\hxi}{\widehat{\xi}}
\newcommand{\tTheta}{\widetilde{\Theta}}
\newcommand{\tY}{\widetilde{Y}}
\newcommand{\tvv}{\widetilde{\vv}}
\newcommand{\tvx}{\widetilde{\vx}}
\newcommand{\tvy}{\widetilde{\vy}}
\newcommand{\tz}{\widetilde{z}}
\newcommand{\va}{\vec{a}}
\newcommand{\vb}{\vec{b}}
\newcommand{\vc}{\vec{c}}
\newcommand{\ve}{\vec{e}}
\newcommand{\vm}{\vec{m}}
\newcommand{\vnabla}{\vec{\nabla}}
\newcommand{\vs}{\vec{s}}
\newcommand{\vt}{\vec{t}}
\newcommand{\vu}{\vec{u}}
\newcommand{\vv}{\vec{v}}
\newcommand{\vw}{\vec{w}}
\newcommand{\vx}{\vec{x}}
\newcommand{\vy}{\vec{y}}
\newcommand{\vz}{\vec{z}}
\def\app#1#2{\mathrel{\setbox0=\hbox{$#1\sim$}%
    \setbox2=\hbox{\rlap{\hbox{$#1\propto$}}\lower1.1\ht0\box0}%
    \raise0.25\ht2\box2}}
\def\approxprop{\mathpalette\app\relax}
\def\api#1#2{\mathrel{%
    \setbox0=\hbox{$#1\sim$}%
    \setbox2=\hbox{\rlap{\hbox{$#1\in$}}\lower1.5\ht0\box0}%
    \raise0.35\ht2\box2}}
\def\approxin{\mathpalette\api\relax}
\newcommand{\defeq}{\stackrel{\text{def}}=}
\newenvironment{eq}[1]{\begin{equation} \label{#1}}
    {\end{equation}\ignorespacesafterend}
\newcommand{\eatline}{\vspace{-\baselineskip}}
\newcommand{\back}{\hspace{-.5em}}
\newcommand{\hback}{\hspace{-.25em}}
\begin{document}
\title{The hidden subgroup problem for infinite groups}
\author{Greg Kuperberg}
\email{greg@math.ucdavis.edu}
\thanks{Partly supported by NSF grants CCF-1716990, CCF-2009029,
    and CCF-2317280.}
\affiliation{University of California, Davis}

\date{\today}

\begin{abstract}
\centerline{\textit{\normalsize Dedicated to Eleanor Rieffel, my
    friend and colleague for 42 years and counting.}}
\vspace{\baselineskip}

Following the example of Shor's algorithm for period-finding in the
integers, we explore the hidden subgroup problem (HSP) for discrete
infinite groups.  On the hardness side, we show that HSP is $\NP$-hard
for the additive group of rational numbers, and for normal subgroups of
non-abelian free groups.  We also indirectly reduce a version of the short
vector problem to HSP in $\Z^k$ with pseudo-polynomial query cost.  On the
algorithm side, we generalize the Shor--Kitaev algorithm for HSP in $\Z^k$
(with standard polynomial query cost) to the case where the hidden subgroup
has deficient rank or equivalently infinite index.  Finally, we outline a
stretched exponential time algorithm for the abelian hidden shift problem
(AHShP), extending prior work of the author as well as Regev and Peikert.
It follows that HSP in any finitely generated, virtually abelian group
also has a stretched exponential time algorithm.
\end{abstract}

\maketitle

\tableofcontents

\section{Introduction}
\label{s:intro}

Some of the most important quantum algorithms are those that solve
rigorously stated computational problems, and that are superpolynomially
faster than classical alternatives.  This includes Shor's famous quantum
algorithm for period finding \cite{Shor:factor}, with the equally famous
applications of integer factorization and discrete logarithm problems.
The hidden subgroup problem has been a popular framework for many such
algorithms \cite{ME:hidden}.  We recall its formulation: If $G$ is a
discrete group and $X$ is an unstructured set, a function $f:G \to X$
\emph{hides} a subgroup $H \subseteq G$ means that $f(x) = f(y)$ if and
only if $y = xh$ for some $h \in H$.  In other words, $f$ hides $H$ when
it is right $H$-periodic and otherwise injective, as summarized in this
commutative diagram:
\begin{eq}{e:hsp}
\begin{tikzpicture}
\draw (-1.25,0) node (G) {$G$};
\draw (1.25,0) node (X) {$X$};
\draw (0,-1.25) node (H) {$G/H$};
\draw[->] (G) -- node[above] {$f$} (X);
\draw[->>] (G) -- (H);
\draw[right hook->] (H) -- (X);
\end{tikzpicture} \end{eq}
Such an $f$ is also called a \emph{hiding function}, while $H$ is called a
\emph{hidden subgroup}.   In an HSP algorithm, $f$ is typically functional
input: It is given by either a subroutine or an oracle that returns
individual values $f(x)$.  Given a group $G$ and given a hiding function
$f$, the \emph{hidden subgroup problem} (HSP) is then the question of
calculating the hidden subgroup $H$.

The computational complexity of HSP depends greatly on the ambient group
$G$.  We may also consider variations in which we promise that $H$ is a
normal subgroup, or that $H$ is drawn from a specific conjugacy class of
subgroups, etc.  The most interesting cases of HSP are those that have
an efficient quantum algorithm, but no efficient classical algorithm.
When $f$ is given by an oracle, there is often an unconditional proof that
HSP is classically intractable.

The most widely discussed case of Shor's algorithm solves HSP when $G =
\Z$, and thus $H = p\Z$ for some period $p$, in quantum polynomial time.
Even though the group $\Z$ is infinite and Shor's algorithm is one motivation
for HSP, many of the results since then have been about HSP for finite
groups \cite[Ch.~VII]{CD:algebraic}.  In this article, we will examine HSP
in several cases when the ambient group $G$ is infinite (and discrete).
We establish quantum algorithms in some key cases, but our more surprising
results are hardness results instead, and we lead with these.

We first consider the additive group $G = \Q$, which is abelian but is
not finitely generated.  To state our result, we simplify the hidden
subgroup problem to a decision problem.  We define the hidden subgroup
\emph{existence} problem (HSEP) to be the yes-no question:  Is the hidden
subgroup $H$ non-trivial?  (In \Sec{ss:hidden} and \Prop{p:hsepnp}, we
will discuss why a natural version of HSEP lies in $\NP$.)

\begin{theorem} Let $\Q$ be the set of rational numbers viewed as a
discrete group under addition.  Then HSEP in the quotient $\Q/\Z$ is
$\NP$-complete, using either an integer factoring oracle, or assuming a
conjecture in number theory (\Conj{c:smoothp1}).  Equivalently, HSP in
$\Q$ is $\NP$-complete if the question is to determine whether the hidden
subgroup strictly contains $\Z$.
\label{th:rational} \end{theorem}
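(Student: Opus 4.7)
The plan has two halves. The $\NP$ membership of HSEP is supplied by \Prop{p:hsepnp}, so I focus on $\NP$-hardness. The equivalence between HSEP in $\Q/\Z$ and the ``$H \supsetneq \Z$'' version of HSP in $\Q$ is immediate from the lattice of subgroups: subgroups of $\Q$ containing $\Z$ biject with subgroups of $\Q/\Z$ via $H \mapsto H/\Z$, and a hiding function for one descends to a hiding function for the other.

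For hardness, I would design a classical polynomial-time reduction from SUBSET SUM to HSEP in $\Q/\Z$. Given an instance $(a_1,\ldots,a_n;t)$, the reduction produces a circuit for a hiding function $f\colon \Q/\Z \to X$ whose hidden subgroup is nontrivial iff some $S \subseteq [n]$ has $\sum_{i\in S} a_i = t$. Using the decomposition $\Q/\Z \cong \bigoplus_p \Z(p^\infty)$, I would concentrate all the nontriviality on a single prime $p$, so that the only candidate hidden subgroup is $\langle 1/p\rangle$. Outside this subgroup I define $f$ to be an injective tagged identity, structurally excluding cross-coset collisions and ensuring that $f$ validly hides either $\{0\}$ or $\langle 1/p\rangle$.

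The prime $p$ must be chosen so that $p-1$ is $\poly(n)$-smooth, which makes polynomial-time manipulation of the subset-sum generating polynomial $\prod_{i=1}^n(1+x^{a_i})$ inside $\Z/p\Z$ feasible via Pohlig--Hellman discrete logarithms and related techniques. Certifying $p-1$ smooth is exactly where the factoring oracle enters; without the oracle, we must appeal to \Conj{c:smoothp1}, which guarantees enough such primes to be located deterministically in polynomial time. With smoothness in hand, $f$ is designed so that the coincidence $f(0) = f(k/p)$ for some $k \neq 0$ occurs exactly when the coefficient $[x^t]\prod_i(1+x^{a_i})$ is nonzero modulo a suitably chosen ideal, which is precisely the SUBSET SUM question.

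The main obstacle is ensuring that the designed collision structure on $\langle 1/p\rangle$ is ``all or nothing''---fully injective when no subset sums to $t$, and fully collapsed when one does---rather than a partial collapse that would not correspond to any subgroup of $\Q/\Z$. A naive evaluation of the generating polynomial at roots of unity does not achieve this dichotomy by itself; one needs to arrange the mod-$p$ arithmetic so that a single subset-sum witness automatically forces collapse across all $p$ cosets of $\langle 1/p\rangle$, via a group-action symmetry baked into the circuit for $f$. Once that is in place, $\NP$-completeness follows by combining the reduction with \Prop{p:hsepnp}.
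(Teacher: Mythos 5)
There is a genuine gap, and it is fatal to the single-prime design rather than just to the ``all-or-nothing'' issue you flag. In your construction the reduction itself chooses one prime $p$ and produces a polynomial-time circuit for $f$ whose hidden subgroup is $\langle 1/p\rangle$ or $\{0\}$ according to whether the SUBSET SUM instance is solvable. But then HSEP for this family is decided by two evaluations: compare $f(0)$ with $f(1/p)$. Since $\langle 1/p\rangle \cong \Z/p$ has no proper nontrivial subgroups, a valid hiding function is either injective or constant on it, so equality of these two values answers the instance. Equivalently, the circuit for $f$ must already ``know'' whether to collapse the cosets, i.e.\ it must decide SUBSET SUM in polynomial time --- which is exactly what the step relying on the coefficient $[x^t]\prod_i(1+x^{a_i})$ would require. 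Smoothness of $p-1$ and Pohlig--Hellman give discrete logarithms in $(\Z/p)^\times$, but they do not let you extract (or test nonvanishing of) a coefficient of an exponential-degree generating polynomial modulo $p$; if that were feasible, your reduction would show SUBSET SUM $\in \ccP$ given a factoring oracle, collapsing the very hardness you are trying to transfer. The ``group-action symmetry baked into the circuit'' cannot repair this, because any symmetry that forces collapse exactly when a witness exists again requires the circuit to detect the existence of a witness.

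The paper's proof avoids this by never concentrating the subgroup on a prime known in advance. An arbitrary $\NP$ witness $y$ is encoded \emph{into} a prime $p$ --- either in the high bits of $p$ via Ingham's theorem (\Thm{th:ingham}), or in the exponents of small primes dividing $p-1$ under \Conj{c:smoothp1} --- and the hidden subgroup is generated by $1$ together with $1/p$ for \emph{every} accepted witness-prime $p$, of which there are exponentially many candidates. The hiding function computes canonical coset representatives by partial fractions (\Lem{l:partial}): on input $a/b$ it factors the denominator (factoring oracle, or \Cor{c:zralek} for the smooth-$p-1$ case), reads the candidate witness off each prime factor, and verifies it with the $\ccP$ predicate --- so evaluating $f$ only ever \emph{verifies} witnesses handed to it through the input, while detecting nontriviality of $H$ requires \emph{finding} a witness-encoding prime. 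That asymmetry is what makes the reduction carry $\NP$-hardness, and note also that smoothness of $p-1$ there serves deterministic factoring, not polynomial arithmetic as in your sketch. Your reduction-target (SUBSET SUM instead of a generic $\NP$ predicate) and the $\Q/\Z$ versus ``$H \supsetneq \Z$'' equivalence are fine; the construction of the hiding function is where the approach breaks.
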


Note that \Thm{th:rational} yields unconditional $\NP$-hardness with
$\BQP$ reduction, since Shor's algorithm can factor integers in quantum
polynomial time.

Our second result is about the normal hidden subgroup problem (NHSP) and
its existence version (NHSEP) in a non-abelian group, meaning HSP with
the extra restriction that the hidden subgroup $H$ is normal.

\begin{theorem} NHSEP in a finitely generated, non-abelian free group $F_k$
is $\NP$-complete.
\label{th:free} \end{theorem}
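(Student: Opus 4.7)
The plan is to prove $\NP$-hardness by reduction from 3-SAT; $\NP$ membership has been handled in \Prop{p:hsepnp} via a pigeonhole argument: a polynomial-time hiding function $f\colon F_k \to X$ produces only polynomially many distinct outputs on inputs of any fixed polynomial length, while $F_k$ (with $k \ge 2$) has exponentially many reduced words of that length, so a non-trivial element of $\ker f$ of polynomial length always exists when $\ker f \ne \{1\}$ and serves as the $\NP$ certificate.

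For $\NP$-hardness in $F_k$, given a 3-SAT instance $\phi$ with $n$ variables, I would construct a family of \emph{signature words} $\{u_\alpha \in F_k : \alpha \in \{0,1\}^n\}$, each cyclically reduced and of polynomial length in $n$, with three properties: (a) $u_\alpha$ is polynomial-time computable from $\alpha$; (b) from any subword of $u_\alpha$ of length at least $|u_\alpha|/6$ one can recover $\alpha$ in polynomial time; and (c) the full family satisfies Tartakovskii's $C'(1/6)$ small-cancellation condition, so every subfamily automatically does as well. Setting $R_\phi = \{u_\alpha : \alpha \models \phi\}$ and $H_\phi = \langle\langle R_\phi \rangle\rangle \normaleq F_k$, the hidden normal subgroup satisfies $H_\phi \ne \{1\}$ iff $R_\phi \ne \emptyset$ iff $\phi$ is satisfiable.

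The hiding function $f_\phi(w)$ returns the shortlex canonical representative of $w$ in $F_k/H_\phi$. This is polynomial time because $C'(1/6)$ makes $F_k/H_\phi$ hyperbolic, hence biautomatic, with a polynomial-time word problem via Dehn's algorithm; property (b) removes the apparent bottleneck of scanning an exponentially large $R_\phi$: whenever a candidate subword of $w$ could be more than half of a cyclic conjugate of some $u_\alpha^{\pm 1}$, we decode a purported $\alpha$ from it in polynomial time and then verify $\alpha \models \phi$ directly, so $R_\phi$ is never enumerated. Thus $f_\phi$ is a polynomial-time hiding function whose hidden normal subgroup is non-trivial precisely when $\phi$ is satisfiable, completing the reduction.

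The main obstacle is constructing the signature family satisfying (a)--(c) explicitly. A uniformly random family of length-$\Theta(n)$ words over the generators of $F_k$ satisfies $C'(1/6)$ with overwhelming probability, which suggests derandomization via a small-bias generator; alternatively, one can encode each $\alpha$ as a word whose syllables all carry both a position tag and a spread-out fragment of $\alpha$, forcing any long common subword between $u_\alpha$ and a cyclic conjugate of $u_{\alpha'}^{\pm 1}$ (for $\alpha \ne \alpha'$) to reveal a discrepant bit, after which the $C'(1/6)$ condition becomes a finite syllable-overlap check in $F_k$.
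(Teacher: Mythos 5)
Your overall architecture is the same as the paper's: encode $\NP$ certificates as cyclically reduced relators forming a $C'(1/6)$ family in which any subword longer than $1/6$ of a relator decodes the certificate, let the hidden normal subgroup be the normal closure of the accepted relators, and take the hiding function to be the shortlex canonical representative of the coset, computed with only predicate access to the relator set. (The paper's signature words are simply $r_y = y(a_1,b_1)y(a_2,b_2)\cdots y(a_7,b_7)$ in $F_{14}$, seven copies of the certificate written in disjoint letter pairs, followed by an $18$-letter substitution to handle general $F_k$, $k\ge 2$; so the construction you defer as ``the main obstacle'' has a short explicit solution, and no derandomization is needed.)

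The genuine gap is your justification that the hiding function is polynomial-time computable: you assert this ``because $C'(1/6)$ makes $F_k/H_\phi$ hyperbolic, hence biautomatic, with a polynomial-time word problem via Dehn's algorithm.'' This conflates the word problem with computing a canonical form. Dehn's algorithm decides triviality, but Dehn-reduced words are neither unique nor necessarily geodesic (the paper gives an explicit small-cancellation example of Dehn-reduced non-geodesic words), so it does not by itself produce shortlex representatives. The hyperbolic-implies-automatic route gives, for a \emph{fixed finite} presentation, an algorithm polynomial in the word length only, with constants and an automatic structure computed from the full relator list; here the presentation has up to $2^n$ relators available only through the predicate, so that machinery is neither uniform in the presentation nor usable under query access. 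This is exactly the content the paper has to supply: \Thm{th:canonical} (a polynomial-time construction, via Greendlinger's lemma, thinness, merging and twisting of thin diagrams, of a single thin diagram containing every geodesic for $[w]$) and \Thm{th:shortlex} (a generalized Dijkstra search of that diagram for the shortlex geodesic), refined so that every relator a face could use is determined by a subword of length $>\abs{r}/6$ and hence recoverable from the predicate. Your property (b) anticipates the $1/6$ decoding threshold, but your algorithmic sketch only invokes it for Dehn moves (overlaps $>\abs{r}/2$); without an argument replacing \Thm{th:canonical} and \Thm{th:shortlex}, the claim that $f_\phi$ is computable in polynomial time is unsupported. A lesser issue: your pigeonhole argument for $\NP$ membership is wrong as stated (a polynomial-time hiding function can be injective on exponentially many words of a given length, e.g.\ when the hidden subgroup is trivial); membership instead follows from \Prop{p:hsepnp}, with the complexity parameterized by the bit length of a nontrivial element of $H$, which in your reduction is polynomial because the relators $u_\alpha$ themselves have polynomial length.
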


We contrast \Thm{th:free} with several results for finite groups.  NHSP in
a finite group $G$ has a polynomial-time quantum algorithm whenever
the quantum Fourier transform on $G$ can be computed in polynomial time
\cite{HRT:normal}.  For this reason and others, NHSP is considered easier
than the general HSP, although general HSP does have polynomial quantum
query complexity for all finite $G$ \cite{EHK:hsp}.  By contrast,
the proofs of Theorems~\ref{th:rational} and \ref{th:free} both relativize
(see \Sec{ss:complex}).  As a corollary, both of these infinite HSP problems
have high query complexity.

\begin{corollary} HSEP in $\Q/\Z$ and NHSEP in a non-abelian free group
$F_k$ both have quantum query complexity $2^{\Theta(n)}$, where $n$ is
the minimum bit complexity to describe a hidden subgroup.
\label{c:expquery} \end{corollary}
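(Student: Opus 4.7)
The upper bound $Q = 2^{O(n)}$ is immediate. A hidden subgroup $H$ of bit complexity $n$ belongs to a set of at most $2^{O(n)}$ candidates, and each candidate can be distinguished from the trivial subgroup by evaluating $f$ on a bounded collection of group elements of bit complexity $O(n)$. So the classical (and hence quantum) query complexity is already $2^{O(n)}$.

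For the matching lower bound I would appeal to the relativization of the proofs of \Thm{th:rational} and \Thm{th:free}. Each of those proofs takes an instance $\phi$ of an $\NP$-complete problem on $m$ bits (\eg{} a SAT formula on $m$ variables) and constructs a hiding function $f_\phi$ --- either on $\Q/\Z$ or on a non-abelian free group $F_k$ --- such that $f_\phi$ has a non-trivial hidden subgroup iff $\phi$ is a yes-instance, with a subgroup of bit complexity $n = O(m)$. The reductions relativize in the sense that $f_\phi$ accesses $\phi$ only as a black box: each value of $f_\phi$ can be computed with $\poly(m)$ queries to $\phi$, and the construction continues to yield a valid hiding function when $\phi$ is replaced by an arbitrary oracle boolean function.

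Given such a relativizing reduction, any HSEP (respectively NHSEP) algorithm with quantum query complexity $Q$ composes with the reduction to give a quantum algorithm that decides whether an oracle boolean function $\phi\colon\{0,1\}^m\to\{0,1\}$ has a satisfying assignment, using only $Q \cdot \poly(m)$ queries to $\phi$. Unstructured search on $m$-bit inputs requires $\Omega(2^{m/2})$ quantum queries by the Bennett--Bernstein--Brassard--Vazirani bound, so $Q \cdot \poly(m) \geq 2^{\Omega(m)}$. Since $n = O(m)$, this forces $Q = 2^{\Omega(n)}$, and combining the two bounds gives $Q = 2^{\Theta(n)}$.

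The main obstacle is the relativization check: one must inspect the two $\NP$-hardness reductions and confirm that the hiding functions they produce query their input formula only as an oracle, that the per-value overhead is polynomial in $m$, and that the resulting hidden subgroup has bit complexity linear in $m$. For \Thm{th:rational} this should follow from the structure of the rational-encoding of the SAT instance, and for \Thm{th:free} from the structure of the normal closure used in the free-group construction; in both cases these are routine but essential bookkeeping items about the specific constructions. Once they are confirmed, the $2^{\Omega(n)}$ query complexity lower bound follows from the quantum search lower bound as above.
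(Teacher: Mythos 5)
Your proposal is correct and follows essentially the same route as the paper: both establish the lower bound by observing that the reductions in Theorems~\ref{th:rational} and \ref{th:free} relativize (the hiding function queries the predicate only as a black box, with the hidden subgroup's bit complexity linear in the certificate length), and then invoke the optimality of Grover search (BBBV/Zalka) for the resulting oracle search problem, with the trivial/Grover upper bound giving $2^{O(n)}$. The "bookkeeping" items you flag are exactly the points the paper records explicitly, namely that Ingham's theorem makes every sufficiently long bit string reachable as a certificate-carrying prime in the $\Q/\Z$ construction, and that the integer factoring used there affects time but not query cost.
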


The abelianization of the free group $F_k$ is the free abelian group
$\Z^k$, but the abelian analogue of \Thm{th:free} is not the same HSP
that is solved by any version of Shor's algorithm.  The reason is that the
length of a query $w \in F_k$ in \Thm{th:free} is its word length, which
is analogous to the 1-norm $\norm{\vv}_1$ of $\vv \in \Z^k$ rather than
the bit complexity $\norm{\vv}_\bit$ of $\vv$.  In other words, the query
complexity is pseudo-polynomial (polynomial in $\norm{\vv}_1$) rather
than polynomial (in $\norm{\vv}_\bit$). We can also obtain a hardness
result for this alternate HSP for $\Z^k$.  The same pseudo-polynomial
query cost model arises in a quantum algorithm due to Childs, Jao, and
Soukharev \cite{CJS:isogenies} that identifies isogenies between ordinary
elliptic curves.  (However, the hidden subgroups that they consider are
different from the ones in the hardness reduction in \Thm{th:svp} below.)

\begin{theorem} Let $f:\Z^k \to X$ be a hiding function with
pseudo-polynomial query cost, and suppose that $f$ hides a subgroup $H =
\braket{\vv} \subseteq \Z^k$ which is generated by a single vector $\vv$.
Suppose that there is a (quantum) algorithm to find $H$ in polynomial
time, uniformly for all $k$ and $f$.  Then the unique short vector problem
$\uSVP_{a,\Poly(a,k)}$ for an integer lattice $L \subseteq \Z^k$ has a
(quantum) algorithm that runs in time $\poly(a,k,\norm{L}_\bit)$.
\label{th:svp} \end{theorem}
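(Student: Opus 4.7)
The plan is to reduce $\uSVP_{a,\Poly(a,k)}$ to HSP in $\Z^k$ by supplying the hypothesized algorithm with an oracle $f$ that, while not a globally valid hiding function for a cyclic subgroup, agrees with a genuine $\braket{\vv^*}$-hiding function on all short inputs. Since the HSP algorithm's pseudo-polynomial query bound confines its oracle accesses to short inputs, it cannot distinguish the two, and must therefore output a generator of $\braket{\vv^*}$.

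Given a basis for the input lattice $L \subseteq \Z^k$, define
\[ f : \Z^k \to \Z^k/L, \qquad f(\vx) = \vx + L, \]
computed, \eg, via a Hermite normal form of $L$ in time $\poly(\norm{L}_\bit, \norm{\vx}_\bit)$, which is well within the pseudo-polynomial budget. Let $f' : \Z^k \to \Z^k/\braket{\vv^*}$ denote the honest hiding function $f'(\vx) = \vx + \braket{\vv^*}$, which we cannot compute because $\vv^*$ is unknown. The core claim is that $f$ and $f'$ induce the same equivalence relation on short inputs.

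Precisely, let $B(a,k) = \poly(a,k)$ be the worst-case query radius of the HSP algorithm under its pseudo-polynomial cost model; by the uniformity clause, this polynomial is fixed before the uSVP instance is drawn. Choose the uSVP gap polynomial so that $\lambda_2(L) \geq \Poly(a,k) > 2B(a,k)$. Then for $\vx,\vx' \in \Z^k$ with $\norm{\vx}_1,\norm{\vx'}_1 \leq B$,
\[ f(\vx) = f(\vx') \iff \vx - \vx' \in \braket{\vv^*}. \]
The ``$\Leftarrow$'' direction uses $\braket{\vv^*} \subseteq L$. For ``$\Rightarrow$'', $f(\vx)=f(\vx')$ forces $\vx-\vx' \in L$ with $\norm{\vx-\vx'}_2 \leq \norm{\vx-\vx'}_1 \leq 2B < \lambda_2(L)$, so $\vx - \vx'$ is linearly dependent on $\vv^*$ over $\R$; and since $\vv^*$ is primitive in $L$ (being shortest, else $\vv^*/d \in L$ would be shorter), the intersection $L \cap \R\vv^*$ equals $\braket{\vv^*}$, so $\vx - \vx' \in \braket{\vv^*}$.

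The HSP algorithm therefore runs identically on $f$ and on the hypothetical $f'$, and returns a generator of $\braket{\vv^*}$, \ie $\pm\vv^*$. The total runtime is $\poly(a,k,\norm{L}_\bit)$: HNF preprocessing is polynomial in $\norm{L}_\bit$, each oracle call is polynomial in $\norm{L}_\bit$ and $\norm{\vx}_\bit$, and the HSP algorithm is polynomial in its advertised parameters. The main obstacle is the coupled quantitative choice of the two polynomial budgets, $B(a,k)$ and the uSVP gap $\Poly(a,k)$; the uniformity of the HSP algorithm over $k$ and $f$ is what lets us commit to $B$ before picking the uSVP gap polynomial large enough to dominate $2B$, so that the norm-separation between $\vv^*$ and the rest of $L$ is wide enough for the local and global hiding functions to agree on every query the algorithm actually makes.
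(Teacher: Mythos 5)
Your proposal is correct and follows essentially the same route as the paper: compute canonical coset representatives of $L$ via Hermite normal form to get a polynomial-time function $f$ hiding $L$, and use the uSVP promise gap (chosen larger than twice the pseudo-polynomially bounded query radius) so that on every query the algorithm can afford, $f$ is indistinguishable from a genuine hiding function for the discrete line $\braket{\vv^*}$, forcing the hypothesized HSP algorithm to output $\pm\vv^*$. The one small repair is that the comparison oracle should not be $f'(\vx)=\vx+\braket{\vv^*}$ itself, whose output values differ from those of $f$ even on short inputs, but rather the $\braket{\vv^*}$-periodic extension of $f$ restricted to the query ball (with arbitrary fresh values elsewhere), which is exactly what your equivalence-relation claim shows is well defined and is how the paper arranges for the two oracles to agree literally on all reachable queries.
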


In \Sec{ss:learning}, we outline another reduction from HSP on $\Z^k$ with
pseudo-polynomial query cost to identify a lattice from samples with
error.  The two reductions together are similar to Regev's quantum reduction
from short vector problems to learning with errors (LWE) \cite{Regev:lwe}.
However, Regev's reduction is much deeper than ours.

Turning to positive results, it has sometimes been said that HSP is solved
for finitely generated abelian groups $G$ \cite{CD:algebraic,ME:hidden}.
However, the well-known algorithm due to Shor and Kitaev for this problem
\cite{Kitaev:stab} assumes that the hidden subgroup $H$ has finite index
in $G$.  We give a solution in the trickier case when $H$ may have infinite
index or equivalently lower rank.

\begin{theorem} HSP in $\Z^k$ (with polynomial query cost for the hiding
function) can be solved for arbitrary hidden subgroups uniformly in quantum
polynomial time $\poly(k,n)$, where $k$ is the dimension and $n$ is the
bit complexity of the answer.
\label{th:abelian} \end{theorem}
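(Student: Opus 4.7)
The plan is to reduce the arbitrary-rank case to the finite-index case that Shor--Kitaev already handles. Let $H \subseteq \Z^k$ be the hidden subgroup, of unknown rank $r$, and let $n$ bound the bit complexity of the answer, so that $H$ admits a generating set of size $r \le k$ with entries of absolute value at most $2^n$. The only geometric obstruction to running Shor--Kitaev directly is that $\Z^k/H$ is infinite when $r < k$; equivalently, the real span $V = \operatorname{span}_\R H \subseteq \R^k$ is a proper rational subspace. Once $V$ is known, we classically compute a basis of the rank-$r$ saturation $V \cap \Z^k \supseteq H$ via Hermite normal form, which yields an isomorphism $\phi\colon V \cap \Z^k \to \Z^r$ of $\Z$-modules. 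Then $f \circ \phi^{-1}$ is a hiding function on $\Z^r$ for the finite-index subgroup $\phi(H)$, and the standard Shor--Kitaev algorithm resolves it in $\poly(k,n)$ quantum time.

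The new ingredient is thus a quantum subroutine that identifies $V$. Following the Shor--Kitaev template I prepare a uniform superposition over the box $[-N,N]^k \cap \Z^k$ for $N = 2^{\poly(k,n)}$, apply the oracle, discard the function register, and perform an approximate QFT over $\Z_M^k$ with $M$ a comparable power of two. Standard Poisson-summation estimates show that each measurement outcome $y/M \in T^k = \R^k/\Z^k$ lies within $O(1/N)$ of the dual subgroup
\[
H^\perp = \{ t \in T^k : \langle t, h \rangle \in \Z \text{ for all } h \in H \},
\]
with a distribution close to Haar measure on $H^\perp$. Now $H^\perp$ is a closed subgroup of $T^k$ of dimension $k-r$; its identity component is the image of $V^\perp \subseteq \R^k$, and the finitely many other components are translates by torsion representatives of $\Z^k/H$. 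Lifting each sample to its canonical representative in $[-\tfrac12,\tfrac12)^k$, taking pairwise differences to eliminate the torsion shifts, and applying LLL to the resulting matrix extracts short integer null vectors generating $V^\perp \cap \Z^k$; from this $V$ is recovered by linear algebra.

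The main obstacle is the quantitative tuning of $N$, $M$, and the LLL precision so that the procedure is correct in $\poly(k,n)$ time. The key estimate is that an integer basis of $V^\perp \cap \Z^k$ has entries of bit size $\poly(k,n)$, by standard Hermite normal form bounds applied to any integer generating set of $H$; consequently $N = 2^{\poly(k,n)}$ resolves $V^\perp$ against the discrete component representatives and places LLL's short-vector output in its correct uniqueness regime. The rank $r$ itself is read off a posteriori from the dimension of the span of the lifted samples. Polynomially many samples suffice because the Haar measure on $H^\perp$ is supported on a component of positive dimension $k-r$, so each fresh sample reduces the residual uncertainty about $V^\perp$ by a constant factor until a spanning set is obtained.
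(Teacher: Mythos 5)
Your overall plan coincides with the paper's: Fourier-sample to get points of the torus near the dual group, determine the real span $V = H_\R$, saturate to $H_1 = V \cap \Z^k$, and finish with Shor--Kitaev on the finite-index case. The gap is in the one step that is genuinely new in this theorem, namely extracting $V^\perp$ from the samples, and as written that step does not work. A measured point $\vy_1$ lies near $H^\#$, and its canonical lift to $[-\tfrac12,\tfrac12)^k$ lies near $H^\bullet = H^\circ \oplus H_\R^\perp$, where $H^\circ \subseteq H_\R$ is the reciprocal lattice of $H$; the lift generically has a \emph{nonzero} component along $V$, both because $H^\#$ has up to $[H_1:H] = \exp(\poly(n))$ connected components and because even a point of the identity component lifts into $H_\R^\perp + \Z^k$, whose orthogonal projection to $V$ is a (typically nonzero) element of $H_1^\circ$ coming from the wrap-around. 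Consequently: pairwise differences do not ``eliminate the torsion shifts'' (a difference of samples on different components lies on yet another component, and a same-component collision among polynomially many samples is exponentially unlikely when $[H_1:H]$ is exponential); the real span of the lifted samples (or their differences) is generically all of $\R^k$, so the rank cannot be read off from that span; and ``applying LLL to the resulting matrix'' is not a well-posed lattice problem whose output would be integer generators of $V^\perp \cap \Z^k$ --- what you have are real vectors known to exponential precision, not integer relations, and the set they generate in $\R^k$ is not discrete.

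The missing idea is exactly the paper's core contribution: from a sample $\vy_1$ one must find integer multiples $\lambda\vy_1$, with $\lambda$ as large as $\exp(\poly(n))$, whose distance to $\vec0$ in the torus is below the feature length of $H^\#$ (bounded below by $1/\Delta \ge 2^{-O(n)}$), because only then is the short lift forced to lie, up to noise, in $H_\R^\perp$. Since one cannot search over exponentially many $\lambda$, the paper encodes the search as lattice reduction on the $(k+1)$-dimensional lattice generated by $\ve_1,\dots,\ve_k$ and the flattened vector $(\tvy_1,t)$, and it needs two ingredients absent from your sketch: an effective equidistribution (dense-orbit/Weyl) lemma guaranteeing that suitable $\lambda \le \Lambda = \exp(\poly(n))$ exist, so that LLL's $2^{k}$ approximation loss still lands the short basis vectors below the feature length, plus a denoising step (reduced column echelon form and continued fractions with denominator bound $\Delta < 2^{2n}$) that converts the approximate basis of $H_\R^\perp \oplus \R$ into exact rational data identifying $V$. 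A dual formulation --- finding integer vectors $\vx$ with $\vx\cdot\vy_1^{(i)} \approx 0 \pmod 1$, which would recover $H$ directly --- could plausibly also be made to work in this exponential-precision regime, but it likewise requires an explicit lattice construction and an error/denominator analysis that your proposal does not supply. (A lesser issue: with a uniform box rather than a Gaussian initial state, the Fourier tails decay only polynomially, so the claimed $O(1/N)$ concentration also needs a more careful argument; the paper uses the Gaussian state precisely to make these estimates clean.)
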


Our proof of \Thm{th:abelian} begins with the same steps as other Shor-type
algorithms.  We apply the hiding function $f$ to a superposition of a finite
truncation of $\Z^k$.  Ignoring the output register, we then apply a QFT to
the input register and measure a Fourier mode.   In Shor's algorithm and in
Kitaev's generalization, the Fourier mode is deciphered using continued
fractions to recognize rational values from their digit expansions.
Deciphering the Fourier mode in \Thm{th:abelian} is more difficult than
in the Shor--Kitaev algorithm.  Our solution uses the LLL lattice reduction
algorithm \cite{LLL:factoring}, facts about generalized Fourier transforms
(or Pontryagin duality), and various error estimates.

The time complexity in \Thm{th:abelian} is uniformly polynomial in the
dimension $k$ along with other parameters.  Implicitly, $k$ is given
in unary, which is consistent with the standard notation that requires
one or more bits for each coordinate of a vector in $\Z^k$, including for
vanishing coordinates.  We can also consider compressed notation for sparse
vectors in an abelian group with exponential rank $k$.  As discussed in
\Sec{ss:simon}, HSP even in the simplest case $(\Z/2)^k$ becomes $\NP$-hard
if $k$ is exponential and vectors are given in sparse notation.

\Sec{ss:simon} also defines sparse and dense encodings of elements of
$\Z^\infty$ (the group of infinite sequences of integers with finite
support).  We can restate Theorems~\ref{th:svp} and \ref{th:abelian},
together with the proof of \Thm{th:simon}, as results about $\Z^\infty$
using different possible encodings of its elements.

\begin{corollary} Consider HSP in the group $\Z^\infty$ with a finitely
generated hidden subgroup $H \subseteq \Z^\infty$ whose generator matrix has
bit complexity $n$.  If $\Z^\infty$ is given with dense encoding of vectors
and binary encoding of coefficients, then HSP can be solved in quantum
time $\poly(n)$.  If $\Z^\infty$ is given with dense encoding of vectors and
unary encoding of coefficients, then HSP is $\uSVP_{a,\Poly(a,k)}$-hard. If
$\Z^\infty$ is given with sparse encoding of vectors and either unary or
binary encoding of coefficients, then HSP is $\NP$-hard.
\label{c:infabel} \end{corollary}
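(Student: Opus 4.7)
The plan is to obtain each of the three claims as a translation of an earlier result under the corresponding encoding of $\Z^\infty$. For the first claim, I would argue that a dense binary encoding forces every coordinate index that appears in the generator matrix to consume at least one bit; so if the matrix has bit complexity $n$, then $H$ lies in some $\Z^k \subseteq \Z^\infty$ with $k \le n$, and the hiding function restricts to a polynomial-query-cost hiding function on this $\Z^k$. Applying \Thm{th:abelian} would then yield a quantum algorithm running in time $\poly(k,n) = \poly(n)$.

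For the second claim, note that unary encoding of coefficients makes the bit size of a query $\vv \in \Z^k \hookrightarrow \Z^\infty$ equal to $\Theta(k + \norm{\vv}_1)$, which is precisely the pseudo-polynomial query cost model of \Thm{th:svp}. The plan is to check that any putative uniform polynomial-time algorithm in this setting solves singly generated HSP in $\Z^k$ with pseudo-polynomial queries in time $\poly(k,\norm{\vv}_1)$; \Thm{th:svp} then converts any such algorithm into a $\poly(a,k,\norm{L}_\bit)$-time solver for $\uSVP_{a,\Poly(a,k)}$, giving the claimed hardness.

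For the third claim, sparse encoding admits an effective dimension $k$ exponential in the input size, which is the regime in which \Sec{ss:simon} establishes $\NP$-hardness of HSP in $(\Z/2)^k$. I would transfer this hardness to $\Z^\infty$ by composing the $(\Z/2)^k$ hiding function with coordinatewise reduction mod $2$ on the relevant coordinates, producing a hiding function $f' : \Z^\infty \to X$ whose hidden subgroup $H'$ is determined by $H$. The goal is then to show that $H'$ still admits a sparse finite generating set, namely the sparse lifts of the generators of $H$ together with a small collection of $2\ve_i$ covering the coordinates touched by the reduction, so that the sparse encoding of the instance remains polynomial in size; this gives $\NP$-hardness in both the unary and binary coefficient variants.

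The main step requiring care is in the third claim: I must ensure that the set of coordinates actually activated by the reduction is polynomial in size, so that the lift does not need exponentially many extra $2\ve_i$ generators. Provided the Simon-type hardness reduction of \Sec{ss:simon} is arranged so that only polynomially many coordinates appear in its generators of $H$, this check is automatic. The first two claims, by contrast, reduce to bookkeeping identifications between encoding conventions and the query-cost models of Theorems~\ref{th:abelian} and \ref{th:svp}.
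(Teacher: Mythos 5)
Your first two claims are argued essentially as the paper does: with dense vectors and binary coefficients, a generator matrix of bit complexity $n$ forces $H \subseteq \Z^{\poly(n)}$, the hiding function restricts to a hiding function on that $\Z^k$, and \Thm{th:abelian} applies; with dense vectors and unary coefficients, the input length of a query is $\Theta(\norm{\vv}_1 + k)$, which reproduces the pseudo-polynomial query-cost model of \Thm{th:svp} exactly as discussed in \Sec{ss:complex}. The third claim, however, has a genuine gap as you propose to prove it. If you pull back the Simon-type hiding function $f$ of \Sec{ss:simon} through coordinatewise reduction mod $2$, setting $f'(\vv) = f(\vv \bmod 2)$, then $f'$ is invariant under adding $2\ve_i$ for \emph{every} coordinate $i$ on which the reduction acts, so the hidden subgroup of $f'$ is the full preimage $\pi^{-1}(H)$ and contains all of those $2\ve_i$. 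You cannot confine the reduction to polynomially many coordinates: the point of the construction in \Sec{ss:simon} is that the certificate index $y$ ranges over exponentially many coordinates and the hiding function must be queryable at $\ve_y$ for every candidate $y$; reducing mod $2$ only on the coordinates that actually appear in generators of $H$ would require knowing which certificates are accepted, which is precisely the $\NP$-hard information. Consequently $\pi^{-1}(H)$ is either not finitely generated (if you reduce on all of $\Z^\infty$) or requires exponentially many generators $2\ve_y$; it is always non-trivial, so the existence version of the problem becomes vacuous; and its generator matrix has exponential bit complexity, so instances of this form establish nothing under the paper's parametrization by the bit complexity $n$ of the answer.

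The repair is to drop the mod-$2$ pullback and run the construction of \Thm{th:simon} directly over $\Z$, which is what the paper does: given a sparse vector $\vv = c_1\ve_{y_1} + \cdots + c_\ell\ve_{y_\ell}$, the hiding function deletes exactly those terms whose index $y_j$ is an accepted certificate (equivalently, it outputs the canonical representative of the coset of $\vv$ modulo the subgroup generated by the accepted $\ve_y$). This hides the subgroup of $\Z^\infty$ generated by $\{\ve_y : z(x,y) = \yes\}$, which is non-trivial if and only if $d(x) = \yes$, whose non-trivial witnesses have polynomial sparse bit complexity, and which works verbatim whether the coefficients $c_j$ are written in unary or binary.
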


The positive case of \Cor{c:infabel} is equivalent to \Thm{th:abelian},
given the observation that $H \subseteq \Z^{\poly(n)}$.  The negative
case is similarly equivalent to \Thm{th:svp}.

We also state a corollary of \Thm{th:abelian} which applies to any fixed
finitely generated abelian group $G$.  In other words, it is a solution
to the abelian hidden subgroup problem (AHSP) in the finitely-generated case.

\begin{corollary} If $G$ is a finitely generated abelian group, then
HSP in $G$ for an arbitrary hidden subgroup $H$ can be solved in quantum
polynomial time in the bit complexity of the answer.
\label{c:fgabel} \end{corollary}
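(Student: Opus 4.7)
The plan is to reduce to \Thm{th:abelian} via the structure theorem for finitely generated abelian groups. Concretely, write
\[ G \;\cong\; \Z^k \oplus \Z/n_1 \oplus \cdots \oplus \Z/n_m, \]
so that $G$ is a quotient $\pi : \Z^{k+m} \onto G$ whose kernel is the subgroup $K = \{0\}^k \oplus n_1\Z \oplus \cdots \oplus n_m\Z$. Given a hiding function $f : G \to X$ for an unknown subgroup $H \subseteq G$, I would simply lift it to $\htau = f \circ \pi : \Z^{k+m} \to X$. Because $\pi$ is a surjective homomorphism, $\htau$ hides the subgroup $\widetilde H = \pi^{-1}(H) \subseteq \Z^{k+m}$, which has finite index in $\Z^{k+m}$ exactly along the torsion directions (since $K \subseteq \widetilde H$), but can otherwise have arbitrary rank.

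Next I would invoke \Thm{th:abelian} on the input $\htau$, which outputs a generator matrix for $\widetilde H$ in quantum time $\poly(k+m, \tilde n)$, where $\tilde n$ is the bit complexity of $\widetilde H$. From a generator matrix for $\widetilde H$ one recovers $H = \pi(\widetilde H)$ by pushing the generators through $\pi$ and, if a canonical form is desired, computing the Smith or Hermite normal form of the resulting matrix over $\Z$ modulo $K$; this is a classical polynomial-time step.

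The only real thing to check is that $\tilde n$ is polynomial in the bit complexities $|G|_\bit$ and $|H|_\bit$ of the input data and the answer. One direction is obvious: $K$ is contained in $\widetilde H$ and $K$ has bit complexity $\sum_i \log n_i = O(|G|_\bit)$. For the rest, any generator matrix for $H$, regarded via arbitrary lifts of its entries from $\Z/n_i$ to $\{0,1,\dots,n_i-1\}$, augmented by the generators $n_i \vec e_{k+i}$ of $K$, gives a generator matrix for $\widetilde H$ of bit complexity $O(|G|_\bit + |H|_\bit)$. Thus $\tilde n = \poly(|G|_\bit, |H|_\bit)$, and the total running time is polynomial in the bit complexity of $G$ together with the bit complexity of the answer, as claimed.

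The main obstacle I would expect is purely bookkeeping: producing a canonical presentation of $H$ from the (non-unique) presentation of $\widetilde H$ returned by \Thm{th:abelian}, and confirming that the bit complexity of $\widetilde H$ does not blow up beyond a polynomial in $|G|_\bit + |H|_\bit$. Both are standard facts about Hermite/Smith normal form over $\Z$, so no new ideas beyond \Thm{th:abelian} are required.
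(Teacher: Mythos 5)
Your proposal is correct and is essentially the paper's own argument: the paper likewise pulls the hiding function back along a surjection $h:\Z^k \onto G$ (using an arbitrary finite generating set with canonical compressed words rather than an explicit structure-theorem decomposition) and invokes \Thm{th:abelian} on $f \circ h$. Your additional bookkeeping about recovering $H = \pi(\widetilde H)$ and bounding the bit complexity of $\pi^{-1}(H)$ is sound, just spelled out more explicitly than in the paper.
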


\Prop{p:virtual} in \Sec{ss:explicit} describes the relevant encoding of
elements of $G$ for \Cor{c:fgabel}, in the greater generality of finitely
generated, virtually abelian groups.  Briefly, if a finitely generated
group $G$ has an efficient algorithm to compute canonical compressed words,
then those words work as a standard encoding of elements of $G$.   If $G$
is abelian, then this amounts to describing every $g \in G$ as an ordered
product of generators of $G$ with the exponents in binary.  See also
\Sec{ss:hidden} for a brief proof of \Cor{c:fgabel} given \Thm{th:abelian}.

We conclude with a sketch of a result about the abelian hidden shift problem
(AHShP).  See \Sec{ss:hidden} for a definition of the hidden shift problem
(HShP) and a discussion of its relation to HSP.

\begin{theorem}[Outlined] Let $H \subseteq \Z^k$ be a visible subgroup
of any rank with a generator matrix $M$, and let $h = \norm{M}_\bit$.
Consider AHShP in the quotient group $\Z^k/H$ with a hidden shift $\vs$,
$n \ge \norm{\vs}_\bit$, and assume that $n = \Omega((k+\log(h))^2)$.
Then there is a quantum algorithm for this problem with quantum time
complexity $2^{O(\sqrt{n})}$, quantum space complexity $\poly(n)$, and
classical space complexity $2^{O(\sqrt{n})}$.  The hiding function query
cost may be polynomial or may be as high as $2^{O(\sqrt{n})}$.
\label{th:shift} \end{theorem}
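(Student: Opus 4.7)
The plan is to reduce the problem to the abelian hidden shift problem (AHShP) on a finite abelian group, and then invoke a Kuperberg--Regev sieve adapted to that setting.  The first step is to compute the Smith normal form of the generator matrix $M$ of $H$ in classical time $\poly(k,h)$, obtaining an isomorphism
\[
    \Z^k / H \cong \Z^{k-r} \oplus \bigoplus_{i=1}^r \Z/d_i\Z,
\]
where $r$ is the rank of $H$ and $d_i$ are the invariant factors.  The hidden shift splits accordingly as $\vs = (\vs_\infty, s_1, \ldots, s_r)$.  Because $\norm{\vs}_\bit \le n$, every coordinate of $\vs_\infty$ has absolute value at most $2^n$, so I can truncate each $\Z$ factor to $\Z/D\Z$ with $D = 2^{\Theta(n)}$ without changing the hidden shift.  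This produces an equivalent AHShP instance on a finite abelian group $A$ with $\log \abs{A} = O(n)$, using the hypothesis $n = \Omega((k+\log h)^2)$ to absorb the contribution of the finite factors.

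Next, standard Fourier sampling on the hiding function produces phase states of the form
\[
    \ket{\chi}_{\mathrm{label}} \otimes \bigl(\ket{0} + e^{2\pi i \chi(\vs)}\ket{1}\bigr),
\]
where $\chi \in \hat A$ is a character drawn from a distribution close to uniform.  The sieve then combines such phase states in pairs: measuring the appropriate qubit of the tensor product of two phase states produces a new phase state whose character is $\chi \pm \chi'$, at the cost of a constant factor in success probability.  After $O(\sqrt{n})$ levels of combination in a carefully scheduled tree (either Kuperberg's original exponential-space version, or Regev's polynomial-quantum-space variant), one obtains a phase state whose character is supported on a single coordinate and reveals one bit of $\vs$ in that coordinate.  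Iterating over all $O(n)$ bits of $\vs$ then recovers the full shift, in total quantum time $2^{O(\sqrt{n})}$ and with the space profile claimed in the statement.

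The main obstacle is extending the sieve's analysis from a single large cyclic group to the product $A$, whose factors may have very different orders and which includes the truncations of the $\Z$ factors.  My approach is to run the sieve in $\hat A$ with a hierarchical scheduling that eliminates characters coordinate by coordinate, always processing the factor with the largest remaining modulus first; this keeps the sieve's collision-finding step governed by a single scale of modulus at a time.  The hypothesis $n = \Omega((k + \log h)^2)$ is crucial here: it forces $k$ and each $\log d_i$ to be $O(\sqrt{n})$, so the number of coordinate eliminations, and the overhead per level, are each at most $\poly(n)$ and thus disappear into the $2^{O(\sqrt{n})}$ exponent rather than changing its order.

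Finally, the corollary for finitely generated virtually abelian groups follows by coset-reducing to a finite-index abelian subgroup $A' \subseteq G$ and applying \Thm{th:abelian} together with the AHShP sieve above on quotients of $A' \cong \Z^k$, using the encoding discussed in \Prop{p:virtual}; I expect this part to be essentially formal once \Thm{th:shift} itself is established.
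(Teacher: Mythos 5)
There is a genuine gap, and it sits exactly at the point where your outline leans on the Smith normal form.  Passing to SNF coordinates destroys the one promise that makes the theorem's time bound achievable: the smallness of the shift.  In the original coordinates the paper assumes $\norm{\vs}_\infty \le 2^{t-1}$ with $kt \le n$, and its collimation sieve only ever needs phase multipliers in $H^\# \subseteq (\R/\Z)^k$ to \emph{absolute} precision roughly $2^{-t}/\poly(k,n,h)$ (equation \eqref{e:rktbound}), because the phase error of a qubit is the window error times $\norm{\vs}_\infty$, never times a group order; the shift is then decoded at the end from its residue modulo a torsion subgroup $A \subseteq H^\#$ by a close-vector/LLL step.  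After your unimodular change of basis, the components of $\vs$ in the cyclic factors $\Z/d_i$ are essentially arbitrary residues, so your coordinate-by-coordinate sieve must resolve them to full modulus precision, costing $2^{\Theta(\sqrt{\log d_i})}$ per factor.  Your claim that the hypothesis forces $\log d_i = O(\sqrt{n})$ is not correct: the hypothesis is $n = \Omega((k+\log h)^2)$ with $h = \norm{M}_\bit$, so it only gives $\log h = O(\sqrt{n})$; the invariant factors can satisfy $\log d_i$ up to about $h$, which may be $\poly(n)$ or even $2^{O(\sqrt{n})}$, and then $2^{\Theta(\sqrt{\log d_i})}$ blows past the $2^{O(\sqrt{n})}$ budget.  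For the same reason your assertion that the truncated group $A$ has $\log\abs{A} = O(n)$ is unjustified (the finite part alone can contribute up to $h$ bits); the paper instead needs up to $n + 2h$ targeted phase qubits, which is fine only because each one is cheap to the required (shift-scale) precision.

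A secondary, more repairable issue is the truncation itself: the hiding functions on $\Z^{k-r}$ do not descend to $\Z/D$, so you do not get an ``equivalent'' finite AHShP instance, only an approximate one.  The paper handles this by never leaving the infinite setting: it Fourier-samples a Gaussian-windowed state to get phase qubits whose multipliers lie (up to quantified noise) in the continuous dual $H^\# \subseteq (\R/\Z)^k$ (\Sec{ss:create}), and collimates with cubical windows in the torus, all $k$ coordinates in tandem, using single-spot and double-spot phase vectors to steer toward a prescribed target multiplier (\Sec{ss:collim}--\ref{ss:sieve}).  If you want to keep your finite-group framing you would need to redo that noise analysis anyway; but the essential repair is to keep the original $\Z^k$ coordinates (or otherwise preserve the bound $\norm{\vs}_\infty \le 2^{t-1}$), collimate only to the shift scale rather than to the scale of the invariant factors, and recover $\vs$ by a final CVP computation as in \Sec{ss:measure}.
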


Here, $H$ is a \emph{visible} subgroup means that its generator matrix $M$
is directly given as part of the input.

In \Sec{s:ahshp}, we will give an algorithm for AHShP and an informal
analysis that together amount to an outline of a proof of \Thm{th:shift}.
We omit some of the statistical estimates that would be needed for a rigorous
proof.  Our algorithm is a variation of prior algorithms for dihedral
HSP due to the author \cite{K:dhsp,K:dhsp2}, Regev \cite{Regev:dhsp},
and Peikert \cite{Peikert:csidh}.

As discussed in \Sec{ss:hidden}, \Thm{th:shift} yields the following
corollary for HSP for virtually abelian groups (VAHSP).  Recall that a
group $G$ is \emph{virtually abelian} when it has a finite-index abelian
subgroup $K \subseteq G$.  The corollary generalizes the remark that HSP
for a dihedral group is equivalent to HShP for a cyclic group.

\begin{corollary} Let $G$ be a finitely generated, virtually abelian group.
Then HSP in $G$ (with compressed word polynomial query cost) can be solved
in time $2^{O(\sqrt{n})}$, where $n$ is the bit complexity of the answer.
\label{c:virtual} \end{corollary}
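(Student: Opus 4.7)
The plan is to reduce HSP in a finitely generated virtually abelian group $G$ to a constant number of AHShP instances in a finite-index abelian subgroup, and then apply \Thm{th:shift} to each one. Let $K_0 \subseteq G$ be a given abelian finite-index subgroup; replacing $K_0$ by the intersection of its (finitely many) conjugates produces a normal abelian finite-index subgroup $K \subseteq G$ of some rank $k$. Both $k$ and the index $[G:K]$ are constants for fixed $G$. Let $f : G \to X$ be the hiding function and set $A := H \cap K$. Since $f|_K$ hides $A$ in the finitely generated abelian group $K$, \Cor{c:fgabel} recovers generators of $A$ in quantum time $\poly(n)$.

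It remains to determine, for each of the $[G:K]$ cosets $gK$ of $K$ in $G$, whether $H \cap gK$ is empty and, if not, to produce a witness. Fix a coset representative $g$ and define $f_g : K \to X$ by $f_g(k) := f(gk)$. A short direct calculation verifies: $f_g$ hides $A$ on $K$; if there exists $k_0 \in K$ with $g_0 := g k_0 \in H$, then writing $K$ additively,
\begin{equation*}
f_g(k) \;=\; f(gk) \;=\; f(g_0 \cdot k_0^{-1} k) \;=\; f(k_0^{-1} k) \;=\; f|_K(k - k_0),
\end{equation*}
so $f_g$ is a translate of $f|_K$ by $k_0$; and if $H \cap gK = \emptyset$, then $\im(f_g) \cap \im(f|_K) = \emptyset$, where normality of $K$ is precisely what rules out a spurious common value $k_2^{-1} g k_1 \in H$ (such a value would otherwise force $H \cap gK \neq \emptyset$). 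Descending to $K/A$, and splitting off the finite torsion of $K$ so that the free summand takes the form $\Z^{k'}/B$ with $B$ visible, the pair $(f|_K, f_g)$ becomes an AHShP instance of exactly the type solved by \Thm{th:shift}. Applying \Thm{th:shift} returns each shift $k_0 \bmod A$ (or witnesses emptiness) in quantum time $2^{O(\sqrt{n'})}$ with $n' \le n$; the finite torsion part is dispatched by brute force over $O(1)$ candidates. Queries to $f_g$ cost no more than queries to $f$, well within the query budget of \Thm{th:shift}. Across all $O(1)$ cosets the total cost is $2^{O(\sqrt{n})}$, and $H$ is reconstructed as the subgroup generated by $A$ together with the recovered witnesses $gk_0$.

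The main obstacle I anticipate is the regularity hypothesis $n' = \Omega((k + \log h)^2)$ in \Thm{th:shift}, where $h = \norm{A}_\bit$. Because $k = O(1)$ and $h \le n$, this is essentially the requirement $n' = \Omega(\log^2 n)$, which I would dispatch by a dichotomy: when $\norm{k_0}_\bit < C \log^2 n$ the shift lies in a search space of size $2^{O(\log^2 n)} = o(2^{\sqrt{n}})$ and is found by brute-force testing of $f_g(k)$ against $f|_K(k - k_0')$ on a polynomial-size random sample of $k$'s; otherwise \Thm{th:shift} applies directly. Either branch fits comfortably inside the claimed $2^{O(\sqrt{n})}$ budget.
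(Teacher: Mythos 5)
Your overall reduction matches the paper's: normalize $K$, compute $H \cap K$ via \Cor{c:fgabel}, then run one hidden shift instance per nontrivial coset of $K$ to find the rest of $H$. Your extra care with the regularity hypothesis of \Thm{th:shift} is a useful supplement, since the paper's two-sentence proof glides past that point.

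However, there is a genuine algebra error in the central shift identity. You parametrize the coset by $f_g(k) := f(gk)$ and claim
\[
f_g(k) = f(gk) = f\bigl(g_0 \cdot k_0^{-1}k\bigr) \stackrel{?}{=} f\bigl(k_0^{-1}k\bigr) = f|_K(k - k_0),
\]
but the marked step silently uses \emph{left} $H$-periodicity. The hiding function is periodic only on the right ($f(xh) = f(x)$ for $h \in H$), and here $g_0 \in H$ sits on the \emph{left} of $k_0^{-1}k$. Concretely, take $G = D_N$, $K = \Z/N$ the rotation subgroup, and $H = \{e,g\}$ for $g$ a reflection, so $A = H \cap K$ is trivial and your $k_0 = e$. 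Then $f_g(k) = f(gk) = f(k^{-1}g) = f(k^{-1}) = f|_K(-k)$, a \emph{flip} of $f|_K$, not a shift. In general your $f_g$ is $f|_K$ precomposed with the automorphism $k \mapsto g^{-1}kg$ of $K$ followed by a shift; it is not itself a shift, so \Thm{th:shift} does not apply to the pair $(f|_K, f_g)$ as written.

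The fix is to parametrize the coset on the other side: set $f'_g(k) := f(kg)$, which parametrizes the same coset precisely because $K$ is normal ($gK = Kg$) --- this is where normality earns its keep beyond the disjointness argument. Choosing $k_0 \in K$ with $k_0 g = h_0 \in H$, one gets
\[
f'_g(k) = f(kg) = f\bigl(kk_0^{-1}h_0\bigr) = f\bigl(kk_0^{-1}\bigr) = f|_K(k - k_0),
\]
a legitimate right shift, since now $h_0 \in H$ is removed from the right. The rest of your argument (the disjointness of images when $H \cap gK = \emptyset$, the reduction to $\Z^{k'}/B$, and the dichotomy for small shifts) goes through unchanged. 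This is the parametrization the paper itself uses implicitly, for example in the abelian HShP-to-HSP construction $f(xa) = f_1(x)$ in \Sec{ss:hidden}.
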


\acknowledgments

The author would like to thank Goulnara Arzhantseva, Andrew Childs, Sean
Hallgren, Derek Holt, Ilya Kapovich, Misha Kapovich, Emmanuel Kowalski,
Michael Larsen, Chris Peikert, Eleanor Rieffel, and Mark Sapir for useful
discussions.  The author would also like to thank Vivian Kuperberg and
Yuval Wigderson for many helpful comments about the manuscript.

\section{Background}
\label{s:background}

\subsection{Basic notation}
\label{s:notation}

We will use the asymptotic notation $\poly(n)$ to mean $O(n^\alpha)$ and
$\Poly(n)$ to mean $\Omega(n^\alpha)$.  We will also extend this notation to
several variables; for instance, $\poly(x,y)$ means $O((x+y)^\alpha)$, etc.
We also follow the convention that if $f(x) = \Omega(g(x))$, then $f$ and $g$
are both eventually positive and $f(x)/g(x)$ is eventually bounded below.
Finally, we will often use asymptotic notation with or without uniformity
assumptions.   If $f(x) = O(g(x))$, meaning that $|f(x,y)| \le Cg(x)$
eventually, the constant $C$ may or may not be independent of $y$; and
the estimate might only be uniform when $y$ lies in a restricted range.

We will often use vector notation $\vx$ to refer to points in $\Z^k$,
$\R^k$, etc., mainly to suggest the geometric interpretation of vectors.
This notation is optional and we do not use it for quantum states (we use
Dirac notation $\ket{\psi}$ instead), nor for bit strings except when they
are used for linear algebra in $(\Z/2)^k$.

\subsection{Computational complexity}
\label{ss:complex}

We refer to Arora--Barak \cite{AB:modern} for basics of computational
complexity theory, but we also clarify several points.

As explained in \Sec{s:intro}, Theorems~\ref{th:rational} and \ref{th:free}
are stated for the hidden subgroup existence problem (HSEP) rather than
the hidden subgroup problem (HSP).  Both problems are $\NP$-hard given
the hypotheses of the theorems.  However, it does not make sense to
ask whether HSP is $\NP$-complete because it is a function problem and
therefore cannot be in $\NP$.  By contrast, HSEP is a decision problem;
as stated in \Prop{p:hsepnp} below, a natural interpretation of HSEP is
in $\NP$ for any explicit group $G$.

Since HSEP is a decision problem, we can also ask whether it is $\NP$-hard
with Karp--Post reduction rather than Cook--Turing reduction.  This is
what Theorems~\ref{th:rational} and \ref{th:free} both achieve.  The only
caveat is that the reduction in \Thm{th:rational} requires either integer
factorization or \Conj{c:smoothp1}.  As explained in \Sec{s:intro}, since
Shor's algorithm can factor integers in $\BQP$, we obtain a reduction in
$\BQP$ unconditionally.

Second, it may seem confusing that HSP can be $\NP$-hard, given
that in full generality it is an oracle problem.  To address this,
given a decision problem $d \in \NP$, we construct a hiding function
$f$ that has a polynomial-time reduction to $d$, thus replacing the
oracle by an \emph{instance}.  In fact, the construction works just
as well if $d \in \NP^g$, where $g$ is another oracle.  In other words,
Theorems~\ref{th:rational} and \ref{th:free} are $\NP$-completeness results
that relativize.   Relativizing $\NP$-completeness is an unusual concept
that is possible in this case because HSP is an oracle problem.

Third, the computational complexity of any oracle problem can be modified by
declaring a time cost function for oracle calls, rather than the more usual
convention of unit cost.  If an oracle query has length $n$, then the oracle
call has a tacit time cost of $\Omega(n)$, given just the time to write the
query to the oracle tape or register.  However, we can also declare a time
fee $t(n)$ for such an oracle call that depends on the input length $n$,
or even a fee $t(x)$ that depends on the specific input $x$ to the oracle.
In the statement of \Thm{th:svp}, we assume that the hiding function $f$
is an oracle with \emph{pseudo-polynomial} cost.  This means that the cost
to evaluate $f(\vv)$ with $\vv \in \Z^k$ is a polynomial in the 1-norm
$\norm{\vv}_1$ and the dimension $k$, rather than polynomial in the bit
complexity of a description of $\vv$.

An oracle fee can also often be emulated by demanding verbose input.  This is
another way to implement a suitable special case of the oracle assumption in
\Thm{th:svp}.  If the components of a vector $\vv \in \Z^k$ are written
in unary, then its total symbol length is $\Theta(\norm{\vv}_1+k)$, counting
also the negation symbol and the commas to separate vector components.

\subsection{Explicit groups and canonical words}
\label{ss:explicit}

To interpret the hidden subgroup problem as a computational problem, we need
a computational model for both the group $G$ and the hiding function $f$.
In this subsection, we consider a computational model of a discrete,
countably infinite group $G$.  We call such a $G$ an \emph{explicit group}
if it is given as a subset of the set of strings $A^*$ over some alphabet
$A$ with a membership function, a group law, and an inverse function:
\[ \begin{array}{c@{\qquad}c}
d:A^* \to \{\yes,\no\} & d(x) = x \stackrel{?}{\in} G \\
m:G\times G \to G & m(x,y) = xy \\
i:G \to G & i(x) = x^{-1}
\end{array} \]
We assume that all three of these functions are computable in $\ccP$
or functional $\ccP$.  Aaronson and the author previously gave a related
definition of an explicit sequence of finite groups \cite[Def.~5.7]{K:qcap},
except with a trivial membership test.

To interpret our theorems, we need to describe the additive group of
rationals $\Q$, the free groups $F_k$, and the free abelian groups $\Z^k$
as explicit groups.  We assume familiar notation for each of these three
examples, in which case the algorithms for $d$, $m$, and $i$ are all
straightforward:
\begin{enumerate}
\item We put each rational $t \in \Q$ in lowest terms $t = a/b$ using the
Euclidean algorithm, which runs in polynomial time.  The numerator and
denominator are written in binary.
\item We express each element $w \in F_k$ as the unique reduced word
in the $k$ generators of the free group $F_k$, and their inverses.
\item In the version of HSP that we solve in \Thm{th:abelian}, we express
each element $v \in \Z^k$ as a list of integers in binary.  Or we can
match the oracle cost hypothesis of \Thm{th:svp} if we write each
coefficient $v_j$ of $v$ in unary.  (We also consider sparse
encoding of vectors in \Sec{ss:simon}.)
\end{enumerate}

If $G$ and $H$ are two explicit groups, we define an \emph{efficient
homomorphism} to be a homomorphism $f:G \to H$ that can be computed in
deterministic polynomial time.   Likewise if $f$ is an isomorphism,
then it is \emph{efficient} when $f$ and $f^{-1}$ are both efficient.
Note that two explicit groups can be isomorphic as groups
without any efficient isomorphism between them. In other words, that
the same group can have more than one explicit encoding.

\begin{examples} The group $\PSL(2,\Z)$ is isomorphic to a free product
$(\Z/2) * (\Z/3)$ \cite{Alperin:psl}.  This yields two explicit encodings,
the former using integer matrices and the latter using words in the
presentation
\[ (\Z/2) * (\Z/3) = \braket{a,b \mid a^2 = b^3 = 1}. \]
More precisely, we can canonically name elements of $(\Z/2) * (\Z/3)$
as words in the two letters $a$ and $b$ that do not contain either $aa$
or $bbb$.  These two explicit encodings cannot be equivalent under any
isomorphism due to the equation
\[ \begin{bmatrix} 1 & N \\ 0 & 1 \end{bmatrix}
    = \begin{bmatrix} 1 & 1 \\ 0 & 1 \end{bmatrix}^N. \]
This matrix is an $N$th power and its word must have length $\Theta(N)$,
but its matrix form has bit complexity $\Theta(\log(N))$.

Another example is the infinitely generated abelian group $(\Z/2)^\infty$
in \Sec{ss:simon}.  We can encode an element $x \in (\Z/2)^\infty$
with a variable-length, finite bit string.  Alternatively, we can
list the terms of a basis expansion of $x$ with a sparse encoding as in
equation \eqref{e:sparse}.  The two encodings are not equivalent under any
automorphism of $(\Z/2)^\infty$.  In the sparse encoding, exponentially
many linearly independent vectors have polynomial-length encodings, which
is not possible in the standard dense encoding.
\end{examples}

Our definition of an explicit group $G$ only stipulates a polynomial-time
algorithm to multiply two group elements and group inverses, equivalently a
polynomial-time algorithm to evaluate words in $G$ of any bounded length.
We are interested in two refinements of this concept. We call an explicit
group $G$ \emph{word efficient} if there is an algorithm to evaluate words
in $G$ which is uniformly polynomial time for all words.  This concept
can be taken one step further if we consider algebraic circuits $C$ in
the group law of $G$.  In some of the literature, such circuits are also
called \emph{straight-line programs} \cite{BS:matrix,Lohrey:groups}.  In a
straight-line program, we build a word $w$ in $G$ by naming intermediate
words that can then be reused.  Such a word $w$ can also be called a
\emph{compressed word}, since similar data structures are used in data
compression methods that name and reuse subwords.  We call an explicit
group $G$ \emph{compressed-word efficient} if it has a polynomial time
algorithm to evaluate algebraic circuits in elements of $G$.

Suppose that $G$ is a group generated by the finite set $A$.  Then every
element $g \in G$ can be expressed, non-uniquely, by words in the set $W_A =
(A \cup A^{-1})^*$.  In some cases, there is an efficient algorithm to
convert any word $w \in W_A$ to a \emph{canonical word} with the same
value $g \in G$, and that represents each $g \in G$ uniquely.  We can
extend the same concept to compressed words.  It may happen that $G$
has a polynomial-time algorithm to convert every compressed word in the
alphabet $A$ that represents an element $g \in G$ to a \emph{canonical
compressed word} for the same $g$.  It is easy to see that a canonical
word algorithm or a canonical compressed word algorithm for $G$ yields
an explicit structure for $G$.  The following proposition is also not
difficult to prove, but it is more subtle.

\begin{proposition} Let $G$ be a group generated by a finite set $A$.
\begin{enumerate}
\item If $G$ has an explicit structure given by a canonical word algorithm
for words in $W_A$, then $G$ is word efficient. If $f:G \to H$ is a
homomorphism to another word-efficient group $H$, then $f$ is efficient.
In particular, any two canonical word algorithms (even in different finite
generating sets) for $G$ yield isomorphic explicit structures.

\item If $G$ has an explicit structure given by a canonical compressed-word
algorithm for compressed words in $W_A$, then $G$ is compressed-word
efficient. If $f:G \to H$ is a homomorphism to another word-efficient
group $H$, then $f$ is efficient.  In particular, any two canonical
compressed-word algorithms for $G$ yield isomorphic explicit structures.
\end{enumerate}
\label{p:canonical} \end{proposition}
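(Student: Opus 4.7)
The statement has three claims in each part: (i) a canonical (compressed\nobreakdash-)word algorithm yields (compressed\nobreakdash-)word efficiency of $G$; (ii) homomorphisms out of $G$ to suitable targets are efficient; and (iii) any two such canonical structures on $G$ are isomorphic as explicit groups. My plan is to handle (i) directly from the hypothesis, derive (ii) as a short corollary of (i) together with the finiteness of $A$, and then obtain (iii) by applying (ii) to the identity map $G\to G$ viewed as going from one canonical structure to the other.

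For part 1, claim (i) is immediate: given an arbitrary word $w\in W_A$ of length $n$, run the canonical word algorithm on $w$; by hypothesis this takes $\poly(n)$ time and simultaneously evaluates $w$ in $G$ and outputs the explicit encoding of its value. For (ii), let $c=\max_{a\in A}\abs{f(a)}$, where the length is measured as a word in a fixed generating set of $H$; this is a bounded constant since $A$ is finite. Given $g\in G$ in canonical form as a word $w=a_1a_2\cdots a_n$, substitute each $a_i$ by a fixed word for $f(a_i)$, obtaining a word representing $f(g)$ in $H$ of length at most $cn$. Word efficiency of $H$ then produces the canonical encoding of $f(g)$ in $\poly(n)$ time. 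Claim (iii) follows by viewing the identity $G\to G$ as a homomorphism with source encoded by one canonical structure and target by the other; both are word efficient by (i), so the identity is efficient in both directions, giving an efficient isomorphism of explicit structures.

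Part 2 follows the same outline with words replaced by algebraic circuits, that is, straight\nobreakdash-line programs. The genuinely new ingredient is the behavior of substitution on SLPs: if $C$ is an SLP over the alphabet $A$ of size $s$, and each $a\in A$ is replaced by a fixed SLP in $H$'s generators of size $O(1)$ realizing $f(a)$, then the resulting SLP in $H$ has size $O(s)$. This is the standard observation that SLPs are closed under substitution with only linear blow\nobreakdash-up. With this in hand, (i) is immediate by running the canonical compressed\nobreakdash-word algorithm, and (ii) and (iii) follow as before, provided the target of $f$ is compressed\nobreakdash-word efficient so that the substituted SLP can actually be evaluated in $\poly(s)$ time (merely expanding it and applying a word\nobreakdash-efficient evaluator in $H$ could cost exponential time, so this hypothesis is essentially forced).

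The main obstacle is therefore not mathematical depth but careful bookkeeping in the compressed\nobreakdash-word case: verifying the linear size bound on the substituted SLP, and checking that the hypothesis on the codomain in part 2 is calibrated so that evaluation is polynomial time. Once this is pinned down, the rest of the argument is a routine composition of polynomial\nobreakdash-time reductions, and the symmetry argument for (iii) goes through unchanged.
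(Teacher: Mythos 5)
Your proposal is correct, and since the paper only states that this proposition is ``not difficult to prove, but... more subtle'' without giving a proof in the text, there is nothing to compare against; the argument you give is the natural one. The decomposition into claims (i)--(iii) is right: (i) is immediate because with a canonical-word encoding, evaluating a word in $G$ amounts to concatenating (with formal inversions where needed) the stored encodings to obtain an element of $W_A$ and running the canonical word algorithm once; (ii) follows by hard-coding the finitely many $H$-encodings of $f(a)$ for $a\in A$ and then invoking word efficiency of $H$ on the substituted word; and (iii) is the symmetry trick, applying (ii) to the identity map of $G$ regarded as a homomorphism from one canonical explicit structure to the other, which (i) shows are both word efficient. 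For part 2 the linear blow-up bound on SLP substitution (share the constant-size subcircuits realizing each $f(a)$) is exactly the right observation.

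Your flag about the codomain hypothesis in part 2 is a genuine catch. As printed, that sentence is a verbatim repeat of the corresponding sentence in part 1, but a size-$s$ compressed word for $g\in G$ can expand to a word of length $2^{\Theta(s)}$, so evaluating the substituted circuit in $H$ in time $\poly(s)$ requires $H$ to be compressed-word efficient, not merely word efficient; a word-efficient $H$ gives no handle on the exponentially long expanded word. The ``in particular'' conclusion of part 2 is unaffected, since there the codomain is $G$ itself under the other canonical compressed-word structure, which is compressed-word efficient by the first sentence of part 2, so the intended claim goes through with the hypothesis corrected as you describe.
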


We now examine evaluation of words and compressed words for the three types
of groups defined above.
\begin{enumerate}
\item The additive group of rationals $\Q$ is not finitely generated,
so we cannot compare their notation to canonical words or canonical
compressed words.  However, the standard notation for rational numbers is
efficient for algebraic circuits that only use the group law of $\Q$.
\item The standard notation for elements of a free group $F_k$
using reduced words is a canonical (uncompressed) word structure.
\item If each vector $v \in \Z^k$ is expressed in binary, then the notation
is equivalent to canonical compressed words in $\Z^k$.  If each $v$ is
expressed in unary, then this is equivalent to canonical (uncompressed)
words in $\Z^k$.
\end{enumerate}

\begin{remark} Remarkably, there is an efficient algorithm to calculate
canonical compressed words in $F_k \cong F_A$, where $A$ is an alphabet with
$k$ letters.  This is established by Lohrey \cite{Lohrey:groups}, whose work
is the culmination of older results by various authors.  In Section 3.4
of his book, Lohrey describes a recompression technique due to J\'ez that
converts any algebraic circuit $C$ in the free monoid $W_A$ into a canonical
form $D$.  This ``recompression" algorithm depends only on the value of $C$
and not otherwise on its circuit structure (Lohrey, personal communication).
In Section 4.4, Lohrey reviews his own result that converts an algebraic
circuit $C$ in the free group $F_A$ to a circuit $D$ in the free monoid
$W_A$, such that the value of $D$ is automatically in reduced form (and
matches the value of $C$).  The two results together yield a canonical
compressed word algorithm in $F_A$.  We will not use this explicit model
of a free group in this article, but see the end of \Sec{ss:freeproof}.
\end{remark}

We conclude with a simple result on finitely generated, virtually abelian
groups.  This result defines and motivates our explicit structure for
any group in \Cor{c:fgabel} or \Cor{c:virtual}.

\begin{proposition} If $G$ is a finitely generated, virtually abelian group,
then it has an efficient algorithm to compute canonical compressed words.
\label{p:virtual} \end{proposition}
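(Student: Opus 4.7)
The plan is to realize $G$ as an extension $1 \to K \to G \to Q \to 1$ with $K \cong \Z^r$ a normal, torsion-free, finite-index abelian subgroup and $Q$ finite, and then to give each $g \in G$ the canonical form $(i, \vv)$ where $g = s_i \vv$ for a fixed list of coset representatives $s_1, \dots, s_m$ of $Q$ in $G$ and $\vv \in \Z^r$ written in binary.  I would first reduce to this setup: start from any finite-index abelian subgroup of $G$, replace it by the intersection of its (finitely many) conjugates to make it normal, then replace it by the image of multiplication by the torsion exponent to obtain a torsion-free characteristic subgroup.  This yields a normal, finite-index $K \cong \Z^r$.  The pair $(i, \vv)$ is itself a canonical compressed word in an extended generating set of $G$ containing $s_1,\dots,s_m$ and a basis of $K$, since writing each component of $\vv$ in binary corresponds to a short algebraic circuit in the basis.

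The next step is efficient arithmetic on canonical forms.  Conjugation gives a homomorphism $\phi \colon Q \to \GL(r,\Z)$; since $Q$ is finite, $\phi(Q)$ is a finite subgroup of $\GL(r,\Z)$, so only finitely many matrices $\phi_q$ occur and they have uniformly bounded entries.  The product rule
\[ (s_i \vv)(s_j \vw) = \bigl(s_i s_j,\; \phi_{s_j^{-1}}(\vv) + \vw\bigr), \]
combined with a precomputed table that rewrites $s_i s_j$ in canonical form as $s_\ell \vc_{ij}$, gives multiplication (and similarly inversion) in time polynomial in the bit complexity $B$, with only an $O(1)$ additive increase in $B$.  Repeated squaring then computes $g^n$ from canonical $g$ in time $\poly(B + \log n)$ and yields output bit complexity $B + O(\log n)$.

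With these primitives in hand, I would process any algebraic circuit $C$ of size $N$ node by node, maintaining each intermediate value in canonical form.  Because products add only $O(1)$ bits and each of the at most $N$ power operations adds $O(\log n_i) \le O(N)$ bits, the intermediate bit complexities stay at most $O(N^2)$ and each step runs in time $\poly(N)$, yielding total running time $\poly(N)$.  The final output is a canonical compressed word in the extended generating set; by \Prop{p:canonical}, this also yields canonical compressed word algorithms in any other finite generating set of $G$.

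The delicate step is the bit-complexity accounting under repeated squaring in the semidirect-product form: it relies crucially on $\phi(Q)$ being a \emph{finite} subgroup of $\GL(r,\Z)$, so that each matrix application contributes only $O(1)$ to the bit length of a vector.  This is exactly where the virtually abelian hypothesis is used in an essential way; for general polycyclic groups with infinite-order conjugation actions, the analogous scheme would blow up the bit complexity exponentially and the canonical compressed form $(i,\vv)$ would no longer be stable under the group operations.
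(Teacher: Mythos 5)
Your proposal is correct and follows essentially the same route as the paper: reduce to a normal, finite-index, free abelian subgroup $K$, take ``coset representative times a binary-exponent vector'' as the canonical form, and bound the growth of the exponent vectors gate-by-gate using the finiteness of the conjugation action. The only cosmetic difference is that the paper controls the growth via a $G$-conjugation-invariant positive-definite quadratic form on $K$ (obtained by averaging over the finite quotient), whereas you use the equivalent observation that the conjugation matrices form a finite subgroup of $\GL(r,\Z)$ with uniformly bounded entries.
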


\begin{proof} By hypothesis, $G$ has a finite-index abelian subgroup $K$.
We can assume that $K \cong \Z^k$ is free abelian and that $K \normaleq
G$ is normal.  (Proof: An arbitrary such $K_0$ is finitely generated and
thus has a finite-index free abelian subgroup $K_1$.  We can then let $K =
K_2$ be the kernel of the action of $G$ on $G/K_1$.)  We can choose as our
generating set for $G$ a free abelian basis $e_1,e_2,\ldots,e_k$ for $K$,
together with one representative $b \in G$ for each coset of $K$ in $G$.
Then certainly every $x \in G$ can be uniquely represented in the form
\[ x = ba = be_1^{v_1}e_2^{v_2}\cdots e_k^{v_k}, \]
where the exponents are written in binary.   It is easy to express these
forms with algebraic circuits in the generators of $G$.

It is less obvious that any algebraic circuit $C$ in the generators of $G$
can be efficiently converted into this standard form.  The non-trivial
property that we must check is to show that the vector of exponents $\vv$
does not grow any worse than exponentially as $C$ is evaluated.  When a
gate in $C$ computes the product of $x_1 = b_1a_1$ and $x_2 = b_2a_2$,
the value is canonicalized into the form
\begin{eq}{e:virtab} x_1x_2 = b_1b_2(b_2^{-1}a_1b_2)a_2
    = b_3a_3(b_2^{-1}a_1b_2)a_2, \end{eq}
where $b_3$ and $a_3$ depend only on $b_1$ and $b_2$.  To show that the
growth in the exponents is manageable under the gates of $C$, we use the
fact that $H$ has a positive-definite quadratic form $Q:K \to \Z$ which
is invariant under conjugation by $G$.  (Proof: Starting with
any positive-definite form $Q_0$ on $K$, we can let $Q = Q_1$
be the sum of the orbit of $Q_0$ under conjugation by elements of the
finite group $G/K$.)  Given such a $Q$, we obtain
\begin{eq}{e:qtriang} \sqrt{Q(a_3(b_2^{-1}a_1b_2)a_2)}
    \le \sqrt{Q(a_3)} + \sqrt{Q(a_1)} + \sqrt{Q(a_2)} \end{eq}
by conjugation invariance and by the triangle inequality.  Combining
equations \eqref{e:virtab} and \eqref{e:qtriang}, a gate of $C$ that
creates $x_1x_2$ from some $x_1$ and $x_2$ at most doubles (and adds a
constant to) the largest available value of $\sqrt{Q(a)}$.  By induction,
the exponents created are at most exponential in the number of gates of $C$.
\end{proof}

\subsection{Hidden subgroups and hidden shifts}
\label{ss:hidden}

Besides an explicit group, the other ingredient in the hidden subgroup
problem is the hiding function $f$.  As explained in \Sec{s:intro}, $f$
is typically provided as functional input to an HSP algorithm, either as
a subroutine or an oracle.  Except when otherwise specified, the cost to
evaluate $f$ is polynomial in the length of the input.  We then parametrize a
bound $O(g(n))$ the computation time of the HSP algorithm by $n$, the cost of
the calls to $f$ needed to witness generators of the hidden subgroup $H$.
(This is more logical than letting $n$ be the length of an algorithm
to calculate $f$.  With that more naive complexity parametrization, and
without any bound on the generators of $H$, HSEP becomes equivalent to
the halting problem when $G = \Z$ and in many other cases.)

For example, consider the original hidden subgroup problem solved by
Shor with $G = \Z$.  In this case, $H = h\Z$ for some period $h$, and
Shor's algorithm runs in quantum polynomial time $\poly(\norm{h}_\bit) =
\poly(\log(h))$.

The following proposition relates explicit groups and hiding functions to
the intuitive fact that HSEP is in $\NP$.

\begin{proposition} Suppose that $G$ is an explicit group with a hidden
subgroup $H$, and that the running time for HSEP is parameterized by the
bit complexity of a non-trivial element $x \in H \subseteq G$.  Suppose also
that a hiding function $f$ for $H$ can be evaluated in polynomial time.
Then HSEP is in $\NP$.
\label{p:hsepnp} \end{proposition}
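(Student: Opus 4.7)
The plan is to exhibit a polynomial-size $\NP$ witness and a polynomial-time verifier for HSEP.  The natural witness for a ``yes'' instance is simply a non-trivial element $x \in H$, encoded as a string in $A^*$ under the explicit-group representation of $G$.  By the parameterization hypothesis, the running time of HSEP is measured against the bit complexity $n$ of some non-trivial $x_0 \in H$, so for any yes instance there exists a witness $x = x_0$ of length $\poly(n)$; this handles the length bound on the certificate.

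For the verifier, given a candidate string $x \in A^*$, I would perform three checks, each polynomial in $|x|$:
\begin{enumerate}
\item Check $d(x) = \yes$, using the polynomial-time membership algorithm guaranteed by the explicit-group structure on $G$.
\item Check that $x$ is non-trivial.  Since the explicit-group convention assigns each $g \in G$ a unique string in $A^*$, this reduces to a string comparison of $x$ against the fixed identity string $e$ of $G$.
\item Check that $f(x) = f(e)$, using the polynomial-time evaluation of $f$.  Because $f$ hides $H$ and $e \in H$, this equality holds if and only if $x \in H$.
\end{enumerate}
If all three tests pass, then $x$ is a non-trivial element of $H$, witnessing that $H$ is non-trivial; conversely, for any yes instance the element $x_0$ above passes all three tests.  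Combined with the $\poly(n)$ length bound on $x_0$, this gives HSEP $\in \NP$.

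The only subtlety worth flagging, and the point most worth emphasizing in the write-up, is the role of the parameterization convention.  What makes HSEP well-posed as an $\NP$ problem is precisely that we measure complexity against the bit complexity of a witness in $H$ rather than, say, against the description length of the oracle $f$ or of the ambient group $G$.  As the text already notes for $G = \Z$, a more naive parameterization leads to an ill-defined problem on par with the halting problem.  Once the convention is fixed, there is no genuine obstacle: the proof is bookkeeping against the definitions of an explicit group (which supplies polynomial-time $d$, $m$, $i$ and hence equality testing in canonical form) and of a hiding function (which converts ``$x \in H$'' into the polynomially checkable equation $f(x) = f(e)$).
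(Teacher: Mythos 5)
Your proposal is correct and follows essentially the same route as the paper's own (deliberately brief) proof: the certificate is a non-trivial $x \in H$, and the verifier uses the explicit-group membership test together with the check $f(x) = f(1)$, with the parameterization convention guaranteeing the polynomial witness length. Your added remarks on why the parameterization matters simply make explicit what the paper discusses in \Sec{ss:hidden}.
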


Although the proof is immediate, we outline it anyway.

\begin{proof} The $\NP$ certificate for HSEP is a non-trivial element $x
\in H \subseteq G$.  The verifier checks that $x \in G$ using the membership
test, and after that checks that $f(x) = f(1)$.
\end{proof}

Note also that if $K$ is a visible subgroup of $G$, and if there is
an efficient algorithm to calculate canonical representatives of cosets
of $K$ in $G$, then HSP in $K$ reduces to HSP in $G$.  Likewise if $K$ is
a normal subgroup of $G$, and if there is an algorithm to calculate the
quotient map $G \onto Q \cong G/K$ for a group $Q$ isomorphic to $G/K$,
then HSP in the $Q$ also reduces to HSP in $G$.

\begin{proof}[Proof of \Cor{c:fgabel}] If $e_1,e_2,\ldots,e_k$ are generators
for a finitely generated abelian group $G$ with canonical compressed words,
then there is an efficient surjective homomorphism $h:\Z^k \onto G$ given by
\[ h(\vv) = e_1^{v_1} e_2^{v_2} \cdots e_k^{v_k}. \]
If $f:G \to X$ is a hiding function, then so is the composition $f \circ
h:\Z^k \to X$, thus reducing HSP in $G$ to HSP in $\Z^k$.
\end{proof}

We define the \emph{hidden shift problem} (HShP), which is related to the
hidden subgroup problem via partial reductions in both directions.  Given a
group $G$ and a (known) subgroup $H \subseteq G$, the hidden shift problem
takes a function $f_0$
\[ \begin{tikzpicture}
\draw (-1.25,0) node (G) {$G$};
\draw (1.25,0) node (X) {$X$};
\draw (0,-1.25) node (H) {$G/H$};
\draw[->] (G) -- node[above] {$f_0$} (X);
\draw[->>] (G) -- (H);
\draw[right hook->] (H) -- (X);
\end{tikzpicture} \]
which is $H$-periodic on the right and otherwise injective, and another
function $f_1:G \to X$ defined by
\[ f_1(xs) = f_0(x) \]
for an unknown right shift $s$.  It follows that $f_1$ is
$s^{-1}Hs$-periodic.  The problem is to calculate the shift $s$ given
access to $f_0$ and $f_1$.

In one direction, we can partly reduce HSP to HShP as follows.  Suppose that
$f:G \to X$ hides an unknown subgroup $H$, and suppose that $K \subseteq
G$ is a known subgroup of $G$ with an efficient algorithm to find the hidden
subgroup $H \cap K \subseteq K$ from the hiding function $f|_K$.  Then for
each coset $xK \subseteq G$, the restriction $f|_{xK}$ is either a hidden
shift of $f|_K$ (if $H$ intersects $xK$) or has disjoint values from those
of $f|_K$ (if $H$ does not intersect $xK$).  An algorithm for HShP in $K$
can thus determine the set $H \cap xK$ for any one value of $x$.  If $[G:K]$
is small, and in some other cases, this yields an efficient reduction from
HSP in $G$ to HShP in $K$.

In the other direction, we can reduce HShP in $G$ to HSP in the wreath
product $G \wr \Z/2$ as follows.  Recall that this wreath product
is defined as the semidirect product $G^2 \rtimes \Z/2$; it consists
of elements $(x_0,x_1) \in G^2$ together with an involution $z$ such that
\[ a(x_0,x_1)a = (x_1,x_0). \]
We define a joint hiding function $f:G \wr \Z/2 \to X^2$ by
\begin{align*}
f((x_0,x_1)) &= (f_0(x_0),f_1(x_1)) \\
f(a(x_0,x_1)) &= (f_1(x_1),f_0(x_0)).
\end{align*}
If $f_0$ and $f_1$ differ by a hidden shift $s$ as above, and if $f_0$
is $H$-periodic and otherwise injective, then $f$ hides the subgroup
\[ H \cup a(s^{-1},s)H \subseteq G \wr \Z/2. \]
(Remark: This reduction of HShP to HSP generalizes a well-known reduction
from the graph isomorphism problem to graph automorphism problem.  Namely,
the set of isomorphisms between two graphs $\Gamma_0$ and $\Gamma_1$
carry the same information as the set of automorphisms of the disjoint
union $\Gamma_0 \sqcup \Gamma_1$.)

There is a simpler reduction from HShP to HSP when $G$ is abelian
\cite{K:dhsp,K:dhsp2}.  In this case, there is a semidirect product $G
\rtimes \Z/2$ generated by $G$ and an involution $a$ such that $axa = x^{-1}$
for $x \in G$.  We can define a hiding function $f:G \rtimes \Z/2 \to X$ by
\[ f(x) = f_0(x) \qquad f(xa) = f_1(x) \]
for $x \in G$.  Then $f$ hides
\[ H \cup saH \subseteq G \rtimes \Z/2. \]

\begin{proof}[Proof of \Cor{c:virtual}] Let $G$ be finitely generated and
virtually abelian, let $K \subseteq G$ be a finite-index abelian subgroup,
and suppose that $f:G \to X$ hides $H \subseteq G$.  We can compute $H
\cap K$ by \Cor{c:fgabel}.   Then since $G$ is a fixed group for the
algorithm and $[G:K]$ is finite, we can invoke HShP in $K/(H \cap K)$ a
bounded number of times (at most once for each generator of the quotient
group $G/K$) to find $H$.
\end{proof}

Finally, we mention a generalization of HSP and HShP that involves both
left and right multiplication in a non-abelian group $G$.  In the greatest
generality, a function $f$ on a group $G$ hides two subgroups $K$ and $H$
if it is constant on double cosets $KxH$ and otherwise injective:
\[ \begin{tikzpicture}
\draw (-1.25,0) node (G) {$G$};
\draw (1.25,0) node (X) {$X$};
\draw (0,-1.25) node (KH) {$K \backslash G/H$};
\draw[->] (G) -- node[above] {$f$} (X);
\draw[->>] (G) -- (H);
\draw[right hook->] (H) -- (X);
\end{tikzpicture} \]
This leads to various questions that may involve partial information about
the hidden structure.  For instance, $H$ might be known with an explicit
encoding for $G/H$, while $K$ might be unknown, so that we can view $f$ as
a function on the coset space $G/H$ with hidden $K$-periodicity.  Likewise,
two functions with right coset or double coset periodicity might differ by
a left hidden shift.  This generalization of HSP is not relevant for our
results, since double cosets $H$ and $K$ in $G$ are equivalent to single
cosets of $HK$ in $G/K$ if $K$ is normal (and vice versa if $H$ is normal).

\section{Proof of \Thm{th:rational}: HSP in $\Q$}
\label{s:rational}

Our proofs of Theorems \ref{th:rational}, \ref{th:free}, and \ref{th:svp}
all use hiding functions with a similar structure.  Given a group $G$
and a subgroup $H \subseteq G$, we can consider a hiding function $f$
that takes values in $G$ itself,
\[ \begin{tikzpicture}
\draw (-1.25,0) node (G) {$G$};
\draw (1.25,0) node (G1) {$G$};
\draw (0,-1.25) node (H) {$G/H$};
\draw[->] (G) -- node[above] {$f$} (G1);
\draw[->>] (G) -- (H);
\draw[right hook->] (H) -- (G1);
\end{tikzpicture} \]
and such that $f \circ f = f$.  In other words, $f(g) \in gH$ and is
a \emph{canonical representative} of the coset $gH$, generalizing
the concept of canonical words as described in \Sec{ss:explicit}.
For example, if $G = \Z$ and $H = n\Z$, then one familiar system of
canonical representatives is the half-open interval of remainders
\[ [0,n)_\Z \defeq \{0,1,2,\dots,n-1\}. \]
Computed canonical representatives thus generalize computed remainders in
the integer division algorithm.

\subsection{Partial fractions}
\label{ss:parfrac}

Our idea to calculate canonical representatives of cosets of a subgroup of
$\Q$ is based on the concept of partial fractions from calculus.  Partial
fractions have a generalized form in any principal ideal domain, where they
are related to the Chinese remainder theorem and the B\'ezout equation.
Recall that every rational function $a(x)/b(x)$ over $\R$ is uniquely a
sum of special terms called \emph{partial fractions}.  For example,
\[ \frac{x^6+x^4-x+2}{x^5-x^2} = x - \frac{x-1}{x^2+x+1}
    + \frac{1}{x-1} + \frac{1}{x} - \frac{2}{x^2}. \]
In this decomposition, the first term is a polynomial, while each other
term is of the form $r(x)/p(x)^k$, where $p(x)$ is irreducible over $\R$ and
$\deg r(x) < \deg p(x)$.  We will need the following analogous lemma in $\Q$.

\begin{lemma}[Partial fractions in $\Q$] Every rational number $a/b \in \Q$
has a unique representation
\begin{eq}{e:partial} \frac{a}{b} = n + \sum_{(p,k,r)} \frac{r}{p^k},
\end{eq}
where $n$ is any integer, each $p$ is a prime number, each exponent $k$
is positive, each numerator $r$ satisfies $1 \le r < p$, and each pair
$(p,k)$ appears at most once.  Moreover, given the prime factorization of
the denominator $b$, the partial fraction decomposition can be computed
in $\ccP$.
\label{l:partial} \end{lemma}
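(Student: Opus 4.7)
The plan is to establish existence by an iterated Chinese Remainder decomposition followed by base-$p$ expansion, and to establish uniqueness by a $p$-adic valuation argument. Complexity then follows because each ingredient (extended Euclidean algorithm, reduction modulo prime powers, base-$p$ digits) is standard and polynomial time.

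For existence, I would start from the given factorization $b = \prod_{i=1}^m p_i^{e_i}$ and set $q_i = b/p_i^{e_i}$. Because the $p_i$ are distinct primes, $\gcd(q_1,\ldots,q_m) = 1$, so iterated extended Euclidean algorithm produces integers $c_1,\ldots,c_m$ with $\sum_i c_i q_i = 1$. Multiplying by $a/b$ gives
\[ \frac{a}{b} = \sum_{i=1}^m \frac{a c_i}{p_i^{e_i}}. \]
For each $i$, I would reduce $a c_i$ modulo $p_i^{e_i}$ to obtain $0 \le s_i < p_i^{e_i}$ and absorb the quotients into a single integer $n$, yielding $a/b = n + \sum_i s_i/p_i^{e_i}$. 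Finally I would take the base-$p_i$ expansion $s_i = \sum_{k=1}^{e_i} r_{i,k}\, p_i^{e_i-k}$ with $0 \le r_{i,k} < p_i$, so that $s_i/p_i^{e_i} = \sum_{k=1}^{e_i} r_{i,k}/p_i^k$, and then discard any triples $(p_i,k,r_{i,k})$ with $r_{i,k} = 0$. What remains has exactly the claimed shape.

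For uniqueness, I would fix a prime $p$ and compare two putative decompositions of $a/b$. Subtracting them yields a rational number equal to $0$ whose only $p$-power denominators come from the $p$-terms. Multiplying through by the largest appearing $p^K$ and reducing modulo $p$ pins down the top digit $r_{p,K} \in \{0,1,\ldots,p-1\}$; descending induction on $k$ pins down every $r_{p,k}$. This is equivalent to saying that the integer $s_p = \sum_k r_{p,k}\, p^{e-k} \in [0,p^{e})$ is uniquely determined modulo $p^{e}$ (Chinese Remainder uniqueness) and that its base-$p$ digits are uniquely determined. Once all fractional terms agree, the integer part $n$ is forced by the original equation.

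The complexity claim is essentially bookkeeping: extended Euclidean runs in polynomial time in $\log b$; each reduction $a c_i \bmod p_i^{e_i}$ is polynomial in the bit lengths involved, all of which are bounded by $O(\log|a| + \log b)$; and the base-$p_i$ expansion of $s_i < p_i^{e_i}$ takes $e_i$ divisions, each polynomial in $\log p_i^{e_i}$. The main (mild) obstacle is keeping the normalizations $0 \le r < p$, $k \ge 1$, and $0 \le s_i < p_i^{e_i}$ consistent so that the absorbed integer part $n$ is well defined and the uniqueness proof lines up with the existence construction; everything else is routine.
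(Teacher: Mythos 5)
Your proof is correct and follows essentially the same route as the paper: a Chinese remainder/B\'ezout splitting of $a/b$ over the prime-power factors of $b$, reduction of each numerator to a canonical range, and polynomial-time bookkeeping via the extended Euclidean algorithm. The only cosmetic differences are that the paper packages existence and uniqueness together as a set-arithmetic bijection (stated for an arbitrary coprime factorization of $b$, which it reuses later), working with the form in which each prime appears once with numerator coprime to it, while you pass to the base-$p$ digit form explicitly and give a separate valuation-style uniqueness argument.
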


For example,
\begin{eq}{e:pfex}
\frac1{360} = -2 + \frac12 + \frac18 + \frac23 + \frac19 + \frac35.
\end{eq}
\Lem{l:partial} is stated and proved by Bourbaki
\cite[Sec.~VII.2.3]{Bourbaki:algebra2} in the greater generality of
an arbitrary principal ideal domain $A$ and its fraction field $F(A)$.
Partial fractions as they arise in calculus are then the special case $A =
\R[x]$.  However, since Bourbaki does not discuss algorithms, we provide a
(similar) proof that explains why it also provides an efficient algorithm.

Before proving \Lem{l:partial}, we restate it with a slightly different
canonical form.   We can sum all of the terms for each prime $p$ to retain
\equ{e:partial}, but now with the rule that each prime $p$ (rather than each
pair $(p,k)$) appears exactly once in the summation, and each numerator $r$
satisfies $1 \le r < p^k$ and $p\nmid r$.  For example, the abbreviated
form of \equ{e:pfex} is
\[ \frac1{360} = -2 + \frac58 + \frac79 + \frac35. \]

\begin{proof} If $b_1, b_2, \ldots, b_\ell$ are coprime
integers and $b$ is their product, then the standard Chinese remainder
theorem
\begin{eq}{e:chin} \Z/b \cong \Z/b_1 \oplus \Z/b_2 \oplus \cdots
    \oplus \Z/b_\ell \end{eq}
can be rescaled by $b$ and thus expressed as the group isomorphism
\begin{eq}{e:chinb} (\frac1b\Z)/\Z = (\frac1{b_1}\Z)/\Z \oplus
    (\frac1{b_2}\Z)/\Z \oplus \cdots \oplus (\frac1{b_\ell}\Z)/\Z. \end{eq}
We write \eqref{e:chinb} as an equality, which is valid because
$(\frac1{b_j}\Z)/\Z \subseteq (\frac1b\Z)/\Z$ for each $j$, and the right
side of \eqref{e:chinb} is then an internal direct sum.  Note that either
version of the Chinese remainder theorem has an algorithm in $\ccP$ using
modular reciprocals, and thus the B\'ezout equation and the Euclidean
algorithm.

We can also lift \equ{e:chinb} to a set arithmetic summation of intervals:
\begin{eq}{e:chinz} \frac1b\Z = \Z + \frac1{b_1}[0,b_1)_\Z +
    \frac1{b_2}[0,b_2)_\Z + \cdots + \frac1{b_\ell}[0,b_\ell)_\Z. \end{eq}
Since \eqref{e:chinb} is a bijection, \eqref{e:chinz} is also
a set arithmetic bijection.

Finally, since \eqref{e:chin} is a ring isomorphism, it yields the
multiplicative group isomorphism
\begin{eq}{e:chinm} (\Z/b)^\times \cong (\Z/b_1)^\times \oplus
    (\Z/b_2)^\times \oplus \cdots \oplus (\Z/b_\ell)^\times. \end{eq}
We can combine \equ{e:chinm} with \equ{e:chinz} to obtain the set arithmetic
bijection
\begin{eq}{e:parbij} (\frac1b\Z)^\times = \Z +
    \frac1{b_1}[0,b_1)_\Z^\times + \frac1{b_2}[0,b_2)_\Z^\times
    + \cdots + \frac1{b_\ell}[0,b_\ell)_\Z^\times. \end{eq}
Here $(\frac1b\Z)^\times$ is the set of rational numbers $a/b$ with
$\gcd(a,b) = 1$, while $[0,b)_\Z^\times$ is the set of integers $0 \le a <
b$ with $\gcd(a,b) = 1$.  Finally, \equ{e:parbij} is the bijection we want
if we take $b_j = p_j^{k_j}$ to be the prime power factors of $b$.
\end{proof}

\subsection{Prime numbers}

Our proof of \Thm{th:rational} depends on a few results and a conjecture
concerning prime numbers.  Recall the prime number theorem: For any $\eps >
0$, if an integer $n$ is randomly chosen in the range $x < n < x(1+\eps)$,
then the probability that $n$ is prime approaches $1/(\ln x)$ in ratio as
$x \to \infty$.  We will need the following important refinement
due to Ingham.

\begin{theorem}[Ingham \cite{Ingham:consec}]  Let $\eps > 0$ be fixed
and let $x$ be a real parameter.  Then the probability that a randomly
chosen integer $x^3 < n < (x+\eps)^3$ is prime approaches $1/(3 \ln x)$
in ratio as $x \to \infty$.  In particular, if $x$ is sufficiently large,
then such a prime $n$ exists.
\label{th:ingham} \end{theorem}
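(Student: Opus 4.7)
The plan is to prove this via the analytic theory of the Riemann zeta function, following Ingham's original strategy. First, rescale by setting $X = x^3$ and $Y = (x+\varepsilon)^3$, so that $Y - X = 3\varepsilon X^{2/3} + O(X^{1/3})$; the claim reduces to a prime-counting asymptotic on an interval of length $\Theta(X^{2/3})$ about $X$, where the density of primes should tend to $1/\log X = 1/(3\log x)$.

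The central tool is the truncated explicit formula
\[ \psi(Y) - \psi(X) = (Y - X) - \sum_{|\gamma| \le T} \frac{Y^\rho - X^\rho}{\rho} + O\Bigl(\tfrac{X \log^2 X}{T}\Bigr), \]
where $\psi$ is the Chebyshev function and the sum ranges over nontrivial zeros $\rho = \beta + i\gamma$ of $\zeta(s)$. Since $|Y^\rho - X^\rho|/|\rho| \ll (Y - X) X^{\beta - 1}$, the task is to bound the contribution of zeros grouped by their real parts $\beta$. This is achieved by Ingham's zero-density estimate
\[ N(\sigma, T) \ll T^{3(1-\sigma)/(2-\sigma)} (\log T)^5 \]
for the count of zeros of $\zeta$ with $\Re(\rho) \ge \sigma$ and $|\gamma| \le T$. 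With $T$ chosen as a suitable power of $X$, integrating by parts in $\sigma$ against $N(\sigma, T)$ shows the zero-sum is $o(Y - X)$ as long as the interval length $Y - X \gg X^\theta$ with $\theta > 5/8$. Since $2/3 > 5/8$, we obtain $\psi(Y) - \psi(X) = (Y - X)(1 + o(1))$, and partial summation converts this to the prime-counting statement $\pi(Y) - \pi(X) \sim (Y-X)/\log X$.

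Dividing by the number of integers in $(X, Y)$, which is $(1 + o(1))(Y - X)$, and using $\log X = 3 \log x$ yields the advertised density $1/(3\log x)$. Existence of at least one prime in the interval is then immediate once $x$ is large enough, because $\pi(Y) - \pi(X) \to \infty$.

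The main obstacle, and the technical heart of the theorem, is the zero-density estimate itself. Its proof proceeds via mean-value theorems for the Dirichlet polynomials arising from the approximate functional equation of $\zeta$, together with Hal\'asz--Montgomery-type inequalities and a careful optimization of parameters; the explicit formula and partial summation steps above are standard transfer machinery by comparison.
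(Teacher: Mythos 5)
The paper does not prove this theorem; it is invoked as a black box with a citation to Ingham's 1937 paper, so there is no internal proof to compare against. Your outline is a faithful high-level reconstruction of Ingham's actual argument: the rescaling to an interval of length $\Theta(X^{2/3})$ about $X = x^3$, the truncated explicit formula for $\psi(Y)-\psi(X)$, the reduction of the zero sum to a $\sigma$-integral against $N(\sigma,T)$, the comparison with the exponent $5/8$ (noting $2/3 > 5/8$), and the conversion to $\pi(Y)-\pi(X) \sim (Y-X)/\log X$ by partial summation are all the right moves, and the arithmetic giving the density $1/(3\ln x)$ checks out.

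One caveat worth making explicit: the formulation of the zero-density bound you quote, $N(\sigma,T)\ll T^{3(1-\sigma)/(2-\sigma)}(\log T)^5$, is really a schematic for a family of estimates; Ingham's deduction of the exponent $5/8$ actually threads a subconvexity bound $\zeta(\tfrac12+it)\ll t^{c+\eps}$ (with the then-available $c=1/6$) through the density machinery to get intervals of length $x^{(1+4c)/(2+4c)+\eps} = x^{5/8+\eps}$. The clean $3(1-\sigma)/(2-\sigma)$ exponent is the Lindel\"of-conditional version. This doesn't affect the correctness of your sketch at its level of resolution, since $2/3$ comfortably exceeds $5/8$ either way, but it is where the dependence on the circle of ideas around mean values and subconvexity actually enters. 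You correctly flag the density estimate as the technical heart that your outline does not re-derive, which is the honest way to present this.
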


We also need the following result of \'Zra{\l}ek, which is based
on the Pollard $p-1$ factoring algorithm.  Recall that a number $n$
is \emph{$b$-smooth} when all of its prime factors are at most $b$.

\begin{theorem}[\'Zra{\l}ek {\cite[Cor.~4.6]{Zralek:pollard}}] If $n$
has a prime divisor $p$ such that $p-1$ is $b$-smooth for some $b$, then
we can find some non-trivial divisor of $n$ or prove that
$n$ is prime in deterministic time $\poly(b,\ln n)$.
\label{th:zralek} \end{theorem}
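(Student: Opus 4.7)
The plan is to execute a deterministic version of the Pollard \(p-1\) factoring strategy. Setting \(M = \lcm(1,2,\ldots,b)\), the prime number theorem gives \(\ln M = O(b)\), so \(M\) can be computed and stored in \(\poly(b)\) space. Because \(p - 1\) is \(b\)-smooth by hypothesis, \((p-1) \mid M\), so Fermat's little theorem yields \(a^M \equiv 1 \pmod p\) for every \(a\) coprime to \(n\), and in particular \(p \mid \gcd(a^M - 1, n)\). The core subroutine is therefore: compute \(a^M \bmod n\) by repeated squaring in time \(\poly(b,\ln n)\), then return \(\gcd(a^M - 1, n)\) whenever it is a proper divisor of \(n\). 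The only way this attempt fails for a given base \(a\) is the degenerate case \(\gcd(a^M - 1, n) = n\), \ie \(a^M \equiv 1 \pmod n\).

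Before invoking the Pollard loop, first dispose of easy cases with a deterministic primality test (\eg AKS) and a perfect-power test, both running in \(\poly(\ln n)\) time. We may then assume \(n\) is composite and not a prime power, so \(n = n_1 n_2\) for coprime \(n_1, n_2 > 1\). Under this assumption, \(\{a \in (\Z/n)^\times : a^M \equiv 1 \pmod n\}\) is a proper subgroup of \((\Z/n)^\times\), so a random base \(a\) succeeds with probability at least \(1/2\); the whole point of the theorem is to replace this randomness with a short deterministic search.

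The main obstacle, and where I expect the substance of \'Zra{\l}ek's argument to lie, is precisely this derandomization. My plan is to iterate \(a\) through the first \(\poly(b, \ln n)\) positive integers and, for each such \(a\), evaluate \(a^{M'} \bmod n\) for a logarithmically-sized family of intermediate exponents \(M' \mid M\) obtained by successively dividing out primes \(q \le b\). The goal is to locate a pair \((a, M')\) for which \(a^{M'} \equiv 1\) modulo one prime divisor of \(n\) but not another, forcing \(\gcd(a^{M'} - 1, n)\) to be a proper divisor. Bounding how many bases must be inspected before such a pair is guaranteed to appear reduces to a subgroup-counting or character-sum estimate in \((\Z/n_1)^\times \times (\Z/n_2)^\times\), and is the only step I expect to require real work; the arithmetic primitives themselves all fit inside the claimed \(\poly(b,\ln n)\) budget.
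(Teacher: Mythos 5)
The paper does not actually prove this statement; it is quoted from \'Zra{\l}ek's paper, and the surrounding remark in \Sec{ss:rational}'s background explicitly locates the difficulty: Pollard's $p-1$ method plus a primality test gives the result in $\BPP$, it can be derandomized \emph{assuming standard number-theoretic conjectures}, and \'Zra{\l}ek's contribution is precisely the \emph{unconditional} derandomization. Your proposal reproduces the easy randomized skeleton and then defers exactly this step, so the substance of the theorem is missing. Worse, the route you sketch for it would not work: searching over the first $\poly(b,\ln n)$ bases $a$ and justifying success by ``a subgroup-counting or character-sum estimate'' is the conditional argument. Unconditionally, the best character-sum bounds (Burgess-type) only show that a witness appears among the first $n^{c+o(1)}$ integers for some constant $c>0$, not among the first $\poly(\ln n)$; a polylogarithmic least-witness bound of the kind you need is only known under ERH/GRH. \'Zra{\l}ek's deterministic argument proceeds differently and cannot be recovered by this plan, so there is a genuine gap rather than an omitted routine detail.

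A secondary, fixable slip: $p-1$ being $b$-smooth does not imply $(p-1)\mid\lcm(1,\dots,b)$, since a prime $q\le b$ may divide $p-1$ to a power exceeding $b$ (e.g.\ $p-1=2^{100}$ is $2$-smooth). Smoothness only gives $(p-1)\mid M$ for $M=\prod_{q\le b}q^{\lceil\log_q n\rceil}$, whose bit length is still $\poly(b,\ln n)$, so the exponentiation budget survives, but your stated $M$ and the bound $\ln M=O(b)$ are the powersmooth case, not the hypothesis of the theorem.
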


\begin{corollary} By induction, we can find all prime factors $p$ of $n$
such that $p-1$ is $b$-smooth in deterministic time $\poly(b,\ln n)$.
\label{c:zralek} \end{corollary}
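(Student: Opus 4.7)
The plan is to apply \Thm{th:zralek} recursively, treating it as a black-box factoring subroutine. On input $m$ (initially $m = n$), run the algorithm of \'Zra{\l}ek with parameter $b$. It terminates in time $\poly(b,\ln m)$ with one of three outcomes: (i) a non-trivial divisor $d$ of $m$; (ii) a certificate that $m$ is prime; or (iii) failure (which we declare if the time budget is exceeded, so that termination is always guaranteed). In case (i), recurse on both $d$ and $m/d$. In case (ii), test whether $m-1$ is $b$-smooth by trial division by all primes up to $b$, at cost $\poly(b,\ln m)$, and emit $m$ if so. In case (iii), report that this branch yields no output.

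For correctness, let $p \mid n$ be a prime with $p-1$ being $b$-smooth. Along any chain of recursive calls tracking the current subproduct $m$ that still has $p$ as a factor, the hypothesis of \Thm{th:zralek} is satisfied (witnessed by $p$ itself), so case (iii) cannot occur for $m$; the algorithm must produce either a non-trivial divisor or a primality proof. Therefore the chain cannot terminate before $m = p$, at which point case (ii) triggers, the subsequent trial division confirms that $p-1$ is $b$-smooth, and $p$ is emitted. Conversely, if case (ii) emits some $m$, then $m$ is verified prime and $m-1$ is verified $b$-smooth, so every output is a valid prime factor of $n$ with the required property.

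For the complexity, the recursion tree is binary; each non-leaf splits $m$ into two proper factors whose product is $m$, so the product of the leaf values equals $n$. In particular, the number of leaves — and hence the total number of tree nodes — is at most $O(\log n)$, since $n$ has at most $\log_2 n$ prime factors counted with multiplicity. Each node performs one call to \'Zra{\l}ek's algorithm and at most one trial-division smoothness check, each costing $\poly(b,\ln n)$, so the total runtime is $\poly(b,\ln n)$ as claimed. There is no substantive obstacle here; the whole argument is a standard divide-and-conquer reduction on top of \Thm{th:zralek}, and the only subtlety worth noting is the timeout in case (iii), which is what makes the total runtime unconditional rather than only valid under the hypothesis of \Thm{th:zralek}.
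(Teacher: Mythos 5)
Your proposal is correct and is exactly the argument the paper intends: the corollary's ``by induction'' is precisely your recursive application of \Thm{th:zralek}, splitting on any non-trivial divisor found and stopping at primality proofs, with the $O(\log n)$ bound on the recursion tree giving the $\poly(b,\ln n)$ total time. Your added details (the timeout when the hypothesis fails and the trial-division smoothness filter) are sensible bookkeeping, not a departure from the paper's route.
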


\begin{remark} Pollard's $p-1$ algorithm combined with the Miller--Rabin
primality test yields an algorithm in $\BPP$ that calculates the same thing
as the algorithm in \Thm{th:zralek} or \Cor{c:zralek}.  Both algorithms
can also be derandomized assuming standard conjectures in number theory.
\'Zra{\l}ek's contribution is an unconditional derandomization of Pollard's
$p-1$ algorithm, together with the remark that we can replace Miller--Rabin
by the more recent deterministic AKS primality test \cite{AKS:primes}.
\end{remark}

Finally, we will use the following conjecture about existence of
prime numbers.

\begin{conjecture} Let $n > 0$ be an integer and let $x \in \{0,1\}^n$ be a
vector of bits of length $n$.  Then there is a prime $p$ such that $p-1$
is $\poly(n)$-smooth, and such that for every $1 \le k \le n$, the $k$th
odd prime $q_k$ divides $p-1$ exactly $x_k$ times.
\label{c:smoothp1} \end{conjecture}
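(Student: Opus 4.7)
The plan is to construct the prime $p$ in the form $p = mt + 1$, where
\[ m \defeq \prod_{k \,:\, x_k = 1} q_k \]
encodes the ``mandatory'' prime factors of $p-1$, and $t$ is a supplementary multiplier to be chosen both $\poly(n)$-smooth and coprime to $Q_n \defeq \prod_{k=1}^n q_k$. This setup automatically enforces the exponent conditions for $k \le n$: the prime $q_k$ divides $p-1 = mt$ exactly once if $x_k = 1$ and not at all if $x_k = 0$. Moreover, $m$ itself is $q_n$-smooth with $q_n = O(n \log n)$ by the prime number theorem, so $m$ contributes only $\poly(n)$-smooth prime factors to $p-1$. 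The task therefore reduces to producing a $\poly(n)$-smooth integer $t$, coprime to $Q_n$, such that $mt+1$ is prime.

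To argue existence, I would combine two density inputs. First, by Dirichlet's theorem in a quantitative form (Siegel--Walfisz, or a Linnik-type bound, possibly under GRH), a positive proportion of integers $t \le T$ in the appropriate residue class modulo $Q_n$ yield primes $mt+1$, provided $T$ is polynomially larger than $m$. Second, the count of $B$-smooth integers up to $T$ obeys $\Psi(T,B) \sim T \rho(u)$ with $u = \log T / \log B$, where $\rho$ is Dickman's function, and the extra coprimality to $Q_n$ costs only a factor $\prod_{k \le n}(1 - 1/q_k) = \Theta(1/\log n)$. Choosing $B = n^C$ for a sufficiently large $C$ and $T$ only polylogarithmically larger than $m$, one would hope to verify that these two densities overlap in a nonempty set, yielding a suitable $t$. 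The output prime $p$ then has bit complexity $\poly(n)$, as required for use in \Thm{th:rational}.

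The hard part---and the reason this remains a conjecture---is that primality and smoothness are essentially orthogonal analytic constraints that are notoriously difficult to control simultaneously. Even the much weaker statement that there are infinitely many primes $p$ for which $p-1$ is $p^\eps$-smooth is beyond current unconditional methods; results of Friedlander, Baker--Harman, and others provide only partial progress, and none are nearly strong enough to prescribe the detailed factorization pattern demanded here. A fully rigorous argument would likely require either a genuine breakthrough in sieve theory for smooth shifted primes, or strong conjectural inputs such as GRH combined with a Bateman--Horn-type conjecture adapted to the smooth-value setting. My proposal is therefore to formalize the heuristic above under such conjectures, presenting the resulting statement as a conditional theorem, while leaving the unconditional form as \Conj{c:smoothp1} itself.
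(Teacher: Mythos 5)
This statement is a conjecture: the paper offers no proof of it, only the remark that it can be argued non-rigorously by treating the prime number theorem as a heuristic (the Cram\'er model). Your proposal is faithful to that situation --- the decomposition $p = mt+1$, with $m$ the mandated squarefree product of the $q_k$ with $x_k = 1$ and $t$ a $\poly(n)$-smooth cofactor coprime to $\prod_{k\le n} q_k$, is a sharper, more structured version of the same density heuristic, and your diagnosis that controlling primality and smoothness of $p-1$ simultaneously (let alone a prescribed factorization pattern) is beyond current unconditional methods is exactly why the paper leaves it as a conjecture rather than a theorem. Two minor quantitative slips that do not affect the heuristic: the proportion of $t \le T$ with $mt+1$ prime is of order $1/\log(mT)$ rather than a positive proportion, and your two stipulations on $T$ (``polynomially larger than $m$'' for the Dirichlet input versus ``only polylogarithmically larger than $m$'' when invoking $\Psi(T,B)$) are in tension and would need to be reconciled in any conditional formalization.
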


\Conj{c:smoothp1} can be argued non-rigorously, if we interpret the prime
number theorem as a heuristic for frequencies of prime numbers with various
properties.  (Such a heuristic framework is known as the Cram\'er model.)

\subsection{The proof}
\label{ss:ratproof}

\begin{proof}[Proof of \Thm{th:rational}] For clarity, we work in $\Q$
rather than in $\Q/\Z$.  For simplicity, we first assume an integer
factoring oracle.

Let $d \in \NP$ be a decision problem with a predicate $z \in \ccP$, so that
$d(x) = \yes$ if and only if there exists $y$ such that $z(x,y) = \yes$.
By hypothesis, a witness $y$ is a bit string with $\abs{y} = \poly(\abs{x})$.
We assume a minimum length $\abs{y} = m \ge m_0$, where $m_0$ exists by
\Thm{th:ingham} with the following property:   Every bit string $y$ of
length $m \ge m_0$ can be prepended with a 1 and appended by some bit
string of length $2m$ to obtain a $3m+1$-bit prime $p$.

Given our assumptions, we can replace the predicate $z(x,y)$ with a new
predicate $\tz(x,p)$ for the same decision problem $d(x)$, such that
every witness $p$ of $\tz$ is a prime number.  Given a putative prime $p$
to be used as a witness, we first check that it is prime using either
the factoring oracle or the AKS primality algorithm \cite{AKS:primes}.
We also reject $p$ if its binary expansion does not have $3m+1$ bits for
some $m \ge m_0$.  If $p$ does have $3m+1$ bits, then after skipping the
leading bit (which is always 1), we let $y$ be the next $m$ high bits.
In this case, we let $\tz(x,p) = z(x,y)$.   By \Thm{th:ingham}, the
predicates $z$ and $\tz$ produce the same decision problem $d \in \NP$.

We construct the hiding function $f$ from $\tz(x,p)$ for some fixed
value of $x$ as follows.  We expand an input $a/b$ to $f$ into partial
fractions using \Lem{l:partial}.  We then let the value of $f(a/b)$
be the same expansion with the integer term $n$ deleted and with each
term $r/p$ deleted when $\tz(x,p) = \yes$.  (We can also make $f(a/b)$
a canonical representative in the standard syntax for fractions by summing
this new expansion.)  The hidden subgroup $H$ is thus generated by 1 and by
$1/p$ for every prime $p$ which is an accepted certificate.  As desired,
an algorithm to answer whether $H \ne \Z$ would also answer whether $d(x)
= \yes$.

We now turn to the version of the theorem that does not require a factoring
oracle, but is conditional on \Conj{c:smoothp1}.  In this case, we can
map a prime $p$ given by \Conj{c:smoothp1} to a certificate $y$ for the
predicate $z(x,y)$ by encoding $y$ in the exponents of the first $m$ odd
primes in the prime factorization of $p-1$.  We can then let $\tz(x,p) =
z(x,y)$, when the certificate $p$ is a prime of this form.   In constructing
the modified predicate $\tz$, we also reject any composite certificate $p$
using the AKS primality test, and any prime certificate $p$ that does match
the description in \Conj{c:smoothp1} by finding all small factors of $p-1$.

We construct a modified hiding function $f$ using
\Cor{c:zralek}.  Given an input $a/b$, the corollary gives us a partial
factorization of the denominator
\[ b = p_1^{k_1} p_2^{k_2} \dots p_\ell^{k_\ell} c, \]
such that each prime factor $q = p_j$ can be any prime $q$ that fits the
conclusion of \Conj{c:smoothp1}, and $c$ is a leftover factor which may
be composite but which is coprime to the other factors.  The proof of
\Lem{l:partial}, in particular \equ{e:parbij}, applies equally well to
any factorization of $b$ into coprime factors rather than only the prime
power factorization.  This yields a canonical form
\[ \frac{a}{b} = \frac{n}{c} + \sum_{j=1}^\ell \frac{r_j}{p_j^k}. \]
We can still let $H$ be generated by 1 and by $1/q$ for each witness
$q$ accepted by $\tz(x,q)$.   To calculate the canonical representative
of $\frac{a}{b}+H$, we can discard all terms on the right with $k=1$ and
$\tz(x,p_j) = \yes$, and we can reduce $n$ mod $c$ instead of
discarding it entirely.
\end{proof}

\subsection{A sparse version of Simon's problem}
\label{ss:simon}

One lesson of our results is that the computational complexity of HSP for an
infinite group depends greatly on how elements are encoded.  This lesson
also applies to the original HSP solved by Simon \cite{Simon:power},
namely HSP in finite groups of the form $(\Z/2)^k$, which was a precursor
to Shor's algorithm.

Let $(\Z/2)^\infty$ be the group of infinite binary vectors with finitely
many non-zero coordinates.  We can use the standard dense encoding of
elements of $(\Z/2)^\infty$ as bit strings with the infinite tail of
zeroes removed. With this encoding, HSP in $(\Z/2)^\infty$ is equivalent
to the standard version in $(\Z/2)^k$ that is solved by Simon's algorithm.
Given the complexity parametrization described in \Sec{ss:hidden}, we
can assume that $H$ is generated by vectors with polynomially bounded bit
strings, so that $H \subseteq (\Z/2)^k$ with $k$ polynomially bounded.

Alternatively, we can uniquely encode every vector $\vv \in (\Z/2)^\infty$
as a sparse sum
\begin{eq}{e:sparse} \vv = \ve_{y_1} + \ve_{y_2} + \cdots + \ve_{y_\ell},
\end{eq}
where by definition $\ve_y$ is the $y$th standard basis vector and the
indices are listed in strictly increasing order.  Crucially, we also write
each index $y_j$ in binary.

\begin{theorem} HSEP for the group $(\Z/2)^\infty$ with sparse encoding
is $\NP$-hard.
\label{th:simon} \end{theorem}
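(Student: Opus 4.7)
My plan is a straightforward Karp reduction from 3-SAT, exploiting the feature highlighted just before the theorem: in the sparse encoding~\eqref{e:sparse}, the basis vector $\ve_y$ has bit length $O(\log y)$, so the $2^n$ basis vectors $\ve_y$ indexed by $y < 2^n$ all admit $O(n)$-bit descriptions. The job of the reduction is to turn the exponential SAT search into a search for a nonzero basis direction of the hidden subgroup.

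Given a 3-SAT instance $\phi$ on $n$ variables, I would partition the index set $\Z_{\ge 0}$ into \emph{satisfying} indices (those $y$ with $y < 2^n$ such that the $n$-bit string $y$ satisfies $\phi$) and \emph{unsatisfying} indices (everything else, including every $y \ge 2^n$). This yields an internal direct sum decomposition $(\Z/2)^\infty = A \oplus B$, where $A$ and $B$ are the subgroups of vectors supported on satisfying and unsatisfying indices respectively. I would then take the hiding function $f: (\Z/2)^\infty \to B$ to be the projection onto the second factor: given $\vv = \ve_{y_1} + \cdots + \ve_{y_\ell}$ in sparse form, delete every $\ve_{y_j}$ whose index is satisfying and output the remaining sum in sparse form. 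Since $f$ is a group homomorphism with kernel $A$, it is a hiding function whose hidden subgroup is exactly $A$, and $A$ is nontrivial if and only if $\phi$ is satisfiable; any satisfying assignment $y$ yields the witness $\ve_y$ with an $O(n)$-bit sparse encoding.

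The main things to check are that $f$ runs in polynomial time and that $(\Z/2)^\infty$ with sparse encoding is an explicit group. Classifying a single index $y_j$ takes time polynomial in its bit length and in $|\phi|$: compare $y_j$ with $2^n$, then substitute its low $n$ bits into each clause of $\phi$. Summed over the at most $\ell$ input indices this is polynomial in the total input length, and since only indices are compared and deleted there is no output blow-up. For the explicit-group structure, addition of two sparse-encoded vectors is the symmetric difference of their index lists and inversion is the identity map, both polytime. Combined with \Prop{p:hsepnp}, these observations yield $\NP$-completeness (not only hardness) of HSEP.

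I do not anticipate a serious obstacle: the whole point of the sparse encoding is that coordinate indices themselves live in binary, so the basis-vector search inside $(\Z/2)^\infty$ is already wide enough to encode an $\NP$ search, and essentially no cleverness is needed beyond the observation that the check ``$y$ satisfies $\phi$'' can be performed on each coordinate of a sparse vector independently.
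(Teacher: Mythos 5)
Your proposal is correct and is essentially the paper's proof: both define the hiding function on $(\Z/2)^\infty$ to be the coordinate projection that deletes each basis term $\ve_y$ whose index $y$ is an accepted witness, so the hidden subgroup is spanned by exactly those $\ve_y$ and is nontrivial iff the instance is satisfiable. The only cosmetic difference is that you specialize to 3-SAT while the paper runs the same construction for an arbitrary $\NP$ predicate $z(x,y)$ (also noting the relativizing consequence), and the paper uses a small normalization of witnesses ("end in 1") where you use the cutoff $y < 2^n$ to fix the index/witness correspondence; these amount to the same thing.
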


The proof of \Thm{th:simon} is a simplified version of the proof of
\Thm{th:rational}.  We assume a decision problem $d \in \NP$ defined
by a predicate $z \in \ccP$ and particular input $x$ to $d$.  We assume
for convenience that every witness $y$ that $z(x,y)$ accepts ends in 1.
We can then define a hiding function $f:(\Z/2)^\infty \to (\Z/2)^\infty$
as follows: Given $\vv \in (\Z/2)^\infty$ in the form \eqref{e:sparse},
we define $f(\vv)$ by deleting those terms $\ve_y$ such that $z(x,y) = \yes$.
Then $f$ hides a subgroup of $(\Z/2)^\infty$ that is non-trivial if and
only if $d(x) = \yes$, as desired.

In \Thm{th:simon}, we can also replace the infinitely generated group
$(\Z/2)^\infty$ by a finite group $(\Z/2)^k$, provided that $k$ is
exponential in $\abs{x}$ and we still use sparse notation for $\vv \in
(\Z/2)^k$.

Finally, we can extend the ideas in this subsection to the group $\Z^\infty$.
A vector $\vv \in \Z^\infty$ can either be written in dense encoding as
a vector in $\vv \in \Z^n \subseteq \Z^\infty$; or in sparse encoding
along the lines of \equ{e:sparse}, except with integer coefficients.
As discussed in \Thm{th:svp} and \Sec{ss:complex}, each integer coefficient
can be written either in binary or unary.  The proof of \Thm{th:simon} shows
that HSP in $\Z^\infty$ is $\NP$-hard with sparse encoding of the vectors,
and with either encoding of the coefficients.  Meanwhile \Cor{c:infabel}
addresses the two cases with dense encoding of vectors in $\Z^\infty$.

\section{Proof of \Thm{th:free}: NHSP in free groups}
\label{s:free}

As in the proof of \Thm{th:rational}, we can prove \Thm{th:free} with
a hiding function $f$ on the free group $F_k$ whose values are canonical
representatives of cosets of a hidden subgroup $N \normaleq F_k$.  Since $N$
is normal, it can be \emph{normally} generated by certificates $w \in F_k$
that satisfy an $\NP$ predicate, meaning that $N$ is generated by these
elements and their conjugates.  Also since $N$ is normal, the coset space
$F_k/N$ is a quotient group $G \cong F_k/N$, which is thus a finitely
presented group.  Canonical representatives in $F_k$ of cosets of a normal
subgroup $N$ are then exactly canonical words for group elements $g \in G$.

An efficient algorithm to compute canonical words in $G \cong F_k/N$
yields an efficient algorithm for the word problem in $G$.  By definition,
the \emph{word problem} in $G$ is the algorithmic question of whether two
words $v$ and $w$ represent the same element of $G$. Equivalently, the word
problem asks whether an arbitrary word $u = v^{-1}w$ represents $1 \in G$,
or equivalently whether $u \in N$.  An algorithm for the word problem also
implies an algorithm for canonical words:  Given a word $w$, we can let its
canonical name be its shortlex equivalent as defined in \Sec{ss:present}
below, and we can find $v$ by searching through all words up to the length
of $w$.  However, an efficient algorithm for the word problem does not
necessarily yield an efficient algorithm to compute canonical words.

\begin{remark} The word problem in finitely presented, non-commutative groups
is a major topic in its own right.  It has a polynomial-time algorithm in
many finitely presented groups \cite{ECHLSPT:word}, it can be equivalent
to the halting problem in others \cite{Boone:word,Novikov:word}, and it is
computable but arbitrarily intractable in yet others \cite{KMS:complex}.
Determining whether a finitely presented group is trivial is a related
problem that is also as hard as the halting problem.  In our proof of
\Thm{th:free}, we will choose $G \cong F_k/N$ from a special class of groups
that not only have a polynomial-time algorithm to compute canonical words,
but even an algorithm that is uniformly efficient given only query access
to the relators.
\end{remark}

\subsection{Group presentations}
\label{ss:present}

In this subsection and the next two, we will review some material from
combinatorial group theory and small cancellation theory.  We will largely
follow an exposition by Strebel \cite{Strebel:small}, although some of
the terminology in \Sec{ss:small} is from Peifer \cite{Peifer:artin}.

Let $A$ be an alphabet, which in parts of this subsection and the next two
can be either finite or infinite.  Recall from \Sec{ss:explicit} that $W_A =
(A \cup A^{-1})^*$ is the semigroup of words in letters in $A$ and their
formal inverses.  The free group $F_A$ can be defined as the quotient of
$W_A$ by the equivalence $aa^{-1} \sim 1 \sim a^{-1}a$ for every $a \in A$.
A word $w \in W_A$ is \emph{reduced} if it does not have an adjacent pair
of inverse letters.  Every element $F_A$ is represented by a unique reduced
word, which is thus a canonical word.  By abuse of notation, we will also
view $F_A$ as the subset of $W_A$ consisting of reduced words.

Let $C_A$ be the set of conjugacy classes in $F_A$. Since every word
$w \in F_A$ is conjugate to its cyclic permutations, we can specify a
conjugacy class in $C_A$ by a \emph{cyclic word} $t$, meaning an orbit
of words under cyclic permutation.   The conjugacy class described by $t$
does not change if we remove an adjacent, inverse pair of letters anywhere
along the circle of letters of $t$.  If $t$ does not have such a pair of
letters, then it is \emph{cyclically reduced}, or reduced as a cyclic word.
Every conjugacy class in $C_A$ is represented by a unique reduced cyclic
word, and we will identify $C_A$ with the set of such words.  Also, by abuse
of notation, we will assume that we can concatenate two linear words $v$
and $w$ at both ends rather than at one end to form a cyclic word $t = vw$.

A \emph{presented} group $G$ is a construction of the form
\[ G = \braket{A \mid R} \defeq F_A/N_R, \]
where $R$ is a set of words called \emph{relators}, and $N_R \normaleq F_A$
is the subgroup of $F_A$ normally generated by $R$.  We can also describe $G$
as the quotient of $F_A$ generated by the equivalences $r \sim 1$ for all $r
\in R$, and we will write $v \sim w$ when $v,w \in F_A$ represent
the same element in $G$.  If $A$ is finite, then $G$ is \emph{finitely
generated}; if $R$ is also finite, then $G$ is \emph{finitely presented}.

If $G$ is a presented group, then we call $w \in F_A$ a \emph{shortest
word}, or in geometric language a \emph{geodesic}, if it is shortest
in the equivalence class of words $w$ that represent $[w] = g \in G$.
If the alphabet $A$ of the group $G$ is finite and we assume some ordering
of $A \cup A^{-1}$, then there is a corresponding \emph{shortlex} ordering
of all of the words in $W_A$ given by first ordering words by length, and
then within each fixed length ordering them alphabetically.  The shortlex
representative $w$ for $[w] = g \in G$ is by definition the one that is first
in the shortlex ordering; equivalently, the alphabetically first geodesic.

Since $N_R$ is normally generated by the relators $R$, we can take $R$ to
be a set of conjugacy classes in $F_A$ rather than a set of words in $F_A$.
Then since $R \subseteq C_A$, we can take the relators in $R$ to be reduced
cyclic words.  We can also assume that $r \in R$ implies $r^{-1} \in R$,
and that $1 \notin R$.  A set of cyclically reduced relators with these
properties is called \emph{symmetrized}.

As remarked above, if $G$ is a finitely presented group, then this by
itself yields few restrictions on the computational complexity of the word
problem or the canonical word problem in $G$.  To prove \Thm{th:free},
we will consider specific presented groups that have efficient solutions
to their word problems, because their relators do not share long subwords.
This type of behavior for relators is called \emph{small cancellation}.
More formally, if $\lambda > 0$, a presented group $G$ is $C'(\lambda)$
small cancellation means that if a subword $s$ of a relator $r \in R$ appears
elsewhere either in $r$ or in another relator $r'$, then $\abs{s} < \lambda
\abs{r}$.  This definition is due to Greendlinger \cite{Greendlinger:word},
who used it to state and prove \Thm{th:green}, in a paper that initiated
small cancellation theory as a topic in group theory.

\subsection{Diagrams of relators}
\label{ss:diagram}

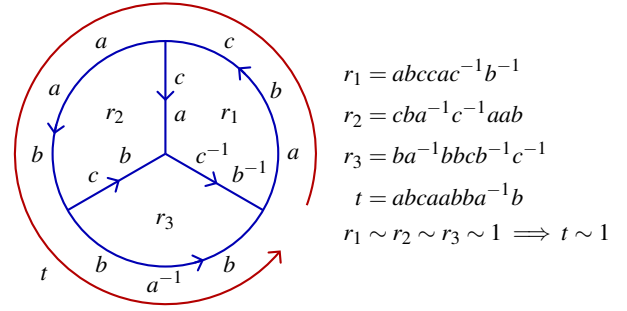
\begin{figure}
\begin{tikzpicture}[thick,draw=darkblue,>=angle 90]
\draw (0,0) circle (1.5);
\draw[->] (90:.75) -- (90:.7); \draw (0,0) -- (90:1.5);
\draw[->] (210:.75) -- (210:.7); \draw (0,0) -- (210:1.5);
\draw[->] (330:.75) -- (330:.8); \draw (0,0) -- (330:1.5);
\draw[->] (45:1.5) arc (45:50:1.5);
\draw[->] (165:1.5) arc (165:170:1.5);
\draw[->] (285:1.5) arc (285:290:1.5);
\foreach \th/\x in {0/a,30/b,60/c,120/a,150/a,180/b,240/b,270/a^{-1},300/b}
    \draw (\th:1.7) node {$\x$};
\draw (90:.5) node[right] {$a$};
\draw (90:1) node[right] {$c$};
\draw (210:.5)+(.1,0) node[above left] {$b$};
\draw (210:1)+(.1,0) node[above left] {$c$};
\draw (330:.5)+(-.15,0) node[above right] {$c^{-1}$};
\draw (330:1)+(-.1,0) node[above right] {$b^{-1}$};
\draw (30:1) node {$r_1$};
\draw (150:1)+(.2,0) node {$r_2$};
\draw (270:.9) node {$r_3$};
\draw[->,darkred] (-20:2) arc (-20:320:2);
\draw (-1.6,-1.6) node {$t$};
\draw (2.25,0) node[right] {$\begin{aligned}
    r_1 &= abccac^{-1}b^{-1} \\
    r_2 &= cba^{-1}c^{-1}aab \\
    r_3 &= ba^{-1}bbcb^{-1}c^{-1} \\
    t &= abcaabba^{-1}b \\
    r_1 &\sim r_2 \sim r_3 \sim 1 \implies t \sim 1
    \end{aligned}$};
\end{tikzpicture}
\caption{A cyclic van Kampen diagram in a group with at least
    3 generators and least 3 relators.}
\label{f:cyclic} \end{figure}

We can model the calculus of relators in a presented group $G$ by a
planar structure called a \emph{cyclic (van Kampen) diagram}.
By definition, such a diagram $D$ is an undirected planar graph whose
vertices all have valence at least 3 (but which is allowed multiple edges
and self-loops) with the following type of decoration:
\begin{enumerate}
\item Each edge of $D$, if given a direction, is assigned a non-trivial,
reduced word $w \in F_A$.  The same edge is assigned $w^{-1}$ in the
other direction.
\item The cyclic word of each face of $D$ (other than the outer face)
is some relator $r \in R$.
\end{enumerate}
Finally, the \emph{value} of the diagram $D$ is the cyclic word of its
outer face, read counterclockwise.  We also require that the value of $D$
be cyclically reduced.  \Fig{f:cyclic} shows an example.

Just as we can cancel adjacent inverse letters in a word $w \in W_A$, we
can also cancel adjacent faces in a cyclic diagram $D$ that are labelled
with inverse relators.  More precisely, there is a way to cancel two faces
$P$ and $Q$ if they share exactly one edge, and if $P$ and $Q$
yield inverse linear words $r$ and $r^{-1}$ when traversed from a common
vertex $p$.  We call $D$ \emph{reduced} if no such pair of faces exists.

\begin{remark} If $P$ and $Q$ share more than one edge, then it is only
possible to cancel them (and remove the other faces that they surround)
if all of their intersections lie in matching positions along the inverse
relators $r$ and $r^{-1}$.  On the other hand, we can cancel $P$ and $Q$
when they only share a vertex $p$ rather than an edge containing it
\cite[Sec.~11.6]{Olshanskii:geometry}. Following Strebel, we will
not use these more general ways to cancel faces.
\end{remark}

\begin{proposition}[After van Kampen
{\cite[Sec.~2.1~\&~Thm.~18]{Strebel:small}}] The value of any cyclic
diagram $D$ over a presented group $G$ is trivial in $G$.  Conversely,
if $t$ is a reduced cyclic word which is trivial in $G$, then $t$ is the
value of a reduced diagram $D$.
\label{p:value} \end{proposition}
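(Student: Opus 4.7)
The plan is to prove the two directions separately, using induction for the forward direction and a ``cactus'' construction together with a minimization argument for the converse.

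For the forward direction, I would induct on the number of interior (non-outer) faces of $D$. The base case, in which $D$ has no interior face, is forced by the cyclic-reducedness of the value together with the valence-$\geq 3$ condition to consist of a single vertex, giving a trivial value. For the inductive step, pick a face $P$ adjacent to the outer boundary along some arc $\beta$; writing the cyclic boundary of $P$ as $r = \beta\delta$ with $r \in R$, excise $P$ to obtain a new diagram $D'$ whose outer cyclic word is obtained from that of $D$ by replacing $\beta$ with $\delta^{-1}$. Since $r \sim 1$ in $G$ gives $\beta \sim \delta^{-1}$, the values of $D$ and $D'$ agree in $G$. After cyclically reducing and tidying $D'$ (suppressing valence-$2$ vertices and trimming spurs) to a legal diagram with strictly fewer interior faces, the inductive hypothesis closes the case.

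For the converse, suppose $t$ is a reduced cyclic word with $[t] = 1$ in $G$. Since $N_R$ is the normal closure of $R$ in $F_A$, we can write $t = \prod_{i=1}^N g_i r_i g_i^{-1}$ in $F_A$ for some $r_i \in R$ and $g_i \in F_A$. I would build an associated planar cactus: pick a basepoint $p$ and, for each $i$, attach a disk whose boundary cycle spells $r_i$, connected to $p$ by a stem spelling $g_i$. Reading the outer boundary of this cactus counterclockwise spells exactly $\prod_i g_i r_i g_i^{-1}$ in $W_A$, and the free cancellations $aa^{-1} \sim 1$ that reduce this product to $t$ correspond geometrically to identifying adjacent inverse edges along the stems. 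The result is a planar diagram $D_0$ whose value is the cyclically reduced $t$. To ensure reducedness, pass to the minimum: among all cyclic diagrams over $G$ with value $t$, choose $D$ with the fewest interior faces. If $D$ contained a cancellable pair $(P,Q)$ sharing an edge whose relator labels are inverses, removing them and regluing the two sides would yield a diagram with strictly fewer faces and the same value, contradicting minimality; hence $D$ is reduced.

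The main obstacle throughout is topological bookkeeping: verifying that each manipulation (excising $P$ in the forward direction, identifying edges when reducing the cactus, cancelling a pair $(P,Q)$ in the minimization step) leaves a well-formed diagram that is planar, has every vertex of valence $\geq 3$, and has cyclically reduced outer value. In particular, cancelling a pair $(P,Q)$ may disconnect the underlying graph or produce valence-$2$ vertices that must be suppressed, and one must check that the surviving diagram is still planar with the correct outer word. These steps are standard in van Kampen's original argument and the exposition by Strebel cited here, but they are the only genuinely delicate part of the proof.
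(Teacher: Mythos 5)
The paper does not prove this proposition itself; it attributes it to van Kampen and cites Strebel's exposition, so there is no internal argument to compare against. Your sketch is the standard proof of van Kampen's lemma: excision of a face meeting the outer boundary plus induction on the number of interior faces for the forward direction; the wedge-of-lollipops (cactus) construction followed by free reduction for existence, and minimization of face count to force reducedness, for the converse. It is correct at the level of a sketch, and you rightly flag the only real work as the bookkeeping that each surgery (excision, edge-folding during free reduction, cancelling a face pair) leaves a legal planar diagram under the paper's stricter conventions --- cyclically reduced outer value, every vertex of valence at least $3$ --- so valence-$2$ suppression and spur-trimming are needed after each step. Two details worth pinning down if you flesh this out: the base case of your forward induction actually needs Euler's formula together with connectedness and the valence condition to force a leafless tree (hence a single vertex), which is the one place the valence restriction does genuine work; and in the minimization step, cancelling a pair $(P,Q)$ can disconnect the diagram or spawn a closed spherical component that must be discarded, and the paper deliberately restricts ``cancellable'' to pairs sharing exactly one edge (see the remark following its definition of face cancellation), which is precisely the case your minimization removes.
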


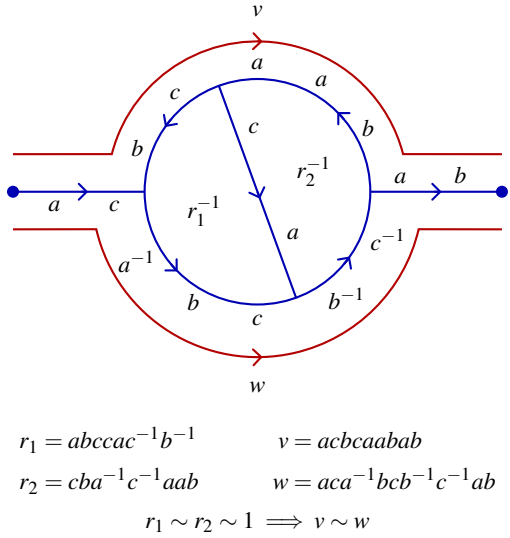
\begin{figure}
\begin{tikzpicture}[thick,draw=darkblue,fill=darkblue,>=angle 90]
\draw (0,0) circle (1.5);
\draw (110:1.5) -- (290:1.5);
\draw[->] (0,0) -- (290:.1);
\draw[->] (40:1.5) arc (40:45:1.5);
\draw[->] (140:1.5) arc (140:145:1.5);
\draw[->] (220:1.5) arc (220:225:1.5);
\draw[->] (320:1.5) arc (320:325:1.5);
\fill (-3.25,0) circle (.08); \draw (-3.25,0) -- (-1.5,0);
\draw[->] (-2.35,0) -- (-2.25,0);
\fill (3.25,0) circle (.08); \draw (1.5,0) -- (3.25,0);
\draw[->] (2.35,0) -- (2.45,0);
\foreach \th/\x in {30/b,60/a,90/a,130/c,160/b,240/b,270/c}
    \draw (\th:1.7) node {$\x$};
\foreach \th/\x in {210/a,310/b,340/c} \draw (\th:1.85) node {$\x^{-1}$};
\draw (-2.7,0) node[below] {$a$};
\draw (-1.9,0) node[below] {$c$};
\draw (1.9,0) node[above] {$a$};
\draw (2.7,0) node[above] {$b$};
\draw (110:.7) node[above right] {$c$};
\draw (290:.8) node[above right] {$a$};
\draw (20:.8) node {$r_2^{-1}$};
\draw (200:.75) node {$r_1^{-1}$};
\draw[darkred] (-3.25,.5) -- (165.5:2) arc (165.5:14.5:2) -- (3.25,.5);
\draw[->,darkred] (90:2) arc (90:88:2);
\draw[darkred] (-3.25,-.5) -- (193.1:2.2) arc (193.1:346.9:2.2)
    -- (3.25,-.5);
\draw[->,darkred] (270:2.2) arc (270:272:2);
\draw (0,2.4) node {$v$};
\draw (0,-2.6) node {$w$};
\draw (0,-3) node[below] {$\begin{gathered} \begin{aligned}
    r_1 &= abccac^{-1}b^{-1} &\qquad v &= acbcaabab \\
    r_2 &= cba^{-1}c^{-1}aab &\qquad w &= aca^{-1}bcb^{-1}c^{-1}ab
    \end{aligned} \\ r_1 \sim r_2 \sim 1 \implies v \sim w
    \end{gathered}$};
\end{tikzpicture}
\caption{An equality diagram in a group with 3 generators and 2 relators.}
\label{f:equiv} \end{figure}

We will consider an extension of a cyclic diagram called an
\emph{equality diagram} to capture when two reduced words $v,w \in F_A$
are equivalent.  An equality diagram $D$ has the same definition as a
cyclic diagram, except that we label two vertices on the outer face of $D$
as the \emph{start} $p$ and \emph{end} $q$, and we allow these vertices
to have any valence (including 1 or 2).   The value of $D$ is now
the pair $(v,w)$, where $v$ is an outer arc from $p$ to $q$ clockwise
and $w$ is the counterclockwise arc.  See \Fig{f:equiv} for an example.

We write $v \sim_D w$ to summarize that $v$ and $w$ are equivalent via
the diagram $D$.   We let $D^{-1}$ denote the reflected diagram, so
that $v \sim_D w$ if and only if $w \sim_{D^{-1}} v$. In this notation,
\Prop{p:value} implies that $v \sim w$ if and only if $v \sim_D w$ for
some reduced equality diagram $D$.

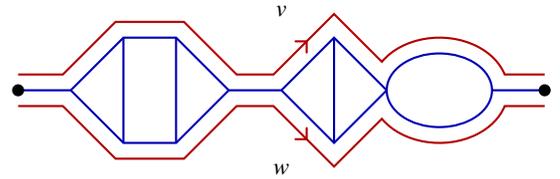
\begin{figure}
\begin{tikzpicture}[thick,draw=darkblue,scale=.7,>=angle 90]
\draw (0,0) -- (1,0) -- (2,1) -- (3,1) -- (4,0) -- (5,0) -- (6,1) -- (7,0);
\draw (1,0) -- (2,-1) -- (3,-1) -- (4,0) (5,0) -- (6,-1) -- (7,0);
\draw (2,1) -- (2,-1) (3,1) -- (3,-1) (6,1) -- (6,-1);
\draw (8,0) ellipse (1 and .7);
\draw (9,0) -- (10,0);
\fill (0,0) circle ({.08/.7});
\fill (10,0) circle ({.08/.7});
\draw[darkred] (0,.3) -- (.85,.3) -- (1.85,1.3) -- (3.15,1.3) -- (4.15,.3)
    -- (4.85,.3) -- (6,1.45) -- (6.908,.542)
    arc (147.2:17.46:1.3 and 1) -- (10,.3);
\draw[->,darkred] (5.45,.9) -- (5.5,.95);
\draw[darkred] (0,-.3) -- (.85,-.3) -- (1.85,-1.3) -- (3.15,-1.3)
    -- (4.15,-.3) -- (4.85,-.3) -- (6,-1.45) -- (6.908,-.542)
    arc (-147.2:-17.46:1.3 and 1) -- (10,-.3);
\draw[->,darkred] (5.45,-.9) -- (5.5,-.95);
\draw (5,1.5) node {$v$};
\draw (5,-1.5) node {$w$};
\end{tikzpicture}
\caption{A thin equality diagram with three thin disks.}
\label{f:thin} \end{figure}

An equality diagram $D$ for $v \sim w$ is \emph{thin} if each face intersects
each of $v$ and $w$ in an edge.  (Since we assume that $v$ and $w$ are
both reduced, $D$ is necessarily reduced as well.)  Equivalently, $D$ is
thin when it is a concatenation of edges, digons, triangles, and squares.
We also define a \emph{thin disk} to be a thin diagram that (with its
faces filled in) is a topological disk.  A general thin diagram $D$ is a
concatenation of thin disks and edges; see \Fig{f:thin} for an example.

\begin{remark} Peifer \cite[Sec.~5]{Peifer:artin} defines a thin equality
diagram so that every relator meets each of $v$ and $w$ in at least a
vertex rather than necessarily an edge.  Our definition could be called
``strictly thin", or his could be called ``weakly thin".
\end{remark}

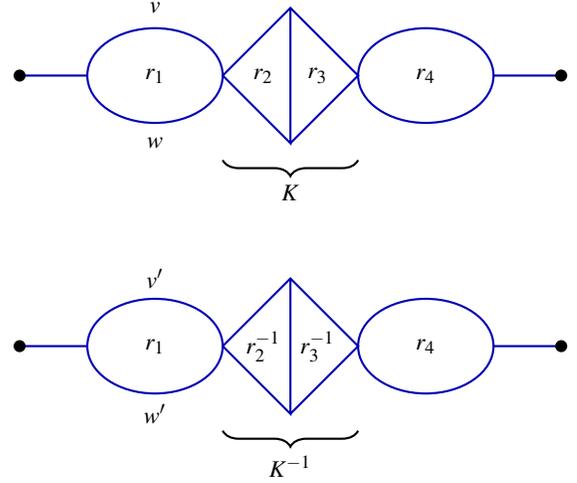
\begin{figure}
\begin{tikzpicture}[thick,draw=darkblue,scale=.9]
\draw (0,0) -- (1,0); \draw (2,0) ellipse (1 and .7);
\draw (4,1) -- (4,-1) -- (3,0) -- (4,1) -- (5,0) -- (4,-1);
\draw (6,0) ellipse (1 and .7); \draw (7,0) -- (8,0);
\fill (0,0) circle ({.08/.9}); \fill (8,0) circle ({.08/.9});
\draw (2,1) node {$v$}; \draw (2,-1) node {$w$};
\draw (2,0) node {$r_1$}; \draw (3.6,0) node {$r_2$};
\draw (4.4,0) node {$r_3$}; \draw (6,0) node {$r_4$};
\draw[black,decoration={brace,mirror,amplitude=1.5ex},decorate]
    (3,-1.25) -- node[below=1.5ex] {$K$} (5,-1.25);
\begin{scope}[yshift=-4cm]
\draw (0,0) -- (1,0); \draw (2,0) ellipse (1 and .7);
\draw (4,1) -- (4,-1) -- (3,0) -- (4,1) -- (5,0) -- (4,-1);
\draw (6,0) ellipse (1 and .7); \draw (7,0) -- (8,0);
\fill (0,0) circle ({.08/.9}); \fill (8,0) circle ({.08/.9});
\draw (2,1) node {$v'$}; \draw (2,-1) node {$w'$};
\draw (2,0) node {$r_1$}; \draw (3.6,0) node {$r^{-1}_2$};
\draw (4.4,0) node {$r^{-1}_3$}; \draw (6,0) node {$r_4$};
\draw[black,decoration={brace,mirror,amplitude=1.5ex},decorate]
    (3,-1.25) -- node[below=1.5ex] {$K^{-1}$} (5,-1.25);
\end{scope}
\end{tikzpicture}
\caption{Twisting a thin diagram.}
\label{f:twist} \end{figure}

We will use thin diagrams in \Thm{th:canonical} below to classify
all of the geodesics that represent a given group element $g \in G$
in a small cancellation group.

If $v \sim_D w$ is a thin equality diagram, then we define a
\emph{transversal} of $D$ to be an embedded path along its edges that
connects the ends of $D$, and a \emph{boundary transversal} to be a
transversal that does not cut through the interior of any thin disk $K$
in $D$.  If $v$ and $w$ are both geodesics, then their length along every
disk $K$ must be equal; therefore, all of the boundary transversals are
also geodesics.   If some disk $K$ has an interior edge, then $D$ also has
non-boundary transversals.  Some subset of the non-boundary transversals
might also be geodesics, or maybe none are.  If $D$ has any non-boundary
transversals, then there is at least one which is not a geodesic.

If $v \sim_D w$ is a thin equality diagram between geodesics, then we define
a \emph{twist} of $D$ to be a new equality diagram $v' \sim_{D'} w'$ that
is obtained by flipping over a thin disk $K$ in $D$, \ie, replacing $K$
by $K^{-1}$, as in \Fig{f:twist}.  We can attempt the same definition for
a thin equality diagram $v \sim_D w$ between arbitrary words $v$ and $w$.
However, in the general case the result $v' \sim_{D'} w'$ might not be a
legal equality diagram according to the definition in \Sec{ss:diagram},
because one or both of $v'$ and $w'$ might be non-reduced.  If $v$ and $w$
are both geodesics, then this anomaly is impossible, because the reduced
form of $v'$ would be shorter than $v$.  If $v$ and $w$ are both geodesics,
then $v \sim_D w$ and any twist $v' \sim_{D'} w'$ contain the same geodesics.

\subsection{Small cancellation}
\label{ss:small}

The goal of this subsection is the following theorem.

\begin{theorem} If $G$ is a finitely presented, $C'(1/6)$ small cancellation
group, then there is an algorithm to find the shortlex canonical form for
any word $w$ which is uniformly polynomial time in the presentation of $G$
and the choice of $w$.
\label{th:shortlex} \end{theorem}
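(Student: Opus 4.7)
The plan is a two-stage algorithm. First I reduce $w$ to a geodesic $w_0$ via a Dehn-type algorithm, and then I walk through the space of all geodesic representatives of the element $[w_0] = g \in G$ using the thin equality diagrams of \Sec{ss:diagram}, selecting the shortlex minimum among them.

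For stage one, I invoke Greendlinger's lemma (\Thm{th:green}), the foundational consequence of the $C'(1/6)$ hypothesis: if a reduced word $u$ is not geodesic, then it contains as a contiguous subword more than half of some symmetrized relator $r = st \in R$, \ie, a piece $s$ with $\abs{s} > \abs{r}/2$. The algorithm scans $w$ for such an occurrence and substitutes the shorter word $t^{-1}$ for $s$; each substitution strictly decreases the word length, so at most $\abs{w}$ substitutions suffice, and each scan runs in polynomial time in $\abs{w}$ and the total presentation size. The output is a geodesic $w_0$ with $\abs{w_0} \le \abs{w}$ and $w_0 \sim w$ in $G$.

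For stage two, I lean on the thin-diagram machinery of \Sec{ss:diagram}. The structural statement I expect to use is that any two geodesic representatives of $g$ are joined by a reduced thin equality diagram, \ie, a concatenation of thin disks $K_1, \ldots, K_m$ along $w_0$ separated by single edges; this should follow from standard small-cancellation estimates applied to an equality diagram produced by \Prop{p:value}. Each thin disk has two geodesic boundary arcs (top and bottom) of equal length, which are the only local geodesic choices at that site and are interconvertible by a twist. Consequently the set of geodesic representatives of $g$ decomposes as an independent product of local choices at the $m$ thin-disk sites along $w_0$. I locate the sites by scanning $w_0$ for subwords matching the geodesic side of a thin disk built from the relators in $R$, enumerate the alternatives at each site, and greedily select the shortlex-smaller alternative from left to right.

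The main obstacle is stage two, where two points must be established rigorously. First, every geodesic representative of $g$ must be reachable from $w_0$ by a sequence of such local twists, so that greedy local optimization does not overlook the global shortlex minimum. Second, the thin-disk sites and the local alternatives at each site must be polynomial in number and discoverable in polynomial time, uniformly in the presentation. Both points reduce to the combinatorial content of the $C'(1/6)$ condition: pieces are short (less than one sixth of any relator), which bounds both the size of any thin disk and the number of relators that can abut it. Because thin disks for adjacent sites share no common letters of $w_0$, the shortlex order decouples across sites, so the greedy left-to-right pass produces the global minimum. Stage one is entirely classical; the novelty and difficulty are confined to stage two.
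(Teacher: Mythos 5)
There is a genuine gap, and it is in stage one, which undermines the entire construction: you assert that Dehn reduction lands on a geodesic. It does not, and the paper explicitly flags this in the Remark following \Thm{th:green}, which cites Strebel's Example 38: even in $C'(\lambda)$ groups with $\lambda$ arbitrarily small, there exist Dehn-reduced words that are not shortest. The version of Greendlinger's lemma you invoke---that any reduced non-geodesic word contains more than half of some relator---is a false strengthening. What \Thm{th:green} actually gives is that the trivial word is the only Dehn-reduced representative of $1 \in G$; this solves the word problem and certifies non-triviality, but does not certify geodesicity. Your $w_0$ is therefore only Dehn-reduced, not geodesic, and the thin-diagram uniqueness machinery you lean on in stage two (\Lem{l:thin}, \Lem{l:unique}) is only established for pairs of genuine geodesics---indeed the paper gives an explicit example of two distinct thin diagrams between Dehn-reduced, non-geodesic words. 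The paper closes this gap inside the proof of \Thm{th:canonical} with a second loop: after Dehn reduction, it searches for a thin disk attached to $w$ that shortens it, using \Lem{l:lengths} together with $C'(1/6)$ to show that any such disk is determined by its two endpoints and is hence findable in polynomial time, and iterates until no shortening disk exists; only then, by \Lem{l:thin}, is $w$ certified geodesic.

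A secondary issue sits in stage two. You claim each thin-disk site offers only two geodesic choices, the top and bottom boundary arcs, interconvertible by a twist. But a thin disk with interior edges also has non-boundary transversals, and as noted in \Sec{ss:diagram} some of those may also be geodesics. The local choice set is therefore richer than a binary alternative, and a greedy top-versus-bottom comparison can miss the shortlex minimum. The paper sidesteps this by running Sobrinho's ordered-monoid generalization of Dijkstra's algorithm over the entire canonical thin diagram $D$ furnished by \Thm{th:canonical}, which directly extracts the shortlex transversal without enumerating paths. Your observation that the shortlex objective decouples across sites---because the thin disks are disjoint along $w_0$ and all geodesic transversals share the same length---is correct and is implicitly what makes the Dijkstra pass produce a global optimum; what is missing is the correct local search within a single disk, and more fundamentally the repair to stage one so that the structure theory applies at all.
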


In \Sec{ss:freeproof}, we will refine the proof of \Thm{th:shortlex}
to prove \Thm{th:free}.

The author did not find \Thm{th:shortlex} in the existing literature,
although related results are known.  For instance, we can find $w$
in polynomial time in its length for a fixed choice of $G$ from
the established results that $C'(1/6)$ groups are word hyperbolic
\cite[Thm.~36]{Strebel:small} and that word hyperbolic groups are automatic
\cite[Thm.~3.4.5]{ECHLSPT:word}.

We will establish \Thm{th:shortlex} as a corollary of the following result
about geodesics in a $C'(1/6)$ group.

\begin{theorem} If $G$ is a presented $C'(1/6)$ small cancellation group
and if $g \in G$, then there is a thin diagram $D$ between some pair of
geodesics for $g$ that contains every geodesic for $g$ as a transversal.
$D$ is canonical up to thin diagram twists.  If $G$ is finitely
presented, then $D$ can be computed in polynomial time, uniformly in the
presentation of $G$ and a word $w$ that represents $g$.
\label{th:canonical} \end{theorem}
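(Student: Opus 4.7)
The plan is to first prove that any reduced equality diagram between two geodesics for $g$ is already thin, then to amalgamate such thin diagrams over all pairs of geodesics for $g$ into a single universal diagram $D$, and finally to describe a polynomial-time algorithm for building $D$.

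For thinness, let $v$ and $w$ be geodesics with $v \sim w = g$ and let $D_0$ be a reduced equality diagram supplied by \Prop{p:value}. A Greendlinger-style argument for $C'(1/6)$ shows that any face $P$ of $D_0$ with relator label $r$ has less than $|r|/2$ of its boundary shared with other faces, since internal edges of $D_0$ would have to be common subwords of distinct relators or of $r$ with itself, both of which are forbidden by the $C'(1/6)$ hypothesis applied to the symmetrized set $R$. Hence more than $|r|/2$ of $\partial P$ lies on $\partial D_0 = v \cup w^{-1}$. If this long arc lay entirely on $v$, then $v$ could be shortened by swapping the arc for its shorter complement in $r$, contradicting geodesicity; the same applies to $w$. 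So $P$ meets each of $v$ and $w$, and a similar length count forces it to meet each in a single sub-arc, making $D_0$ thin as defined in \Sec{ss:diagram}.

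For canonicity, fix one geodesic $v$ for $g$ and, for each other geodesic $w$, glue the thin diagram $D_w$ from the previous step along its common boundary $v$ to form a planar complex $D$. The $C'(1/6)$ condition forces the relator attached along any given window of $v$ to be uniquely determined up to inversion, so thin disks occupying the same positions across different $D_w$ are either identical or related by a single twist (flipping across $v$). By construction every geodesic for $g$ arises as a transversal of $D$, its two boundary transversals are the resulting extremal geodesics, and $D$ is canonical up to thin-disk twists.

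For the algorithm, first apply Dehn's algorithm, which is uniformly polynomial time for $C'(1/6)$ groups given only query access to the relators, to reduce the input word $w$ to a geodesic $w_0$. Then scan $w_0$ for each maximal sub-arc that coincides with more than half of some relator $r$; each such match identifies a thin disk of $D$ whose other boundary arc is a sister geodesic $w_1$. Recurse on each newly uncovered sub-arc of every $w_1$. The $C'(1/6)$ overlap bound makes the total number of thin disks $O(|w_0|)$, and each is found in polynomial time, so the entire diagram $D$ is built in polynomial time, uniformly in the presentation. The main obstacle is the amalgamation step: one must check that gluing thin diagrams across all geodesic pairs yields a reduced planar thin diagram, not a complex with duplicate faces, non-reduced spurs, or branching. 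This reduces to uniqueness of the relator at each locally ambiguous position along $v$, which is precisely where the $C'(1/6)$ hypothesis is indispensable; weaker small cancellation would permit competing relators at the same position and destroy both planarity and canonicity.
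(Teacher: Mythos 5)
Your proposal has three genuine gaps. First, the algorithmic step fails at the start: Dehn's algorithm does not produce geodesics in a $C'(1/6)$ group, only Dehn-reduced words, and the paper's example at the end of \Sec{ss:small} exhibits Dehn-reduced words that are not shortest. The paper's algorithm needs an extra stage after Dehn reduction: search for whole thin disks that shorten the word, using \Lem{l:lengths} and the $C'(1/6)$ condition to reconstruct each face from an overlap of length only $> \abs{r}/6$; only when no shortening disk exists does \Lem{l:thin} certify the word is geodesic. Your scan for sub-arcs ``coinciding with more than half of some relator'' detects only single-face Dehn moves; once the word is Dehn reduced (let alone geodesic) no face of any disk of the canonical diagram meets the word in more than half a relator, so your recursion finds none of the disks you need to build $D$. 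Second, your thinness argument is not sound as stated: the $C'(1/6)$ condition bounds each \emph{individual} internal edge by $\abs{r}/6$, but a face can a priori have many internal edges (an interior face of a non-thin diagram has all of its boundary internal), so ``less than $\abs{r}/2$ shared with other faces'' does not follow locally. Thinness of reduced diagrams between Dehn-reduced words is a real theorem (the paper quotes Strebel, \Lem{l:thin}), proved via the full strength of Greendlinger's lemma, not a one-line face-by-face count.

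Third, the amalgamation step --- which you yourself flag as the main obstacle --- is where essentially all of the paper's work lies, and ``uniqueness of the relator at each position along $v$'' is not enough to carry it. One must show that when two thin diagrams out of the same geodesic share positive length, their faces match with inverse relators \emph{and align at the same vertices}; the endpoint-alignment argument (the paper's \Fig{f:shortcut} step inside \Lem{l:unique}) genuinely uses that the boundary words are geodesics and fails for merely Dehn-reduced words. Moreover, merging two diagrams may require twisting disks first (\Lem{l:smerge}), and the canonical $D$ is obtained in the paper not by gluing all diagrams along a fixed $v$ but by taking a face-maximal thin diagram and showing via the merging lemmas that it already contains every geodesic as a transversal, with uniqueness up to twists following from the uniqueness of the reduced equality diagram between any two fixed geodesics (\Lem{l:unique}). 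Without these ingredients your glued complex could fail to be planar, reduced, or canonical, and the ``canonical up to twists'' clause of the theorem is never actually established.
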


We will define certain terms used in the statement of \Thm{th:canonical}
later in this subsection: shortlex, thin diagram, twist, and transversal.

Note that Arzhantseva and Dru\c{t}u \cite{AD:small} established the existence
part of \Thm{th:canonical} for $C'(1/8)$ groups.  (They refer to a thin
equality diagram with the listed extra properties as a \emph{criss-cross
decomposition}.)

\begin{remark} \Thm{th:canonical} provides a computed canonical name for
each $g \in G$ when $G$ is a $C'(1/6)$ group.   This is enough to construct
a hiding function that we can use to prove \Thm{th:free}, albeit that a
canonical word as provided by \Thm{th:shortlex} would also work.  However,
we will need to refine Theorems \ref{th:shortlex} and \ref{th:canonical}
for a special class of $C'(1/6)$ groups so that the algorithms can work
with only query access to the relators.
\end{remark}

In the rest of this subsection, let $G = \braket{A | R}$ be a presented
group with a symmetrized relator set $R$.

If $v,w \in F_A$ are two reduced words, then we call $v$ a \emph{reducing
Dehn move} of $w$ if $v \sim_D w$ via an equality diagram $D$ with only one
face $P$, and also $\abs{v} < \abs{w}$.  Explicitly, if $P$ is labelled
by $r \in R$, then it has a decomposition $r = s^{-1}t$ with $\abs{s} >
\abs{r}/2$ where $w = xsy$ and $v = xty$.  A word $w \in F_A$ is \emph{Dehn
reduced} if it does not admit any reducing Dehn moves.  The non-deterministic
greedy algorithm that makes a Dehn-reduced word from any word is called
\emph{Dehn's algorithm}.  Certainly any geodesic is Dehn reduced, but most
Dehn-reduced words in most presented groups are not geodesics.

\begin{theorem}[Greendlinger {\cite[Thm.~25]{Strebel:small}}] If $G$ is a
$C'(1/6)$ presented group, then the trivial word is the only Dehn-reduced
representative of $1 \in G$.  If $G$ is also finitely presented, then
Dehn's algorithm efficiently solves the word problem in $G$.
\label{th:green} \end{theorem}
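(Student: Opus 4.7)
The plan is to prove the contrapositive of the first assertion: every reduced, nontrivial word $w \in F_A$ that represents $1 \in G$ admits a reducing Dehn move, so that any Dehn reduced representative of $1$ must be the empty word. Assume such a $w \ne 1$ exists, and apply \Prop{p:value} to the cyclic form of $w$ to obtain a reduced cyclic van Kampen diagram $D$ whose value is (a cyclic conjugate of) $w$. Since $w \ne 1 \in F_A$, the diagram $D$ has at least one face.

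The geometric core is the classical Greendlinger lemma for $C'(\lambda)$ presentations with $\lambda \le 1/6$: in any reduced cyclic diagram $D$ with at least one face, some face $P$ labelled by a relator $r \in R$ meets $\partial D$ in a simple boundary arc of length strictly greater than $(1-3\lambda)\abs{r} \ge \abs{r}/2$. I would prove this along standard lines. The $C'(1/6)$ condition, together with the reducedness of $D$, forces every interior edge separating faces labelled $r$ and $r'$ to have length less than $\abs{r}/6$ and $\abs{r'}/6$. Combining this with Euler's formula $V - E + F = 1$ for $D$ viewed as a planar $2$-complex, and with the fact that interior vertices have valence at least $3$, one obtains a curvature/counting inequality forcing some face to spend more than half of its boundary on $\partial D$. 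Selecting such a face $P$ produces a subword $s$ of a cyclic conjugate of $w$ with $r = s^{-1}t$ and $\abs{s} > \abs{r}/2$; replacing $s$ by $t^{-1}$ is then exactly a reducing Dehn move, contradicting the hypothesis that $w$ is Dehn reduced.

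For the algorithmic statement, given a finite presentation $\braket{A \mid R}$ and a reduced input $w$, Dehn's algorithm enumerates all triples consisting of a cyclic position in $w$, a relator $r \in R$, and a decomposition $r = s^{-1}t$ with $\abs{s} > \abs{r}/2$, and tests whether the corresponding subword of $w$ matches $s$. This search takes time polynomial in $\abs{w}$ and in the total relator length. Whenever a move is found, we perform it, obtaining a shorter reduced word, and iterate; when no move is available the current word is Dehn reduced, and by the first part of the theorem it equals $1 \in G$ if and only if it is the empty word. Since each Dehn move strictly decreases word length, the loop terminates in at most $\abs{w}$ iterations, giving overall polynomial time.

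The hard part is the planar combinatorial Greendlinger lemma itself: the bookkeeping with Euler's formula and the $C'(1/6)$ inequalities must be done carefully to handle degenerate configurations (multiple edges between the same two faces, digons on the boundary, and vertices of valence $2$ in the underlying graph versus the $2$-complex). Everything after that lemma — the extraction of the Dehn move, the efficient search over relators, and the termination argument — is essentially routine.
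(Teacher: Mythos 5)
The paper does not actually prove \Thm{th:green}: it cites Strebel's Theorem~25, states the simplified one-arc Greendlinger lemma as \Lem{l:green} (also citing Strebel), and simply remarks that the theorem "quickly follows." Your proposal reconstructs the standard argument that the citation stands in for, and its skeleton --- \Prop{p:value} to get a reduced cyclic van Kampen diagram, a Greendlinger-type long-arc lemma via Euler's formula and the $C'(1/6)$ piece bounds, extraction of a Dehn move from the long arc, and a polynomial-time iteration --- is the same route.

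There is one step that does not quite close as you wrote it. Your lemma, like \Lem{l:green}, produces a single long outer arc, which is a subword of the \emph{cyclic} boundary word --- "a subword of a cyclic conjugate of $w$" in your phrasing. But the paper's definition of a reducing Dehn move is a strictly linear replacement $w = xsy \mapsto xty$, and a single long arc can straddle the basepoint of $w$, in which case it is not a linear subword of $w$ at all, and the diagram does not hand you a Dehn move on $w$ itself (only on some cyclic conjugate of it). To finish you need the full strength of Greendlinger's lemma: either the diagram is a single relator face, so that $w$ is itself a (symmetrized) relator and $w \mapsto$ (empty word) is a Dehn move, or the diagram has at least two disjoint long outer arcs; two disjoint arcs on a circle cannot both cover a given basepoint, so at least one is a genuine linear subword of $w$. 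Your Euler-characteristic sketch does in fact yield the multi-arc statement, so the repair is mostly a matter of stating it --- but as written, with only one arc, the reduction from a long cyclic arc to a linear Dehn move on $w$ is a real hole (one that the paper's own parenthetical remark after \Lem{l:green} also glosses over). The algorithmic half of your argument is fine once the first half is repaired, though you should be explicit about whether "cyclic position in $w$" means you are also allowing cyclic Dehn moves, which would be a slightly different (and equally valid, but different from the paper's) convention.
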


\begin{remark} In light of \Thm{th:green}, one might hope that if $G$
has some reasonable small cancellation property, then Dehn's algorithm can
also find a canonical word or at least a shortest word.  However, even in
many $C'(\lambda)$ groups with arbitrarily small $\lambda > 0$, there are
Dehn-reduced words that are not shortest \cite[Ex.~38]{Strebel:small}.
\end{remark}

\Thm{th:green} quickly follows from a lemma that is at least as important
as the theorem itself.

\begin{lemma}[Greendlinger {\cite[Rem.~26]{Strebel:small}}] Let $D$ be a
non-trivial reduced cyclic diagram over a $C'(1/6)$ group $G$. Then either
$D$ is a single relator, or it has an outer edge that is more than half
of the length of its relator.
\label{l:green} \end{lemma}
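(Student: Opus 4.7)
\textit{Plan.} My strategy is the classical combinatorial-curvature proof of Greendlinger's lemma, following Strebel.

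\textbf{Step 1 (setup).} First I would reduce to the case where $D$ is a topological disk by splitting at cut vertices; an outer edge of a sub-disk remains on $\partial D$, so the conclusion transfers back. Then I would suppress degree-$2$ vertices so that every vertex of $D$ has degree $\geq 3$. For each face $P$ with relator $r_P$, classify its boundary edges as \emph{interior} (shared with an adjacent face) or \emph{exterior} (lying on $\partial D$). By reducedness together with $C'(1/6)$, every interior edge is a piece of length strictly less than $\abs{r_P}/6$. Let $k(P)$ and $m(P)$ be the numbers of interior edges and of maximal exterior arcs of $\partial P$; since interior and exterior arcs alternate along the cycle $\partial P$, we have $m(P) \leq k(P)$. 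An interior face ($m(P)=0$) has its $k(P)$ pieces summing to $\abs{r_P}$, forcing $k(P) \geq 7$.

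\textbf{Step 2 (main counting).} Suppose for contradiction that every face has each maximal exterior arc of length $\leq \abs{r_P}/2$. For a boundary face, comparing the total exterior length $> \abs{r_P}(1 - k(P)/6)$ against the bound $m(P) \cdot \abs{r_P}/2$ yields the integer inequality $k(P) + 3 m(P) \geq 7$, and this also holds trivially for interior faces. Summing over $P$ and using $\sum_P k(P) = 2 E_i$ and $\sum_P m(P) = E_b$ gives $2 E_i + 3 E_b \geq 7 F$. I would then combine this with Euler's formula $V - E + F = 1$ for the disk and the degree bound $2 E \geq 3 V$ via a combinatorial Gauss--Bonnet estimate: assign each vertex-face incidence an interior angle $\pi(n(P)-2)/n(P)$, so that the per-face angle defect vanishes; interior vertices of degree $\geq 3$ then have non-positive defect (using $n(P) \geq 7$ for interior faces), so all $2\pi$ of the disk's positive curvature is paid for at boundary vertices. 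Tracking this pinpoints a boundary face $P$ with $k(P) \leq 3$ and $m(P) = 1$; its unique exterior arc has length $> \abs{r_P}(1 - 3/6) = \abs{r_P}/2$ and is a single graph edge (no interior edge subdivides it), giving the desired outer edge.

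\textbf{Main obstacle.} The delicate point is Step 2: a naive combination of Euler's formula with the inequality $k(P) + 3 m(P) \geq 7$ only gives $E_b \geq F + 6$, which is consistent with large disks rather than an outright contradiction. The angle-defect refinement is essential, and the $C'(1/6)$ threshold is sharp here; for $C'(\lambda)$ with $\lambda$ slightly larger than $1/6$, reduced disk diagrams in which every face has all exterior arcs of length $\leq \abs{r_P}/2$ do exist. I would either invoke Strebel's argument (Remark~26) directly or rederive the angle-defect inequality explicitly, being careful with the interior/boundary-vertex bookkeeping to rule out configurations where every boundary face has $k(P) \geq 4$ interior edges or has $m(P) \geq 2$ separate exterior components.
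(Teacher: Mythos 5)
The paper itself offers no proof of this lemma: it is imported verbatim from the literature (Greendlinger, as presented in Strebel's Remark~26), and the surrounding text only explains that it is a weakened form of the full Greendlinger lemma. So your fallback option of invoking Strebel directly is precisely what the paper does, and your reconstruction follows the same classical Euler-characteristic/curvature route that those sources use; there is no competing argument in the paper to compare against.

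As a standalone proof, though, your sketch has one genuine soft spot, which you partly flag yourself. Step~1 and the derivation $k(P)+3m(P)\ge 7$ are correct (interior edges are pieces of length $<\abs{r_P}/6$ by reducedness plus $C'(1/6)$, interior faces have at least $7$ sides, and the contradiction hypothesis bounds the exterior part by $m(P)\abs{r_P}/2$). The gap is in the Gauss--Bonnet bookkeeping: if you assign the corner angle $\pi(n(P)-2)/n(P)$ so that face curvature vanishes, then an interior vertex of degree $3$ is non-positively curved only when every incident face has at least $6$ sides; but boundary faces with $2$ or $3$ edges are perfectly possible in a $C'(1/6)$ diagram and can be incident to interior vertices, so positive defect can appear in the interior and your ``all curvature is paid at the boundary'' step fails --- the hypothesis $n(P)\ge 7$ covers only interior faces. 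The standard repair is to base the angles on vertex degrees instead (e.g.\ angle $2\pi/d(v)$ at each corner of an interior vertex $v$, with the usual boundary convention), which makes vertex curvature zero and pushes all negativity onto interior faces via $n\ge 7$, $d\ge 3$; the boundary contribution then forces at least two boundary faces with at most $3$ interior edges and connected outer arc --- equivalently, cite the $[3,7]$-map counting formula in Lyndon--Schupp, Ch.~V. One further small point: your final claim that the long exterior arc is a single graph edge is asserted, not proved; even with all vertices of valence $\ge 3$, a face can meet $\partial D$ in two consecutive edges through a vertex at which only interior edges of that same face emanate, so either rule this configuration out explicitly or state the conclusion for the whole arc (as Strebel does), which still suffices for how the paper uses the lemma. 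Since you identified the delicate step and the exact fallback reference, the outline is honest, but as written the curvature argument itself would not go through.
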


\Lem{l:green} is a simplification of what is now known as
Greendlinger's lemma \cite{Greendlinger:word}.   The full version of the
lemma says that every cyclic diagram has between two and six long outer
edges, and when there are fewer of them there are better lower bounds on
their lengths.  To prove \Thm{th:green}, it suffices to know that at least
one cyclic diagram has a long outer edge.  However, later results such as
\Lem{l:thin} use the full strength of Greendlinger's lemma.

In the rest of this subsection, we will consider diagrams in a $C'(1/6)$
group $G$.

\begin{lemma}[Strebel {\cite[p.~258]{Strebel:small}}] Every reduced equality
diagram $v \sim_D w$ between Dehn-reduced words $v$ and $w$ is thin.
\label{l:thin} \end{lemma}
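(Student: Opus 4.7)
The plan is to argue by contradiction using the full version of Greendlinger's lemma (\Lem{l:green}). Suppose $v \sim_D w$ is a reduced equality diagram between Dehn-reduced words $v$ and $w$, but $D$ is not thin. Then some face of $D$ fails to meet one of $v$ or $w$ in an edge; without loss of generality, say that some face $P$ does not meet $v$ in an edge (the other case is symmetric). From this failure I would extract a cyclic subdiagram $D_0 \subseteq D$ that bulges away from $v$: concretely, $D_0$ is the closure of a connected component obtained by cutting $D$ along a carefully chosen collection of interior edges and possibly an arc of $w$, selected so that $\partial D_0$, read counterclockwise, is a reduced cyclic word. Reducedness of $\partial D_0$ follows from $D$ being reduced and from $v$ and $w$ being reduced.

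The next step is to apply the full Greendlinger lemma to $D_0$: it produces a face $P_0$ of $D_0$ with an outer arc $e$ (relative to $\partial D_0$) whose length exceeds half that of its relator $r$. Either $e$ lies on the arc of $w$ in $\partial D_0$, or $e$ is an interior edge of $D$ that separates $D_0$ from its complement. In the first case, writing $r = s^{-1}t$ with $\abs{s} = \abs{e} > \abs{r}/2$, the word $w$ contains $s$ as a subword and can be rewritten to give a strictly shorter equivalent word, which is a reducing Dehn move on $w$; this contradicts the hypothesis that $w$ is Dehn reduced. In the second case, $e$ appears simultaneously as a subword of $r = r_{P_0}$ and of $r_Q^{-1}$, where $Q$ is the face of $D$ on the other side of $e$. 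Since $D$ is reduced, $r_Q$ is not the inverse partner of $r_{P_0}$ across $e$, so the $C'(1/6)$ hypothesis forces $\abs{e} < \abs{r_{P_0}}/6$, contradicting $\abs{e} > \abs{r_{P_0}}/2$.

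The hard part is the construction of the cyclic subdiagram $D_0$ with a genuinely reduced cyclic boundary. When a face $P$ simply does not meet $v$ at all, the natural choice of $D_0$ is the component of $D$ cut off from $v$ by the innermost layer of interior edges separating $P$ from $v$. But if $P$ meets $v$ only in a vertex, or in a disconnected set of edges, the extraction is more delicate; one may have to iterate, or examine several candidate subdiagrams at once, choosing a minimal such $D_0$ for an induction on face count. Once $D_0$ is in hand with reduced cyclic boundary, the Greendlinger dichotomy above closes the argument uniformly, establishing that $D$ must in fact be thin.
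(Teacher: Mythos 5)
First, a point of reference: the paper does not prove this lemma at all; it is quoted from Strebel (his Proposition~39(i), which rests on the argument of his Theorem~35; Lyndon--Schupp proved the annular analogue), and the paper explicitly notes that this result uses the \emph{full} strength of Greendlinger's lemma. Your outline has a genuine gap at exactly the point you flag as ``the hard part.'' The extraction of a cyclic subdiagram $D_0$ with cyclically reduced boundary is not a technicality to be deferred: \Lem{l:green} only applies to diagrams whose value is a reduced cyclic word, and cutting $D$ along interior edges gives no control over this, so the engine of your contradiction is not yet available. Worse, even granting such a $D_0$, your case dichotomy for the long outer arc fails. Once the faces outside $D_0$ are deleted, vertices of $D$ that had valence three can become bivalent in $D_0$, so a single outer edge of $D_0$ lying in the interior of $D$ is in general a concatenation of several interior edges of $D$, bordered on the far side by several different faces. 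The $C'(1/6)$ condition bounds each of those shared pieces individually by one sixth of the relator length; it says nothing about their union, so an outer arc of $D_0$ of length greater than half its relator that runs along the cut yields no contradiction. Your second case implicitly assumes the long arc is one edge of $D$ shared with one face $Q$, which is exactly what cannot be guaranteed.

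The standard repair, and Strebel's actual route, is to apply the full Greendlinger lemma directly to a disk component of $D$ rather than to an extracted subdiagram: because $v$ and $w$ are Dehn reduced, the long boundary arc of any Greendlinger face cannot lie wholly inside $v$ nor wholly inside $w$ (either would be a reducing Dehn move), so such an arc must contain one of the two extremal vertices of the disk; the full lemma supplies at least two such faces, with stronger length bounds when there are few, which places one at each end of the component, and peeling off an end face and inducting gives the thin-strip structure face by face. If you insist on the subdiagram-plus-contradiction scheme, you would have to bound the number of pieces making up the cut portion of the boundary of $D_0$, which amounts to redoing that counting argument anyway; as it stands, the proposal does not close.
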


Strebel states \Lem{l:thin} in the proof of his Proposition 39(i).  He refers
to the proof of his Theorem 35 for the argument in the main special case
of a thin disk.  (Note that Lyndon and Schupp \cite[Thm~V.5.5]{LS:combin}
earlier established a similar lemma about annular thin diagrams.)

\begin{lemma} Let $v \sim_D w$ be an equality diagram between Dehn-reduced
words, and let $e$ be an outer edge of $D$ labelled by $s$ that lies on
a face $P$ labelled by $r$.
\begin{enumerate}
\item If $P$ is a digon, then $\abs{s} = \abs{r}/2$.
\item If $P$ is at one end of a thin disk, then $\abs{s} > \abs{r}/3$.
\item If $P$ is in the middle of a thin disk, then $\abs{s} > \abs{r}/6$.
\end{enumerate}
\label{l:lengths} \end{lemma}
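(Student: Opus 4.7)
The plan is to treat each of the three cases with the same two-step strategy: first use the Dehn-reduced hypothesis on $v$ and $w$ to forbid any outer edge of $P$ from having length strictly greater than $\abs{r}/2$, and then use the $C'(1/6)$ small cancellation bound on interior edges together with the identity that the total edge lengths of $P$ sum to $\abs{r}$.

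The preliminary observation is that, since $P$ is labeled by the relator $r$, traversing its boundary cyclically spells $r$, and each outer edge of $P$ on $v$ (respectively $w$) is a subword of $v$ (respectively $w$) matching a contiguous arc of this cyclic word. If any such outer edge $s$ had $\abs{s} > \abs{r}/2$, then the corresponding word would admit a reducing Dehn move that replaces $s$ by the complementary arc of length $\abs{r} - \abs{s} < \abs{s}$, contradicting Dehn-reducedness. Hence every outer edge of $P$ satisfies $\abs{s} \le \abs{r}/2$. Simultaneously, since $D$ is reduced (which follows from \Lem{l:thin} applied in the thin setting), each interior edge $u$ of $P$ is a common subword of $r$ and the distinct relator labeling the adjacent face, so the $C'(1/6)$ condition gives $\abs{u} < \abs{r}/6$.

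With these two bounds in hand, each case reduces to elementary arithmetic. In Case~1, the digon has outer edges of total length $\abs{s} + \abs{t} = \abs{r}$ with both $\abs{s}, \abs{t} \le \abs{r}/2$, forcing $\abs{s} = \abs{t} = \abs{r}/2$. In Case~2, the triangle has two outer edges and one interior edge satisfying $\abs{s_v} + \abs{s_w} + \abs{u} = \abs{r}$; if $\abs{s_v} \le \abs{r}/3$ then $\abs{s_w} > \abs{r} - \abs{r}/3 - \abs{r}/6 = \abs{r}/2$, a contradiction, so $\abs{s_v} > \abs{r}/3$, and the same holds for $\abs{s_w}$ by symmetry. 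In Case~3, the square satisfies $\abs{s_v} + \abs{s_w} + \abs{u_1} + \abs{u_2} = \abs{r}$ with $\abs{u_1}, \abs{u_2} < \abs{r}/6$; if $\abs{s_v} \le \abs{r}/6$ then $\abs{s_w} > \abs{r} - 3(\abs{r}/6) = \abs{r}/2$, again a contradiction.

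The subtlety I anticipate is justifying the $C'(1/6)$ application to interior edges precisely as stated. This requires checking that in a reduced thin equality diagram between Dehn-reduced words, two faces adjacent along an interior edge are labeled by distinct relators of the symmetrized set $R$, and that the shared edge is genuinely a piece in the small-cancellation sense rather than, say, a coincidence of inverse labels that would contradict reducedness of $D$. This should follow directly from the standard conventions on symmetrized relators and from \Lem{l:thin}; no deep argument is needed beyond these bookkeeping points, after which the arithmetic above finishes each case.
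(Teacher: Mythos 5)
Your proof is correct and follows essentially the same route as the paper: bound each outer edge of $P$ by $\abs{r}/2$ using Dehn-reducedness of $v$ and $w$, bound each interior edge by $\abs{r}/6$ using the $C'(1/6)$ condition, and then use that $P$ has two outer edges and $0$, $1$, or $2$ interior edges in the three cases, the only cosmetic difference being that you phrase Cases 2 and 3 as contradictions where the paper subtracts directly. Your flagged subtlety about interior edges being genuine pieces (rather than cancellable inverse-relator coincidences, which reducedness of a thin diagram rules out) is a reasonable bookkeeping point that the paper's proof leaves implicit.
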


\begin{proof} Let $e$ be an edge of $D$ that could be either an inner
edge or an outer edge, and that is labelled by $s$ and lies on a face $P$
labelled by $r$.    If $e$ is an inner edge, then the $C'(1/6)$ condition
says that $\abs{s} < \abs{r}/6$.  Meanwhile if $e$ is an outer edge, then
$\abs{s} \le \abs{r}/2$ since $v$ and $w$ are both Dehn reduced.  Since $F$
has two outer edges in all cases, and since it has 0, 1, and 2 inner edges
(respectively) in each of the three cases, the inequalities are obtained
by subtraction.
\end{proof}

\begin{lemma} Let $u \sim_D v \sim_E w$ be two equality diagrams
between Dehn-reduced words which each have only one thin disk.
Then exactly one of the following holds:
\begin{enumerate}
\item The combined diagram $u \sim_{D \cup E} w$ (as in \Fig{f:concat})
is thin.
\item The disks of $D$ and $E^{-1}$ contain a common thin disk $K$.
If $K$ is taken to be maximal, then the union $D \cup_K E^{-1}$ is a thin
diagram for an equivalence $z \sim v$ for some reduced word $z$.
\end{enumerate}
\label{l:merge} \end{lemma}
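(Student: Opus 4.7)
The plan is to analyze the combined diagram $D \cup E$ by tracking how the unique thin disks $K_D \subseteq D$ and $K_E \subseteq E$ meet along the shared outer path $v$. Decompose $v = v_\ell \hat{v} v_r$, where $\hat{v}$ is the portion of $v$ that bounds a face of $K_D$, and similarly $v = v'_\ell \hat{v}' v'_r$ for $K_E$. Everything hinges on how $\hat{v}$ and $\hat{v}'$ sit inside $v$.

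If $\hat{v}$ and $\hat{v}'$ meet in at most a point inside $v$, then in $D \cup E$ the two thin disks $K_D$ and $K_E$ occupy disjoint pockets on opposite sides of the interior path $v$. The edges of $v$ not covered by either thin disk become spine edges that directly identify portions of the upper boundary $u$ with portions of the lower boundary $w$. The whole diagram then presents as a concatenation of the thin disks $K_D$ and $K_E$ and such identified edges along a single spine from the start to the end, which is exactly the structure of a thin diagram for $u \sim w$. This is Case~1.

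If instead $\hat{v}$ and $\hat{v}'$ overlap in a non-trivial sub-path $v_0$, then for every edge $e$ of $v_0$ there is a face $P_e \in K_D$ sitting above $e$ and a face $Q_e \in K_E$ sitting below $e$, and these two faces are adjacent in $D \cup E$ across $e$. The plan is to show, using the $C'(1/6)$ condition together with the length estimate of \Lem{l:lengths}, that along $v_0$ the pair $(P_e,Q_e)$ falls into one of two mutually exclusive regimes: either $P_e$ and $Q_e$ are labeled by mutually inverse relators and cancel across $e$, or $P_e$ and the face of $E^{-1}$ corresponding to $Q_e$ coincide as identically labeled faces with matching inner boundaries, so that they constitute a common thin sub-disk. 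The input from \Lem{l:lengths} is that each $v$-edge of a face inside the overlap accounts for more than $\abs{r}/6$ of that face's relator, so a switch of regime inside a single face would force more than $\abs{r}/6$ of one relator to match a sub-word of another, violating $C'(1/6)$. Hence the overlap decomposes into contiguous runs of uniform regime. If every run is of the cancelling type, the reductions collapse all overlap face-pairs and $D \cup E$ again presents as a thin diagram, returning us to Case~1. If any run is of the identical type, its maximal extension is the common thin sub-disk $K$ of $K_D$ and the flipped disk of $E^{-1}$, and gluing $D$ to $E^{-1}$ along $K$ yields a thin diagram whose non-$v$ outer boundary reads off the reduced word $z$ of Case~2.

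The main obstacle I expect is the local dichotomy for $(P_e,Q_e)$. Two diagram faces that share a single edge can in general be labeled by unrelated relators, so it takes genuine input from small cancellation and from the constraint that both $P_e$ and $Q_e$ lie in thin disks (so their $u$- and $w$-edges are already long pieces of their respective relators by \Lem{l:lengths}) to narrow the possibilities down to ``inverse'' or ``identical''. Once that local fact is in place, the global assertion of the lemma follows by propagating along $v_0$ to obtain uniform-regime runs, applying Greendlinger's \Lem{l:green} to ensure the post-reduction diagram in Case~1 contains no further reducible pairs, and checking that in Case~2 gluing along the maximal common thin sub-disk $K$ produces a reduced thin equality diagram for $z \sim v$, so the two cases are genuinely mutually exclusive.
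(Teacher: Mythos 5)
Your central ``local dichotomy'' is not actually a dichotomy, and this breaks the argument in two places. Since $R$ is symmetrized, a face of $E$ labelled $r$ becomes a face of $E^{-1}$ labelled $r^{-1}$; so your regime (i) ($P_e$ and $Q_e$ carry mutually inverse relators) and your regime (ii) ($P_e$ and the corresponding face of $E^{-1}$ carry identical labels) are the \emph{same} condition on labels, not two alternatives to be separated by $C'(1/6)$. Worse, there is a genuine third possibility that your scheme cannot classify: when the overlap along $v$ consists only of the last face of $D$ against the first face of $E$, the shared segment is just a sub-segment of each face's outer $v$-edge and can be shorter than $1/6$ of both relators, so the two faces need not be inverse or related at all --- and yet $D\cup E$ is thin (this is exactly the concatenation configuration of \Fig{f:concat}, which the paper treats as part of Case 1 and which your ``overlap in at most a point $\Rightarrow$ Case 1'' split misses). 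Finally, your escape route ``if every run is of the cancelling type, collapse the pairs and return to Case 1'' does not prove Case 1: Case 1 asserts that $D\cup E$ \emph{itself} is thin, whereas cancelling faces produces a different diagram; and when inverse-labelled faces genuinely face each other across $v$ over more than one face-pair, that situation is precisely the common-disk conclusion of Case 2, not Case 1.

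You also skip the step that carries most of the weight in the paper's proof: the vertex subdivisions of $v$ induced by $D$ and by $E$ need not agree a priori, so one cannot even speak of ``the face $P_e$ above and $Q_e$ below each edge $e$'' until one proves alignment. The paper's argument works left to right along $v$, using \Lem{l:lengths} together with $C'(1/6)$ to force each face of $E$ to be matched by an inverse-labelled face of $D$, and using the reducedness of $u$, $w$ and the relators (the trivalent-vertex argument of \Fig{f:trivalent}) to force the matched faces to begin and end at the same vertices; it then checks that matched faces have identical internal edges (maximal common substrings of the two relators) and that the spliced outer word $z$ of $D\cup_K E^{-1}$ is reduced, so the merged object is a legal thin diagram. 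Your proposal asserts the conclusion of all of this (``matching inner boundaries'', ``reads off the reduced word $z$'') without an argument, so as it stands the proof has a gap exactly where the small-cancellation input has to do its work.
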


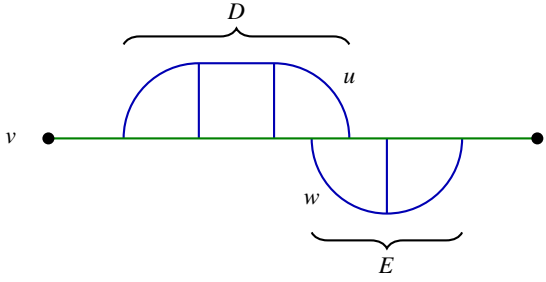
\begin{figure}
\begin{tikzpicture}[thick,draw=darkblue]
\draw (1,0) arc (180:90:1) -- (3,1) arc (90:0:1);
\draw (2,0) -- (2,1); \draw (3,0) -- (3,1);
\draw (3.5,0) arc (180:360:1);
\draw (4.5,0) -- (4.5,-1); \draw[darkgreen] (0,0) -- (6.5,0);
\draw[black,decoration={brace,mirror,amplitude=1.5ex},decorate]
    (3.5,-1.25) -- node[below=1.5ex] {$E$} (5.5,-1.25);
\draw[black,decoration={brace,amplitude=1.5ex},decorate]
    (1,1.25) -- node[above=1.5ex] {$D$} (4,1.25);
\fill (0,0) circle (.08); \fill (6.5,0) circle (.08);
\draw (-.5,0) node {$v$};
\draw (4,.8) node {$u$};
\draw (3.5,-.8) node {$w$};
\end{tikzpicture}
\caption{Concatenating two thin disks to get a thin disk (with the dividing
path indicated in green).}
\label{f:concat} \end{figure}

\begin{proof} We assume without loss of generality that the disk of $D$
begins no later along $v$ than the disk of $E$ does.  In case 1, the union
$D \cup E$ is a thin diagram when either (a) their disks do not share any
length along $v$; or (b) the last face of $D$ meets only the first face
of $E$.  In case 1(b), as shown in \Fig{f:concat}, $D$
and $E$ concatenate to make a larger thin disk $D \cup E$.

Henceforth, we assume that case 1 does not hold.  We work from left to
right along $v$ to show instead that the vertices of $D$ and $E$ must
match along $v$, except that the disk of $D$ may start before the disk
of $E$, and either disk may end first.  Also working from left to right,
we will show that the faces of $D$ and $E$ match with inverse relators,
where they both have faces.

\begin{figure}
\begin{tikzpicture}[thick,draw=darkblue]
\draw (2,0) arc (180:0:1);
\draw (1,0) arc (180:360:1.5 and 1);
\draw[darkgreen] (0,0) -- (5,0);
\draw (2.75,-.4) node {$r^{-1}$}; \draw (3,.4) node {$r$};
\draw[very thick,draw=darkred,|-|] (1,0) -- (2,0) arc (180:120:1);
\draw (1.5,.1) node[above] {$x$};
\draw[very thick,draw=darkred,->] (1,0) -- (1.6,0);
\draw[very thick,draw=darkred,->] (2,0) arc (180:145:1);
\fill (2,0) circle (.08) node[inner sep=4pt,below] {$p$};
\draw (.5,1) node {$D$}; \draw (.5,-1) node {$E$};
\draw (130:1)++(3,0) node[left] {$x^{-1}$};
\end{tikzpicture}
\caption{A trivalent vertex shared by inverse relators creates a non-reduced
    word or relator.}
\label{f:trivalent} \end{figure}
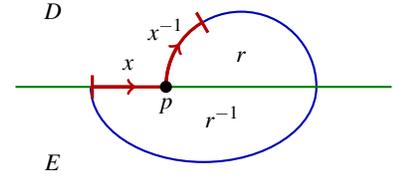

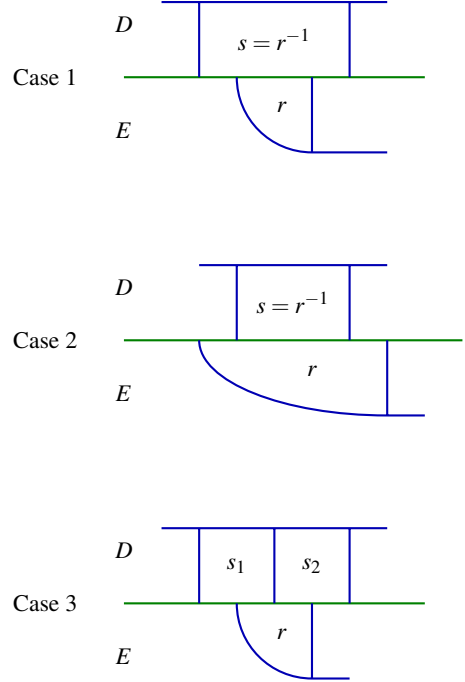
\begin{figure}
\begin{tikzpicture}[thick,draw=darkblue]
\draw[darkgreen] (0,0) -- (4,0);
\draw (.5,1) -- (3.5,1) (1,0) -- (1,1) (3,0) -- (3,1);
\draw (1.5,0) arc (180:270:1) -- (3.5,-1);
\draw (2.5,-1) -- (2.5,0);
\draw (0,.7) node {$D$} (0,-.7) node {$E$};
\draw (2.1,-.4) node {$r$} (2,.5) node {$s = r^{-1}$};
\draw (-.5,0) node[left] {Case 1};
\tikzset{yshift=-3.5cm}
\draw[darkgreen] (0,0) -- (4.5,0);
\draw (1,1) -- (3.5,1) (1.5,0) -- (1.5,1) (3,0) -- (3,1);
\draw (1,0) arc (180:270:2.5 and 1) -- (4,-1) (3.5,-1) -- (3.5,0);
\draw (0,.7) node {$D$} (0,-.7) node {$E$};
\draw (2.25,.5) node {$s = r^{-1}$} (2.5,-.4) node {$r$};
\draw (-.5,0) node[left] {Case 2};
\tikzset{yshift=-3.5cm}
\draw[darkgreen] (0,0) -- (4,0);
\draw (.5,1) -- (3.5,1) (1,0) -- (1,1) (2,0) -- (2,1) (3,0) -- (3,1);
\draw (1.5,0) arc (180:270:1) -- (3,-1) (2.5,-1) -- (2.5,0);
\draw (0,.7) node {$D$} (0,-.7) node {$E$};
\draw (1.5,.5) node {$s_1$} (2.5,.5) node {$s_2$} (2.1,-.4) node {$r$};
\draw (-.5,0) node[left] {Case 3};
\end{tikzpicture}
\caption{Three hypothetical configurations for $D$
    and $E$ at the first face of $E$.}
\label{f:footmatch} \end{figure}

We will use a sublemma which is illustrated in \Fig{f:trivalent}: Under our
hypotheses, $D$ and $E$ cannot have a pair of faces with inverse relators
that share a trivalent vertex $p$.  Because, in such a situation, either
$u$ or $w$ is not reduced, or there is a third relator that meets $p$
which is not reduced.

Let $r$ be the relator of the first face of $E$.  By assumption, $r$ meets
the disk of $D$ in some positive length.  Moreover, if the segment $x$
of $r$ along $v$ contains exactly one vertex of $D$, then it cannot be
the last vertex.  Working in the diagram $D \cup E$, we consider separate
cases and we make repeated use of \Lem{l:lengths} and $C'(1/6)$:
\begin{enumerate}
\item If $x$ has no vertices of $D$, then it is part of a single face in
$D$ with relator $s$.  In this case $\abs{x} > \abs{r}/3$, so $s = r^{-1}$.
\item If $x$ has two consecutive vertices of $D$ (and possibly others), then
they are the endpoints of a segment $y$ of a face in $D$ with relator $s$.
In this case $\abs{y} > \abs{s}/6$, so again $s = r^{-1}$.
\item If $x$ has exactly one vertex in the middle of $D$, then it divides
into two segments $x_1$ and $x_2$ that lie along two faces with relators
$s_1$ and $s_2$.  In this case $\abs{x} = \abs{x_1} + \abs{x_2} > \abs{r}/3$,
so at least one of $s_1$ and $s_2$ equals $r^{-1}$.
\end{enumerate}
The three cases are illustrated in \Fig{f:footmatch}.  Contrary to the
figure, inverse relator faces of $D$ and $E$ cannot share a trivalent
vertex, because that would create a non-reduced word or relator as in
\Fig{f:trivalent}.  The only allowed case is that the first two vertices
of $E$ are consecutive vertices of $D$ and the corresponding faces match
exactly.

We proceed by induction from left to right to show that the faces of $D$ and
$E$ continue to match until one of them runs out of faces.  The inductive
assumption is that the previous faces end at the same vertex along the
word $v$.  Thus the current face $E$ with relator $r$ starts at the same
position as the current face of $D$ with relator $s$.  One of them must end
no later than the other, so \Lem{l:lengths} and $C'(1/6)$ again conspire
to yield $s = r^{-1}$.  Again they cannot share a trivalent vertex, so
the two relators also end at the same position.

Having shown that faces of $D$ and $E$ match along $v$, we must also
show that the same faces have identical internal edges within $D$
and $E$.  In general, if two relators $r$ and $s$ in a legal diagram $D$
(one with reduced relators and a reduced outside) share an edge, then the
value of that edge is forced to be a maximal common substring of $r^{-1}$
and $s$.  Thus, when internal edges of $D$ and $E$ share a vertex in the
word $v$, they have to match each other.

This establishes a maximal common disk $K$ inside both $D$ and $E^{-1}$.
When they are merged to form $D \cup_K E^{-1}$, its new boundary $z$
consists of part of $u$ and $w$ spliced together along a positive length.
This shows that $z$ is reduced, which we need to know that $D \cup_K E^{-1}$
is a legal diagram.
\end{proof}

\begin{remark} Although we will not need this for our purposes, the proof
of \Lem{l:merge} can be extended to show that $z$ is Dehn reduced.
\end{remark}

\begin{lemma} If $u \sim_D v \sim_E w$ are two equality diagrams
between shortest words, then possibly after some twists, they
can be merged into an equality diagram $x \sim_C y$
between two shortest words $x$ and $y$.
\label{l:smerge} \end{lemma}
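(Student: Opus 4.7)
The plan is to apply \Lem{l:merge} iteratively. Since $u$, $v$, $w$ are shortest words, they are Dehn-reduced, so by \Lem{l:thin} both $D$ and $E$ are thin equality diagrams. Each of $D$ and $E$ is therefore a concatenation of edges and thin disks strung along $v$. Stacking $D$ above $E$ along the shared word $v$ yields a combined planar diagram whose outer boundary represents the equivalence $u \sim w$, but this combined diagram can fail to be reduced or thin wherever a disk of $D$ meets a disk of $E$ along $v$.

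The strategy is then to sweep along $v$ from left to right, processing overlapping disk pairs one at a time. At each overlap between a disk $K$ from $D$ and a disk $L$ from $E$, I restrict to a sub-diagram containing only $K$ and $L$ and apply \Lem{l:merge}. In case~1 the two disks concatenate into one larger thin disk; in case~2, after possibly twisting within $K$ or $L$ to align their internal faces, a maximal common sub-disk cancels, locally shortening $v$ and splicing the outside of $K$ directly to the outside of $L$. Either way the overlap is eliminated in favor of a thinner local structure, so the sweep terminates after finitely many steps, producing a reduced diagram $C$ in which every face meets each outer side in an edge, \ie\ a thin equality diagram.

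The outer words $x$ and $y$ of $C$ differ from $u$ and $w$ only by the thin-disk substitutions performed during the sweep, and as observed at the end of \Sec{ss:diagram}, flipping or cancelling a thin disk in a diagram between geodesics preserves the property that the outer arcs are geodesics. Hence $x$ and $y$ are again shortest words. The main obstacle is partial overlaps: the disks of $D$ and $E$ need not match endpoint to endpoint along $v$, so \Lem{l:merge} cannot be invoked verbatim. The key technical step will be to cut $D$ and $E$ along $v$ at the overlap boundaries into smaller single-disk sub-diagrams to which \Lem{l:merge} does apply, then splice the outcomes back in; confirming that this splicing never creates new cancellable pairs or destroys thinness elsewhere, and that each reduction strictly decreases a natural complexity measure (such as the total number of disk--disk overlap regions), is where the bulk of the care lies.
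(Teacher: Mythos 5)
Your overall skeleton --- sweep along $v$ from left to right, apply \Lem{l:merge} to one pair of disks at a time, and terminate by a decreasing complexity measure --- is the same as the paper's, but the mechanism you attribute to \Lem{l:merge} in its second case is not what that lemma provides, and your main invariant rests on it. In case 2 the lemma does not ``cancel'' the maximal common sub-disk and shorten $v$: it glues $D$ to the reflected diagram $E^{-1}$ along the common sub-disk, and the resulting thin diagram has boundary words $z$ and $v$, so the \emph{middle} word $v$ becomes one of the two sides, while $z$ is a splice of parts of $u$ and $w$; the common sub-disk is retained, not removed. The stack-and-cancel move you describe is a cancellation of face pairs that share several edges along $v$, which is exactly the kind of generalized cancellation the paper (following Strebel; see the remark in \Sec{ss:diagram}) declines to use, and it is not licensed by \Lem{l:merge}. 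Consequently your claimed invariant --- that the final outer words differ from $u$ and $w$ only by thin-disk twists --- is unsupported; in the actual construction the sides of intermediate merged diagrams involve $v$ and splices of $u$ with $w$, and the correct reason they remain geodesics is that boundary transversals of thin diagrams between geodesics, and their subwords, are geodesics. The end of \Sec{ss:diagram} justifies this for twists only; ``cancelling a thin disk'' is your own unsupported operation.

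The step where you place ``the bulk of the care'' --- cutting $D$ and $E$ along $v$ at the overlap boundaries into single-disk sub-diagrams --- is both unnecessary and not well defined. \Lem{l:merge} already applies to two whole disks that overlap partially along $v$ with no endpoint alignment; handling precisely that situation is its content. Moreover, an overlap boundary will in general fall in the interior of a face, and a van Kampen diagram cannot be cut through a face, so the ``smaller single-disk sub-diagrams'' need not exist. The paper's proof avoids all of this by inducting on the total number of thin disks: take the leftmost disks $K$ of $D$ and $L$ of $E$; if they share no positive length along $v$, split $K$ off and recurse; otherwise merge $K$ and $L$ by \Lem{l:merge} into a single thin disk $L'$, delete $K$ from $D$, replace $L$ by $L'$ in $E$, and recurse, using that $u$, $v$, $w$ shortest forces every boundary transversal (and subword) to be shortest, so each merge again yields equality diagrams between shortest words. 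Your termination measure (the number of overlap regions) and the unverified claim that splicing ``never creates new cancellable pairs'' would need to be replaced by this disk-count induction for the argument to go through.
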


\begin{proof} We can assume that both $D$ and $E$ are non-trivial.
Arguing by induction, we will reduce any example $u \sim_D v \sim_E w$
to another example $u' \sim_{D'} v' \sim_{E'} w'$ with fewer total disks.
Let $K$ and $L$ be the first (or leftmost) thin disks of $D$ and $E$,
respectively.  Assume without loss of generality that the right endpoint of
$K$ is at least as far to the left as that of $L$.  If $K$ does not meet
$L$ in a positive length, then we can let $u' \sim_{D'} v' \sim_{E'} w'$
be everything to the right of $K$ and attach $K$ on the left after $D'$
and $E'$ are merged.  Otherwise we can form $L'$ by merging $K$ and $L$
as in \Lem{l:merge}.  $L'$ is necessarily a single thin disk, and we can
define $D'$ by deleting $K$ and $E'$ by replacing $L$ with $L'$.  Also,
since $u$, $v$, and $w$ are all shortest words, all boundary transversals of
$D$ and $E$ (and their subwords) are also shortest words, so \Lem{l:merge}
necessarily produces another equality diagram between shortest words.
\end{proof}

\begin{lemma} If $v \sim w$ are shortest words, then they only have one
reduced equality diagram, necessarily a thin diagram.
\label{l:unique} \end{lemma}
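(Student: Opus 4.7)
The proof has two halves. The ``necessarily thin'' claim is immediate from \Lem{l:thin}, since any shortest word is in particular Dehn-reduced, so every reduced equality diagram $v \sim_D w$ is thin.

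For uniqueness, suppose $v \sim_D w$ and $v \sim_E w$ are two reduced (hence thin) equality diagrams. The plan is to form the back-to-back concatenation
\[ v \sim_D w \sim_{E^{-1}} v, \]
whose interior seam is $w$ and whose outer boundary is $v$ on both the top and the bottom, and then invoke \Lem{l:smerge} (equivalently, iterate \Lem{l:merge}) to sweep left to right along the seam. Each application of \Lem{l:merge} in case (2) pairs a thin disk of $D$ with a thin disk of $E^{-1}$, producing a shared maximal thin sub-disk. The detailed argument in the proof of \Lem{l:merge}, which combines \Lem{l:lengths} with the $C'(1/6)$ bound, forces the adjoined faces from the two sides to carry mutually inverse relators with matching interior edges on the nose. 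In our setup this identification says that the originating face of $D$ and the corresponding face of $E$ are \emph{identical}: same relator, same position along $v$ and $w$, same internal edge structure. Iterating to the end of the seam, every face of $D$ is matched with a face of $E$ and the residue of the merge is the trivial diagram between $v$ and itself, yielding $D = E$.

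The main obstacle is the bookkeeping during the sweep: one must check after each cancellation that the residual halves of the concatenation remain legal equality diagrams between Dehn-reduced words so that the next invocation of \Lem{l:merge} has valid inputs, and one must confirm that the twists permitted by \Lem{l:smerge} are never actually needed here. A twist of a thin disk of $D$ would alter the top outer boundary away from $v$, but both outer boundaries of the concatenation are pinned to $v$, so no such twist is possible without breaking the setup; this means the merge reduces purely by disk cancellation. A secondary technicality is ruling out a ``small overlap'' case of \Lem{l:merge}, where the shared sliver between two disks is too short to force full inversion-matching of the adjacent faces; this is excluded by \Lem{l:lengths}, which, because both $v$ and $w$ are geodesics, guarantees that every outer edge of a face meeting the seam has length at least one third of its relator --- more than enough room for $C'(1/6)$ to pin down both the relator and its matching across the seam.
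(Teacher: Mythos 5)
Your first half is fine: thinness is indeed immediate from \Lem{l:thin}. But the uniqueness half has a genuine gap. \Lem{l:merge} does \emph{not} force every face of $D$ to be matched with a face of $E$; it only forces the faces of the two first disks $K$ (in $D$) and $L$ (in $E$) to carry inverse relators \emph{until one of the disks runs out of faces}. Your sweep ``to the end of the seam'' silently assumes the two disks have the same number of faces and terminate at the same vertex along the seam, and that is exactly what still has to be proved. The paper closes this gap with two separate arguments: (a) if $L$ had more faces than $K$, the first extra face of $L$ would fail to be reduced (\Fig{f:notreduced}); and (b) if $K$ and $L$ have the same faces but end at different positions, one computes the length to the end of each disk in two ways and gets $h = \ell + \abs{x}$ and $\ell = h + \abs{x}$, forcing $\abs{x}=0$ (\Fig{f:shortcut}). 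Step (b) is the only place the \emph{shortest-word} hypothesis is actually used, and it is indispensable: the paper's explicit example at the end of \Sec{ss:small} exhibits two distinct thin diagrams between Dehn-reduced (non-geodesic) words, so uniqueness genuinely fails without it.

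This also exposes a misdiagnosis in your ``secondary technicality'': you attribute the role of the geodesic hypothesis to \Lem{l:lengths}, but that lemma already holds for Dehn-reduced words, so an argument resting only on it would prove uniqueness for Dehn-reduced words --- contradicting the counterexample. Likewise, your appeal to \Lem{l:smerge} and the claim that twists are ``pinned'' by the outer boundaries is not needed and not quite meaningful here: the paper's proof works disk by disk directly from \Lem{l:merge} (plus the two arguments above) and never invokes \Lem{l:smerge} or twists. So your overall strategy (match the first disks via \Lem{l:merge}, delete, induct) is the paper's, but the endgame that makes it correct --- and that locates where ``shortest'' enters --- is missing.
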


\begin{figure}
\begin{tikzpicture}[thick,draw=darkblue,scale=1.3]
\draw (1,0) arc (180:0:1) (2,0) -- (2,1);
\draw (1,0) arc (180:270:1) -- (3,-1) arc (270:360:1)
    (2,-1) -- (2,0);
\draw[darkgreen] (0,0) -- (5,0);
\draw[darkred,very thick,-|] (3,-1) -- (3,0) arc (0:60:1);
\draw[darkred,very thick,->] (3,-1) -- (3,-.4);
\draw[darkred,very thick,|-<] (3,0) arc (0:30:1);
\draw[darkred,very thick,->] (3,-1) arc (270:305:1);
\draw[darkred,very thick,-|] (3,-1) arc (270:330:1);
\draw (2.5,-.5) node {$r^{-1}$}; \draw (2.4,.4) node {$r$};
\draw (1,1) node {$K$} (1,-1) node {$L$};
\draw (30:1.35)++(2,0) node {$x$};
\draw (3.1,-.5) node[right] {$x$};
\draw (300:1.35)++(3,0) node {$x$};
\fill (0,0) circle ({.08/1.3});
\draw (5,0) node[right] {$v$};
\draw (2,1) node[above] {$u$}; \draw (2,-1) node[below] {$u$};
\end{tikzpicture}
\caption{If $L$ has more faces than $K$, then the first extra face of $L$
    is not reduced.}
\label{f:notreduced} \end{figure}
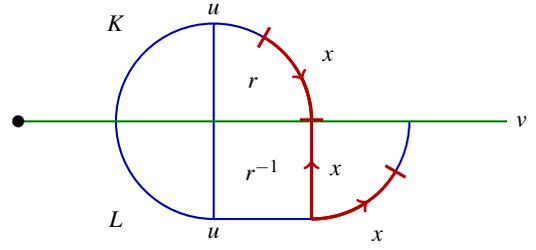

\begin{figure}
\begin{tikzpicture}[thick,draw=darkblue,scale=1.4]
\draw[darkgreen] (0,0) -- (4.5,0);
\draw (1,0) arc (180:0:1) arc (180:90:1.5 and 1);
\draw (2,1) -- (2,-1);
\draw (1,0) arc (180:270:1) arc (270:360:2 and 1) arc (180:240:1);
\draw[-|,darkred,very thick] (3,0) arc (0:60:1);
\draw[-<,darkred,very thick] (3,0) arc (0:30:1);
\draw[-|,darkred,very thick] (3,0) -- (4,0);
\draw[->,darkred,very thick] (3,0) -- (3.6,0);
\draw (45:1.25)++(2,0) node {$x$};
\draw (3.5,.1) node[above] {$x$};
\draw[black,|-|] (2,0)++(177:1.2) arc (177:63:1.2);
\draw[black,|-|] (2,0)++(183:1.2) arc (183:270:1.2) arc (270:336:2.2 and 1.2);
\draw[black,|-|] (1.1,-.2) -- (3,-.2);
\fill (0,0) circle ({.08/1.4});
\draw (2,0)++(120:1.2) node[above left] {$\ell$};
\draw (1.6,-.2) node[below] {$h$} (2,-1.2) node[below] {$\ell$};
\draw (2.5,-.6) node {$r^{-1}$} (2.4,.4) node {$r$};
\draw (4.5,1) node[right] {$u$} (4.5,0) node[right] {$v$};
\draw (4.5,-.866) node[right] {$u$};
\draw (.8,1) node {$K$} (.8,-1) node {$L$};
\end{tikzpicture}
\caption{If $K$ and $L$ have the same faces but $L$ ends later, then $u$
    and $v$ cannot both be shortest.}
\label{f:shortcut} \end{figure}
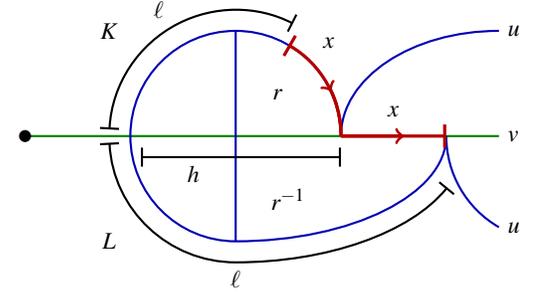

\begin{proof} \Lem{l:thin} says that any equality diagram for $v \sim w$
is thin, so we only need to show that any two such diagrams $D$ and $E$
are the same.  We first consider the trivial case that $v = w$.  In this
case, $D$ and $E$ cannot have any faces since relators are reduced.
Assuming the non-trivial case $v \ne w$, both $D$ and $E$ must have faces.

Let $K$ and $L$ be the first thin disks of $D$ and $E$, respectively.
They must start at the same position along both $v$ and $w$, namely the point
at which $v$ and $w$ first differ.  We claim that they must also end at the
same position and that $L = K^{-1}$. By \Lem{l:merge}, we can merge them,
but that by itself only says that $L$ and $K$ have inverse relators
until one of them runs out of faces.

Suppose that $K$ and $L$ do not have the same number of faces, without
loss of generality that $L$ has more.  In this case, the first extra
face of $L$ is not reduced, as shown in \Fig{f:notreduced}.

Suppose that $K$ and $L$ have the same number of faces, but the last faces
end in different positions, without loss of generality that $L$ ends later.
(In order for $u$ to be reduced, this requires that a new disk in $D$
starts immediately after $K$.)  This case would be possible if $u$ and
$v$ were merely Dehn-reduced; we claim that it is not possible if they
are geodesics.  Using the labelling in \Fig{f:shortcut} and using that $u$
and $v$ are both shortest words, we can calculate the length to the end of
$K$ in two ways and the length to the end of $L$ in three different ways,
yielding the relations
\[ h = \ell+\abs{x} \qquad \ell = h+\abs{x}. \]
The equations tell us that the segment $x$ has zero length, contradicting
the current hypothesis that $L$ ends later than $K$.

If $K$ and $L$ have the same faces and begin and end at the same positions,
then \Lem{l:merge} tells us that $L = K^{-1}$.  Given that the first
thin disks of $D$ and $E$ are the same, we can let $D'$ and $E'$ be the
remainder with this disk deleted and proceed by induction.
\end{proof}

\begin{proof}[Proof of \Thm{th:canonical}] We first establish the existence
of a thin diagram $D$ between shortest words that contains every shortest
word as a transversal.  Among all thin diagrams between shortest words
for some element $g \in G$, let $v \sim_D w$ be one that has as many
faces as possible.  We claim that this $D$ contains every shortest word
for $g$.  Suppose to the contrary that $D$ does not contain some shortest
word $u$.  By \Lem{l:unique}, there is a unique thin diagram $u \sim_E v$.
By \Lem{l:smerge}, we can merge $D$ and $E$, possibly after some twists.
By hypothesis, this merge $D'$ of $D$ and $E$ cannot enlarge $D$; it must
instead be a twist of $D$.

To show that $D$ is unique up to twists among equality diagrams with the
stated properties, let $x \sim_E y$ be an equality diagram between some
pair of shortest words for the group element $g$.  Since $D$ contains $x$
and $y$ as transversals, it contains an equality diagram $x' \sim_{E'} y'$
in which $x$ and $y$ are boundary transversals.  By \Lem{l:unique},
$E'$ must be a twist of $E$.  Note that a word $z$ which arises as a
transversal of a thin diagram $D$ determines its path in $D$, since the
relator faces of $D$ are reduced.  If $E$ contains every shortest word
for $g$, then so does $E'$.  In particular $E' \subseteq D$ contains all
of the boundary transversals of $D$, so we obtain $E' = D$ and that $D$
and $E$ differ only by twists, as desired.

Finally, if $G$ is finitely presented, then we argue that a canonical
diagram $D$ can be computed in polynomial time from any word $w$ that
represents $g$.  To see this, we first Dehn reduce $w$ using arbitrary
choices in Dehn's non-deterministic greedy algorithm.  Assuming that $w$
is Dehn reduced, we can search for thin disks that shorten it further.
We claim that we can find all possible thin disks that attach to $w$ in
polynomial time.  This claim follows from \Lem{l:lengths} together with the
$C'(1/6)$ condition.  The lemma implies that we can calculate the relator
$r$ of any face of a thin disk $K$ from its position alone, given that only
one $r$ can meet $w$ in more than $1/6$ of its length.  In particular,
once we choose both endpoints of $K$, then each face and its relator are
uniquely determined in sequence from left to right and can be calculated
by searching through the relators of $G$.  With the ability to find all
thin disks that attach to $w$, we can apply one that shortens it.  If
no such thin disk exists, then \Lem{l:thin} tells us that $w$ is shortest.

Assume now that $w$ is a shortest word for $g$.  We can grow the canonical
diagram $v \sim_D w$ by looking for thin disks that attach to $w$ and do
not lengthen it.  We can recognize when such a disk $K$ already embeds
in $D$.  If some thin disk $K$ does not embed in $D$, then we can assume
that $K$ is the shortest such disk.  In this case we can attach $K$ to $D$
after twisting at most two consecutive thin disks in $D$ itself.  When $D$
is maximal, it has strictly less than $\abs{w}$ faces, so it can be grown
to its full size in polynomial time.
\end{proof}

As promised, \Thm{th:shortlex} is a quick corollary of \Thm{th:canonical}.

\begin{proof}[Proof of \Thm{th:shortlex}] Since \Thm{th:canonical} includes
a polynomial-time algorithm to calculate a thin diagram $D$ that contains
every shortest word for $[w] = g \in G$ as a transversal, it suffices to
search in $D$ for the shortlex transversal.  This is achieved by an abstract
version of Dijkstra's algorithm described by Sobrinho \cite{Sobrinho:qos}.
The standard version of Dijkstra's algorithm finds the shortest path
between two nodes in a graph whose edges have assigned lengths.  The same
algorithm can also find the minimal path if the edges are assigned elements
of an ordered monoid $S$, provided that the identity $1 \in S$ is the least
element of $S$.  In particular this applies to the semigroups of words $W_A$
in the shortlex order.  The key property of $W_A$ that enables Dijkstra's
algorithm is that concatenation of words preserves the shortlex ordering.
\end{proof}

\begin{remark} We think that \Thm{th:canonical} can be extended to establish
a thin diagram between some pair of Dehn-reduced words for each $g \in G$
that contains all Dehn-reduced words as transversals, and that such a diagram
is unique up to reversible twists.   The complication arises that a twist of
a thin diagram $v \sim_D w$ can be effectively irreversible, even when $v$
and $w$ are Dehn reduced.  If $D$ has two touching disks $K_1$ and $K_2$
and we twist only $K_2$, then the result is a diagram $v' \sim_{D'} w'$
which might be illegal because $v'$ and $w'$ might not be reduced.  We can
reduce $v'$ and $w'$ and glue part of $K_1$ and $K_2^{-1}$ in $D'$ to make
a thin diagram $v'' \sim_{D''} w''$; the transformation of $D$ to $D''$
can be called a \emph{locking twist}.  If $K_1$ and $K_2$ are connected
by an edge labelled by a short enough word instead of touching outright,
then $v' \sim_{D'} w'$ is legal, but $v'$ or $w'$ might not be Dehn reduced.
In this case, we can connect $K_1$ and $K_2$ in $D'$ with a Dehn-reducing
face to make another type of locking twist.
\end{remark}

\begin{example} Let $G$ be generated by
\[ a_1,a_2,a_3,a_4,b_1,b_2,b_3,b_4,c_1,c_2,c_3 \]
with relators
\[ a_1 c_1 b_1^{-1}, \quad a_2 c_2 b_2^{-1} c_1, \quad
    a_3 c_3 b_3^{-1} c_2, \quad a_4 b_4^{-1} c_3. \]
Although these relators are not $C'(1/6)$, small cancellation theory
generalizes to groups whose generators have unequal formal lengths.
(We can also substitute each letter by an arbitrary product of distinct
letters from a larger alphabet.)  We stipulate that
\[ \abs{a_i} = \abs{b_j} \gg \abs{c_1} = \abs{c_3} > \abs{c_2} \]
for all $i$ and $j$.  With sufficiently disparate lengths, this presented
group $G$ is $C'(\lambda)$ with $\lambda$ arbitrarily small.  Now let
\[ v = a_1 a_2 c_2 b_3 b_4 \qquad w = b_1 b_2 c_2^{-1} a_3 a_4. \]
The reader can check that:
\begin{enumerate}
\item $v$ and $w$ are Dehn reduced.
\item $v \sim w$ by two different thin diagrams $D$ and $E$.
\item $D$ and $E$ become the same diagram $v'' \sim_{D''} w''$ after a
locking twist of each one.
\end{enumerate}
The construction is a counterexample to several suppositions:
\begin{enumerate}
\item The words $v$ and $w$ are Dehn reduced, but they are not shortest.
\item A twist of a thin diagram between Dehn-reduced words is not
    always reversible.
\item In contrast to \Lem{l:unique}, a thin diagram between
Dehn-reduced words is not always unique.
\end{enumerate}
\end{example}

\subsection{The proof of \Thm{th:free}}
\label{ss:freeproof}

After the material in \Sec{ss:small}, the proof of \Thm{th:free} is
relatively short, but it is still non-trivial.

\begin{proof}[Proof of \Thm{th:free}] We first establish the result when
the ambient group is free group $F_{14}$ with exactly 14 generators (see
also \Fig{f:tomcat}) over the alphabet
\[ A = \{a_1,b_1,a_2,b_2,\dots,a_7,b_7\}. \]
As before, let $d \in \NP$ be a decision problem which
is given by a predicate $z \in \ccP$, so that
\[ d(x) = \exists y, z(x,y) \]
with $\abs{y} = \poly(\abs{x})$.  We assume that all certificates $y$
are binary strings of the same length $m$.  Given two symbols $a$ and $b$,
we can convert $y$ to a word $y(a,b) \in \{a,b\}^*$ by replacing each $0$
with $a$ and each $1$ with $b$.  We fix the decision input $x$, and
for each accepted certificate $y$, we make a relator
\begin{eq}{e:relator} r_y = y(a_1,b_1)y(a_2,b_2)\cdots y(a_7,b_7).
\end{eq}
Note that each $r_y$ is cyclically reduced.  Let $R = \{r_y\}$ be the set
of these relators, and let $N_R \normaleq F_A \cong F_{14}$ be the
subgroup normally generated by $R$.  The finitely presented group $G =
\braket{A \mid R}$ is clearly $C'(1/6)$, so that all of the results
of \Sec{ss:small} apply.

\begin{figure}
\begin{center}
\frame{\includegraphics[width=3in]{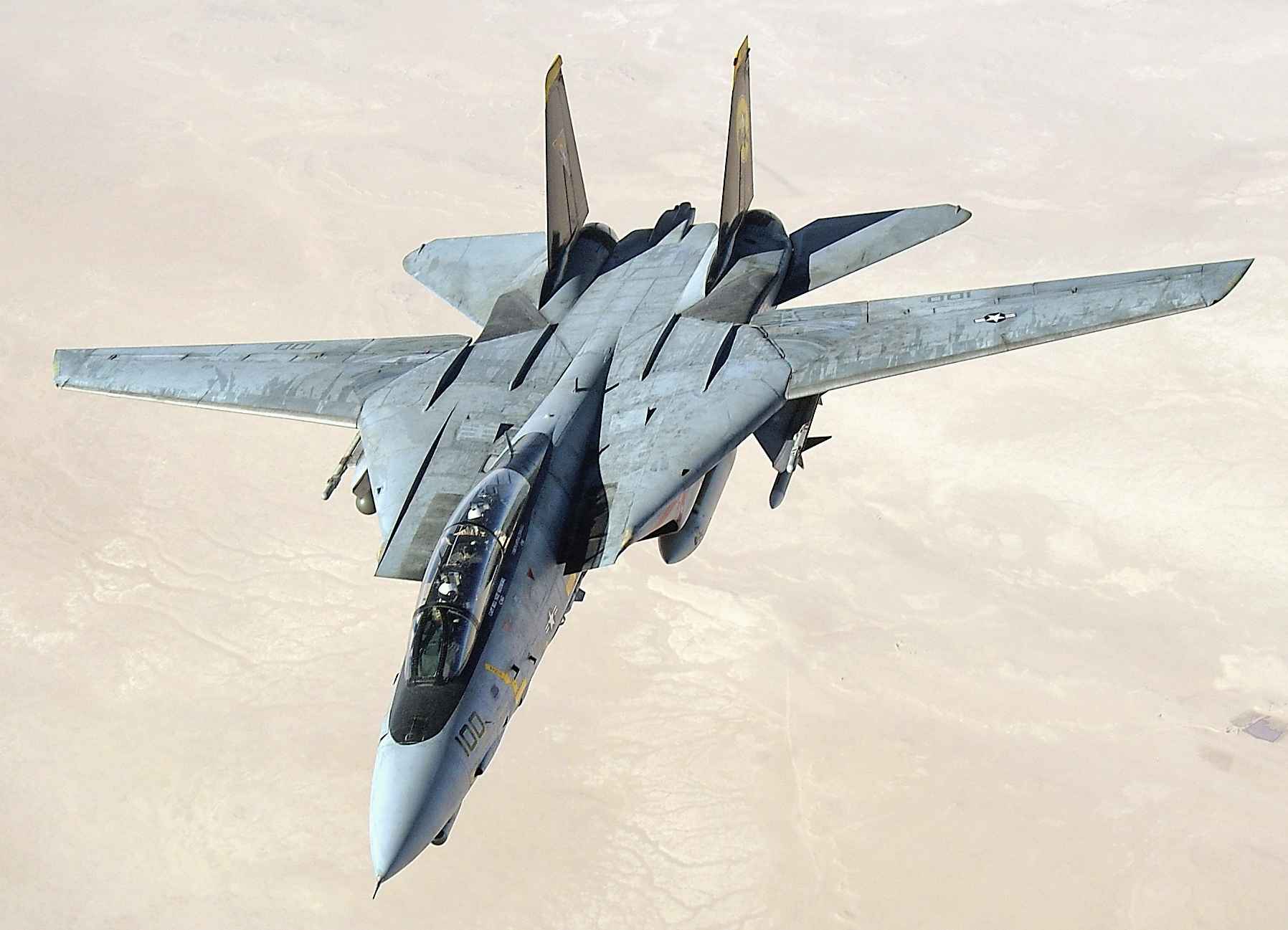}}
\end{center}
\caption{The U.S. Navy F-14 Tomcat fighter jet, not to be confused with the
    free group $F_{14}$ on 14 generators.}
\label{f:tomcat} \end{figure}

The subgroup $N_R$ is non-trivial if and only if $d(x) = \yes$, and it has
polynomial-length elements when it is non-trivial.  Thus, we will achieve our
reduction if we can efficiently compute a hiding function $f$ for $N_R$.
Following \Sec{ss:small}, let $f(w)$ be the shortlex equivalent of $w$
in the coset $wN_R$.  We can compute $f(w)$ using a refinement of the
algorithm in the proof of \Thm{th:canonical}, followed by the algorithm
in \Thm{th:shortlex}.  To review, the algorithm in \Thm{th:canonical}
builds a thin diagram $D$ that contains all shortest words $v \sim w$
as transversals.   We adapt this algorithm to the circumstance that we are
not directly given the relator set $R$, but instead we can ask questions
about $R$ via the predicate $z$.

The algorithm in \Thm{th:canonical} begins by Dehn-reducing the word $w$.
Each step of Dehn's algorithm looks for a relator $r$ to attach to the
evolving $w$ along a substring $s$ with $\abs{s} > \abs{r}/2 = 7m/2$.
The string $s$ is thus long enough that it can only match at most one
position in one relator $r = r_y$ given by \equ{e:relator}.  It is likewise
easy to calculate both $r$ and $y$ from $s$, and we can evaluate the
predicate $q(x,y)$ to tell if $y$ exists and therefore $r$ exists in $R$.
The algorithm then looks for thin disks that further shorten $w$ until it
is shortest.  By \Lem{l:lengths} and the $C'(1/6)$ condition, the relator
$r$ of each face of such a thin disk attaches to the further evolving $w$
along a substring $s$ with $\abs{s} > \abs{r}/6 = 7m/6$.  We can again
calculate $r = r_y$ and $y$ from $s$, and we can again evaluate $q(x,y)$
to determine if $r$ and $y$ exist.

This part of the algorithm concludes by extending $w$ to a thin diagram
$D$ that contains all shortest words as transversals.   This process is
also based on searches for thin disks that attach to the evolving $D$.
As before, we can determine $r = r_y$ and $y$ from the existing structure,
and use the predicate to know whether $r$ is a relator in $R$.

Having constructed the thin diagram $D$ guaranteed by \Thm{th:canonical},
the generalized Dijkstra's algorithm in \Thm{th:shortlex} works the same
as before.

Finally, we want to generalize this argument to $F_k$ for all $k \ge 2$
rather than just $k=14$. We can first reduce any $F_k$ to the case of
$F_2$ by adding all but two of the generators as relators.  Then we want
to imitate the solution that we obtained in $F_{14}$ in $F_2$ instead.
For this purpose, say that $F_2$ is generated by $B = \{a,b\}$.  We can
encode the 14 generators used in \equ{e:relator} in these 18-letter subwords:
\begin{eq}{e:subs} a_k = ab^kab^{15-k}a \qquad b_k = ab^{k+7}ab^{8-k}a.
\end{eq}
With this substitution, a cyclic relator $r_y$ defined as in \equ{e:relator}
has length $126m$, and the cuts between the subwords are distinguished
as the only positions where the letter $a$ is repeated.  It is easy to
check that any common substring $s$ between two such relators is strictly
contained in a string of $m+1$ subwords, and thus has length $\abs{s}
< 18(m+1)$.  Provided that $m \ge 6$ (which we can ensure by padding the
certificates), the presented group $G = \braket{B \mid R}$ is a $C'(1/6)$
group, and the algorithm to compute the shortlex equivalent $f(w)$ of each
$w$ still works as before.
\end{proof}

\subsection{Variations and open problems}

We briefly consider some variations of the hypotheses of \Thm{th:free}.

Without going into details, there is a simplified version of \Thm{th:free}
and its proof when the hidden subgroup $H \subseteq F_k$ is non-normal
and finitely generated rather than finitely normally generated.  If $H
\subseteq F_k$ is any finitely generated subgroup with a given list of
generators, then there is a polynomial-time algorithm to compute canonical
representatives of $F_k/H$ which is based on building the Schreier graph
of $F_k/H$ \cite[Sec.~2.3.1]{GT:topo}.  Following \equ{e:relator}, the
Schreier graph can still be navigated efficiently in $F_6$ given only
oracle access to the generators of $H$, if those generators all have the form
\[ h_y = y(a_1,b_1)y(a_2,b_2)y(a_3,b_3) \]
for certificates $y$ of an $\NP$ predicate $z(x,y)$.   The construction
can also be modified for any non-abelian $F_k$ using the substitution
\eqref{e:subs}.

Another variation is to promise that the hidden subgroup $H \subseteq F_k$
has finite index $[F_k:H] < \infty$.   In this case, $H$ necessarily has
$[F_k:H](k-1)+1$ generators.  If the output is a full description of $H$
and has length $\poly(n)$ for some parameter $n$, then the index is also
$\poly(n)$.  However, in this case, HSP is in classical $\ccP$ using the
Schreier--Sims algorithm \cite{Sims:study}.  On the other hand, if the
question is whether $H$ is a non-trivial element of length $\poly(n)$
and $[F_k:H]$ can have exponential size, then we conjecture that HSEP is
again $\NP$-hard.

Suppose that the hidden subgroup $H \normaleq F_k$ is normal and has finite
index, and thus that $G = F_k/H$ is a finite group of order $[F_k:H]$.
Assuming a bound on $\abs{G}$ which is exponential in some parameter $n$,
Pyber \cite{Pyber:enum} showed that there are only $\exp(\poly(n))$ choices
for $G$, and thus a polynomial-length description of $H$ of some kind.
Unfortunately, for instance if $G$ is cyclic, the generators of $H$ might
still have exponential length; worse, it might only be feasible to reach
$\poly(n)$ elements of $G$ and thus only $\poly(n)$ values of the hiding
function $f$.  However, if we can efficiently evaluate the hiding function
$f(w)$ for a compressed word given by an algebraic circuit, then it is
feasible to reach every element of $G$ as well as a full set of generators
of $H$.   We optimistically conjecture that there is an efficient algorithm
for NHSEP in this case, or at least an algorithm with polynomial query
complexity.  If it exists, it would generalize the algorithm of Hallgren,
Russell, and Ta-Shma for NHSP in finite groups \cite{HRT:normal}.

Finally, we can go back to NHSP in $F_k$ when $H$ can have infinite index,
but where elements of $F_k$ are encoded as canonical compressed words.
Setting up the question in this form uses the striking result, mentioned in
the remark at the end of \Sec{ss:explicit}, that there is an efficient
way to compute a canonical compressed word in $F_k$ from any compressed
word \cite{Lohrey:groups}.  We conjecture that NHSEP in this case is also
$\NP$-complete.

\section{\Cor{c:expquery}: Exponential lower bounds}
\label{s:expquery}

In this section, we briefly argue \Cor{c:expquery} as a corollary of
the proofs of Theorems~\ref{th:rational} and \ref{th:free}.  The corollary
follows quickly from the following observations.
\begin{enumerate}
\item As explained in \Sec{ss:complex}, the reductions in the proofs of
Theorems~\ref{th:rational} and \ref{th:free} both relativize.  Both proofs
construct a hiding function $f$ from a predicate $z$, and relativization
means that $z$ can use an oracle or simply be one.

\item Since \Cor{c:expquery} posits a bound on quantum query complexity,
we can assume a quantum algorithm to compute the hiding function $f$
in \Sec{ss:ratproof}, so we can use Shor's algorithm to factor integers.
(Even if we used a slow algorithm to factor integers, the hiding function
$f$ in \Sec{s:rational} would still have low query complexity, and would
thus still establish the reduction to the query complexity of unstructured
search in step 5.)

\item In \Sec{ss:ratproof}, assuming a factoring algorithm, the generators
of the hidden subgroup $H \subseteq \Z$ are 1 and fractions of the form $1/p$,
where each $p$ is a prime number whose bit complexity is linear in the
length of a certificate $y$ that satisfies the predicate $z$.  The bit
string $y$ is the left third of the binary digits of $p$ after the leading 1.
To review, \Thm{th:ingham} implies that this map from primes to bit strings
is surjective for sufficiently long bit strings.

\item In \Sec{ss:freeproof}, each normal generator of the hidden subgroup $H$
is a word $r_y$ which is a linear expansion of the corresponding certificate
$y$ by \equ{e:relator}.

\item A quantum search for a unique $n$-bit certificate in an oracle has
quantum query complexity $\Theta(2^{n/2})$ because Grover's algorithm
exists and is optimal \cite{Grover:fast,BBBV:strengths,Zalka:optimal}.
(Or the bound could be expressed as $\tTheta(2^{n/2})$ given polynomial
query cost.)
\end{enumerate}

Instead of a lower bound on query complexity, we can also appeal to the
exponential time hypothesis (ETH) \cite{IP:ksat} to obtain a conditional
variation of \Cor{c:expquery}.  If ETH holds, then HSEP in $\Q/\Z$ and
NHSEP in a non-abelian free group both require exponential computation
time.  The standard version of ETH is a lower bound on computation time
for classical algorithms for 3-SAT, but it seems reasonable to conjecture
that it also holds for quantum algorithms.

\section{Proof of \Thm{th:svp}: Solving SVP with HSP}
\label{s:svp}

Our proof of \Thm{th:svp} requires stricter hypotheses than either
\Thm{th:rational} or \ref{th:free}.  In those two cases, we obtain a
polynomial-time reduction from any particular problem $d \in \NP$ (or even
in $\NP^h$, where $h$ is an oracle) to HSEP for $\Q/\Z$ or $F_k$.  Thus,
HSEP and therefore HSP is $\NP$-hard when $d$ is $\NP$-hard, regardless
of whether HSP has an efficient algorithm for other hiding functions.

In \Thm{th:svp}, we assume a polynomial-time algorithm for HSP on $\Z^k$
that correctly handles realistic hiding functions provided by a reduction,
and that also correctly handles artificial hiding functions that are
provided by an oracle.  If the HSP algorithm can correctly handle all
hiding functions rather than just realistic ones, then it can be used to
solve the short vector problem with polynomial parameters.

Before proving \Thm{th:svp}, we first state it more precisely.  As mentioned
in \Sec{ss:complex}, given a vector $\vx \in \Z^k$ as input to the hiding
function, we assume an oracle time cost which is polynomial in $\norm{\vx}_1$
and $k$.   We assume a two-sided bound, both $\poly(\norm{\vx}_1,k)$
and $\Poly(\norm{\vx}_1,k)$.  Recall also from \Sec{ss:hidden} that the
performance of an algorithm for HSP is rated by the time that it takes
to confirm a generating set of the subgroup $H$.  In our proof, $H =
\braket{\vv}$ will be a rank 1 lattice, or discrete line, generated by a
vector $\vv \in \Z^k$.  Thus, ``polynomial time" amounts to pseudo-polynomial
time in the entries of $\vv$ and polynomial time in $k$.  The existence
of a pseudo-polynomial-time algorithm is the hypothesis of \Thm{th:svp}.

\begin{figure}
\begin{tikzpicture}[thick,scale=.375]
\begin{scope}
\clip (0,0) circle (6.5);
\draw[darkgreen,dashed,fill=lightgreen] (0,0) circle (2.5);
\foreach \y in {-7,-6,...,7} {
    \foreach \x in {-1,0,1} {
    \draw[darkred] ({Mod(7*\x-\y+10,21)-10},\y) node {\small \ding{164}};
    \foreach \a/\d in {1/34,2/37,3/40,4/93,5/52,6/74} {
        \draw[darkgray] ({Mod(7*\x+\a-\y+10,21)-10},\y)
        node {\small \ding{\d}}; } } }
\end{scope}
\draw (5.6,5.6) node {$\Z^2$};
\draw (-8,0) node {$f(\vx)$};
\draw (6,-5.6) node {\textcolor{darkred}{\ding{164}}\ $= L$};
\tikzset{shift={(0,-16)}};
\begin{scope}
\clip (0,0) circle (6.5);
\draw[darkgreen,dashed,fill=lightgreen] (0,0) circle (2.5);
\foreach \y in {-10,-9,...,10} {
    \draw[darkred] ({-\y},\y) node {\small \ding{164}};
    \foreach \x/\d in {
        -9/61,-8/126,-7/171,-6/42,-5/168,-4/56,-3/93,-2/52,-1/74,
        1/34,2/37,3/40,4/89,5/108,6/72,7/118,8/252,9/115} {
        \draw[darkgray] ({\x-\y},\y) node {\small \ding{\d}}; } }
\end{scope}
\draw (5.6,5.6) node {$\Z^2$};
\draw (-8,0) node {$g(\vx)$};
\draw (6,-5.6) node {\textcolor{darkred}{\ding{164}}\ $= H$};
\end{tikzpicture}
\caption{The hiding functions $f(\vx)$ and $g(\vx)$ agree within a search
    radius.  An HSP algorithm would find $H$ and thus solve SVP in $L$.}
\label{f:dingbats} \end{figure}

We also state the short vector problem $\uSVP$ more precisely.  Let $L
\subseteq \Z^k$ be an integer lattice given by a lattice basis, let $0
< a \le b$ be two integer constants, and let $p \in [1,\infty]$ be a
norm parameter.  Then the \emph{$p$-norm unique short vector problem}
$\uSVP^p_{a,b}$ is the question of finding the shortest non-zero vector
$\vv \in L$, given the promise that $\norm{\vv}_p \le a$, and that
$\norm{\vx}_p > b$ for any $\vx \in L$ which is not proportional to $\vv$
\cite{AD:unique}.  The conclusion of \Thm{th:svp} is a quantum algorithm
for the problem $\uSVP^p_{a,\Poly(a,k)}$.  Given that the promise gap can
be any polynomial, the value of $p$ is not important and is omitted in
the statement of the theorem and in the proof.

\begin{proof}[Proof of \Thm{th:svp}] Let $B_{r,1}(\vec0)$ denote the
1-norm ball in $\Z^k$ of radius $r$, centered at the origin.  Suppose that
there is a (quantum) algorithm $\sZ$ for HSEP in $\Z^k$ with the specified
properties.  By hypothesis, let $f:\Z^k \to X$ be a function that hides a
discrete line $H = \braket{\vv}$.  If $\vv \in B_{a,1}(\vec0)$, then $\sZ$
promises to calculate whether $H$ is non-trivial in polynomial time in the
query cost for vectors of length $a$.  Since the query cost is $\poly(a,k)$,
Algorithm $\sZ$ thus also runs in time $\poly(a,k)$.  The total running
time of $\sZ$ includes the fiat time cost of each evaluation $f(\vx)$.
These evaluations are not restricted to the ball of radius $a$, but the
evaluation at any vector $\vx$ has a fiat $\Poly(\norm{\vx}_1,k)$ time cost.
Thus, all of the evaluations of $f$ are in a ball $B_{r,1}(\vec0)$ with
$r = \poly(a,k)$.

Let $L \subseteq \Z^k$ be a lattice which is the input for the short
vector problem $\uSVP^1_{a,2r}$.   Our goal is to construct two hiding
functions $f$ and $g$ with the following properties:
\begin{enumerate}
\item The function $f$ hides $L$ and is computable in polynomial
time.
\item The function $g$ hides the discrete line $H$ which is generated
by $L \cap B_{a,1}(\vec0)$, and need not be computable in polynomial time.
Instead, it is provided by an oracle.
\item The functions $f$ and $g$ agree on $B_{r,1}(\vec0)$.
\end{enumerate}
Given these three properties (which are illustrated in \Fig{f:dingbats}),
Algorithm $\sZ$ must give the same answer for the hiding functions $f$
and $g$ if its time limit restricts it to queries in $B_{r,1}(\vec0)$.
Given the $\poly(a,k)$ running time that is sufficient for $\sZ$ to find
the lattice $H$, this answer must be correct for the hiding function $g$.
It won't be correct for the hiding function $f$ when $L \ne H$, because $\sZ$
won't have enough time to confirm that any vector is in $L \setminus H$.
Nonetheless, $\sZ$ computes a shortest non-zero vector $\vv \in L$, given
that $\vv$ and $-\vv$ are the two free generators of the discrete line $H$.

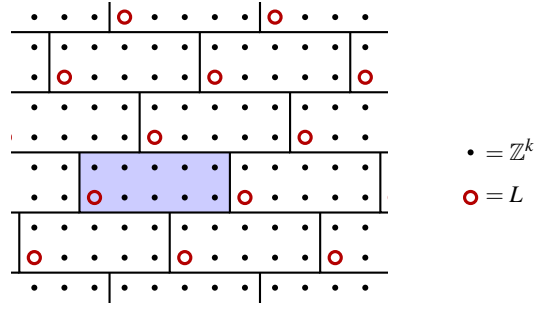
\begin{figure}
\begin{tikzpicture}[thick,scale=.4]
\fill (15,5) circle (.1);
\draw (15.2,5.1) node[right] {$= \Z^k$};
\draw[very thick,darkred] (15,3.5) circle (.2);
\draw (15.2,3.6) node[right] {$= L$};
\clip (-.25,0) rectangle (12.25,10);
\fill[lightblue] (2,3) rectangle (7,5);
\foreach \x/\y in {
    -2/-1,3/-1,8/-1,    -5/1,0/1,5/1,10/1,
    -3/3,2/3,7/3,12/3,  -1/5,4/5,9/5,
    -4/7,1/7,6/7,11/7,  -2/9,3/9,8/9} {
    \draw (\x,{\y+2}) -- (\x,\y) -- ({\x+5},\y);
    \draw[very thick,darkred] ({\x+.5},{\y+.5}) circle (.2);
    \foreach \a/\b in {1/0,2/0,3/0,4/0,0/1,1/1,2/1,3/1,4/1} {
        \fill ({\x+\a+.5},{\y+\b+.5}) circle (.1); } }
\end{tikzpicture}
\caption{The rectangular fundamental domain $B$ (shaded) and the
    brick tiling that arises from a basis matrix $M$ of a lattice $L
    \subseteq \Z^k$ in Hermite normal form.}
\label{f:brick} \end{figure}

To make the hiding function $f$, we use the same method of canonical
representatives described in \Sec{s:rational} and used again in \Sec{s:free}.
Let $M$ be a \emph{basis matrix} of $L \subseteq \Z^k$ (meaning a matrix
whose columns are a basis of $L$).  Kannan and Bachem \cite{KB:smith}
establish an algorithm in $\ccP$ to find the Hermite normal form of $M$.
Since the column Hermite normal form of $M$ is another basis matrix of $L$,
we can assume without loss of generality that $M$ is in the form.  In this
form, $M$ has an echelon structure with pivot entries $m_1,m_2,\ldots,m_k
> 0$.  One property of Hermite normal form is that the rectangular box
\[ B = [0,m_1)_\Z {\times} [0,m_2)_\Z \times \cdots
    {\times} [0,m_k)_\Z \subseteq \Z^k \]
is a set of canonical representatives of the cosets of $L$ in $\Z^k$.
As illustrated in \Fig{f:brick}, the $L$-translates of $B$ tile $\Z^k$ in
a brickwork pattern.  Using the echelon structure of $M$, there is also an
efficient algorithm (that works by induction on $k$) to put any $\vx \in
\Z^k$ in the form $\vy+\vz$ with $\vy \in B$ and $\vz \in L$.  In other
words, the canonical equivalent $\vy =f(\vx)$ can be computed quickly.

Finally, to make the hiding function $g$, we note that if $\vx,\vy \in
B_{r,1}(\vec0)$, then $\norm{\vx-\vy}_1 \le 2r$ by the triangle inequality.
The promised gap property of the lattice $L$ tells us that $f(\vx) =
f(\vy)$ in this case if and only if $\vx-\vy \in H$.  If we interpret
$f$ as a function only on $B_{r,1}(\vec0)$, it therefore has a $H$-periodic
extension $g$ to $\Z^k$.  If $L \ne H$, then this extension requires fresh
values of $g$ that need not have any particular meaning, for one reason
because Algorithm $\sZ$ would never access them anyway.  The extension of $g$
might not be computable in polynomial time either.  (One way to calculate
$g$ is to first find the shortest vector in $L$, which is backwards from
using Algorithm $\sZ$ to solve this problem.)  Nonetheless, $g$ satisfies
the promise to Algorithm $\sZ$, and serves to show that $\sZ$ would find
the shortest vector in $L$ if given the hiding function $f$ instead.
\end{proof}

\subsection{Reduction to a lattice learning problem}
\label{ss:learning}

We have set up \Thm{th:svp} as a problem about a lattice $L \subseteq
\Z^k$ whose intersection with an allowed ball $B_{r,1}(\vec0)$ is
a discrete line $H \subseteq \Z^k$.   We conjecture that there are
similar but more complicated reductions even when $L \cap B_{r,1}(\vec0)$
generates all of $L$.   If so, then our specific proof of \Thm{th:svp}
cannot work as described, because the two hiding functions $f$ and $g$
must agree.  However, if we encrypt the values of the hiding function $f$
(or even provably obfuscate them with an oracle), then we would expect
an algorithm for HSP to learn something about repeated values of $f$ on
$B_{r,1}(\vec0)$. We would therefore expect such an algorithm to detect
vectors in $L$ of length at most $2r$.

In contrast with \Thm{th:svp}, which is a hardness reduction from the
lattice problem $\uSVP$ to HSP on $\Z^k$ with pseudo-polynomial oracle cost,
we outline a partial algorithm in $\BQP$ that reduces this case of HSP to
a certain lattice learning problem.  Let $f:\Z^k \to X$ be a function that
hides a full-rank lattice $H \subseteq \Z^k$.  We can imitate the Shor--Kitaev
algorithm as described in \Sec{s:abelian}.  In our version of this algorithm,
we calculate the hiding function $f$ on an approximation of a Gaussian state
\[ \ket{\psi_{GC}} \approxprop \sum_{\vx \in \Z^k}
    \exp(-\pi \norm{\vx}_2^2/S^2) \ket{\vx}, \]
then measure the Fourier mode $\vy$ of the input register.  The lattice
$H \subseteq \Z^k$ has a matching dual group $H^\# \subseteq (\R/\Z)^k$,
which in this case is a finite group such that $\abs{H^\#} = [\Z^k:H]$.
The measured result can be interpreted as a vector
\[ \vy_1 = \vy_0 + \vu \in (\R/\Z)^k, \]
where $\vy_0 \in H^\#$, and $\vu$ is an approximately Gaussian noise term
with standard deviation proportional to $S$.  If the cost of evaluating $f$
is pseudo-polynomial, then the difficulty is that $S$ is polynomial in the
problem complexity, rather than exponential as in a conventional version
of the Shor--Kitaev algorithm.

We can lift $H^\#$ to a lattice $H^\circ \subseteq \R^k$, namely the
reciprocal lattice of $H$.  (See \Sec{ss:dual}.)  We can likewise lift
each measured sample $\vy_1 \in (\R/\Z)^k$ to a vector $\tvy_1 \in \R^k$.
Our partial Shor--Kitaev algorithm produces points in $\R^k$ that are
displaced from an unknown lattice $H^\circ$ by approximately Gaussian noise
on a polynomial scale.  If we could learn $H^\circ$ from these noisy samples,
then we could calculate its reciprocal lattice $H = H^{\circ\circ}$ and thus
solve HSP.  However, learning an entire lattice from noisy samples could
be significantly more difficult than Regev's LWE problem \cite{Regev:lwe},
which is about learning a single modular vector from noisy information.
It also seems significantly more difficult than the close vector problem,
where an unknown point in a \emph{known} lattice is displaced by noise.

\section{Proof of \Thm{th:abelian}: Infinite-index AHSP}
\label{s:abelian}

\subsection{The algorithm}
\label{ss:ahsp}

Let $f:\Z^k \to X$ be a function that hides a lattice $H \subseteq \Z^k$
which has some rank $0 \le \ell \le k$.  We assume that $H$ has a $k \times
\ell$ basis matrix $M$ with bit complexity $n$.  By Kannan--Bachem
\cite{KB:smith}, we can assume that $M$ is in Hermite normal form.  Our goal
is to compute $M$ in quantum polynomial time in $n$.

We begin with a description of the algorithm.

\begin{algorithm}{A}  Choose four positive integer parameters $Q =
1/q$, $R = 1/r$, $S = 1/s$, and $T = 1/t$.
\begin{enumerate}
\item Using either the Grover--Rudolph or the Kitaev--Webb algorithm
\cite{GR:prob,KW:wave}, construct $k$ copies of a 1-dimensional approximate
Gaussian state $\ket{\phi_{GC}}$ to form a $k$-dimensional approximate
Gaussian state on a discrete, centered cube in $\Z^k$ of size $Q$ or $Q-1$:
\[ \ket{\psi_{GC}} = \ket{\phi_{GC}}^{\tensor k} \propto \back\hback
    \sum_{\substack{\vx \in \Z^k \\ \norm{\vx}_\infty < Q/2}}
    \back\hback \exp(-\pi s^2 \norm{\vx}_2^2) \ket{\vx} \]

\item Let $U_f$ be the unitary dilation of the hiding function $f$
restricted to the discrete cube $\norm{\vx}_\infty < Q/2$.  Apply $U_f$
to the state $\ket{\psi_{GC}}$ to obtain:
\[ U_f\ket{\psi_{GC}} \propto \back\hback
    \sum_{\substack{\vx \in \Z^k \\ \norm{\vx}_\infty < Q/2}}
    \back\hback \exp(-\pi s^2 \norm{\vx}_2^2) \ket{\vx,f(\vx)}. \]
Discard the output register to obtain a mixed state $\rho_{H,s}$
on the input register.

\item Using the embedding of the discrete interval $\abs{x} < Q/2$
into $\Z/Q$, apply the Fourier operator $F$ for the group $(\Z/Q)^k$
to the state $\rho_{H,s}$.  Measure the state $F\rho_{H,s}F^\dag$ in the
computational basis to obtain a Fourier mode $\vy \in (\Z/Q)^k$, and let
\[ \vy_1 = q\vy \in q(\Z/Q)^k \subseteq (\R/\Z)^k. \]

\item Let $L \subseteq \Q^{k+1} \subseteq \R^{k+1}$ be the lattice generated
by the basis
\begin{eq}{e:lbasis} E = [\ve_1,\ve_2,\ldots,\ve_k,(\tvy_1,t)], \end{eq}
where $\ve_j$ is the $j$th standard basis vector and $\tvy_1$ is a lift
of $\vy_1$ to $\R^k$.  Apply the LLL algorithm \cite{LLL:factoring} to
the basis $E$ to obtain an ordered LLL basis
\[ B = [\vb_1,\vb_2,\ldots,\vb_{k+1}] \]
of $L$ (and therefore $\R^{k+1}$). Let $\ell$ be the smallest integer
such that the first $k+1-\ell$ vectors of the basis $B$ have length
$\norm{\vb_j}_2 \le r$, and let $B_1$ be the submatrix whose columns are
these vectors.

\item Find a nonsingular $(k+1-\ell) \times (k+1-\ell)$ submatrix $B_2$
of $B_1$ that includes the last row.  Reorder the coordinates so
that $B_1$ uses the last $k+1+\ell$ rows, and let
\[ B_3 = B_1 B_2^{-1} =
    \begin{bmatrix} B_4 \\ I_{k-\ell} \end{bmatrix} \oplus [1] \]
Using continued fractions, adjust each entry of the $\ell \times (k-\ell)$
matrix $B_4$ to obtain a matrix $A_4$ with rational entries with denominators
bounded by $R$.

\item If the algorithm is successful up to this stage, then $H_\R$ is
spanned by the columns of $k \times \ell$ matrix
\[ A_5 = \begin{bmatrix} I_\ell \\ -A_4^T \end{bmatrix}. \]
Using Kannan--Bachem \cite{KB:smith}, calculate the Smith normal form
$D = VA_5W$, where $V$ and $W$ are invertible integer matrices and $X$
is a diagonal rational matrix with $k+1-\ell$ vanishing rows.  Then
\[ A_6 = V^{-1}\begin{bmatrix} I_\ell \\ 0 \end{bmatrix}
    = [\va_1,\va_2,\ldots,\va_\ell] \]
is an integral basis for the lattice $H_1 = H_\R \cap \Z^k$.

\item Assuming success up to this stage, we have computed a lattice
$H_1$ that contains the unknown lattice $H$, both with rank $\ell$.
Find $A_6^{-1}(H)$ as the hidden subgroup of the function $f(A_6(\vx))$
using the Shor--Kitaev algorithm.  Finally, apply $A_6$ to obtain an
integral basis of $H$, and use the Kannan--Bachem algorithm for Hermite
normal form to find $M$.
\end{enumerate}
\label{a:abelian} \end{algorithm}

We assume the following scales for the parameters in \Alg{a:abelian}:
\[ 1 \ll R \ll T \ll S \ll Q = \exp(\poly(n)). \]
The parameters have the following interpretations:
\begin{itemize}
\item $q = 1/Q$ is the digital precision of $\vy_1 \in (\R/\Z)^k$.  This is
a source of error in the measured data point $\vy_1$.
\item $s = 1/S$ is the Gaussian noise scale of $\vy_1$, which is another
source of error in this data point.
\item $t = 1/T$ is the flattening of the lattice $L$, which
determines the scale of the LLL algorithm.
\item $r = 1/R$ is a lower bound for the feature length of the dual group
$H^\# \subseteq (\R/\Z)^k$ (defined in \Sec{ss:dual}), which approximately
contains $\vy_1$.
\item $R$ is also an upper bound on any denominator that can arise
in the denoised matrix $A_1$ in step 5.
\end{itemize}

Note that each step of \Alg{a:abelian} uses rational numbers in $\Q$ to
approximate real numbers in $\R$, since finite computers cannot calculate
with arbitrary real numbers.  \Alg{a:abelian} is designed so that if $Q$,
$R$, $S$, and $T$ are all of order $\exp(\poly(n))$, then so is every
numerator and denominator that arises in the calculations.

The main result of this subsection is the following.

\begin{theorem} The parameters $Q$, $R$, $S$, and $T$ can all be chosen
so that they are of order $\exp(\poly(n))$, and otherwise so that
\Alg{a:abelian} runs in $\BQP$ and succeeds with good probability.
\label{th:qrst} \end{theorem}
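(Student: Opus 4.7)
The plan is to verify each of the seven steps of Algorithm~A in turn, choose the four parameters consistently, and then combine the individual error bounds into a single success probability. The main work is of two kinds: (a) a quantum-mechanical analysis of steps 1--3 that shows the measured Fourier sample $\vy_1 \in q(\Z/Q)^k$ lies within distance $O(s+q)$ of the dual group $H^\# \subseteq (\R/\Z)^k$, and (b) a classical lattice analysis of steps 4--6 that shows LLL, applied to the basis $E$ in \eqref{e:lbasis}, recovers a (scaled) integer combination witnessing membership of $\tvy_1$ in $H^\# + \text{noise}$, from which the orthogonal complement of $H^\#_\R$ and hence $H_\R$ is read off.

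First I would verify the easy resource estimates. Steps~1 and 2 run in quantum time $\poly(\log Q,k)$ by Grover--Rudolph or Kitaev--Webb plus one call to $U_f$. Steps~4--7 are classical and run in time polynomial in the bit complexity of $E$, so in $\poly(\log Q,\log T,k)$, because LLL, continued fractions, Smith and Hermite normal form are all polynomial time. The Shor--Kitaev invocation in step~7 is polynomial in $n$ because after step~6 the hidden subgroup $A_6^{-1}(H)\subseteq \Z^\ell$ has finite index $[H_1:H]$ with bit complexity $O(n)$, since $H_1 \subseteq H_\R$ is the saturation of $H$ in $\Z^k$.

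The analytic core is the Fourier analysis of steps~2--3. Using standard Poisson summation estimates for a Gaussian restricted to a cube of radius $Q/2$, the mixed state $\rho_{H,s}$ is, up to trace-norm error $\exp(-\Omega(sQ)^2)$, a uniform mixture over cosets of $H$ of normalised Gaussians of width $1/s$. Its $(\Z/Q)^k$ Fourier transform is therefore concentrated, up to trace-norm error $\exp(-\Omega(sQ)^2)$, on those modes $\vy$ such that $q\vy$ is within $O(s)$ of the dual group $H^\# \subseteq (\R/\Z)^k$ (see \Sec{ss:dual}); the distribution of $\vy_1$ is a convolution of the uniform measure on $H^\#$ with a wrapped Gaussian of width $O(s)$, further quantised to the $q$-grid. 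Hence with probability $1-\exp(-\Omega(sQ)^2)$ the sample $\vy_1$ lifts to a vector $\tvy_1 \in \R^k$ with $\tvy_1 = \vy_0^\# + \vec u + \vec m$, where $\vy_0^\# \in H^\#$, $\|\vec u\|_2 = O(s\sqrt{k})$, and $\vec m\in \Z^k$.

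The heart of the argument is step~4. I would show that the lattice $L \subseteq \R^{k+1}$ generated by $E$ contains the short vector $(\vec u\cdot t/s,\, t) + (\text{small})$ and, more importantly, translates of length $O(t/s)$ coming from each generator of $H^\#$ cleared of denominators; conversely, any nonzero vector of $L$ of length at most $r$ must project to a point of $(\R/\Z)^k$ within $O(r)$ of $H^\#$, which, once the choice $r \ll$ (feature length of $H^\#$) is imposed, forces its first $k$ coordinates to encode an exact element of the saturated reciprocal lattice $H^\circ$ of $H_\R$. Standard LLL guarantees (a basis $\vb_j$ with $\|\vb_j\|_2 \le 2^{O(k)}\lambda_j(L)$) then give a basis $B$ in which the first $k+1-\ell$ vectors span the flattened reciprocal lattice, so that the projection to the first $k$ coordinates spans $H^\circ$ after denoising. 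This requires the hierarchy $s \ll t \ll r$ with $r/s > 2^{\Omega(k)}$, together with $sQ \gg \sqrt{n}$ and $Q \gg R/q$, which can be met by setting $Q,R,S,T = \exp(\poly(n))$ with $S$ slightly larger than $T$, $T$ slightly larger than $R$, and $R$ an upper bound on all denominators of $H^\circ$ (itself $\exp(O(n))$ since $H$ has bit complexity $n$).

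With $L$'s short basis in hand, step~5 denoises each entry of $B_4$ via continued fractions; the denominator bound $R$ is valid because the reciprocal lattice generators are rationals with denominator at most $\det(\Z^k/H_\R\cap\Z^k)\le 2^{O(n)}$. Step~6 inverts to obtain $H_\R$ and saturates to $H_1 = H_\R\cap \Z^k$ via Smith normal form, and step~7 recovers $H \subseteq H_1$ by the standard finite-index Shor--Kitaev algorithm. The main obstacle is the LLL/continued-fraction quantitative bound tying the Gaussian width $s$, the flattening $t$, the rational approximation radius $r$ and the cube size $Q$: one must exhibit a simultaneous choice in $\exp(\poly(n))$ under which (i) $\vy_1$ lies within $O(s)$ of $H^\#$ with probability $1 - o(1)$, (ii) LLL's $2^{O(k)}$ approximation factor still separates ``short'' from ``long'' vectors of $L$, and (iii) denominators are uniformly bounded by $R$. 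Everything else is bookkeeping and a union bound over the $O(k)$ lattice coordinates.
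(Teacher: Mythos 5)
Your outline captures the overall architecture — Fourier noise analysis, then a hierarchy of parameters so that LLL followed by continued fractions and Hermite/Smith normal form recovers $H$ — but it glosses over the single hardest ingredient and also misidentifies what the short LLL vectors encode.

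The core gap is in your treatment of step~4. You assert that $L$ ``contains the short vector $(\vec u \cdot t/s, t)$ plus translates coming from each generator of $H^\#$ cleared of denominators,'' and that LLL will then produce $k+1-\ell$ short independent vectors. But $L$ is generated by $\Z^k$ together with the \emph{single} extra column $(\tvy_1, t)$, so every element with non-zero last coordinate has the form $(\widetilde{\lambda \vy_1}, \lambda t)$ for an integer $\lambda$; nothing in the construction hands you ``generators of $H^\#$.'' The reason $L$ happens to contain $k+1-\ell$ independent short vectors is a probabilistic equidistribution statement: for a uniformly random $\vy_0 \in H^\#$ and an adversarial perturbation $\vy_1$ within $1/\Lambda$ of $\vy_0$, the truncated orbit $\{\lambda \vy_1 : 1 \le \lambda \le \Lambda\}$ with $\Lambda = R_1^{\Theta(k)}$ hits enough balls of radius $r_1 = 1/R_1$ near $\vec0$ that one can extract a basis of $H_\R^\perp \oplus \R$ from the lifted orbit. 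That is the content of \Lem{l:basis}, and its proof is the technical heart of the theorem: it needs a quantitative Weyl equidistribution argument (bounding a Weyl sum with a truncated geometric series and a union bound over dual vectors $\vx \notin H$ with $\norm{\vx}_\infty < R_3^2/2$) plus a separate geometric fact (\Lem{l:simplex}) guaranteeing that approximately simplex-positioned vectors remain a basis after lifting. Without this argument you cannot even invoke the LLL guarantee, because \Thm{th:lll} only bounds the output in terms of \emph{existing} independent short vectors of $L$, whose existence is exactly what is missing from your sketch.

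You also have the target of the short vectors backwards. The first $k$ coordinates of a vector $(\widetilde{\lambda\vy_0}, \lambda t) \in L$ with $\|\widetilde{\lambda\vy_0}\|_2 < r_0$ (the feature length) land in the \emph{perpendicular} space $H_\R^\perp$, not in the reciprocal lattice $H^\circ$. The decomposition $H^\bullet = H^\circ \oplus H_\R^\perp$ is orthogonal, and once the length is below $r_0$ the $H^\circ$-component must vanish, so the LLL output approximately spans $H_\R^\perp \oplus \R$ — a $(k+1-\ell)$-dimensional vector space, not a rank-$\ell$ lattice. The subsequent echelon form $B_3 = B_1B_2^{-1}$ and continued-fraction denoising recover $H_\R^\perp$, whose orthogonal gives $H_\R$; the denominator bound used there comes from $\det(M_1)$ for a square submatrix $M_1$ of the hidden generator matrix $M$, not from ``denominators of $H^\circ$.'' Finally, the scale hierarchy you propose ($S$ ``slightly larger'' than $T$ ``slightly larger'' than $R$) does not match what the constraints force: the paper ends up with $R = 2^{\Theta(n)}$ but $T,S,Q = 2^{\Theta(nk^2)}$, because $T$ must dominate $\Lambda R_1 = R_1^{\Theta(k)}$ and $S$ must dominate $4R^2T^3$; the gaps between $R$, $T$, and $S$ are themselves $\exp(\Theta(nk^2))$, not small.
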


\Thm{th:qrst} plainly implies \Thm{th:abelian}. We will discuss the
motivation for \Alg{a:abelian} in \Sec{ss:amotive}, and then prove
\Thm{th:qrst} in Sections~\ref{ss:dual} to \ref{ss:lllrat}.

\subsection{Variations and simulations}

Although a direct classical simulation of the quantum work in steps
1--3 of \Alg{a:abelian} would take exponential time, the estimates in
\Sec{ss:fourier} show that we can classically produce random samples of
$\vy_1$ as in step 3, if we know the hidden subgroup $H$ and if $Q \gg
S$.  Meanwhile, steps 4--6 are all classical and can thus be directly be
simulated.  Such a simulation in this regime would illustrate that Theorems
\ref{th:abelian} and \ref{th:qrst} are valid and that \Alg{a:abelian}
works, but it might not otherwise reveal anything new.

We believe that \Alg{a:abelian} would work at least as well if we replaced
the initial quantum state $\ket{\psi_{GC}}$ in step 1 with the constant
pure state $\ket{(\Z/Q)^k}$, which is the standard initial state in the
Shor--Kitaev algorithm.  We use the Gaussian state $\ket{\psi_{GC}}$ of
width $S$ only to help prove Theorems \ref{th:abelian} and \ref{th:qrst}.
By contrast, the parameters $Q \gg T \gg R \gg 1$ are all essential to the
function of \Alg{a:abelian}.  It would be interesting to try to simulate
\Alg{a:abelian} with $\ket{(\Z/Q)^k}$ as the initial state in step 1,
or to derive analytic bounds on the performance of this simplified algorithm.

Steps 4--6 of \Alg{a:abelian} can be generalized to make use of $m$
samples from steps 1--3, to produce a lattice $L \subseteq \R^{k+m}$.
With the use of multiple samples, steps 4--6 can be extended to assimilate
the Shor--Kitaev algorithm, so that step 7 is not necessary as a separate
stage.  Heuristic estimates suggest that \Alg{a:abelian} would be more
efficient in both quantum space and time with multiple samples.  Austin
Tran \cite{Tran:thesis} has implemented a simulator of steps 4--6 that
supports the heuristic estimates, even when $H = H_1$ and the Shor--Kitaev
stage is not needed.  It seems interesting that \Alg{a:abelian} can find
$H_1$ in quantum polynomial time using only a single sample, even though
heuristics and simulations suggest using multiple samples, and even though
the Shor--Kitaev algorithm to find $H$ from $H_1$ may strictly require
multiple samples.

\subsection{Motivation and outline}
\label{ss:amotive}

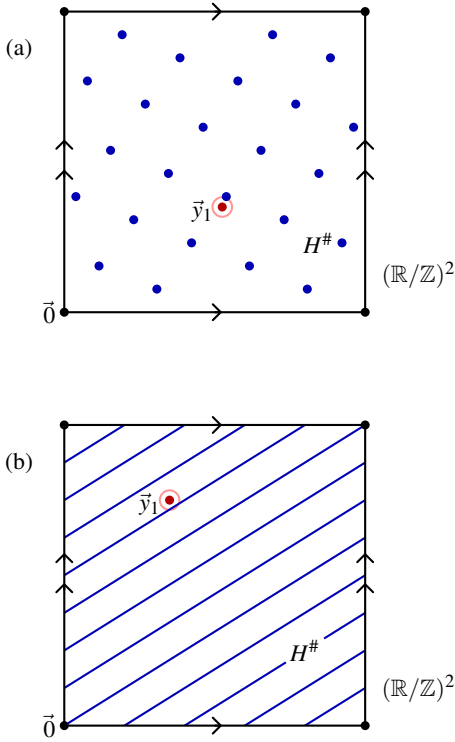
\begin{figure}
\begin{tikzpicture}[thick]
\draw[red!40!white] (2.1,1.4) circle (.13);
\fill[darkred] (2.1,1.4) circle (.06);
\draw (2.1,1.35) node[left] {$\vy_1$};
\foreach \n in {1,2,...,12} {
    \fill[darkblue] ({Mod(\n*16/13,4)},{\n*4/13}) circle (.06);
    \fill[darkblue] ({Mod(\n*16/13+2,4)},{\n*4/13}) circle (.06); }
\draw (0,0) rectangle (4,4);
\draw[-angle 90] (2.0,0) -- (2.1,0); \draw[-angle 90] (2.0,4) -- (2.1,4);
\draw[-angle 90] (0,1.8) -- (0,1.9); \draw[-angle 90] (0,2.2) -- (0,2.3);
\draw[-angle 90] (4,1.8) -- (4,1.9); \draw[-angle 90] (4,2.2) -- (4,2.3);
\fill (0,0) circle (.06) node[left] {$\vec0$}; \fill (4,0) circle (.06);
\fill (0,4) circle (.06); \fill (4,4) circle (.06);
\draw ({48/13},{12/13}) node[left] {$H^\#$};
\draw (4.1,0.5) node[right] {$(\R/\Z)^2$};
\draw (-.6,3.5) node {(a)};
\tikzset{yshift=-5.5cm};
\draw[red!40!white] (1.4,3) circle (.13);
\fill[darkred] (1.4,3) circle (.06);
\draw (1.4,2.95) node[left] {$\vy_1$};
\begin{scope}
\clip (0,0) rectangle (4,4);
\foreach \y in {-2,-1.5,...,4} {
    \draw[darkblue] (0,\y) -- (4,{\y+2.5}); }
\end{scope}
\draw (0,0) rectangle (4,4);
\draw[-angle 90] (2.0,0) -- (2.1,0); \draw[-angle 90] (2.0,4) -- (2.1,4);
\draw[-angle 90] (0,1.8) -- (0,1.9); \draw[-angle 90] (0,2.2) -- (0,2.3);
\draw[-angle 90] (4,1.8) -- (4,1.9); \draw[-angle 90] (4,2.2) -- (4,2.3);
\fill (0,0) circle (.06) node[left] {$\vec0$}; \fill (4,0) circle (.06);
\fill (0,4) circle (.06); \fill (4,4) circle (.06);
\fill[white] (3.2,1) circle (.3);
\draw (3.2,1) node {$H^\#$};
\draw (4.1,0.5) node[right] {$(\R/\Z)^2$};
\draw (-.6,3.5) node {(b)};
\end{tikzpicture}
\caption{Two examples of $H^\# \subseteq (\R/\Z)^2$ with an approximate
    sample $\vy_1$.  In (a), $H$ has rank 2 and $H^\#$ is finite.  In (b),
    $H$ has rank 1 and $H^\#$ is 1-dimensional.}
\label{f:hsharp} \end{figure}

Given $\vy \in (\R/\Z)^k$, we let $\tvy$ be its unique representative
in the half-open cube $(-1/2,1/2]^k$.  We also define
\[ \norm{\vy}_2 \defeq \norm{\tvy}_2. \]

It is standard and simplest to set the QFT radix $Q$ in \Alg{a:abelian} to
be a power of 2.  However, there is also an efficient quantum algorithm
for any radix \cite{Kitaev:stab,HH:fourier}.  The specific choice of $Q$
does not matter in our context, other than that we need $Q \gg S$ and
$\norm{Q}_\bit = \poly(n)$.

As before, let $f:\Z^k \to X$ be a function that hides a subgroup $H$.
As discussed in \Sec{ss:dual}, $H$ has a dual group $H^\#$, defined as the
set of vectors $\vy \in (\R/\Z)^k$ whose inner products with vectors $\vx
\in H$ vanish as elements of $\R/\Z$.  In general, if $Q \gg S \gg 1$,
then step 3 of \Alg{a:abelian} produces a vector
\[ \vy_1 = \vy_0 + \vu \in (\R/\Z)^k, \]
where $\vy_0 \in H^\#$ is uniformly random, and $\vu$ is a noise term
such that
\[ \norm{\vu}_2 = O\bigl(s\sqrt{k}\bigr) \]
with good probability.

If $H$ has full rank $k$, then the Shor--Kitaev algorithm works as follows.
$H^\#$ is a finite group of cardinality
\[ \abs{H^\#} = [\Z^k:H] = \exp(\poly(n)), \]
where $n$ is the bit complexity of a generating matrix of $H$.
If $S = \Omega(k\abs{H^\#}^2)$, then
\[ \norm{\vu}_2 = o\bigl(\frac{1}{\abs{H^\#}^2}\bigr) \]
with good probability.  Each coordinate of the ideal point $\vy_0$ is a
rational number $y_{0,j} = a_j/b_j$ such that $b_j$ divides $\abs{H^\#}$.
In this regime, we can calculate each $y_{0,j}$ exactly from the noisy
coordinate $y_{1,j}$ using continued fractions, as in the original Shor's
algorithm.  With good probability, a set of
\[ O(\log(\abs{H^\#})) = \poly(n) \]
random samples $\{\vy_0^{(\lambda)}\}$ is a generating set of $H^\#$.
We can then produce a generating matrix $A$ of $H$ from a generating matrix
$B$ of $H^\#$, via the Smith normal form (SNF) of $B$.  (See the end of
\Sec{ss:lllrat} for the concept of the SNF of a rational rather than
an integer matrix.)

If $H$ has some lower rank $\ell < k$, then $H^\#$ makes a pattern
of $(k-\ell)$-dimensional stripes in the torus $(\R/\Z)^k$, as in
\Fig{f:hsharp}.  In this case, typically no one coordinate of $\vy_1$
carries any useful information regardless of the bound on the noise $\vu$.
If we knew the direction of the stripes, then we could get useful information
from $\vy_1$ in the orthogonal direction.  However, this would be a circular
strategy, because the tangent directions carry the same information as
the directions of the vectors in $H$, which is exactly the hard part
of \Thm{th:abelian} beyond Shor--Kitaev.  More precisely, let $H_\R$ be
the real vector space spanned by $H$ as defined in \Alg{a:abelian}, and let
$H_\R^\perp$ be its orthogonal space.  Then $H_\R^\perp$ is the tangent
space to $H^\#$ and determines $H_\R = H_\R^{\perp\perp}$.

Our idea is to find multiples $\lambda\vy_0$ of $\vy_0$ whose distance to
$\vec0$ in the torus $(\R/\Z)^k$ is smaller than the minimum spacing between
stripes of $H^\#$.  The shortest lift $\widetilde{\lambda\vy_0}$ of such
a vector to $\R^k$ is then an element of $H_\R$.  The connected subgroup
$H_1^\#$ of $H^\#$ is a torus, albeit one with a complicated geometry.
If $\vy_0$ is chosen generically, then its multiples are dense in $H_1^\#$.
In particular, these multiples eventually yield a basis of $H_\R^\perp$
that consists of short vectors of the form $\widetilde{\lambda\vy_0}$.

We cannot do an exhaustive search for good values of $\lambda$ in the given
exponential range.  Instead, we lift the dense orbit $\{\lambda\vy_0\}$ to
a discrete orbit $\{(\lambda\vy_0,\lambda t)\}$ in $(\R/\Z)^k \oplus \R$.
If $t$ is exponentially small, then the lift is exponentially compressed in
the new direction, and the two orbits track each other for exponentially
many steps.  We can further take the universal cover of $(\R/\Z)^k$ to
convert this discrete orbit to a lattice $L_0$ in $\R^{k+1}$ generated
by $\Z^k$ and $(\widetilde{\vy_0},t)$. If $t$ is small enough, but still
only $\poly(\exp(n))$ small, then the LLL algorithm would find a basis of
$H_\R^\perp \oplus \R$ that consists of short vectors of $L_0$.

In reality, we do not have access to $\vy_0$, rather we have access to its
noisy counterpart $\vy_1 = \vy_0 + \vu$.  Thus, we instead apply the LLL
algorithm to the lattice $L$ generated by $\Z^k$ and $(\widetilde{\vy_1},t)$.
If $\lambda$ is of order $\exp(\poly(n))$, and if the rescaled noise
$\lambda\norm{\vu}_2$ is small enough, then the LLL algorithm applied to
the lattice $L$ yields an adequate approximate basis $B_1$ of $H_\R^\perp
\oplus \R$.  However, the matrix entries of $B_1$ have continuous degrees
of freedom just as $\vy_0$ and $\vy_1$ do.   Step 5 computes another matrix
$B_3$ which is essentially the reduced column echelon form of $B_1$, and
which has no continuous freedom as a description of $H_\R^\perp \oplus \R$.
We can then remove the noise from the entries of $B_3$ using continued
fractions to precisely learn $H_\R^\perp$.

\subsection{Pontryagin duality}
\label{ss:dual}

In this subsection and the next one, we review some elements Fourier
theory for abelian Lie groups, including the concepts of Pontryagin
duality and tempered distributions.   For background on these topics,
see for example Rudin \cite[Ch.~1]{Rudin:fourier} and H\"ormander
\cite[Ch.~7]{Hormander:partial1}.

If $A$ is a locally compact abelian group, then it has a \emph{Pontryagin
dual} $\hA$, which is defined as the set of continuous homomorphisms from
$A$ to the unit complex numbers:
\[ \hA \defeq \{f:A \to S^1 \subseteq \C\}. \]
The group law on $\hA$ is given by pointwise multiplication of these
homomorphisms.  The following additional structures are harder to define,
but also exist canonically:
\begin{enumerate}
\item A locally compact topology on $\hA$, constructed so that
$\widehat{\widehat{A\,}} = A$.
\item Matching Haar measures on $A$ and $\hA$ that yield
Hilbert spaces $L^2(A)$ and $L^2(\hA)$.
\item A unitary Fourier operator
\begin{eq}{e:fourier} F = F_A:L^2(A) \to L^2(\hA) \end{eq}
such that $F_A^\dag$ is the complex conjugate of $F_{\hA}$.
\end{enumerate}

\begin{examples} If $A$ is a discrete group, then its Hilbert space
$L^2(A)$ is the space of square-summable functions on $A$ as a set, and
is more commonly written $\ell^2(A)$. If $A$ is finite, then $\ell^2(A)$
is the vector space of all functions on $A$ and is also written as $\C[A]$.

In our proof of \Thm{th:abelian}, we will need the following standard
cases of Pontryagin duality:
\begin{align*}
\widehat{\Z^k} &= (\R/\Z)^k & (\R/\Z)^k &= \widehat{\Z^k} \\
\widehat{\R^k} &= \R^k & \widehat{(\Z/Q)^k} &= (\Z/Q)^k.
\end{align*}
\end{examples}

We also summarize some useful properties and related definitions:
\begin{enumerate}
\item  If $A$ and $B$ are two locally compact abelian groups, then
\[ \widehat{A \times B} = \hA \times \hB \qquad
    F_{A \times B} = F_A \tensor F_B. \]
\item If $B \subseteq A$ is a closed subgroup of $A$, then
\[ B^\# \defeq \widehat{A/B} \subseteq \widehat{A}\]
is a closed subgroup of $\hA$, and $\hB \cong \hA/B^\#$ is the
corresponding quotient.
\item Since $\widehat{\widehat{A\,}} = A$, we can relate $A$ and $\hA$
symmetrically with a continuous circle-valued inner product
\[ A \times \hA \to S^1 \cong \R/\Z. \]
If $B \subseteq A$ is a closed subgroup, then $B^\# \subseteq \hA$ is its
Pontryagin orthogonal with respect to this inner product, and $(B^\#)^\#
= B$.
\end{enumerate}

If $H \subseteq \Z^k$ is a hidden subgroup, then $H^\# \subseteq (\R/\Z)^k$
is key to understanding \Alg{a:abelian}.  Concretely, if $\vx \in \Z^k$
and $\vy \in (\R/\Z)^k$, then their dot product $\vx \cdot \vy$ is a
well-defined element of $\R/\Z$.  By definition, $H^\#$ is the set of $\vy
\in (\R/\Z)^k$ such that $\vx \cdot \vy = 0$ for all $\vx \in H$.

To understand the structure of $H^\#$, we define several other associated
groups and vector spaces.  Let $\ell$ be the rank of $H$, let
\[ H_\R \defeq H \tensor \R \subseteq \R^k\]
be the $\ell$-dimensional vector space spanned by elements of $H$.  We can
think of $H_\R$ as a self-dual vector space, using the inner product and
Euclidean geometry that it inherits from $\R^k$.  The set of integer points
\[ H_1 \defeq H_\R \cap \Z^k. \]
is also the set of elements $\vx \in \Z^k$ such that $a\vx \in H$ for some
non-zero integer $a$.  The quotient $H_1/H$ is a finite group, and there
are (non-canonical) group isomorphisms
\begin{eq}{e:noncanon} \Z^k/H \cong (\Z^k/H_1) \oplus (H_1/H)
    \cong \Z^{k-\ell} \oplus (H_1/H).
\end{eq}
It follows that $H_1^\#$ is a $(k-\ell)$-dimensional torus, although
its Euclidean geometry can be much more complicated than that of the
square torus $(\R/\Z)^{k-\ell}$.  Also, $H^\#$ is the disjoint union of
$\abs{H_1/H}$ cosets of its connected subgroup $H_1^\#$.

Since we can also view $H \subseteq \Z^k \subseteq \R^k$ as a closed
subgroup of $\R^k$, we write $H^\bullet$ (rather than $H^\#$) for its
Pontryagin orthogonal in $\R^k$.  Explicitly,
\[ H^\bullet \defeq \widehat{\R^k/H} =
    \{\vy \in \R^k | \forall \vx \in \R^k, \vx \cdot \vy \in \Z\}. \]
$H^\bullet$ is also the lift of $H^\#$ from $(\R/\Z)^k$ to $\R^k$.  Since
$H_\R$ is a vector space, it follows that $H_\R^\bullet = H_\R^\perp$,the
$(k-\ell)$-dimensional perpendicular space to $H_\R$.  $H_\R^\perp$ is
also the connected subgroup of $H^\bullet$ and the tangent space to $H^\#$.

Finally, if $H \subseteq \Z^k \subseteq \R^k$, then it is a maximal
rank lattice in the vector space $H_\R$, and the intersection
\[ H^\circ \defeq H^\bullet \cap H_\R. \]
is its \emph{reciprocal lattice}.  We can more directly define $H^\circ$
as the set of vectors $\vy \in H_\R$ such that $\vx \cdot \vy \in \Z$
for all $\vx \in H$.  Note that if $H$ has maximal rank $k$, then $H_\R =
\R^k$ and $H^\circ = H^\bullet$.

\begin{figure}
\begin{tikzpicture}[thick,scale=1.25]
\useasboundingbox (0,0) rectangle (4,4);
\draw[darkgreen,dashed,fill=lightgreen] (2,2) circle (.465);
\begin{scope}
\clip (0,0) rectangle (4,4);
\foreach \y in {-4,-3,...,6} {
    \draw[darkblue] (0,{\y*4/7}) -- (4,{(\y*4+20)/7}); }
\end{scope}
\draw (0,0) rectangle (4,4);
\draw[-angle 90] (2.0,0) -- (2.1,0); \draw[-angle 90] (2.0,4) -- (2.1,4);
\draw[-angle 90] (0,1.85) -- (0,1.95); \draw[-angle 90] (0,2.15) -- (0,2.2);
\draw[-angle 90] (4,1.85) -- (4,1.95); \draw[-angle 90] (4,2.15) -- (4,2.2);
\draw[darkred,<->] (1,3.4) -- (3,0.6);
\fill (2,2) circle (.06);
\draw[darkred,-] (2.7,2.5) -- ({2.7-20/74},{2.5+28/74});
\draw[very thick,darkred,-] (2.63,2.45) -- (2.77,2.55);
\draw[very thick,darkred,-]
    ({2.63-20/74},{2.45+28/74}) -- ({2.77-20/74},{2.55+28/74});
\fill[white] ({2-50/74},{2+70/74}) circle (.18);
\draw ({2-50/74},{2+70/74}) node {$H_\R$};
\draw (1.95,1.75) node {$\vec0$};
\draw (2.715,2.775) node {$r_0$};
\fill[white] (3.2,{12/7}) circle (.25);
\draw (3.2,{12/7}) node {$H^\#$};
\draw (4.1,0.5) node[right] {$(\R/\Z)^2$};
\end{tikzpicture}
\caption{The relationship among the real space $H_\R$ spanned by $H$, the
    dual $H^\#$ sampled in \Alg{a:abelian}, and the feature length $r_0$
    of $H^\#$.}
\label{f:feature} \end{figure}
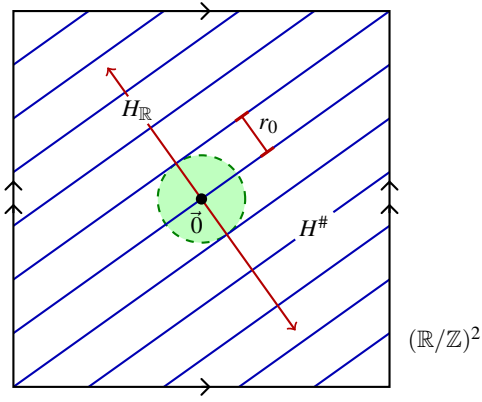

Given $H \subseteq \Z^k$, we define two constants related to the geometry
of $H$ and $H^\#$.  The first constant is the \emph{Gram determinant}
of $H$, defined as
\begin{eq}{e:gram} \Delta \defeq \det(M^T M) \end{eq}
for any generating matrix $M$ of $H$.  (It is not hard to see that
$\Delta$ is independent of the choice of $M$.)  The second constant is
the \emph{feature length}  of $H^\#$ as a subgroup of $(\R/\Z)^k$.
By definition, the feature length $r_0 = \sup r$ is the supremal radius of
an open round ball $B_{r,2}(\vec0) \subseteq (\R/\Z)^k$ centered at $\vec0$,
such that $B_{r,2}(\vec0) \cap H^\#$ is a single $(k-\ell)$-dimensional
disk. \Fig{f:feature} shows an example, and illustrates that the feature
length is also the length of the shortest non-zero vector in $H^\circ$.

\begin{lemma} If $\Delta$ and $r_0$ are defined from $H$ as above, then:
\[ \Vol H^\# = \Vol H_\R/H = \sqrt{\Delta}, \qquad
    H^\circ \subseteq \frac1{\Delta}\Z^k, \qquad
    r_0 \ge \frac1{\Delta}. \]
\eatline \label{l:feature} \end{lemma}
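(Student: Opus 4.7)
The plan is to reduce $H$ to a more tractable form via its saturation $H_1 := H_\R \cap \Z^k$ and the Smith normal form, then verify the three claims using the Gram-determinant identity (Claim~1), Cramer's rule on a basis matrix $M$ of $H$ (Claim~2), and a reduction to a shortest-vector calculation in the reciprocal lattice (Claim~3).

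For Claim~1, the equality $\Vol(H_\R/H) = \sqrt{\Delta}$ is the standard covolume identity for a rank-$\ell$ lattice with rectangular basis matrix: a $QR$ decomposition $M = QR$ (with $Q$ having orthonormal columns spanning $H_\R$ and $R$ upper triangular $\ell\times\ell$) gives $\det(M^T M) = (\det R)^2 = \Vol(H_\R/H)^2$. For the first equality, let $d_1\mid\cdots\mid d_\ell$ be the elementary divisors of $H \subseteq H_1$, so that $[H_1:H] = d_1\cdots d_\ell$ and $\Vol(H) = (d_1\cdots d_\ell)\Vol(H_1)$. I would decompose $H^\#$ as a disjoint union of $[H_1:H]$ translates of its connected component $H_1^\#$, and identify $H_1^\#$ with the flat $(k-\ell)$-torus $H_\R^\perp/(H_\R^\perp\cap\Z^k)$ by parametrizing $H_1^\bullet = H_\R^\perp \oplus H_1^\circ$, observing that primitivity of $H_1$ forces $\pi_1\colon\Z^k\to H_\R$ to surject onto $H_1^\circ$, and noting that $\ker\pi_1|_{\Z^k} = H_\R^\perp\cap\Z^k$ acts on the fiber $H_\R^\perp$. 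Invoking the classical primitive-sublattice duality $\Vol(H_\R^\perp\cap\Z^k) = \Vol(H_1)$ then gives $\Vol(H_1^\#) = \Vol(H_1)$ and hence $\Vol(H^\#) = (d_1\cdots d_\ell)\Vol(H_1) = \Vol(H_\R/H)$.

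For Claim~2, the reciprocal lattice $H^\circ \subseteq H_\R$ has basis matrix $N = M(M^T M)^{-1}$, since the columns of $N$ lie in $H_\R$ and satisfy $N^T M = I_\ell$. Writing $(M^T M)^{-1} = \Delta^{-1}\operatorname{adj}(M^T M)$ shows that $\Delta N$ has integer entries, so every column of $N$ lies in $\frac{1}{\Delta}\Z^k$ and hence $H^\circ \subseteq \frac{1}{\Delta}\Z^k$. For Claim~3, I would use the characterization stated in the text that $r_0$ coincides with $\lambda_1(H^\circ)$, the Euclidean length of the shortest nonzero vector in the reciprocal lattice. The justification is that any $\vy \in H^\bullet$ decomposes orthogonally as $\vy = \pi_1\vy + \pi_2\vy$ with $\pi_1\vy \in H^\circ$, and lies outside $H_\R^\perp$ exactly when $\pi_1\vy \ne 0$, giving $\|\vy\|_2 \ge \|\pi_1\vy\|_2 \ge \lambda_1(H^\circ)$; conversely, a shortest nonzero vector of $H^\circ$ realizes this distance. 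Combined with Claim~2 and the observation that the shortest nonzero vector of $\frac{1}{\Delta}\Z^k$ has length $1/\Delta$, we conclude $r_0 \ge 1/\Delta$.

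The main obstacle is the primitive-sublattice duality identity $\Vol(H_\R^\perp\cap\Z^k) = \Vol(H_1)$ used in Claim~1, a classical but non-obvious fact. The cleanest derivation is via Plücker coordinates: both quantities equal the Euclidean norm of the common Plücker vector of the primitive subspace $H_\R \subseteq \R^k$. Alternatively, one can argue via a Schur-complement computation on the Gram matrix of a $\Z$-basis of $\Z^k$ adapted to $H_1$, combined with comparing dual lattices in $H_\R^\perp$; once this duality is in hand, every other step reduces to routine linear algebra.
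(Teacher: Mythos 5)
Your proposal is correct, and parts of it coincide with the paper's own proof: your Claim~2 is exactly the paper's alternate argument (that $\Delta(M^TM)^{-1}$ is the adjugate of $M^TM$, hence integral, so $\Delta H^\circ \subseteq H \subseteq \Z^k$; the paper also gives a second route via $\abs{H^\circ/H} = \Delta$ and Lagrange), and your Claim~3 is the same deduction the paper makes, namely that the feature length is the minimum distance of $H^\circ$, which is at least that of $\frac1\Delta\Z^k$ -- you even supply a little more justification for the identification $r_0 = $ (shortest nonzero vector of $H^\circ$) than the paper, which reads it off from the figure, and your orthogonal splitting $H^\bullet = H^\circ \oplus H_\R^\perp$ is the same one the paper uses. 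Where you genuinely diverge is the equality $\Vol H^\# = \Vol H_\R/H$: you decompose $H^\#$ into $[H_1:H]$ cosets of its connected component, identify $H_1^\#$ isometrically with $H_\R^\perp/(H_\R^\perp\cap\Z^k)$, and then invoke the primitive-sublattice duality $\Vol(\Z^k\cap H_\R) = \Vol(\Z^k\cap H_\R^\perp)$, which you must establish separately (your Pl\"ucker/Hodge-star route does work, since a saturated sublattice has a primitive decomposable wedge and the star is a norm-preserving bijection on such vectors). The paper avoids that extra classical input entirely: it notes that $H^\circ$ is the reciprocal lattice of $H$ inside $H_\R$, so it has covolume $1/V$ with $V = \sqrt{\Delta}$, whence the sheets of $H^\bullet = H^\circ \oplus H_\R^\perp$ have $(k-\ell)$-dimensional volume density $V$ per unit volume of $\R^k$, and integrating over the unit-volume torus $(\R/\Z)^k$ gives $\Vol H^\# = V$ directly. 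The paper's density argument is shorter and self-contained at the level of reciprocal lattices; your coset decomposition buys a more structural picture (it makes explicit the factorization $\Vol H^\# = [H_1:H]\,\Vol H_1^\#$, consistent with the paper's own decomposition of $H^\#$ into $\abs{H_1/H}$ cosets of $H_1^\#$), at the cost of importing the duality for primitive sublattices.
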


\begin{proof} Let $M$ be a matrix whose columns generate $H$, and let $V =
\Vol H_\R/H$.  The volume $V$ is known as the lattice volume or lattice
determinant of $H$, and one standard formula for it is $V = \sqrt{\Delta}$.
Using the fact that $M^TM$ is the matrix of dot products of the basis $M$,
one can check that $(M^TM)^{-1}M$ is a generating matrix for $H^\circ$.
This implies the standard fact that the lattice volume of $H^\circ$ is
$1/V$, \ie, that reciprocal lattices have reciprocal lattice volumes.

For the rest of the first equation, we can use the orthogonal direct sum
\[ H^\bullet = H^\circ \oplus H_\R^\perp. \]
Since lattice volume is the reciprocal of lattice density, $H^\bullet$ thus
has $(k-\ell)$-dimensional volume density $\Vol H_\R/H$ per unit volume
of $\R^k$.  In particular, since the quotient $(\R/\Z)^k$ has unit volume,
$H^\#$ has volume $V$.

For the second equation, observe that $H \subseteq H^\circ$ since vectors
in $H$ have integer dot products with each other.  It follows that
$\abs{H^\circ/H} = V^2 = \Delta$.  Applying Lagrange's theorem to the
finite group $H^\circ/H$, we obtain that $\Delta H^\circ \subseteq H$,
equivalently that
\[ H^\circ \subseteq \frac1{\Delta}H \subseteq \frac1{\Delta}\Z^k. \]
For an alternate proof of this step, $\Delta(M^TM)^{-1}$ is an integer
matrix, namely the adjugate matrix of $M^TM$, which again shows that
$\Delta H^\circ \subseteq H$.

For the last equation, since the lattice distance of $\frac1{\Delta}\Z^k$
is $1/\Delta$, the lattice distance $r_0$ of $H^\circ \subseteq
\frac1{\Delta}\Z^k$ is at least $1/\Delta$.
\end{proof}

\subsection{Tempered distributions and Fourier transforms}
\label{ss:tempered}

It will be convenient to use generalized states on $\R^k$ that are not in
the Hilbert space $L^2(\R^k)$, such as Dirac delta functions and periodic
functions.  In physics, such entities are known as ``non-normalizable wave
functions".  In rigorous mathematics, a widely useful type of generalized
state is known as a tempered distribution, introduced by Laurent Schwartz.
We will extend all four of the Fourier transforms listed above to tempered
distributions on their domains.  We will also interpret all four as special
cases of the Fourier transform operator on tempered distributions on $\R^k$.

A function $\psi(\vx)$ on $\R^k$ is \emph{Schwartz} when it has all
derivatives, or $\psi(\vx) \in C^\infty(\R^k)$, and every derivative
decays superpolynomially as $\norm{\vx} \to \infty$.  For example, Gaussian
functions on $\R^k$ are Schwartz.  The vector space of Schwartz functions
on $\R^k$ is denoted $\cS(\R^k)$.  It has a natural topology which arises
as the limit (or mutual refinement) of an infinite sequence of Hilbert
space topologies.

The concept of a Schwartz function extends to functions $\psi:A \to
\C$ for other choices of $A$ as follows:
\begin{enumerate}
\item If $A = (\R/\Z)^k$, then $\psi(\vx)$ is Schwartz when
it is $C^\infty$.
\item If $A = \Z^k$, then $\psi(\vx)$ is Schwartz when it decays
    superpolynomially as $\norm{\vx} \to \infty$.
\item If $A = (\Z/Q)^k$, then every function is Schwartz.
\end{enumerate}
A key property of any Schwartz function is that its Fourier transform is
also Schwartz.  In other words, there is a Fourier operator
\[ F:\cS(A) \to \cS(\hA). \]

If $A$ is one of $\R^k$, $\Z^k$, $(\R/\Z)^k$, or $(\Z/Q)^k$, then a
\emph{tempered distribution} (or TD) is a linear functional
\[ \xi:\cS(A) \to \C \]
which is continuous with respect to the topology on $\cS(A)$.  At a rigorous
level, the input to $\xi$ is a Schwartz function $\psi$.  For example, if
$\xi = \delta_{\vx}$ is the unit atomic measure (or Dirac delta function)
at $\vx$, then it is by definition the evaluation map $\delta_{\vx}(\psi)
= \psi(\vx)$.  In Dirac notation, $\xi$ is an abstract state with a
well-defined inner product $\braket{\psi|\xi}$ with every sufficiently tame
state $\ket{\psi}$.  Tempered distributions are also written as functions
interpreted through their sums or integrals.  For instance, in the case
$A = \R^k$,
\[  \braket{\psi|\xi} \defeq \int_{\R^k}
    \overline{\psi(\vx)}\xi(\vx)\;d\vx. \]
The space of TDs is denoted $\cS(A)^*$, since it is the topological dual
of $\cS(A)$.

\begin{remark} Our case-by-case definition of Schwartz functions
and TDs amounts to special cases of a general definition for
arbitrary locally compact, abelian groups which is due to Bruhat
\cite{Osborne:schwartz,Bruhat:distrib}.  Our treatment describes the
Schwartz--Bruhat theory when $A$ (and therefore $\hA$ and every closed $B
\subseteq A$) is a finite-dimensional abelian Lie group or discrete group
whose group of components is finitely generated.  Every group of this type
is isomorphic to a direct product of copies of the four basic cases $\R$,
$\Z$, $S^1 \cong \R/\Z$, and $\Z/Q$.
\end{remark}

An important intermediate class between Schwartz functions and TDs is the
space $\cP(A)$ of \emph{polynomially growing} functions.  A smooth function
$\psi \in C^\infty(\R^k)$ is polynomially growing if each derivative of
$\psi(\vx)$ is $\poly(\vx)$, and we let $\cP(\R^k)$ be the space of these.
In the other three cases of interest to us:
\begin{align*}
\cP(\Z^k) &\defeq \cS(\Z^k)^* \\
\cP((\R/\Z)^k) &\defeq \cS((\R/\Z)^k) = C^\infty((\R/\Z)^k) \\
\cP((\Z/Q)^k) &\defeq \cS((\Z/Q)^k) = \C[(\Z/Q)^k].
\end{align*}

We state the following standard facts without proof.

\begin{proposition} If $A \in \{\R^k, \Z^k, (\R/\Z)^k, (\Z/Q)^k\}$,
then the spaces $\cS(A)$, $\cP(A)$, and $\cS(A)^*$ have these properties.
\begin{enumerate}
\item They nest as follows:
\[ \begin{matrix} \cS(A) & \subseteq & \cP(A) & \\[.5ex]
\reflectbox{\rotatebox{90}{$\supseteq$}} & &
    \reflectbox{\rotatebox{90}{$\supseteq$}} \\
L^2(A) & \subseteq & \cS(A)^* \end{matrix} \]
\item The Fourier operator $F$ in \eqref{e:fourier} extends to an operator
\[ F:\cS(A)^* \to \cS(\hA)^*. \]
\item if $\phi \in \cP(A)$ and $\xi \in \cS(A)^*$, then the
product $\phi\xi$ and the convolution $\hphi * \xi$ are both TDs
in $\cS(A)^*$.
\end{enumerate}
In addition, the TD spaces embed in each other as follows:
\begin{enumerate} \setcounter{enumi}{4}
\item $\cS(\Z^k)^*$ embeds in $\cS(\R^k)^*$ using
Dirac delta functions.
\item $\cS((\R/\Z)^k)^*$ embeds in $\cS(\R^k)^*$ by lifting
each $\xi \in \cS((\R/\Z)^k)^*$ to a periodic TD on $\R^k$.
\item $\cS((\Z/Q)^k)^*$ embeds in $\cS((\R/\Z)^k)^*$ using
Dirac delta functions with spacing $1/Q$.
\item $\cS((\Z/Q)^k)^*$ embeds in $\cS(\Z^k)^*$ by lifting each $\xi \in
\cS((\Z/Q)^k)^*$ to a periodic TD on $\Z^k$.
\end{enumerate}
\label{p:tempered} \end{proposition}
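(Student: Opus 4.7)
The plan is to reduce each claim to the four elementary groups $\R$, $\Z$, $\R/\Z$, $\Z/Q$ by multiplicativity of Schwartz spaces under direct products, and then to verify each statement using the single unifying principle that every operation on tempered distributions is defined by dualizing an operation on Schwartz functions. The core input, which I would cite from Osborne \cite{Osborne:schwartz} or Bruhat \cite{Bruhat:distrib}, is that $F:\cS(A) \to \cS(\hA)$ is a topological isomorphism for each of the four elementary building blocks.

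For (1), all four inclusions are direct from the definitions: a Schwartz function is polynomially bounded by construction (trivially in the discrete and compact cases), and rapid decay against the Haar measure on $A$ gives $\cS(A) \subseteq L^2(A)$; the inclusions of $L^2(A)$ and $\cP(A)$ into $\cS(A)^*$ are Cauchy--Schwarz and absorption of polynomial growth by rapid decay, respectively. For (2), I extend $F$ to distributions by adjunction,
\[
\braket{\psi | F\xi} \defeq \braket{F^\dag \psi | \xi}, \qquad \psi \in \cS(\hA),\ \xi \in \cS(A)^*;
\]
the right side is continuous in $\psi$ because $F^\dag:\cS(\hA) \to \cS(A)$ is continuous, and the extension agrees with $F$ on the $L^2$ subspace by Plancherel.

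For (3), multiplication is defined by $\braket{\psi | \phi\xi} \defeq \braket{\bar\phi \psi | \xi}$, using that $\cP(A) \cdot \cS(A) \subseteq \cS(A)$ continuously (a Leibniz check when $A = \R^k$, trivial otherwise). For convolution, the convolution theorem motivates the \emph{definition} $\hphi * \xi \defeq F^{-1}(\phi \cdot F\xi)$, and by (2) and the product just constructed, the right side is a TD. For (4)--(8), each embedding is the transpose of a canonical continuous surjection between Schwartz spaces: (4) dualizes restriction $\cS(\R^k) \to \cS(\Z^k)$; (5) dualizes periodization $\psi \mapsto \sum_{\vm \in \Z^k} \psi({\cdot} + \vm)$ from $\cS(\R^k)$ onto $\cS((\R/\Z)^k)$; (6) dualizes restriction from $\cS((\R/\Z)^k)$ to the $Q$-torsion subgroup, identified with $\cS((\Z/Q)^k)$ via Dirac masses with spacing $1/Q$; and (7) dualizes mod-$Q$ summation $\cS(\Z^k) \to \cS((\Z/Q)^k)$. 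Injectivity of each embedding follows from surjectivity of the dual map onto the target Schwartz space, which is elementary in every case.

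The main obstacle is not depth but bookkeeping: one must fix a compatible system of Haar measure normalizations across the four elementary groups so that the Fourier extensions in (2) and the embeddings in (4)--(8) intertwine correctly (including with the convolution in (3) via the convolution theorem), and one must check continuity in the appropriate Fr\'echet or strict inductive limit topology on each Schwartz space. I would organize the argument so that the heart is the verification of (2) for the four elementary groups, after which multiplicativity under direct products handles arbitrary $A \in \{\R^k, \Z^k, (\R/\Z)^k, (\Z/Q)^k\}$, and (1), (3), and the embeddings (4)--(8) follow mechanically from the same adjunction principle.
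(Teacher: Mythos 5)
Your proposal is correct, but there is nothing in the paper to compare it against line by line: the paper deliberately states \Prop{p:tempered} as ``standard facts without proof,'' pointing the reader to Rudin \cite{Rudin:fourier}, H\"ormander \cite{Hormander:partial1}, and (in the surrounding remark) the Schwartz--Bruhat theory \cite{Osborne:schwartz,Bruhat:distrib}. Your dualization scheme is exactly the standard way those facts are established, and every step you outline goes through: the nestings in (1) are definitional or Cauchy--Schwarz (note that for $A=\Z^k$ the paper \emph{defines} $\cP(\Z^k)=\cS(\Z^k)^*$, so that inclusion is an equality by fiat, and the Leibniz argument is only needed for $\R^k$); the extension of $F$ in (2) is the adjoint/transpose construction resting on the cited fact that $F$ is a topological isomorphism of Schwartz spaces; and the embeddings are transposes of restriction, periodization, restriction to the $\frac1Q\Z/\Z$ points, and mod-$Q$ summation, with injectivity from surjectivity of those maps, which is the right argument.

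Two small repairs. First, your defining formula for the convolution in (3), $\hphi * \xi \defeq F^{-1}(\phi\cdot F\xi)$, mixes groups: $\phi\in\cP(A)$ cannot multiply $F\xi\in\cS(\hA)^*$. The intended reading (consistent with \Prop{p:prodconv} and its use in \Lem{l:xihat}) is that the convolution lives on the dual side, so you should define $\hphi * \hxi \defeq \widehat{\phi\xi}$, i.e.\ apply $F$ to the product you have already constructed; the paper's own statement shares this looseness, so this is bookkeeping rather than a gap. Second, your worry about compatible Haar normalizations is real but should be resolved the way the paper does: with the explicit transform conventions listed after the proposition, $F$ commutes with the embeddings of cases 5 and 6 exactly and with those of cases 7 and 8 only up to a factor $Q^{k/2}$, so you should not expect (and do not need) a normalization making all the intertwinings exact.
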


Using the notation $\ket{\hpsi} = F\ket{\psi}$, we assume the following
explicit forms for the Fourier transform in the four cases $\R^k$, $\Z^k$,
$(\R/\Z)^k$, and $(\Z/Q)^k$:
\begin{enumerate}
\item If $\psi \in \cS(\R^k)^*$, then
\[ \hpsi(\vy) = \int_{\R^k} \exp(-2\pi i\vx \cdot \vy)\psi(\vx)\;d\vx. \]
\item If $\psi \in \cS((\R/\Z)^k)^*$, then
\[ \hpsi(\vy) = \int_{(\R/\Z)^k} \back\back \exp(-2\pi i\vx \cdot \vy)
    \psi(\vx)\;d\vx. \]
\item If $\psi \in \cS(\Z^k)^*$, then
\[ \hpsi(\vy) = \sum_{\vx \in \Z^k} \exp(-2\pi i\vx \cdot \vy) \psi(\vx). \]
\item If $\psi \in \cS((\Z/Q)^k)^*$, then
\[ \hpsi(\vy) = Q^{-k/2}  \back\hback \sum_{\vx \in (\Z/Q)^k} \back
    \exp\big(\frac{-2\pi i\vx \cdot \vy}Q\big)\psi(\vx). \]
\end{enumerate}
One advantage of these conventions is that the Fourier operator $F$ acting
on TDs commutes with the embeddings in cases 5 and 6 in \Prop{p:tempered}.
($F$ commutes with the embeddings in cases 7 and 8 up to a factor
of $Q^{k/2}$.)  Another advantage is that multiplication of functions
transforms to convolution without any extra factor in the first three cases.
More precisely, we recall the following proposition without proof.

\begin{proposition}  If $A \in \{\R^k, \Z^k, (\R/\Z)^k\}$, if $\psi \in
\cP(A)$ is polynomially growing, and if $\xi \in \cS(A)^*$ is a TD, then
\[ \widehat{\psi\xi} = \hpsi * \hxi. \]
\eatline \label{p:prodconv} \end{proposition}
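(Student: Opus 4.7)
The plan is to reduce to the Schwartz case by a standard density-and-continuity argument, following the Schwartz--Bruhat framework alluded to in the remark after \Prop{p:tempered}. First, suppose both $\psi$ and $\xi$ lie in $\cS(A)$. Then $\hpsi$ and $\hxi$ are again Schwartz, and $\widehat{\psi\xi} = \hpsi * \hxi$ is a classical identity: using Fourier inversion to write $\xi$ as the inverse transform of $\hxi$, substituting into the definition of $\widehat{\psi\xi}(\vy)$, and applying Fubini to exchange the two integrations gives $\widehat{\psi\xi}(\vy) = \int \hpsi(\vy-\vz)\hxi(\vz)\,d\vz$. The normalizations chosen in \Sec{ss:tempered} produce no extra factors, and the same manipulation (with sums, or with mixed sums and integrals) handles the cases $A = \Z^k$ and $A = (\R/\Z)^k$.

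Next I would bootstrap the identity from Schwartz $\xi$ to an arbitrary TD $\xi \in \cS(A)^*$. Multiplication by $\psi \in \cP(A)$ is a continuous linear endomorphism of $\cS(A)$ (since $\psi$ and its derivatives are polynomially bounded while Schwartz functions decay superpolynomially), so by duality $\xi \mapsto \psi\xi$ is weak-$*$ continuous on $\cS(A)^*$. The Fourier operator $F:\cS(A)^* \to \cS(\hA)^*$ is likewise weak-$*$ continuous because its adjoint $F^\dag:\cS(\hA) \to \cS(A)$ preserves the Schwartz space. Since Schwartz functions are weak-$*$ dense in $\cS(A)^*$, the left-hand side $\widehat{\psi\xi}$ is uniquely determined by its values on a Schwartz-dense net of approximants to $\xi$, and the Schwartz case of the previous paragraph pins down that value.

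The main obstacle is interpreting the right-hand side $\hpsi * \hxi$ when $\hpsi$ is not a function but only a TD. The cleanest resolution is to observe that $\hpsi$ lies in the distinguished subspace $F(\cP(A)) \subseteq \cS(\hA)^*$ of \emph{convolvers}---that is, distributions whose convolution with an arbitrary TD is well-defined and yields another TD in a jointly continuous way. This is exactly the content of item~3 of \Prop{p:tempered} transported across the Fourier transform. Alternatively, and what I would actually do in a written-out proof, one may simply \emph{define} $\hpsi * \hxi$ to be $F(\psi\xi)$; the proposition then becomes a tautology for general $\xi$ and requires verification only in the Schwartz case, which was step one. Either way, both sides are weak-$*$ continuous in $\xi$ and agree on a dense subset, so they coincide throughout $\cS(A)^*$.
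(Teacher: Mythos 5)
The paper does not prove this proposition at all: it is explicitly ``recalled without proof'' as part of the standard Fourier theory of tempered distributions, with the background delegated to Rudin \cite{Rudin:fourier}, H\"ormander \cite{Hormander:partial1}, and the Schwartz--Bruhat theory cited in the remark of \Sec{ss:tempered}. Your sketch is essentially the standard textbook argument that those references contain: prove the exchange formula on $\cS(A)$ by inversion and Fubini, then extend by weak-$*$ density of $\cS(A)$ in $\cS(A)^*$, using that multiplication by $\psi \in \cP(A)$ and the Fourier operator are adjoints of continuous maps on the Schwartz space, and that $\hpsi$ is a convolver (the Fourier image of the multiplier space, \ie\ Schwartz's $\mathcal{O}_M$--$\mathcal{O}_C'$ duality, which is the content of item 3 of \Prop{p:tempered}); with the adjoint definition $\braket{\hpsi * \hxi, \phi} = \braket{\hxi, \widetilde{\hpsi} * \phi}$ the weak-$*$ continuity of both sides in $\xi$ is immediate, so the argument closes. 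One caution: your fallback of simply \emph{defining} $\hpsi * \hxi$ to be $F(\psi\xi)$ is not an acceptable substitute here. It would make the proposition contentless, and the paper relies on the convolution having its concrete meaning --- for instance in \equ{e:gsum} and in the proof of \Lem{l:xihat}, where $\exp(-\pi S^2\norm{\vy}_2^2) * \hdelta_{H+\vv}$ is evaluated explicitly as a sum of translated Gaussians. So the convolver route, not the definitional dodge, is the argument to write out; with that choice your proposal is sound.
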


If $A$ lies in any of the four cases of \Prop{p:tempered} and $B \subseteq
A$ is a closed subgroup, then we can interpret Haar measure $\delta_B$ on
$B$ as a measure and a TD on $A$.  If $A$ is $\R^k$ or $(\R/\Z)^k$ and $B$
is $\ell$-dimensional, then we normalize $\delta_B$ by defining it to be
$\ell$-dimensional Euclidean volume in the continuous directions of $B$.
For example, if $A = \R^k$ and $B = \R^\ell$, then $\delta_B$ is the constant
function 1 (or Lebesgue measure) in the first $\ell$ coordinates, times a
Dirac delta at $\vec0$ in the other $k-\ell$ coordinates.  We extend the
notation to Haar measure $\delta_{B+\vx}$ on a coset $B+\vx \subseteq A$.

Finally, we state the Fourier transform of $\delta_B$ without proof.
(The proof is similar to the proof of the first part of \Lem{l:feature}.)
To state the result, we define a certain relative volume $V(A,B)$
if $A$ is one of $\R^k$, $(\R/\Z)^k$, or $\Z^k$ and
$B \subseteq A$ is a closed subgroup.  If $A$ is $\R^k$ or $\Z^k$, then
recall that $B_\R$ is the linear span of $B$ in $\R^k$, and let
\[ V(A,B) \defeq \Vol B_\R/B. \]
If $A = (\R/\Z)^k$, then let
\[ V(A,B) \defeq \Vol A/B. \]
Note that since there are several choices for the dimensions of $B_\R$
and $B$, we use $\ell$-dimensional volume in these formulas for several
choices of $\ell$.

\begin{proposition} If $A$ is one of $\R^k$, $\Z^k$, or $(\R/\Z)^k$ and
$B \subseteq A$ is a closed subgroup, then
\[ \hdelta_B = \frac{\delta_{\widehat{A/B}}}{V(A,B)} \in \cS(\hA)^*. \]
\eatline \label{p:haarfourier} \end{proposition}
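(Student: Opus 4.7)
The proposition amounts to an abstract Poisson summation formula. Unfolding the definition of the Fourier transform on tempered distributions via the duality $\hxi(\phi) = \xi(\hphi)$ implicit in \Prop{p:tempered}(2), the identity $\hdelta_B = \delta_{\widehat{A/B}}/V(A,B)$ is equivalent to
\[ \int_B \hphi\,d\mu_B
    = \frac{1}{V(A,B)} \int_{\widehat{A/B}} \phi\,d\mu_{\widehat{A/B}} \]
for every test function $\phi \in \cS(\hA)$, with $\mu_B$ and $\mu_{\widehat{A/B}}$ normalized as in the paragraph preceding the proposition. The plan is to verify this identity by reducing to primitive one-dimensional cases.

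First I would bring $B$ into a standard product form by an automorphism of $A$. For $A = \R^k$, the connected component of $B$ is a linear subspace $V$ and $B = V \oplus L$ with $L$ a lattice in an orthogonal complement; an orthogonal change of coordinates (a Euclidean isometry preserving $F$, Haar measure, and $V(A,B)$) aligns $V$ with the axes, and $L$ is then handled via its basis matrix exactly as in \Lem{l:feature}. For $A = \Z^k$, the Smith normal form of the inclusion $B \into \Z^k$ produces an automorphism of $\Z^k$ after which $B = \bigoplus_i m_i\Z \oplus \{0\}^{k-\ell}$. For $A = (\R/\Z)^k$, lifting $B$ to its preimage in $\R^k$ (a closed subgroup containing $\Z^k$) reduces to the previous two cases. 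Since $F_{A_1 \times A_2} = F_{A_1} \tensor F_{A_2}$ and both sides of the proposition are multiplicative under direct products, it then suffices to verify the formula in the primitive one-dimensional pairs $(A,B) \in \{(\R,\R),\,(\R,\{0\}),\,(\R,c\Z),\,(\Z,Q\Z),\,(\R/\Z,\,(1/n)\Z/\Z)\}$.

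Each primitive case is a direct calculation: $\widehat{1} = \delta_0$ and $\widehat{\delta_0} = 1$ dispose of the first two; the third is classical Poisson summation for the scaled lattice $c\Z \subseteq \R$, which contributes exactly the constant $1/c = 1/V(A,B)$; and the last two are finite geometric-sum Fourier computations that yield the correct normalization. The main obstacle, and the step I expect to require the most care, is the bookkeeping of normalizations under these reductions. In particular, rescaling a coordinate axis of $\R$ by a factor $c$ alters both the Fourier transform (by a Jacobian of $\abs{c}$) and $V(A,B)$ (via the change in lattice volume), and one must verify that these factors cancel exactly. This is essentially the same reciprocal-lattice consistency that underlies \Lem{l:feature}, and once tracked carefully through each reduction the proposition follows.
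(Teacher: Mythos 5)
The paper states this proposition without proof, remarking only that the argument is similar to the first part of \Lem{l:feature}; that hinted route is an orthogonal separation of variables $A_\R = B_\R \oplus B_\R^\perp$ combined with full-rank lattice Poisson summation and the reciprocal-covolume facts already established there. Your route --- dualizing against test functions and reducing to primitive one-dimensional pairs by putting $B$ in standard position --- is a legitimate alternative and would work, so the comparison is mainly about where the normalization bookkeeping lives. Two of your reduction steps need more than the axis-rescaling cancellation you describe. First, for $A = \R^k$ the discrete part $L = B \cap V^\perp$ is a general lattice, which no orthogonal map makes axis-aligned; you need the substitution by its basis matrix $M$, where the pushforward rule $\widehat{M_*\xi} = \abs{\det M}^{-1}(M^{-T})_*\hxi$ produces exactly the covolume $\abs{\det M} = V(A,B)$ and the reciprocal lattice $M^{-T}\Z^\ell$ --- this is the cancellation you gesture at, but it is a full linear Jacobian, not a coordinate rescaling. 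Second, and more subtly, for $A = \Z^k$ the Smith normal form change of basis is a $\GL(k,\Z)$ map that is unimodular but not an isometry, so it distorts \emph{both} $V(A,B) = \Vol B_\R/B$ \emph{and} the Euclidean-normalized Haar measure on $B^\# \subseteq (\R/\Z)^k$ (whose tangent directions are carried by $U^{-T}$). The identity is equivariant under this change of basis only because the volume expansion factor of $U$ on $B_\R$ equals that of $U^{-T}$ on $B_\R^\perp$ when $\abs{\det U} = 1$; verifying this complementary-determinant identity is the real content of your ``bookkeeping'' step and should be made explicit (or avoided altogether by following the orthogonal-decomposition route of \Lem{l:feature}, which never leaves the ambient Euclidean structure). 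With that check supplied, your primitive cases are all correct and the plan goes through.
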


\subsection{The Fourier stage of \Alg{a:abelian}}
\label{ss:fourier}

The goal of this subsection is to prove a key assertion at the beginning
of \Sec{ss:amotive}.

\begin{lemma} Suppose that $Q \ge S^2$ and that $S \ge \max(\Delta^2,C)$,
where $C > 0$ is a suitable absolute constant.  Then the vector $\vy_1$
produced by step 3 of \Alg{a:abelian} equals (in distribution)
\[ \vy_1 = \vy_0 + \vu \in (\R/\Z)^k, \]
where $\vy_0 \in H^\#$ is uniformly random and $\vu$ is a noise term
such that
\[ \Pr\big[\norm{\vu}_2 \le \sqrt{k}s\big] \ge \frac34. \]
\eatline \label{l:unoise} \end{lemma}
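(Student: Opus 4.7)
The plan is to compute the exact distribution of $\vy_1$ produced by Steps 1--3 and then identify it, up to small errors, with the convolution of the uniform measure on $H^\#$ with a Gaussian of width $\Theta(s)$ on $(\R/\Z)^k$. The main tool is Fourier analysis on $\Z^k$ with its Pontryagin dual $(\R/\Z)^k$, together with the tempered-distribution calculus from \Sec{ss:tempered}, which lets one pass cleanly between the continuous model and the $(\Z/Q)^k$-discretized setup. I first consider the idealized algorithm in which $Q$ is replaced by $\infty$, so that Step 1 produces the continuum Gaussian $\phi(\vx) = \exp(-\pi s^2 \norm{\vx}_2^2)$ as a TD on all of $\Z^k$ and Step 3 uses $F_{\Z^k}:\cS(\Z^k)^* \to \cS((\R/\Z)^k)^*$. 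Conditioning on the value of the discarded output register partitions the input support into cosets $\vx_0 + H$; conditional on a given outcome, the pure-state amplitude is proportional to $\phi(\vx)\,\delta_{\vx_0+H}(\vx)$.

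Next I transform to the Fourier side: by \Prop{p:prodconv} and \Prop{p:haarfourier}, the Fourier transform of this amplitude is the Gaussian $\hphi$ on $(\R/\Z)^k$ convolved with a unimodular multiple of $\delta_{H^\#}$. Squaring the amplitude to get a probability density, the phase coming from $\vx_0$ cancels, and the density of $\vy$ becomes (up to normalization) the convolution of a Gaussian of width $\Theta(s)$ per coordinate with uniform measure on $H^\#$. Under the hypothesis $S \ge \Delta^2$ and the bound $r_0 \ge 1/\Delta$ from \Lem{l:feature}, this Gaussian width is much smaller than the feature length of $H^\#$, so distinct sheets of $H^\#$ contribute essentially disjoint Gaussian bumps. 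The density therefore factors: one samples $\vy_0$ uniformly from $H^\#$ and then adds an independent Gaussian $\vu$ with variance $\Theta(s^2)$ per dimension, transverse to $H^\#$ at $\vy_0$.

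I then account for the two approximations that separate the actual algorithm from this idealization. Restricting to $\norm{\vx}_\infty < Q/2$ in Step 1 removes only superexponentially small amplitude when $Q \ge S^2$, so the truncated and untruncated Gaussian states are close in trace norm. Replacing $F_{\Z^k}$ with the DFT on $(\Z/Q)^k$ corresponds, via Poisson summation and the embeddings in parts 7 and 8 of \Prop{p:tempered}, to periodizing the continuous Fourier-domain density with period $1/Q$ per coordinate and sampling it at the grid $q(\Z/Q)^k \subset (\R/\Z)^k$. Because $1/Q$ is vastly smaller than the Gaussian width $s$, this sampling and periodization introduce only polynomially small total-variation error in the law of $\vy_1$, which a standard coupling absorbs into the noise term.

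Finally, a standard Gaussian concentration bound yields $\Pr[\norm{\vu}_2 \le \sqrt{k}\,s] \ge 3/4$ once $C$ is chosen large enough to swallow the corrections from truncation and discretization. The main obstacle I anticipate is the bookkeeping for this last discretization step: one must pass from a clean tempered-distribution identity on $(\R/\Z)^k$ to an honest estimate on the finite probability vector indexed by $q(\Z/Q)^k$, and then couple that approximate distribution to the idealized $\vy_0 + \vu$ with negligible loss. Given the gap $1 \ll \Delta^2 \le S \le \sqrt{Q}$, however, this should amount to assembling standard Gaussian tail bounds and Poisson summation estimates, rather than introducing any new idea.
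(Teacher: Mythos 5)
Your proposal is correct and follows essentially the same route as the paper: condition on (measure) the discarded output register to obtain Gaussian-damped coset states, compute their Fourier transforms as tempered distributions via \Prop{p:prodconv} and \Prop{p:haarfourier} to get essentially disjoint Gaussian bumps along $H^\#$ (disjoint because the width $s$ is far below the feature length $r_0 \ge 1/\Delta$ of \Lem{l:feature}), control the $Q$-truncation and grid-discretization separately, and finish with a Gaussian tail/Markov bound. The paper's \Lem{l:pffpgc}, \Lem{l:xihat}, \Lem{l:tau}, \Lem{l:yquant}, and \Lem{l:ugtail} carry out precisely the bookkeeping you anticipate, with the truncation/periodization error quantified as $\exp(-\Omega(S^2))$ and the rasterization error as $O(s^2)$ in total variation.
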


To prove \Lem{l:unoise}, we will combine the definitions of TD's and
Fourier operators with Propositions~\ref{p:tempered},~\ref{p:prodconv},
and~\ref{p:haarfourier} as in the following preliminary example.  First,
one can check that in $\R^k$, the Gaussian
\[ \gamma_{\R^k,s}(\vx) = \exp(-\pi s^2 \norm{\vx}_2^2) \]
transforms to the Gaussian
\[ \widehat{\gamma_{\R^k,s}}(\vy) = s^{-k} \gamma_{\R^k,1/s}(\vy)
    = s^{-k} \exp(-\pi s^{-2} \norm{\vy}_2^2). \]
The equation
\[ \widehat{\gamma_{\R^k,s}\delta_{\Z^k}}
    = \gamma_{\R^k,1/s} * \delta_{\Z^k} \]
then tells us that the Fourier transform of the discrete
Gaussian
\[ \gamma_{\Z^k,s}(\vx) = \exp(-\pi s^2 \norm{\vx}_2^2) \]
is a sum of Gaussians,
\begin{eq}{e:gsum} \widehat{\gamma_{\Z^k,s}}(\vy) = \sum_{\vz \in \Z^k}
    s^{-k} \exp(-\pi s^{-2}\norm{\tvy+\vz}_2^2), \end{eq}
where in the sum we use any lift of $\vy \in (\R/\Z)^k$
to $\tvy \in \R^k$.

The following lemma enables us to discretize a multidimensional Gaussian
and/or cut off its tail without much change to its norm or its Fourier
transform.

\begin{lemma} The following estimates hold as $S \to \infty$, and uniformly
in the other parameters when $S \ge k$, $Q \ge S^2$, and $\vy \in
[-\frac12,\frac12]^k$:
\begin{align}
\sum_{\vx \in \Z^k} \exp(-\pi \norm{\vx}_2^2/S^2)
    &= S^k + \exp(-\Omega(S^2)) \label{e:gnorm} \\
\sum_{\substack{\vx \in \Z^k \\ \norm{\vx}_\infty \ge Q/2}} \back\hback
    \exp(-\pi \norm{\vx}_2^2/S^2) &= \exp(-\Omega(S^2)) \label{e:gtail} \\
\sum_{\vz \in \Z^k \setminus \{\vec0\}} \back\hback
    \exp(-\pi S^2\norm{\vy+\vz}_2^2) &= \exp(-\Omega(S^2)). \label{e:glattice}
\end{align}
\eatline \label{l:gaussest} \end{lemma}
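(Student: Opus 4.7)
My plan is to reduce all three bounds to one-dimensional Gaussian estimates and then lift to dimension $k$, using the hypothesis $S \ge k$ to absorb the resulting $S^{k-1}$-type prefactors into $\exp(o(S^2))$. First I would establish three one-dimensional estimates. (i) Poisson summation applied to $e^{-\pi x^2/S^2}$, whose continuous Fourier transform is $S e^{-\pi S^2 y^2}$, gives $\sum_{x \in \Z} e^{-\pi x^2/S^2} = S \sum_{z \in \Z} e^{-\pi S^2 z^2} = S + O(S e^{-\pi S^2})$. (ii) For integer $x \ge Q/2$, the bound $x^2 \ge (Q/2) x$ turns the tail $\sum_{x \ge Q/2} e^{-\pi x^2/S^2}$ into a geometric series with first term $e^{-\pi Q^2/(4 S^2)}$ and ratio $e^{-\pi Q/(2 S^2)}$; under $Q \ge S^2$ the ratio is bounded away from $1$, so the one-dimensional tail is $O(e^{-\pi Q^2/(4 S^2)}) \subseteq O(e^{-\pi S^2/4})$. (iii) For $y \in [-1/2, 1/2]$ and $z \in \Z \setminus \{0\}$, the inequality $|y + z| \ge |z|/2 \ge 1/2$ yields $\sum_{z \ne 0} e^{-\pi S^2 (y+z)^2} \le 2 \sum_{z \ge 1} e^{-\pi S^2 z^2/4} = O(e^{-\pi S^2/4})$ uniformly in $y$.

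Next I would factor $e^{-\pi \norm{\vx}_2^2/S^2} = \prod_j e^{-\pi x_j^2/S^2}$ and assemble the $k$-dimensional bounds. For \eqref{e:gnorm}, the binomial expansion of $(S + O(e^{-\pi S^2}))^k$ bounds the error by $O(k S^{k-1} e^{-\pi S^2})$. For \eqref{e:gtail}, a union bound over the $2k$ half-spaces $\{|x_j| \ge Q/2\}$ gives $2k \cdot O(e^{-\pi S^2/4}) \cdot (2S)^{k-1}$, the second factor coming from the full one-dimensional sums in the remaining coordinates. For \eqref{e:glattice}, writing $T_j = \sum_{z \in \Z} e^{-\pi S^2 (y_j + z)^2}$ and $U_j = e^{-\pi S^2 y_j^2}$ for its $z = 0$ term, I would expand $\prod_j T_j - \prod_j U_j$ as a sum over non-empty $A \subseteq \{1, \ldots, k\}$ of $\prod_{j \in A} (T_j - U_j) \prod_{j \notin A} U_j$; since $|T_j - U_j| = O(e^{-\pi S^2/4})$ by (iii) and $|U_j| \le 1$, the total is bounded by $(1 + C e^{-\pi S^2/4})^k - 1 \le k C e^{-\pi S^2/4} \cdot (1 + C e^{-\pi S^2/4})^{k-1}$.

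The only real obstacle is the bookkeeping to confirm that the dimension-dependent prefactors do not spoil exponential decay. Under $S \ge k$, one has $(2S)^{k-1} = \exp((k-1) \log(2S)) \le \exp(S \log(2S)) = \exp(o(S^2))$ as $S \to \infty$, so each error bound takes the form $\exp(-c S^2 + O(S \log S)) = \exp(-\Omega(S^2))$ for any constant $c < \pi/4$, once $S$ exceeds a suitable absolute constant $C$. The implicit constants are uniform in $k$, $Q$, and $\vy$ within the stated ranges, delivering the required uniform convergence as $S \to \infty$.
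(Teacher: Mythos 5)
Your proposal is correct and follows essentially the same route as the paper: Poisson summation for the main term of \eqref{e:gnorm}, one-dimensional geometric-series tail bounds (your $x^2 \ge (Q/2)x$ and $|y+z| \ge |z|/2$ estimates play the role of the paper's inequality \eqref{e:tail}), a union bound over coordinates for \eqref{e:gtail}, coordinate-wise factorization for \eqref{e:glattice}, and the hypothesis $S \ge k$ to absorb the dimension-dependent prefactors into $\exp(-\Omega(S^2))$. The only differences are bookkeeping (you do one-dimensional Poisson summation and take $k$th powers, and expand $\prod_j T_j - \prod_j U_j$ multilinearly, where the paper multiplies one-dimensional ``main term plus error'' estimates), which does not change the substance of the argument.
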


Following \Lem{l:gaussest}, we assume in \Alg{a:abelian} that $k \le S$ and
that $Q \ge S^2$.  We also assume in the algorithm and the proof that $Q$
is even, although this is not essential.

\begin{proof} We start with an estimate in one dimension that holds for
all real $S,y>0$:
\begin{multline}
\sum_{z=0}^\infty \exp(-\pi S^2(z+y)^2)
    < \sum_{z=0}^\infty \exp(-\pi S^2(y^2+2yz)) \\
    = \frac{\exp(-\pi S^2 y^2)}{1-\exp(-2\pi S^2 y)}
    < \frac{2\exp(-\pi S^2 y^2)}{\min(1,S^2 y)}.
\label{e:tail} \end{multline}
To prove estimate \eqref{e:gnorm}, let
\[ t_1 = \sum_{\vx \in \Z^k} \exp(-\pi \norm{\vx}_2^2/S^2). \]
We evaluate \equ{e:gsum} with $\tvy = \vec0$ and $s = 1/S$ to obtain
\[ t_1 = \sum_{\vz \in \Z^k} S^k \exp(-\pi S^2 \norm{\vz}_2^2) \\
    = S^k \left(\sum_{z \in \Z} \exp(-\pi S^2 z^2)\right)^k. \]
We apply \equ{e:tail} with $y=1$ to the 1-dimensional sum
in the last expression for $t_1$, to obtain
\[ t_1 = S^k \big(1+O(\exp(-\pi S^2))\big)^k = S^k +\exp(-\Omega(S^2)). \]
The second equality is valid when $k = o(\exp(\pi S^2))$,
and certainly $k \le S$ suffices.

To prove estimate \eqref{e:gtail}, let
\[ t_2 = \back \sum_{\substack{\vx \in \Z^k \\
    \norm{\vx}_\infty \ge Q/2}} \back \exp(-\pi \norm{\vx}_2^2/S^2). \]
By the union bound, $t_2$ is bounded by the total of the sums when any
one coordinate $x_j$ satisfies $\abs{x_j} \ge Q/2$.  Since the coordinates
are all equivalent, we get
\[ t_2 < 2k\bigg(\sum_{\vx \in \Z^{k-1}} \hback
    \exp(-\pi \norm{\vx}_2^2/S^2)\bigg) \bigg(\sum_{x_k = Q/2}^\infty
    \hback \exp(-\pi x_k^2/S^2)\bigg). \]
The first sum in this product is the quantity $t_1$ in $k-1$ rather than $k$
dimensions. For the second sum in the product, we apply \equ{e:tail} after
setting $y=Q/2$ and substituting $s = 1/S$ for $S$.  Recalling also that
$Q \ge S^2$, we obtain:
\begin{align*}
t_2 &< 2k(S^{k-1} + \exp(-\Omega(S^2)) O(\exp(-\pi Q^2/4S^2)) \\
    &= \exp(-\Omega(S^2)).
\end{align*}

To prove estimate \eqref{e:glattice}, let
\[ t_3 = \sum_{\vz \in \Z^k} \back
    \exp(-\pi S^2\norm{\vy+\vz}_2^2). \]
For each $1 \le j \le k$, we apply \equ{e:tail} in the two cases $y =
1 \pm y_j$ to obtain:
\[ \sum_{z \in \Z} \exp(-\pi S^2 (y_j + z)^2) = \exp(-\pi S^2 y_j)
    + \exp(-\Omega(S^2)). \]
Then, recalling that $k \le S$:
\begin{align*}
t_3 &= \prod_{j=1}^k \sum_{z \in \Z} \exp(-\pi S^2 (y_j + z)^2) \\
    &= \prod_{j=1}^k \big(\exp(-\pi S^2 y_j) + \exp(-\Omega(S^2))\big) \\
    &= \exp(-\pi S^2\norm{\vy}_2^2) + \exp(-\Omega(S^2)),
\end{align*}
which is equivalent to the desired sum since the formula for $t_3$ includes
the term with $\vz = \vec0$.
\end{proof}

We will approximate \Alg{a:abelian} using two operators $P$ and $\hP$
that are unbounded as operators on Hilbert spaces.  To deal with this,
we will work with alternate norms in which $P$ and $\hP$ are bounded,
and with restricted states that are bounded in these alternate norms.
To later compare with standard Hilbert 2-norms, we will need these standard
inequalities, where in the first case $A$ can be any discrete set:
\begin{align*}
\norm{\psi}_2 &\le \norm{\psi}_1 &&\text{if $\psi:A \to \C$} \\
\norm{\psi}_2 &\le \norm{\psi}_\infty &&\text{if $\psi:(\R/\Z)^k \to \C$} \\
\norm{\psi}_2 &\le Q^{k/2}\norm{\psi}_\infty &&\text{if\ }
    \psi:(\Z/Q)^k \to \C.
\end{align*}
In proving \Thm{th:abelian}, we will take $A$ to be $(\Z/Q)^k$, $\Z^k$,
and $\Z^k \times X$, where $X$ is the (discrete) target of the hiding
function $f$ as in diagram \eqref{e:hsp}.  We will use these Banach spaces
that correspond to these norms:
\begin{gather*}
\ell^1(\Z^k) \subseteq \ell^2(\Z^k) \qquad
\ell^1(\Z^k \times X) \subseteq \ell^2(\Z^k \times X) \\
C((\R/\Z)^k) \subseteq L^2((\R/\Z)^k) \\
\ell^1((\Z/Q)^k) = \ell^2((\Z/Q)^k) = \ell^\infty((\Z/Q)^k) = \C[(\Z/Q)^k]
\end{gather*}
Here $C((\R/\Z)^k)$ denotes the continuous functions on $(\R/\Z)^k$,
which is a Banach space using the $\infty$-norm.

We will also need the mixed norm
\begin{align*}
\norm{\psi}_{\infty, 1} \defeq \sum_{a \in X} \norm{\psi(\cdot,a)}_\infty
\end{align*}
when $\psi$ is either a function on $(\Z/Q)^k \times X$ or
a continuous function on $(\R/\Z)^k \times X$.  These
inequalities are routine, albeit not entirely standard:
\begin{align}
\norm{\psi}_2 &\le Q^{k/2} \norm{\psi}_{\infty,1}
    &&\text{if $\psi:(\Z/Q)^k \times X \to \C$} \label{e:qmixed} \\
\norm{\psi}_2 &\le \norm{\psi}_{\infty,1}
    &&\text{if $\psi:(\R/\Z)^k \times X \to \C$}
\end{align}

Finally, we will use operator norms between Banach spaces.   Recall that
if $O:\cX \to \cY$ is a bounded linear operator between two Banach spaces
$\cX$ and $\cY$, then it has a norm
\[ \norm{O}_{\cX \to \cY} \defeq
    \sup_{\norm{\psi}_\cX = 1} \norm{O\psi}_\cY. \]
If $\cX$ and $\cY$ have $p$-norms, then we abbreviate the notation for the
corresponding operator norm using these values of $p$, as in the statement
of \Lem{l:pffp}.

We now define the operators $P$ and $\hP$ as maps
\begin{align*}
P&:\ell^1(\Z^k) \to \ell^1((\Z/Q)^k) \\
\hP&:C((\R/\Z)^k) \to \ell^\infty((\Z/Q)^k)
\end{align*}
given by the formulas:
\begin{align*}
(P\psi)(\vx) &\defeq \sum_{\vz \in \Z^k} \psi(\tvx + Q\vz) \\
(\hP\psi)(\vy) &\defeq \psi(q\vy),
\end{align*}
where $\tvx$ is any lift of $\vx$ from $(\Z/Q)^k$ to $\Z^k$.  As \Lem{l:pffp}
below implies, the Fourier operators $F$ acting on $\Z^k$ and $(\Z/Q)^k$
also have the following alternate domains and targets:
\begin{align*}
F_{\Z^k}&:\ell^1(\Z^k) \to C((\R/\Z)^k) \\
F_{(\Z/Q)^k}&:\ell^1((\Z/Q)^k) \to \ell^\infty((\Z/Q)^k)
\end{align*}

The following lemma is standard in functional analysis, and we omit
the proof.

\begin{lemma} The operators $F_{\Z^k}$, $F_{(\Z/Q)^k}$,
$P$, and $\hP$ have the following norms:
\begin{align*}
\norm{F_{\Z^k}}_{1 \to \infty} &= 1 &
    \norm{F_{(\Z/Q)^k}}_{1 \to \infty} &= Q^{-k/2} \\
\norm{P}_{1 \to 1} &= 1 & \norm{\hP}_{\infty \to \infty} &= 1.
\end{align*}
The four operators together satisfy the relation
\[ \hP F_{\Z^k} = Q^{k/2} F_{(\Z/Q)^k}P. \]
\eatline \label{l:pffp} \end{lemma}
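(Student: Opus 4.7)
The plan is to verify the four operator-norm identities by direct calculation using the explicit Fourier sum formulas from \Sec{ss:tempered}, and then to verify the intertwining relation $\hP F_{\Z^k} = Q^{k/2} F_{(\Z/Q)^k} P$ by a single change of summation variables. All four operators are given by completely explicit formulas, so each step reduces to a routine estimate or substitution. I expect no serious obstacle; the statement is standard, and the only care needed is book-keeping about which space each quantity lives in.

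For the two Fourier norms, the triangle inequality applied to the defining sum immediately yields $\abs{(F_{\Z^k}\psi)(\vy)} \le \norm{\psi}_1$ and $\abs{(F_{(\Z/Q)^k}\psi)(\vy)} \le Q^{-k/2}\norm{\psi}_1$, because every character has modulus one. Equality is saturated by $\psi = \delta_{\vec0}$, whose Fourier transforms are the constant functions $1$ and $Q^{-k/2}$ respectively, so the norms match. For $P$, the triangle inequality likewise gives
\[ \norm{P\psi}_1 \le \sum_{\vx \in (\Z/Q)^k} \sum_{\vz \in \Z^k}
    \abs{\psi(\tvx + Q\vz)} = \norm{\psi}_1, \]
with equality whenever $\psi$ is nonnegative (e.g. $\psi = \delta_{\vec0}$ itself). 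For $\hP$, the bound $\norm{\hP\psi}_\infty \le \norm{\psi}_\infty$ is immediate from the pointwise formula $(\hP\psi)(\vy) = \psi(q\vy)$, and equality is attained at the constant function $1$.

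For the intertwining identity, I would evaluate both sides at an arbitrary $\vy \in (\Z/Q)^k$ and $\psi \in \ell^1(\Z^k)$. Expanding the definitions of $\hP$ and $F_{\Z^k}$, the left-hand side equals $\sum_{\vw \in \Z^k} \exp(-2\pi i \vw \cdot \vy/Q)\,\psi(\vw)$. Expanding $F_{(\Z/Q)^k}$ and $P$ on the right-hand side gives a double sum over $\vx \in (\Z/Q)^k$ and $\vz \in \Z^k$; the bijection $(\vx,\vz) \mapsto \vw = \tvx + Q\vz$ onto $\Z^k$ collapses this into a single sum, and the character splits as
\[ \exp\bigl(-2\pi i (\tvx + Q\vz)\cdot \vy/Q\bigr)
    = \exp(-2\pi i \tvx \cdot \vy/Q) \]
because $\vz \cdot \vy \in \Z$ for any integer lift of $\vy$. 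The two sides then agree after cancelling the $Q^{k/2}$ normalization. Absolute convergence is automatic from $\psi \in \ell^1(\Z^k)$, so Fubini justifies the rearrangements. The identity is essentially the statement that the Fourier transform on the quotient $(\Z/Q)^k$ agrees, up to the built-in $Q^{-k/2}$ normalization, with the restriction of the Fourier transform on $\Z^k$ to the dual sublattice $\tfrac1Q\Z^k/\Z^k \subseteq (\R/\Z)^k$.
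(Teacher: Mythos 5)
Your proof is correct, and it is exactly the routine verification the paper has in mind: the lemma is stated there with the remark that it is standard in functional analysis and its proof is omitted. The triangle-inequality bounds with extremizers $\delta_{\vec 0}$ and the constant function, together with the change of variables $\vw = \tvx + Q\vz$ and the observation that $\exp(-2\pi i \vz\cdot\vy) = 1$ for integer $\vz$ and integer lifts of $\vy$, fill in that omitted argument completely.
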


We approximate steps 1--3 in \Alg{a:abelian} with a discrete Gaussian
input defined on all of $\Z^k$.  Explicitly, recall that step 1 creates
the normalized state
\[ \ket{\psi_{GC}} \propto \back\hback
    \sum_{\substack{\vx \in \Z^k \\ \norm{\vx}_\infty < Q/2}}
    \back\hback \exp(-\pi s^2 \norm{\vx}_2^2) \ket{\vx}, \]
and let the approximation be the normalized state
\[ \ket{\psi_G} \propto \sum_{\vx \in \Z^k}
    \exp(-\pi s^2 \norm{\vx}_2^2) \ket{\vx}. \]
Then step 2 of \Alg{a:abelian} begins with the dilation $U_f$, which
satisfies $\norm{U_f}_{1 \to 1} = 1$ as a map
\[ U_f:\ell^1(\Z^k) \to \ell^1(\Z^k \times X). \]
The rest of step 2 discards the output register, but this action is optional,
since what is meant is that the algorithm never uses the output register.
For the time being, we keep the output register to help prove estimates.

\begin{lemma} Let $P$, $\hP$, and any Fourier operator $F$ also denote
the operators $P \tensor I_X$, $\hP \tensor I_X$, and $F \tensor I_X$,
where $I_X$ is the identity acting on $\ell^2(X)$.  If $S \ge k$ and $Q
\ge S^2$, then
\[ \norm{Q^{-k/2} \hP F_{\Z^k}U_f\psi_G - F_{(\Z/Q)^k}
    PU_f\psi_{GC}}_2 = \exp(-\Omega(S^2)). \]
\eatline \label{l:pffpgc} \end{lemma}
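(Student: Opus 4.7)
The plan is to reduce the estimate to an $\ell^1$-distance between the full discrete Gaussian $\psi_G$ and its cube-truncation $\psi_{GC}$ on $\Z^k$, and then invoke \Lem{l:gaussest}. The reduction will start by applying the identity $\hP F_{\Z^k} = Q^{k/2} F_{(\Z/Q)^k}P$ of \Lem{l:pffp} (tensored with $I_X$ on $\ell^2(X)$) to the state $U_f\ket{\psi_{GC}}$, giving $Q^{-k/2}\hP F_{\Z^k}U_f\ket{\psi_{GC}} = F_{(\Z/Q)^k}PU_f\ket{\psi_{GC}}$. Subtracting, the quantity whose norm we want to bound equals $Q^{-k/2}\hP F_{\Z^k}U_f(\ket{\psi_G} - \ket{\psi_{GC}})$: the two paths agree exactly on the truncated state, so the entire discrepancy comes from the tail of $\psi_G$ outside the cube.

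Next, I bound the $\ell^2$ norm of this residual by $\norm{\psi_G - \psi_{GC}}_1$, using the mixed norm $\norm{\cdot}_{\infty,1}$ and the embedding inequality \eqref{e:qmixed}. Each operator in the chain has norm $1$ between appropriate Banach spaces: $U_f$ is an $\ell^1$-isometry from $\Z^k$ to $\Z^k\times X$; $F_{\Z^k}\tensor I_X$ sends $\ell^1(\Z^k\times X)$ into the $\norm{\cdot}_{\infty,1}$-space by the triangle inequality applied to its oscillatory kernel; and $\hP\tensor I_X$ preserves $\norm{\cdot}_{\infty,1}$ because it is a pointwise evaluation. Chaining these bounds with \eqref{e:qmixed} will yield
\[ \norm{Q^{-k/2}\hP F_{\Z^k}U_f(\psi_G - \psi_{GC})}_2 \le \norm{\psi_G - \psi_{GC}}_1, \]
the key point being that the $Q^{-k/2}$ prefactor exactly cancels the $Q^{k/2}$ loss from \eqref{e:qmixed}.

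It then remains to estimate $\norm{\psi_G - \psi_{GC}}_1$. Let $\phi_G(\vx) = \exp(-\pi\norm{\vx}_2^2/S^2)$ and let $\phi_{GC}$ be its restriction to $\norm{\vx}_\infty < Q/2$, so that $\psi_G = \phi_G/\norm{\phi_G}_2$ and analogously for $\psi_{GC}$. I will split
\[ \psi_G - \psi_{GC} = \frac{\phi_G - \phi_{GC}}{\norm{\phi_G}_2}
    + \phi_{GC}\left(\frac{1}{\norm{\phi_G}_2} - \frac{1}{\norm{\phi_{GC}}_2}\right), \]
and then apply \eqref{e:gnorm} and \eqref{e:gtail} both to $\phi_G$ directly and (via $S \mapsto S/\sqrt{2}$) to $\phi_G^2$, yielding $\norm{\phi_G - \phi_{GC}}_1 = \exp(-\Omega(S^2))$, $\norm{\phi_G}_2, \norm{\phi_{GC}}_2 = \Theta(S^{k/2})$, and $\norm{\phi_G}_2^2 - \norm{\phi_{GC}}_2^2 = \exp(-\Omega(S^2))$. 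Both terms of the splitting will then have $\ell^1$ norm at most $\exp(-\Omega(S^2))/S^{k/2}$.

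The main subtlety, and the place where the hypothesis $S \ge k$ is consumed, is this last step: $\psi_G$ and $\psi_{GC}$ are $\ell^2$-normalized but the operator chain naturally gives $\ell^1$ control, and the ratio $\norm{\psi_G}_1/\norm{\psi_G}_2 \sim S^{k/2}$ is potentially large. The bound survives because $\exp(-\Omega(S^2))$ dominates $S^{k/2}$ whenever $k \le S$. A harmless constant-factor tightening (to $S \ge \sqrt{2}k$) may formally be required after the $S \mapsto S/\sqrt{2}$ rescaling to invoke \Lem{l:gaussest}, which is well within the exponentially large range chosen for $S$ in \Alg{a:abelian}.
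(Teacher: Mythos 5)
Your proposal is correct and follows essentially the same route as the paper's proof: both reduce the estimate to $\norm{\psi_G - \psi_{GC}}_1 = \exp(-\Omega(S^2))$ via the unit operator norms and the intertwining identity of \Lem{l:pffp} together with \eqref{e:qmixed} (you apply the identity to the $\psi_{GC}$ term to collapse everything into one operator chain, the paper applies it to the $\psi_G$ term after pushing the difference through $F_{(\Z/Q)^k}P$ — a trivial rearrangement), and then invoke \Lem{l:gaussest}. Your explicit splitting that separates the truncation tail from the normalization-constant discrepancy, and your remark about the $S \mapsto S/\sqrt{2}$ rescaling, just make precise steps the paper treats tersely.
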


Conceptually, \Lem{l:pffpgc} says that we get a good approximation to
steps 1--3 of \Alg{a:abelian} if we first apply the infinite Fourier operator
$F_{\Z^k}$ to the state $U_f\ket{\psi_G}$, and then specialize from a state
on $(\R/\Z)^k$ to a state on $q(\Z/Q)^k$.  The actual algorithm works in
the other order:  It first interprets the state $\ket{\psi_{GC}}$ as a
state on $(\Z/Q)^k$, and then applies the Fourier operator $F_{(\Z/Q)^k}$.

\begin{proof} We first bound the distance between
$\ket{\psi_G}$ and $\ket{\psi_{GC}}$. \Equ{e:gnorm} implies
\[ \psi_G(\vx) = 2^{k/4} s^{k/2} (1+\exp(-\Omega(S^2)) \sum_{\vx \in \Z^k}
    \exp(-\pi s^2 \norm{\vx}_2^2), \]
\ie, that $2^{k/4} s^{k/2}$ is nearly perfect quantum normalization for
the discrete Gaussian state $\ket{\psi_G}$.  Given that this factor is
$\exp(O(S^2))$, \equ{e:gtail} then says that
\[ \norm{\psi_G - \psi_{GC}}_1 = \exp(-\Omega(S^2)). \]

Since $\norm{U_f}_{1 \to 1} = 1$, we then obtain
\[ \norm{U_f\psi_G - U_f\psi_{GC}}_1 = \exp(-\Omega(S^2)), \]
where here we interpret $U_f\psi_{GC}$ as a state on
$\Z^k \times X$ whose support in the input register lies in the discrete
cube $\norm{\vx}_\infty < Q/2$.

Step 3 reinterprets the input register as $(\Z/Q)^k$.  The discrete
cube injects into this discrete torus, but $\Z^k$ as a whole does not.
Nonetheless, we can approximate the reinterpretation
by applying the operator $P$ to obtain
\[ \norm{PU_f\psi_G - PU_f\psi_{GC}}_1 = \exp(-\Omega(S^2)). \]
Step 3 then applies the QFT for $(\Z/Q)^k$ to the input register.
Since $\norm{F_{(\Z/Q)^k}} = Q^{-k/2}$ by \Lem{l:pffp}, we obtain:
\begin{multline*}
\norm{F_{(\Z/Q)^k}PU_f\psi_G - F_{(\Z/Q)^k}PU_f\psi_{GC}}_{\infty, 1} \\
    = Q^{-k/2}\exp(-\Omega(S^2)).
\end{multline*}
\Lem{l:pffp} and \equ{e:qmixed} then imply
\begin{multline*}
\norm{Q^{-k/2} \hP F_{\Z^k}U_f\psi_G - F_{(\Z/Q)^k}
    PU_f\psi_{GC}}_2 \\
\le \norm{\hP F_{\Z^k}U_f\psi_G - Q^{k/2} F_{(\Z/Q)^k}
    PU_f\psi_{GC}}_{\infty, 1} \\ = \exp(-\Omega(S^2)),
\end{multline*}
as desired.
\end{proof}

By \Lem{l:pffpgc}, the state $Q^{-k/2} \hP F_{\Z^k}U_f\ket{\psi_G}$, and more
importantly its measurement in the computational basis, approximates the
result of steps 1--3 of \Alg{a:abelian}.  We can divide the construction
of this state into two stages.  Compared to an ideal process that
produces a uniformly random point $\vy \in H^\#$, roughly speaking the
state $F_{\Z^k}U_f\ket{\psi_G}$ produces a Gaussian noise term, while
the operator $Q^{-k/2} \hP$ then produces a rasterization noise term.
Our goal is to bound both terms.

As a first step, we measure the output register for the state
$U_f\ket{\psi_G}$ to obtain a Gaussian-damped coset state
\begin{eq}{e:gcos} \ket{\psi_{H+\vv}} \propto \sum_{\vx \in H+\vv}
    \exp(-\pi s^2 \norm{\vx}_2^2) \ket{\vx}. \end{eq}
Like other hidden subgroup algorithms, \Alg{a:abelian} never uses the output
register; we measure it only to simplify the analysis.  In our case,
we will obtain uniform statistical bounds for steps 3--6 of \Alg{a:abelian}
for every possible coset state $\ket{\psi_{H+\vv}}$.  This uniform outcome
is an advantage of the idealized Gaussian state $\ket{\psi_G}$ over the
truncated state $\ket{\psi_{GC}}$.

Using \Prop{p:tempered}, we extend the state $\ket{\psi_{H+\vv}}$ to a
tempered distribution $\xi_{H+\vv}$ on $\R^k$ defined as
\[ \xi_{H+\vv}(\vx) \propto \exp(-\pi s^2 \norm{\vx}_2^2)
    \delta_{H+\vv}(\vx). \]
To analyze the state $\ket{\hpsi_{H+\vv}}$ on $(\R/\Z)^k$, we analyze its
TD lift $\hxi_{H+\vv}$ on $\R^k$.  We first use the orthogonal direct sum
\[ \R^k = H_\R \oplus H_\R^\perp \]
to express $\hxi_{H+\vv}(\vy)$ in separate variables
\[ \vy = \vy_\spar + \vy_\perp \qquad
    \vy \in \R^k,\ \vy_\spar \in H_\R,\ \vy_\perp \in H_\R^\perp. \]

\begin{lemma} If $\xi_{H+\vv}$ is defined as above, then
\begin{multline*}
\hxi_{H+\vv}(\vy) \propto \tau(\vy) \defeq
    \exp(-2 \pi i \vv \cdot \vy_\perp) \\
    \sum_{\vz \in H^\circ} \exp(-\pi S^2 \norm{\vy_\spar-\vz}_2^2
    -2\pi i \vv\cdot \vz).
\end{multline*}
\eatline \label{l:xihat} \end{lemma}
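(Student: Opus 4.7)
The plan is to exploit the product structure $\xi_{H+\vv} \propto \gamma \cdot \delta_{H+\vv}$, where $\gamma(\vx) = \exp(-\pi s^2\norm{\vx}_2^2)$, and compute the Fourier transform by convolution. By \Prop{p:prodconv},
\[ \hxi_{H+\vv} \propto \hgamma * \widehat{\delta_{H+\vv}}. \]
Three ingredients go into evaluating this convolution. First, $\hgamma(\vy) = s^{-k}\exp(-\pi S^2\norm{\vy}_2^2)$ is the standard Gaussian Fourier identity on $\R^k$. Second, the translation identity gives $\widehat{\delta_{H+\vv}}(\vw) = e^{-2\pi i\vv\cdot\vw}\,\hdelta_H(\vw)$. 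Third, \Prop{p:haarfourier} combined with \Lem{l:feature} gives
\[ \hdelta_H = \frac{1}{\sqrt{\Delta}}\,\delta_{H^\bullet}, \qquad
   H^\bullet = H^\circ \oplus H_\R^\perp, \]
so that $\delta_{H^\bullet} = \sum_{\vz\in H^\circ} \delta_{H_\R^\perp + \vz}$ as a sum of Haar measures on the affine $(k-\ell)$-dimensional fibers.

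With these in hand, the convolution unfolds as
\[ \hxi_{H+\vv}(\vy) \propto \frac{1}{\sqrt{\Delta}}\sum_{\vz\in H^\circ}
   \int_{H_\R^\perp} \hgamma(\vy - \vz - \vu)\,e^{-2\pi i\vv\cdot(\vz+\vu)}\,d\vu. \]
The orthogonal decomposition $\vy = \vy_\spar + \vy_\perp$ is the critical step: since $\vz \in H_\R$ and $\vu \in H_\R^\perp$, Pythagoras gives
\[ \norm{\vy - \vz - \vu}_2^2 = \norm{\vy_\spar - \vz}_2^2
    + \norm{\vy_\perp - \vu}_2^2, \]
so the Gaussian $\hgamma$ factors into a parallel piece depending on $\vz$ and a perpendicular piece depending on $\vu$. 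Similarly, orthogonality forces $\vv\cdot\vz = \vv_\spar\cdot\vz = \vv\cdot\vz$ (the $\vv_\perp$ component is killed since $\vz\in H_\R$) and $\vv\cdot\vu = \vv_\perp\cdot\vu$, so the phase factors split cleanly.

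After this factorization, the sum over $\vz\in H^\circ$ produces exactly the lattice sum in $\tau(\vy)$, while the perpendicular integral over $H_\R^\perp$ is a $(k-\ell)$-dimensional Gaussian Fourier transform:
\[ \int_{H_\R^\perp} e^{-\pi S^2\norm{\vy_\perp - \vu}_2^2}\,
    e^{-2\pi i\vv_\perp\cdot\vu}\,d\vu
   = e^{-2\pi i\vv_\perp\cdot\vy_\perp}\cdot S^{-(k-\ell)}
    e^{-\pi\norm{\vv_\perp}_2^2/S^2}, \]
obtained by translating $\vu \mapsto \vy_\perp + \vr$. The second factor is a constant independent of $\vy$, and the phase equals $e^{-2\pi i\vv\cdot\vy_\perp}$ because $\vy_\perp\perp H_\R$. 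Collecting the $\vy$-dependent pieces yields $\tau(\vy)$, which completes the proof.

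The bookkeeping is the main place to be careful: one must verify the three orthogonality identities ($\vv\cdot\vy_\perp = \vv_\perp\cdot\vy_\perp$, $\vv\cdot\vz = \vv_\spar\cdot\vz$, $\vv\cdot\vu = \vv_\perp\cdot\vu$) so that the final expression uses $\vv\cdot\vy_\perp$ and $\vv\cdot\vz$ rather than the projected components, and one must absorb the $\vy$-independent prefactors ($s^{-k}$, $S^{-(k-\ell)}$, $\sqrt{\Delta}^{-1}$, and $e^{-\pi\norm{\vv_\perp}_2^2/S^2}$) into the proportionality constant without error. Beyond this, the argument is a direct application of Fourier duality for tempered distributions.
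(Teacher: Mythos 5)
Your proof is correct and follows essentially the same route as the paper: invoke \Prop{p:prodconv} to pass to the convolution $\hgamma * \widehat{\delta_{H+\vv}}$, use the translation identity and \Prop{p:haarfourier} to rewrite $\widehat{\delta_{H+\vv}}$, and then split into $H_\R$ and $H_\R^\perp$ components. The only difference is mechanical: you evaluate the convolution as an explicit Fubini integral over $H^\circ \oplus H_\R^\perp$, whereas the paper first factors both convolvends into separable pieces and swaps convolution with multiplication; these are the same computation.
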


\Lem{l:xihat} says that the Fourier transform of the Gaussian-damped coset
function $\xi_{H+\vv}$ generalizes the sum of Gaussians in \eqref{e:gsum}.
In the parallel direction $H_\R$, $\xi_{H+\vv}$ is a sum of Gaussians
translated by the reciprocal lattice $H^\circ$, while in the perpendicular
direction $H_\R^\perp$, $\xi_{H+\vv}$ is essentially trivial. In both
directions, there is also a plane wave factor due to the fact that $\xi_H$
is translated by $\vv$.

\begin{proof}
By \Prop{p:prodconv},
\[ \hxi_{H+\vv}(\vy) \propto \exp(-\pi S^2 \norm{\vy}_2^2)
    * \hdelta_{H+\vv}(\vy). \]
Note also that:
\begin{align*}
\delta_{H+\vv} &= \delta_H * \delta_{\vv} \\
\hdelta_{\vv}(\vy) &= \exp(-2\pi i \vv \cdot \vy) \\
\hdelta_H(\vy) &= \delta_{H^\bullet}(\vy) = \delta_{H^\circ}(\vy_\spar).
\end{align*}
By \Prop{p:haarfourier} and by separation of variables,
\begin{align*}
\hxi_{H+\vv}(\vy) &\propto \big(\exp(-\pi S^2 \norm{\vy_\spar}_2^2)
    \exp(-\pi S^2 \norm{\vy_\perp}_2^2) \big) \\
    &\quad * \big(\hdelta_{H^\circ}(\vy_\spar)
    \exp(-2\pi i \vv \cdot \vy_\spar)
    \exp(-2\pi i \vv \cdot \vy_\perp) \big).
\end{align*}
Since the factors are in separate variables $\vy_\spar$ and $\vy_\perp$,
we can swap convolution with multiplication to obtain
\begin{align}
\hxi_{H+\vv}(\vy) &\propto \big(\exp(-\pi S^2 \norm{\vy_\spar}_2^2) *
    (\hdelta_{H^\circ}(\vy_\spar)\exp(-2 \pi i \vv \cdot \vy_\spar))
    \big) \nonumber \\
    &\quad \big( \exp(-\pi S^2 \norm{\vy_\perp}_2^2)
    * \exp(-2 \pi i \vv \cdot \vy_\perp) \big), \label{e:sepvar}
\end{align}
where by abuse of notation, we write
\[ (\psi_1 * \psi_2)(\vy) = \psi_1(\vy) * \psi_2(\vy). \]
The second factor in \eqref{e:sepvar} is proportional to $\exp(-2 \pi i
\vv \cdot \vy_\perp)$, while $\hdelta_{H^\circ}$ is a sum of delta functions
on the lattice $H^\circ$.  Thus, we obtain
\begin{align*}
\hxi_{H+\vv}(\vy) &\propto
    \exp(-2 \pi i \vv \cdot \vy_\perp) \\
    &\quad \sum_{\vz \in H^\circ} \exp(-\pi S^2 \norm{\vy_\spar-\vz}_2^2
    -2\pi i \vv\cdot \vz),
\end{align*}
as desired. \end{proof}

The function $\tau(\vy)$ in \Lem{l:xihat} descends to a function
\[ \tau:(\R/\Z)^k \to \C \]
since $\xi_{H+\vv}$ is supported on $\Z^k$, and since $\hxi_{H+\vv}$ is
the lift of $\hpsi_{H+\vv}$ from $(\R/\Z)^k$ to $\R^k$.  More explicitly,
if $\vy \in (\R/\Z)^k$ and $\tvy \in \R^k$ is any lift, then we define
\[ \tau(\vy) \defeq \tau(\tvy). \]

\begin{lemma} If $\vy \in (\R/\Z)^k$ and $\vy_0 \in H^\#$
is a closest point to $\vy$, then uniformly when $S \ge k\Delta^2$,
\begin{gather}
\abs{\tau(\vy)} = \exp(-\pi S^2 \norm{\vy-\vy_0}_2^2)
    + \exp(-\Omega(S)) \label{e:tauabs} \\
\int_{(\R/\Z)^k} \abs{\tau(\vy)}^2\;d\vy = 2^{(\ell-k)/2}S^{\ell-k}
    \sqrt{\Delta} + \exp(-\Omega(S)). \label{e:taunorm}
\end{gather}
\eatline \label{l:tau} \end{lemma}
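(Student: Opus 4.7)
The plan is to study both estimates via the lattice Gaussian sum
\[ T(\vy_\spar) = \sum_{\vz \in H^\circ} \exp\bigl(-\pi S^2 \|\vy_\spar - \vz\|_2^2 - 2\pi i \vv \cdot \vz\bigr), \]
since $|\exp(-2\pi i \vv \cdot \vy_\perp)| = 1$ gives $|\tau(\vy)| = |T(\vy_\spar)|$. For \eqref{e:tauabs}, I first identify $\|\vy - \vy_0\|_2 = d := \|\vy_\spar - \vz_0\|_2$, where $\vz_0 \in H^\circ$ is the closest lattice point to $\vy_\spar$: since $\Z^k \subseteq H^\bullet = H^\circ \oplus H_\R^\perp$ and $H^\# = H^\bullet/\Z^k$, the torus distance from $\vy$ to $H^\#$ lifts to the $\R^k$-distance from a chosen lift $\tvy$ to $H^\bullet$, and the orthogonal splitting allows the $H_\R^\perp$-component to be matched for free, leaving a pure lattice-distance problem in $H_\R$. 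The triangle inequality then gives
\[ \bigl||T(\vy_\spar)| - \exp(-\pi S^2 d^2)\bigr| \le R := \sum_{\vz \in H^\circ \setminus \{\vz_0\}} \exp(-\pi S^2 \|\vy_\spar - \vz\|_2^2), \]
so it remains to show $R = \exp(-\Omega(S))$ uniformly.

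The tail is controlled by the Voronoi-type inequality $\|\vy_\spar - \vz\|_2 \ge \tfrac12\|\vz - \vz_0\|_2$, which follows from $\|\vz - \vz_0\|_2 \le \|\vz - \vy_\spar\|_2 + \|\vy_\spar - \vz_0\|_2 \le 2\|\vz - \vy_\spar\|_2$ since $\vz_0$ is nearest; this reduces $R$ to $\sum_{\vw \in H^\circ \setminus \{0\}} \exp(-\pi S^2 \|\vw\|_2^2/4)$. Combining the shortest-vector bound $r_0 \ge 1/\Delta$ from \Lem{l:feature} with the dyadic shell count $\#\{\vw \in H^\circ : \|\vw\|_2 \le R'\} = O((R'/r_0+1)^\ell)$ yields $R = O((Sr_0)^\ell \exp(-\pi S^2 r_0^2/4))$, and the hypothesis $S \ge k\Delta^2$ forces $S^2 r_0^2 \ge kS$, dominating the polynomial prefactor and giving $R = \exp(-\Omega(S))$. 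For \eqref{e:taunorm}, squaring this pointwise bound produces $|\tau(\vy)|^2 = \exp(-2\pi S^2 d(\vy, H^\#)^2) + O(\exp(-\Omega(S)))$; the cross and squared-error terms integrate to $\exp(-\Omega(S))$ because $\int_{(\R/\Z)^k} \exp(-\pi S^2 d^2)\,d\vy = O(1)$.

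The main remaining integral $J := \int_{(\R/\Z)^k} \exp(-2\pi S^2 d(\vy, H^\#)^2)\,d\vy$ is evaluated by a fibration-plus-Voronoi-unfolding argument. A substitution $\vz \mapsto \vz - \vu$ shows that $|\tau|^2$ is invariant under $\vy_\spar \mapsto \vy_\spar + \vu$ for $\vu \in H^\circ$, so $|\tau|^2$ is constant along cosets of the connected component $H_1^\# = H_\R^\perp/(H_\R^\perp \cap \Z^k)$ in $(\R/\Z)^k$. Fubini along the surjection $(\R/\Z)^k \onto H_\R/H^\circ$ (with fiber isomorphic to $H^\#$, of tangential volume $\sqrt{\Delta}$ by \Lem{l:feature}) reduces $J$ to $\sqrt{\Delta}$ times the integral over $H_\R/H^\circ$ of the Voronoi-distance Gaussian; unfolding the Voronoi cells extends this to $\int_{H_\R} \exp(-2\pi S^2 \|\vec{r}\|_2^2)\,d\vec{r} = (2S^2)^{-\ell/2}$, with tail error $\exp(-\Omega(S))$ from beyond radius $r_0/2$, producing the claimed leading term. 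The hardest part is making this fibration rigorous, since $\vy_\spar$ is not canonically defined on $(\R/\Z)^k$ and the projection of $\Z^k$ onto $H_\R$ is typically a proper sublattice of $H^\circ$; I would resolve this by using the volume relation $\Vol((\R/\Z)^k) = \Vol(H_\R/H^\circ) \cdot \Vol(H^\#)$, a direct consequence of \Lem{l:feature}, to match up the fiber-base volume factors cleanly.
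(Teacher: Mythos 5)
Your argument is sound and follows the same overall skeleton as the paper's proof: factor out the unimodular $\vy_\perp$-phase, reduce $\abs{\tau}$ to the Gaussian sum over the reciprocal lattice $H^\circ \subseteq H_\R$ centered at the nearest lattice point, control the remaining terms, and then compute $\int\abs{\tau}^2$ as $\Vol H^\# = \sqrt{\Delta}$ (from \Lem{l:feature}) times an $\ell$-dimensional Gaussian integral. Where you diverge, it is at the level of technique rather than route. For the tail in \eqref{e:tauabs}, the paper rescales $H^\circ \subseteq \frac1\Delta\Z^k$ by $\Delta$ and reuses the integer-lattice estimate \eqref{e:glattice}, getting error $\exp(-\Omega(S^2/\Delta^2))$, and splits into cases according to whether $\norm{\vy_\spar}_\infty$ exceeds $1/2\Delta$; your nearest-point inequality $\norm{\vy_\spar-\vz}_2 \ge \tfrac12\norm{\vz-\vz_0}_2$ plus a packing count with $r_0 \ge 1/\Delta$ handles both cases at once and works under $S \ge k\Delta^2$ just as well. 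For \eqref{e:taunorm}, the paper integrates only over the tube $H^\# + (H_\R \cap [-\tfrac1{2\Delta},\tfrac1{2\Delta}]^k)$, where injectivity of the parametrization is immediate from $H^\circ \subseteq \frac1\Delta\Z^k$, and discards the complement as exponentially small; your global Fubini over the orthogonal submersion $(\R/\Z)^k \onto H_\R/H^\circ$ with Voronoi unfolding is a legitimate alternative (the map is well defined because the orthogonal projection of $\Z^k$ does land in $H^\circ$, and your fiber--base volume bookkeeping, $\Vol(H_\R/H^\circ)\Vol(H^\#)=1$, is exactly the right consistency check), though the paper's local tube sidesteps setting up the coarea formula entirely.

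One point you should not gloss over: your computation yields the leading term $\sqrt{\Delta}\,(2S^2)^{-\ell/2} = 2^{-\ell/2}S^{-\ell}\sqrt{\Delta}$, which is not the printed $2^{(\ell-k)/2}S^{\ell-k}\sqrt{\Delta}$ unless $k = 2\ell$, so you cannot simply assert that it ``produces the claimed leading term.'' In fact your value is the correct one: the Gaussian integral over the $\ell$-dimensional space $H_\R$ is $(2S^2)^{-\ell/2}$, and the paper's own use of the lemma immediately afterwards --- the normalization $2^{\ell/4}S^{\ell/2}\Delta^{-1/4}$ of $\ket{\hpsi_{H+\vv}}$ --- is precisely the reciprocal square root of $2^{-\ell/2}S^{-\ell}\sqrt{\Delta}$, which indicates that the exponent $\ell-k$ in the statement (and in the displayed ``exact counterpart'' inside the paper's own proof) is an index slip for $-\ell$. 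Your write-up should derive $2^{-\ell/2}S^{-\ell}\sqrt{\Delta}$ and flag the discrepancy explicitly rather than silently claim a match.
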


\begin{proof} To prove \equ{e:tauabs}, we work upstairs with $\vy \in
\R^k$ and $\vy_0 \in H^\bullet$.  We can factor $\tau(\vy)$ by separation
of variables:
\begin{align}
\tau_\spar(\vy_\spar) &\defeq
    \sum_{\vz \in H^\circ} \exp(-\pi S^2 \norm{\vy_\spar-\vz}_2^2
    -2\pi i \vv\cdot \vz) \label{e:tauspar} \\
\tau_\perp(\vy_\perp) &\defeq \exp(-2 \pi i \vv \cdot \vy_\perp) \nonumber \\
\tau(\vy) &= \tau_\spar(\vy_\spar)\tau_\perp(\vy_\perp) \nonumber \\
\abs{\tau(\vy)} &= \abs{\tau_\spar(\vy_\spar)}. \nonumber
\end{align}
Since the desired quantity equals $\abs{\tau_\spar(\vy_\spar)}$, we
analyze the function $\tau_\spar$.   By translational invariance, we can
assume that the closest point (or tied for closest) is $\vy_0 = \vec0$,
which then implies that $\vy = \vy_\perp$.  With these simplifications,
\equ{e:tauabs} asserts that
\begin{align*}
\abs{\tau_\spar(\vy_\spar)} &=
    \exp(-\pi S^2 \norm{\vy_\spar}_2^2) + \exp(-\Omega(S^2/\Delta^2)) \\
    &= \exp(-\pi S^2 \norm{\vy_\spar}_2^2) + \exp(-\Omega(S)).
\end{align*}
Here and below, we replace $\Omega(S^2/\Delta^2)$ by $\Omega(S)$
in the error, which we can do given that $S \ge \Delta^2$.

We consider two cases, depending on the value of $\norm{\vy_\spar}_\infty$.
If $\norm{\vy_\spar}_\infty > 1/2\Delta$, then first note
that the error term swallows the other term:
\[  \exp(-\pi S^2 \norm{\vy_\spar}_2^2) = \exp(-\Omega(S)). \]
We apply the triangle inequality to \equ{e:tauspar} to obtain
\begin{eq}{e:triang1} \abs{\tau_\spar(\vy_\spar)} \le \sum_{\vz \in H^\circ}
    \exp(-\pi S^2 \norm{\vy_\spar-\vz}_2^2) \end{eq}
Recall from \Lem{l:feature} that $H^\circ \subseteq \frac1{\Delta}\Z^k$.  We
can thus substitute $\vx = \Delta\vy_\perp$ and compare \eqref{e:triang1}
to \equ{e:glattice} to obtain
\begin{eq}{e:tautail} \abs{\tau_\spar(\vy_\spar)} = \exp(-\Omega(S)),
\end{eq}
as desired.  If $\norm{\vy_\spar}_\infty \le 1/2\Delta$, then
we apply the triangle inequality to \equ{e:tauspar} after
subtracting the $\vz = \vec0$ term to obtain
\[ \abs{\tau_\spar(\vy_\spar) -
    \exp(-\pi S^2 \norm{\vy_\spar}_2^2)} \le \back\hback
    \sum_{\vz \in H^\circ \setminus \{\vec0\}} \back\hback
    \exp(-\pi S^2 \norm{\vy_\spar-\vz}_2^2). \]
We can again substitute $\vx = \Delta\vy_\perp$ and compare \eqref{e:triang1}
to \equ{e:glattice} to obtain
\[ \abs{\tau_\spar(\vy_\spar) - \exp(-\pi S^2 \norm{\vy_\spar}_2^2)}
    = \exp(-\Omega(S)), \]
as desired.

It remains to prove \equ{e:taunorm}.  Given $\delta > 0$, let
\[ H_{\R,\delta} \defeq H_\R \cap [-\delta,\delta]^k \]
We first approximate
\begin{eq}{e:hgbox}
\int_{H_{\R,1/2\Delta}} \back\back\back
    \exp(-2\pi S^2 \norm{\vy}_2^2)\;d\vy \\
    = 2^{(\ell-k)/2}S^{\ell-k} + \exp(-\Omega(S)). \end{eq}
To prove \equ{e:hgbox}, we start with its exact counterpart
\[ \int_{H_\R} \back \exp(-\pi 2S^2 \norm{\vy}_2^2)\;d\vy
    = 2^{(\ell-k)/2}S^{\ell-k}. \]
We then establish the approximation \eqref{e:hgbox} using the same proof as
the proof of \equ{e:gtail} in \Lem{l:gaussest}, skipping some details.
The approximation follows by cutting off half spaces from $H_\R$ to make
$H_{\R,1/2\Delta}$.  Each half space integral is small,
\[ \int_{H_\R,\ \abs{y_j} \ge 1/2\Delta} \back\back
    \exp(-\pi 2S^2 \norm{\vy}_2^2)\;d\vy
    = \exp(-\Omega(S)), \]
and the estimate follows from the union bound.  To finish the proof of
\equ{e:taunorm}, we can interpret the set $H_{\R,1/2\Delta}$ as a subset of
$(\R/\Z)^k$.  It is perpendicular to $H^\#$; and again since $H^\circ
\subseteq \frac1{\Delta}\Z^k$, every point
\[ \vy \in H^\# + H_{\R,1/2\Delta} \subseteq (\R/\Z)^k \]
is uniquely expressible as
\[ \vy = \vy_\perp + \vy_\spar \qquad
    \vy_\perp \in H^\#,\ \vy_\spar \in H_{\R,1/2\Delta}. \]
Recall from \Lem{l:feature} that $\Vol H^\# = \sqrt{\Delta}$.  We thus
obtain:
\[ \int_{H^\# + H_{\R,1/2\Delta}} \back\back\back |\tau(\vy)|^2\;d\vy
    = 2^{(\ell-k)/2}S^{\ell-k}\sqrt{\Delta} + \exp(-\Omega(S)). \]
\Equ{e:tautail} also tells us that
\[ |\tau(\vy)|^2 = \exp(-\Omega(S)) \]
on the complement
\[ \vy \in (\R/\Z)^k \setminus (H_{\R,1/2\Delta} + H^\#). \]
Since $(\R/\Z)^k$ has unit volume, \equ{e:taunorm} follows.
\end{proof}

\Lem{l:tau} implies that the normalized state $\ket{\hpsi_{H+\vv}}$
is given by
\begin{multline*}
\hpsi_{H+\vv}(\vy) = 2^{\ell/4}S^{\ell/2}\Delta^{-1/4}
    \exp(-\pi S^2 \norm{\vy-\vy_0}_2^2) \\ + \exp(-\Omega(S)).
\end{multline*}
Using \Lem{l:tau} a second time, we obtain:
\begin{multline}
\abs{\hpsi_{H+\vv}(\vy)}^2 = 2^{\ell/2}S^{\ell}\Delta^{-1/2}
    \exp(-\pi 2 S^2 \norm{\vy-\vy_0}_2^2) \\ + \exp(-\Omega(S)).
\label{e:u2noise} \end{multline}
Setting $\vy_2 = \vy$, \equ{e:u2noise} says that
\[ \mu_1(\vy_2) \defeq \abs{\hpsi_{H+\vv}(\vy_2)}^2 \]
is the probability density of a random variable $\vy_2$ which consists of
a random $\vy_0 \in H^\#$ plus an orthogonal, approximately Gaussian error
\[ \vu_2 \defeq \vy_2-\vy_0. \]

To complete our use of \Lem{l:pffpgc}, we apply the operator $\hP$ to the
state $\ket{\hpsi_{H+\vv}}$.  The result is the same function
\[ (\hP\hpsi_{H+\vv})(\vy_1) = \hpsi_{H+\vv}(\vy_1), \]
except restricted to $\vy_1 \in q(\Z/Q)^k$.  In particular,
the measurement of $q^{k/2}\hP\ket{\hpsi_{H+\vv}}$ has the distribution
\[ \mu_2 \defeq q^{k/2}\mu_1|_{q(\Z/Q)^k}. \]
(Note that $q^{k/2}\hP\ket{\hpsi_{H+\vv}}$ and $\mu_2$ are only approximately
normalized.  We can still prove probability estimates despite this.)

In pursuit of \Lem{l:unoise}, we compare $\mu_2$ to another measure
$\mu_3$ produced by rasterizing a vector $\vy_2 \in (\R/\Z)^k$ drawn
with density $\mu_1$.   Let $\vy_1$ be the closest point to $\vy_2$
$\in q(\Z/Q)^k$ (ignoring the set of measure zero with more
than one closest point).  If we set
\[ \vu_1 \defeq \vy_1 - \vy_2, \]
then evidently
\[ \norm{\vu_1}_2 \le \sqrt{k}\norm{\vu_1}_\infty \le \frac{\sqrt{k}q}2. \]
Let $\mu_3(\vy_1)$ be the distribution of $\vy_1$ given by this process.

\begin{lemma} Under the hypotheses of \Lem{l:unoise}, if $\mu_2$ and $\mu_3$
are measures on $q(\Z/Q)^k$ defined as above, then
\[ \norm{\mu_3 - \mu_2}_1 = O(s^2). \]
\eatline \label{l:yquant} \end{lemma}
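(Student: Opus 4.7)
My plan is to compare $\mu_2$ and $\mu_3$ pointwise on $q(\Z/Q)^k$ via a second-order Taylor expansion of the continuous density $\mu_1$. The essential size calculation is that $\mu_1$ is, by \equ{e:u2noise}, a sum of Gaussian peaks of width $\sim s = 1/S$ plus a negligible $\exp(-\Omega(S))$ error, so its Hessian integrates to $O(S^2)$; combined with the rasterization length $q = 1/Q \le s^2$ (from $Q \ge S^2$), the product $q^2 \cdot S^2 = O(s^2)$ is exactly the right size.

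First I would write, for each $\vy_1 \in q(\Z/Q)^k$,
\[ \mu_3(\vy_1) - \mu_2(\vy_1) = \int_{C_{\vy_1}} \bigl(\mu_1(\vy_2) - \mu_1(\vy_1)\bigr)\, d\vy_2, \]
where $C_{\vy_1}$ is the Voronoi cube of side $q$ centered at $\vy_1$, and $\mu_2(\vy_1)$ is interpreted through its continuous density counterpart so that the volumes match. Next I would Taylor expand $\mu_1$ about $\vy_1$ to second order: the constant term cancels against $\mu_1(\vy_1)$, the linear term integrates to zero by the reflection symmetry of the cube about $\vy_1$, and the quadratic remainder gives
\[ |\mu_3(\vy_1) - \mu_2(\vy_1)| \le O(q^{k+2})\, \sup_{\vy \in C_{\vy_1}} \|\mathrm{Hess}\,\mu_1(\vy)\|. \]
Summing over $\vy_1$ and recognizing the resulting Riemann sum as an integral yields
\[ \|\mu_3 - \mu_2\|_1 \le O(q^2) \int_{(\R/\Z)^k} \|\mathrm{Hess}\,\mu_1(\vy)\|\, d\vy. \]

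I would then estimate this last integral using \equ{e:u2noise}. Near each $\vy_0 \in H^\#$, $\mu_1$ agrees with a Gaussian of variance $\sim 1/S^2$ in the $\ell$ directions parallel to $H_\R$ (and is essentially a Dirac mass in the perpendicular directions, since $\hxi_{H+\vv}$ is the lift of a function on the torus), carrying mass $1/|H^\#|$ per peak. Entries of $\mathrm{Hess}\,\mu_1$ near $\vy_0$ take the form $(-4\pi S^2 \delta_{ij} + 16\pi^2 S^4 (y_i-y_{0,i})(y_j-y_{0,j}))\,\mu_1(\vy)$, and a routine Gaussian integral gives $O(S^2)$ per peak, hence $O(S^2)$ in total since the peaks have unit combined mass. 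Plugging in $q \le s^2$ delivers $\|\mu_3 - \mu_2\|_1 = O(q^2 S^2) = O(s^2)$, as required.

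The main obstacle I anticipate is propagating the $\exp(-\Omega(S))$ error of \Lem{l:tau} through the second derivatives of $\mu_1$ and through the sum over the $Q^k$ rasterization cells. Since $Q^k \cdot q^k = 1$ and the error decays superpolynomially in $s$, it is absorbed into $O(s^2)$, but the Hessian bound itself requires differentiating the expression for $\hxi_{H+\vv}$ in \Lem{l:xihat} rather than merely its modulus \equ{e:u2noise}, so that the phase factor $\exp(-2\pi i \vv \cdot \vy_\perp)$ and the lattice sum in the parallel direction are both handled rigorously at the level of first and second derivatives. Once this bookkeeping is in place the estimate closes cleanly.
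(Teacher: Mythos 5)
Your proposal is correct and is essentially the paper's own argument: your per-cell second-order Taylor/midpoint estimate is \Lem{l:midpoint}, your bound $\norm{\vnabla\vnabla\mu_1} = O(S^2)\cdot(\text{local value})$ for the Gaussian term of \equ{e:u2noise} is \Lem{l:ghess} (applied in the $\ell$ directions parallel to $H_\R$), and summing the $O(q^{k+2}S^2)$-type cell errors over the $Q^k$ cells gives $O(q^2S^2)=O(s^2)$ via $q \le s^2$, exactly as you do. Two slips are descriptive and harmless: when $\ell<k$ the mass of $\mu_1$ lies in a tube around the positive-dimensional group $H^\#$ (so there are no isolated ``peaks of mass $1/|H^\#|$,'' and $\mu_1$ is locally constant, not Dirac, in the tangent $H_\R^\perp$ directions), but since your Hessian bound is pointwise proportional to the density, $\int\norm{\vnabla\vnabla\mu_1} = O(S^2)$ and the conclusion survive unchanged; and the $\exp(-\Omega(S))$ remainder of \equ{e:u2noise} never needs to be differentiated---split it off before Taylor expanding and bound its contribution to both $\mu_2(\vy_1)$ and $\int_{C_{\vy_1}}\mu_1$ by its sup norm times $q^k$, as the paper does.
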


We prove \Lem{l:yquant} with a general lemma about the accuracy of the
midpoint rule for integrals.  If $\phi:A \to \R$ is a function on $\R^k$ with
bounded second derivatives, then its second derivative matrix (or Hessian)
$\vnabla\vnabla\phi$ has a mixed norm that can be defined as follows:
\begin{eq}{e:hessnorm}
\norm{\vnabla\vnabla\phi}_{2\to 2,\infty}
    \defeq \sup_{\vx \in A,\norm{\vt}_2 = 1}
    \abs{\vt\cdot (\vnabla\vnabla\phi)(\vx)\cdot \vt}.
\end{eq}

\begin{lemma} If
\[ \phi:[-q/2,q/2]^k \to \R \]
is $C^2$, then
\[ \int_{[-q/2,q/2]^k} \phi(\vx)\;d\vx = q^k\phi(\vec0) + O(kq^{k+2}). \]
\label{l:midpoint} \end{lemma}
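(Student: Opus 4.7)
The plan is to use a second-order Taylor expansion of $\phi$ around the midpoint $\vec0$ of the cube, exploit the reflection symmetry of $[-q/2,q/2]^k$ to kill the linear term, and then control the quadratic remainder using the mixed Hessian norm \eqref{e:hessnorm}. The implicit constant in the $O(kq^{k+2})$ error will depend on $\norm{\vnabla\vnabla\phi}_{2\to 2,\infty}$, which is finite by the $C^2$ hypothesis on the compact domain; this is consistent with the asymptotic-notation conventions of \Sec{s:notation}.

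More concretely, first I would write, for each $\vx \in [-q/2,q/2]^k$, the Taylor expansion with integral remainder
\[ \phi(\vx) = \phi(\vec0) + \vnabla\phi(\vec0) \cdot \vx
    + \int_0^1 (1-t)\,\vx \cdot (\vnabla\vnabla\phi)(t\vx) \cdot \vx \; dt. \]
Integrating this identity over the cube $C = [-q/2,q/2]^k$ gives three contributions. The constant term contributes $q^k \phi(\vec0)$ since $\Vol(C) = q^k$. The linear term $\int_C \vnabla\phi(\vec0) \cdot \vx \, d\vx$ vanishes because $C$ is symmetric under $\vx \mapsto -\vx$ and $\vx$ is odd.

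For the remainder term, I would apply the mixed norm bound
\[ \abs{\vx \cdot (\vnabla\vnabla\phi)(t\vx) \cdot \vx}
    \le \norm{\vnabla\vnabla\phi}_{2\to 2,\infty} \norm{\vx}_2^2 \]
uniformly in $t \in [0,1]$, which reduces the error estimate to computing $\int_C \norm{\vx}_2^2 \, d\vx$. By separation of variables and the symmetry among coordinates,
\[ \int_C \norm{\vx}_2^2 \, d\vx = k \cdot q^{k-1} \int_{-q/2}^{q/2} x_1^2 \, dx_1
    = \frac{k q^{k+2}}{12}. \]
Multiplying by the factor $\int_0^1 (1-t)\,dt = 1/2$ from the remainder and by the Hessian norm constant yields the claimed $O(kq^{k+2})$ bound.

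There is no serious obstacle here; the only minor subtlety is making sure the $k$-dependence in the error is linear rather than quadratic (which it is, because only the trace-like sum $\sum_j x_j^2$ appears after integration, and cross terms $x_ix_j$ with $i\ne j$ integrate to zero by the same symmetry argument used for the linear term). This linear scaling in $k$ is exactly what makes the estimate useful in the subsequent quantitative applications to $\mu_2$ versus $\mu_3$ in \Lem{l:yquant}.
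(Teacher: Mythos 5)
Your proposal is correct and takes essentially the same route as the paper's proof: reduce to the case where the constant and linear contributions are removed (the linear term vanishing by the symmetry of the cube), bound the second-order remainder pointwise by $\norm{\vnabla\vnabla\phi}_{2\to 2,\infty}\norm{\vx}_2^2$, and finish with $\int_{[-q/2,q/2]^k}\norm{\vx}_2^2\;d\vx = kq^{k+2}/12$. Your version merely spells out the Taylor remainder and symmetry steps more explicitly than the paper does, with the same linear-in-$k$ error bound.
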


\begin{proof} We can replace $\phi(\vx)$ by
\[ \phi(\vx) - \phi(\vec0) - \vx \cdot (\vnabla \phi)(\vx) \]
to assume that $\phi(\vec0)$ and $\vnabla\phi(\vec0)$ both vanish.
Integrating in a straight line from $\vec0$ to $\vx$, \equ{e:hessnorm}
then tells us that
\[ \abs{\phi(\vx)} \le \frac12\norm{\vnabla\vnabla\phi}_{2\to 2,\infty}
    \norm{\vx}^2_2. \]
The conclusion then follows from the integral
\[ \int_{[-q/2,q/2]^k} \norm{\vx}^2_2\;d\vx
    = \frac{kq^{k+2}}{12}. \qedhere \]
\end{proof}

\begin{lemma} The Hessian of an isotropic Gaussian on $\R^k$ has norm
\[ \big\lVert\vnabla\vnabla\exp(-\pi S^2
    \norm{\vx}_2^2)\big\rVert_{2 \to 2} = 2\pi S^2. \]
\eatline \label{l:ghess} \end{lemma}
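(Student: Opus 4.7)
The plan is to compute the Hessian of $f(\vx) = \exp(-\pi S^2\norm{\vx}_2^2)$ explicitly, reduce the operator-norm question to a scalar optimization, and show that the supremum is achieved at the origin.

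First I would differentiate: $\partial_j f = -2\pi S^2 x_j f$, and then
\[ \partial_i\partial_j f = f(\vx)\bigl(-2\pi S^2\,\delta_{ij} + 4\pi^2 S^4\, x_i x_j\bigr). \]
Thus the Hessian has the compact form
\[ (\vnabla\vnabla f)(\vx) = f(\vx)\bigl(-2\pi S^2\, I + 4\pi^2 S^4\, \vx\vx^T\bigr), \]
which is a scalar multiple of the identity plus a rank-one term, weighted by $f(\vx)$. For a unit vector $\vt$, setting $\alpha = \vx\cdot\vt$ gives
\[ \vt\cdot(\vnabla\vnabla f)(\vx)\cdot\vt = f(\vx)\bigl(-2\pi S^2 + 4\pi^2 S^4 \alpha^2\bigr). \]

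Next I would bound this quantity in absolute value. By Cauchy--Schwarz, $\alpha^2 \le \norm{\vx}_2^2$, and the two extreme choices are $\alpha = 0$ (any direction orthogonal to $\vx$) and $\alpha^2 = \norm{\vx}_2^2$ ($\vt$ parallel to $\vx$). Writing $u = \pi S^2\norm{\vx}_2^2 \ge 0$, the parallel case gives $2\pi S^2 e^{-u}(2u-1)$ while the perpendicular case gives $-2\pi S^2 e^{-u}$. Factoring out $2\pi S^2$, the problem reduces to showing that
\[ \max\bigl(\,e^{-u},\ |e^{-u}(2u-1)|\,\bigr) \le 1 \quad \text{for all } u \ge 0, \]
with equality attained at $u=0$. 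The first term is obvious. For the second, differentiating $g(u)=e^{-u}(2u-1)$ gives $g'(u)=e^{-u}(3-2u)$, so $g$ rises from $g(0)=-1$ to a maximum $g(3/2)=2e^{-3/2}<1$ and then decays to $0$; hence $|g(u)|\le 1$, with equality only at $u=0$.

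This pins the supremum down: the bound $2\pi S^2$ is attained at $\vx=\vec 0$ with $\vt$ arbitrary, and never exceeded elsewhere, giving the stated operator norm. The only mildly delicate point is handling both the parallel and perpendicular choices of $\vt$ simultaneously, which is why I reduce to the pair $(e^{-u},g(u))$; checking that $2e^{-3/2}<1$ is immediate, so there is no real obstacle.
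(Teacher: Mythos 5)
Your proof is correct. It takes a somewhat different route from the paper's: you compute the Hessian explicitly as $f(\vx)\bigl(-2\pi S^2 I + 4\pi^2 S^4\,\vx\vx^T\bigr)$, note that $\vt\cdot(\vnabla\vnabla f)\cdot\vt$ is affine and increasing in $\alpha^2=(\vx\cdot\vt)^2$ so its absolute value is maximized at the endpoints $\alpha=0$ or $\vt\parallel\vx$, and then settle a one-variable optimization in $u=\pi S^2\norm{\vx}_2^2$. The paper instead rescales to $S=1/\sqrt{\pi}$ and restricts the Gaussian to the line through $\vx$ tangent to $\vt$: that restriction is a univariate Gaussian multiplied by a constant $c\le 1$, so the whole problem collapses to the classical fact that $\abs{(e^{-x^2})''}=\abs{4x^2-2}e^{-x^2}$ is maximized at $x=0$ --- which is exactly the same one-dimensional function as your $g(u)=e^{-u}(2u-1)$ in disguise. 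What each approach buys: yours is fully self-contained and exhibits the useful identity-plus-rank-one structure of the Hessian, at the cost of the (easy, but worth stating explicitly) endpoint argument in $\alpha^2$; the paper's line-restriction trick avoids any matrix computation and, as remarked there, extends immediately to higher even derivatives of the Gaussian. Either argument establishes the stated value $2\pi S^2$, attained at $\vx=\vec 0$.
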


\begin{proof} We can rescale $\vx$ to assume that $S = 1/\sqrt{\pi}$. The
1-dimensional case $k=1$ is a routine calculus problem that the absolute
value of the second derivative of the Gaussian,
\[ \abs{\exp(-x^2)''} = \abs{4x^2-2}\exp(-x^2), \]
is maximized at $x=0$.  (Fun exercise: The same is true of any even
derivative of a univariate Gaussian.)  In dimension $k>1$, the maximand
of \equ{e:hessnorm}, which in this case is
\[ \big\lvert\vt\cdot (\vnabla\vnabla \exp(-\norm{\vx}_2^2))
    \cdot \vt\big\rvert, \]
can be interpreted as part of a 1-dimensional restriction of
$\exp(-\norm{\vx}_2^2)$ to the straight line $L$ that passes through $\vx$
and is tangent to $\vt$.  The restriction to this line $L$ is a univariate
Gaussian that is equivalent to $\exp(-x^2)$, except multiplied by a constant
$0 < c < 1$ when $L$ does not pass through the origin.  If $L$ does pass
through the origin, then maximization along $L$ is exactly the $k=1$ case.
\end{proof}

\begin{proof}[Proof of \Lem{l:yquant}] We apply \Lem{l:midpoint}
to the shifted cube $\vy_1 + [-q/2,q/2]^k$ together with the
\Lem{l:ghess} in $\ell$ dimensions (parallel to $H_\R$)
to the non-error term
\[ 2^{\ell/2}S^{\ell}\Delta^{-1/2} \exp(-\pi 2 S^2 \norm{\vy-\vy_0}_2^2) \]
in \equ{e:u2noise}.  This yields the estimate
\[ \mu_3(\vy_1) = \mu_2(\vy_1) + O(q^{k+2}S^2) + q^k\exp(-\Omega(S)). \]
Summing over $(\Z/Q)^k$, we obtain
\[ \norm{\mu_3-\mu_2}_1 = O(q^2S^2) + \exp(-\Omega(S)) = O(S^{-2}), \]
as desired.
\end{proof}

\begin{lemma} If $\vu \in \R^\ell$ is a random variable
with density proportional to $\exp(-\pi S^2 \norm{\vu}_2^2)$, then
\[ \Pr[\norm{\vu}_2 \ge s_0] \le \frac{\ell}{2\pi S^2 s_0^2}\]
\eatline \label{l:ugtail} \end{lemma}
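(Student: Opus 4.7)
The plan is to apply Markov's inequality to the squared norm $\|\vu\|_2^2$, which reduces the tail bound to a second moment calculation. First I would observe that the density proportional to $\exp(-\pi S^2\|\vu\|_2^2)$ factors as a product of $\ell$ independent one-dimensional Gaussians, each with density proportional to $\exp(-\pi S^2 u_j^2)$. A direct Gaussian integral (or a rescaling $u_j \mapsto u_j/(S\sqrt{2\pi})$ to standard form) gives $E[u_j^2] = 1/(2\pi S^2)$ for each coordinate, so by linearity
\[ E[\|\vu\|_2^2] = \sum_{j=1}^\ell E[u_j^2] = \frac{\ell}{2\pi S^2}. \]

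Next I would apply Markov's inequality to the non-negative random variable $\|\vu\|_2^2$:
\[ \Pr[\|\vu\|_2 \ge s_0] = \Pr[\|\vu\|_2^2 \ge s_0^2]
   \le \frac{E[\|\vu\|_2^2]}{s_0^2} = \frac{\ell}{2\pi S^2 s_0^2}, \]
which is exactly the claimed bound.

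There is essentially no obstacle here; the only small point to get right is the normalization constant, since the lemma only specifies the density up to proportionality. The factor of $\pi$ in the exponent is chosen precisely so that the second moment per coordinate is $1/(2\pi S^2)$, giving the clean constant $1/(2\pi)$ in the final estimate. A sharper bound could be obtained from a Chernoff estimate on $\chi^2_\ell$, but the Markov bound suffices for the use of this lemma in the rest of the algorithm's analysis (where $s_0$ will be chosen as a modest multiple of $\sqrt{\ell}/S$, making the right-hand side a small constant).
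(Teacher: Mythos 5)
Your proposal is correct and matches the paper's own proof exactly: factor into independent one-dimensional Gaussians, compute $\operatorname{Ex}[u_j^2] = 1/(2\pi S^2)$ so that $\operatorname{Ex}[\norm{\vu}_2^2] = \ell/(2\pi S^2)$, and apply Markov's inequality to the non-negative variable $\norm{\vu}_2^2$. Nothing further is needed.
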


\begin{proof} Each coordinate $u_j$ is an independent random variable with
density proportional to $\exp(-\pi S^2 u_j^2)$.  Thus,
\[ \Ex[u_j^2] = \frac{1}{2 \pi S^2} \qquad
    \Ex[\norm{\vu}_2^2] = \frac{\ell}{2 \pi S^2}. \]
Since $\norm{\vu}_2^2$ is a non-negative random variable, the lemma
follows from Markov's inequality.
\end{proof}

To conclude the proof of \Lem{l:unoise}, let
\[ \vu = \vy_1 - \vy_0 = \vu_1 + \vu_2 \]
be the total error in $\vy_1$, where $\vy_0 \in H^\#$ is uniformly random,
$\vu_2$ has distribution \eqref{e:u2noise}, and $\vu_1$ is the rasterization
error described by \Lem{l:yquant}.  Then
\begin{eq}{e:u1u2noise} \norm{\vu_1}_2 = O(\sqrt{k}q)
    = O(\sqrt{k}s^2). \end{eq}
We can apply \Lem{l:ugtail} to $\vu_2$ with
\[ s_0 = \sqrt{k}s + \norm{\vu_1}_2 \]
to obtain
\[ \Pr\big[\norm{\vu_2}_2 \ge \sqrt{k}s+ \norm{\vu_1}_2\big]
    \le \frac{1}{2\pi} + O(s). \]
Then by the triangle inequality,
\begin{eq}{e:utail}
\Pr\big[\norm{\vu}_2 \ge \sqrt{k}s \big] \le \frac{1}{2\pi} + O(s).
\end{eq}
Meanwhile, the difference between this probabilistic model and the true
result of steps 1--3 of \Alg{a:abelian} is given by the error terms in
Lemmas~\ref{l:pffpgc}, \ref{l:tau}, and \ref{l:yquant}; the total is
\[ \exp(-\Omega(S^2)) + \exp(-\Omega(S)) + O(s^2) = O(s^2). \]
Thus, \equ{e:utail} also holds for the true distribution produced in
\Alg{a:abelian}, and \Lem{l:unoise} follows.

\subsection{Dense orbits}
\label{ss:dense}

Recall from \Sec{ss:amotive} that $\tvy \in (-1/2,1/2]^k$ is the canonical
lift of $\vy \in (\R/\Z)^k$, and that $\norm{\vy}_2 = \norm{\tvy}_2$.

As mentioned in \Sec{ss:amotive}, the set of positive multiples (or the
forward orbit) $\{\lambda\vy_0\}_{\lambda \in \Z_+}$ of a randomly chosen
\[ \vy_0 \in H^\# \subseteq (\R/\Z)^k \]
is dense in $H_1^\#$ with probability 1.  Thus, for any $\epsilon >
0$, the orbit includes $k-\ell$ points in the ball $B_{\eps,2}(\vec0)$
that lift to a basis of $H_\R^\perp \subseteq \R^k$.  The discrete orbit
$\{(\lambda\vy_0,\lambda)\}_{\lambda \in \Z_+}$ also includes a subset
that lies in $B_{\eps,2}(\vec0) \oplus \R$ and that lifts to a basis
\[ (\widetilde{\lambda_1\vy_0},\lambda_1),
    (\widetilde{\lambda_2\vy_0},\lambda_2), \ldots,
    (\widetilde{\lambda_{k+1-\ell}\vy_0},\lambda_{k+1-\ell}) \]
of $H_\R^\perp \oplus \R \subseteq \R^{k+1}$.  The goal of this subsection is
to prove an effective version of this standard result, moreover one that is
stable when we add noise to $\vy_0$.  Recall the Gram determinant $\Delta$
defined by \equ{e:gram}.  We also assume for convenience that $\ell < k$,
since $\ell = k$ is the case handled by the Shor--Kitaev algorithm.

\begin{lemma} Let $C > 0$ be a suitable absolute constant, let $R_1 \ge
\max(\Delta,2^k,C)$ be an integer, let $\Lambda = R_1^{Ck}$, let $\vy_0
\in H^\# \subseteq (\R/\Z)^k$ be uniformly random, and let $\vy_1 \in
(\R/\Z)^k$ be any vector such that
\[ \norm{\vy_1 - \vy_0}_2 \le \frac{1}\Lambda. \]
Then with probability at least $3/4$, there are integers
\[ 1 \le \lambda_1, \lambda_2, \ldots, \lambda_{k+1-\ell} \le \Lambda \]
such that $\norm{\widetilde{\lambda_j\vy_1}} \le \frac1{R_1}$
for every $j$, and such that
\[ (\widetilde{\lambda_1\vy_1},\lambda_1),
    (\widetilde{\lambda_2\vy_1},\lambda_2), \ldots,
    (\widetilde{\lambda_{k+1-\ell}\vy_1},\lambda_{k+1-\ell}) \]
is a linearly independent set in $\R^{k+1}$.
\label{l:basis} \end{lemma}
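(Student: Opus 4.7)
The plan is to translate the statement into a lattice-theoretic problem, apply Minkowski's second theorem to bound successive minima, and use a probabilistic genericity argument to control each individual minimum.

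First I would reduce to the connected component of $H^\#$.  Since $[H^\#:H_1^\#]$ divides $|H_1/H| \le \Delta$, I can replace $\vy_0$ by $m\vy_0$ for some $m \le \Delta$, ensuring the new point lies in $H_1^\# \cong H_\R^\perp/\Lambda_1$, where $\Lambda_1 \defeq \Z^k \cap H_\R^\perp$.  A volume calculation parallel to \Lem{l:feature} gives $\det \Lambda_1 = \sqrt{\Delta}/|H_1/H| \le \sqrt{\Delta}$, and the cost $m \le \Delta \le R_1$ is absorbed into the constant $C$.  The key structural input from \Lem{l:feature} is that any short lift $\widetilde{\lambda \vy_0}$ at distance less than the feature length $r_0 \ge 1/\Delta$ from $\vec0$ must in fact lie in $H_\R^\perp$ exactly.

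Next I would form the auxiliary lattice $L' \subseteq H_\R^\perp \oplus \R$ of rank $k+1-\ell$, generated by $\Lambda_1 \times \{0\}$ together with $(\vz_0, 1/\Lambda)$ for any lift $\vz_0 \in H_\R^\perp$ of $\vy_0$.  Then $\det L' \le \sqrt{\Delta}/\Lambda$, and a nonzero vector in $L'$ of $\ell_2$-norm at most $1/R_1$ corresponds exactly to an integer $\lambda$ with $1 \le |\lambda| \le \Lambda/R_1$ and $\norm{\widetilde{\lambda \vy_0}}_2 \le 1/R_1$.  Minkowski's second theorem gives
\[ \prod_{i=1}^{k+1-\ell} \mu_i(L') = O_k\bigl(\sqrt{\Delta}/\Lambda\bigr), \]
and with $\Lambda = R_1^{Ck}$ for $C$ sufficiently large (using $R_1 \ge \Delta$), the right-hand side is bounded by $\bigl(1/(2R_1)\bigr)^{k+1-\ell}$.

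The main obstacle is the genericity step, since Minkowski's bound only controls the product of the $\mu_i$ whereas I need each one to satisfy $\mu_i \le 1/(2R_1)$.  I would show that with probability at least $3/4$ over $\vy_0$ uniform in $H^\#$, no individual $\mu_i$ is anomalously small: an anomalously small $\mu_i$ forces $\vz_0$ to lie within $O(1/\Lambda)$ of one of finitely many proper rational subspaces of $H_\R^\perp$ cut out by short vectors in the dual lattice, and a union bound over a polynomial-cardinality net of such subspaces bounds the measure of the exceptional $\vy_0$ by a constant strictly less than $1/4$.  Finally I would transfer from $\vy_0$ to $\vy_1$: since $\norm{\vy_1 - \vy_0}_2 \le 1/\Lambda$ and each $\lambda_j \le \Lambda/R_1$, the perturbation satisfies $\norm{\widetilde{\lambda_j \vy_1} - \widetilde{\lambda_j \vy_0}}_2 \le 1/R_1$, preserving the norm bound up to a factor of $2$ that is absorbed by redefining $R_1$.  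Linear independence of the vectors $(\widetilde{\lambda_j \vy_1}, \lambda_j)$ persists because it is an open condition and the perturbation is small compared to the $(k{+}1{-}\ell)$-volume of the parallelepiped guaranteed by the successive-minima estimate.
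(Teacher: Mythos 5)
Your route is genuinely different from the paper's. The paper first reduces to the case $H = H_1$, then proves a quantitative equidistribution statement: pairing the orbit measure $\delta_Y$ of $\{\lambda\vy_0\}_{\lambda \le \Lambda_1}$ against a narrow Gaussian bump and passing to the Fourier side, where $\abs{\hdelta_Y(\vx)}$ is a geometric series controlled probabilistically for $\vx \notin H$ (a finitary Weyl criterion), it shows that with probability $3/4$ the orbit meets every ball of radius $r_2 \approx r_1/(2(k+1))$ centered on $H^\#$; a separate geometric lemma about perturbed simplex vertices (\Lem{l:simplex}) then converts orbit points near the vertices of a regular simplex in $H_\R^\perp$ into the desired independent set, with enough slack to absorb the passage from $\vy_0$ to $\vy_1$. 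You instead encode the problem in the successive minima of the rank-$(k+1-\ell)$ lattice generated by $\Z^k \cap H_\R^\perp$ and $(\vz_0,1/\Lambda)$, invoke Minkowski's second theorem for the product of the minima, and rule out an anomalously small first minimum probabilistically. This skeleton can be made to work, and it is more economical in that it proves bare existence without establishing density of the orbit; the paper's equidistribution route proves more (orbit points near arbitrary prescribed targets), which is exactly what makes the simplex configuration available and the noise-robustness transparent.

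Two of your steps need repair, one of them seriously. The genericity step is phrased inaccurately but is fixable: a first minimum below the threshold $m_0$ dictated by the product bound means some $1 \le \lambda \le m_0\Lambda$ has $\lambda\vy_0$ within $m_0$ of $\vec0$ in the torus $H_1^\#$, so $\vy_0$ lies near a $\lambda$-torsion point --- a finite set of points, not a family of rational subspaces --- and the number of relevant $\lambda$ is of order $m_0\Lambda$, which is far from polynomial; the union bound nevertheless closes because multiplication by $\lambda$ preserves Haar measure, so each event has probability $O(m_0^{\,k-\ell})$ and the total $O(\Lambda m_0^{\,k+1-\ell})$ is tiny once $C$ is large. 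The genuine gap is your transfer of linear independence from $\vy_0$ to $\vy_1$. You argue that the perturbation is ``small compared to the $(k+1-\ell)$-volume of the parallelepiped guaranteed by the successive-minima estimate,'' but that volume is only bounded below by $\det L' \approx \sqrt{\Delta_1}/\Lambda = R_1^{-\Theta(Ck)}$, while the perturbation of each column has norm up to $\lambda_j/\Lambda \le 1/(2R_1)$, i.e.\ comparable to the column norms themselves; the resulting change in the determinant can exceed $\det L'$ by an enormous factor, and indeed the minima vectors may be nearly parallel, with angle as small as roughly $R_1^2/\Lambda$, so openness plus this volume bound proves nothing. The step can be saved by a different and simpler observation: writing $\vu$ for the short representative of $\vy_1-\vy_0$, your norm bounds rule out wrap-around, so $\widetilde{\lambda_j\vy_1} = \widetilde{\lambda_j\vy_0} + \lambda_j\vu$ exactly, and the perturbed vectors $(\widetilde{\lambda_j\vy_1},\lambda_j)$ are the images of the unperturbed ones under the invertible shear $(\vv,\lambda)\mapsto(\vv+\lambda\vu,\lambda)$ of $\R^{k+1}$, so linear independence is preserved automatically, with no quantitative estimate needed. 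With that substitution, and the genericity step reformulated as above, your alternative proof goes through.
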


As a first step in the proof of \Lem{l:basis}, recall that $H^\#/H^\#_1$ is a
finite group.  If $\vy_0$ is uniformly random in $H^\#$, then $[H_1:H]\vy_0$
is uniformly random in $H_1^\#$.  Let $\Delta_1$ be the Gram determinant
of $H_1$.  Note that
\begin{eq}{e:h1bound} R_1 \ge \Delta \ge \Delta_1 \ge 1 \qquad
    [H_1:H] = \sqrt{\Delta/\Delta_1} \le \sqrt{\Delta}, \end{eq}
and that $C$ is arbitrary and can be increased.  We can thus boost $\Lambda$
by a factor of $[H_1:H]$ and reduce the lemma to the case when $H = H_1$.
Equivalently, we can assume that $H^\#$ is connected.

The following lemma is more precise than necessary for our purposes,
but the author enjoyed this version as an exercise in geometry and
linear algebra.

\begin{lemma} Let
\[ \vv_1,\vv_2,\ldots,\vv_{\ell+1} \in \R^\ell \]
be the vertices of a regular simplex centered at $\vec0$ with circumradius 1.
For each $j$, let $\vw_j \in B_{1/\ell}(\vv_j)$, and let $\lambda_j > 0$.
Then
\[ (\vw_1,\lambda_1),(\vw_2,\lambda_2),\ldots,(\vw_{\ell+1},\lambda_\ell) \]
is a basis of $\R^{\ell+1}$.  Moreover, the radius $1/\ell$ is the largest
possible to guarantee that these vectors are a basis.
\label{l:simplex} \end{lemma}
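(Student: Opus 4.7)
The approach is to convert the linear-independence claim into a geometric question about the convex hull $\mathrm{conv}(\vw_1,\ldots,\vw_{\ell+1}) \subseteq \R^\ell$. Since $\ell+1$ vectors in $\R^\ell$ are always linearly dependent, once they span $\R^\ell$ the relation $\sum_j c_j \vw_j = \vec0$ is unique up to scale; then $(\vw_1,\lambda_1),\ldots,(\vw_{\ell+1},\lambda_{\ell+1})$ is a basis of $\R^{\ell+1}$ exactly when $\sum_j c_j \lambda_j \ne 0$ for this kernel vector. With all $\lambda_j > 0$, a condition sufficient for \emph{every} choice of positive $\lambda_j$ is that the $c_j$ all share a common sign; after rescaling to $\sum_j c_j = 1$, this is exactly the statement that $\vec0$ lies in the relative interior of $\mathrm{conv}(\vw_1,\ldots,\vw_{\ell+1})$.

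The geometric input is the classical fact that the regular $\ell$-simplex with circumradius $1$ centered at $\vec0$ has inradius $1/\ell$. Equivalently, since this simplex contains $B_{1/\ell,2}(\vec0)$, for every unit vector $\vec{n} \in \R^\ell$ the point $\vec{n}/\ell$ lies in $\mathrm{conv}(\vv_1,\ldots,\vv_{\ell+1})$, forcing $\max_j \vec{n}\cdot\vv_j \ge 1/\ell$ and therefore (applied to $-\vec{n}$) $\min_j \vec{n}\cdot\vv_j \le -1/\ell$. Given $\vw_j = \vv_j + \vec{u}_j$ with $\norm{\vec{u}_j}_2 < 1/\ell$ and any unit $\vec{n}$, select $j$ with $\vec{n}\cdot\vv_j \le -1/\ell$: then $\vec{n}\cdot\vw_j \le -1/\ell + \norm{\vec{u}_j}_2 < 0$, and symmetrically some $i$ satisfies $\vec{n}\cdot\vw_i > 0$. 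So no hyperplane through $\vec0$ has all the $\vw_j$ on one side, and the separating hyperplane theorem gives $\vec0 \in \operatorname{int}\mathrm{conv}(\vw_1,\ldots,\vw_{\ell+1})$. This simultaneously shows the $\vw_j$ span $\R^\ell$ (so the kernel is one-dimensional) and that the kernel vector has coefficients of one sign, completing the main claim.

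For the tightness half, given any $r > 1/\ell$ pick $r' \in (1/\ell, r)$ and take $\vec{u}_j = r'\vv_{\ell+1}$ for $j = 1,\ldots,\ell$ together with $\vec{u}_{\ell+1} = \vec0$; these satisfy $\norm{\vec{u}_j}_2 \le r' < r$. With $\vec{n} = -\vv_{\ell+1}$ one computes $\vec{n}\cdot\vw_j = 1/\ell - r' < 0$ for $j \le \ell$ and $\vec{n}\cdot\vw_{\ell+1} = -1$, so every $\vw_j$ lies in the open half-space $\{\vec{n}\cdot\vec{x} < 0\}$ and $\vec0 \notin \mathrm{conv}(\vw_1,\ldots,\vw_{\ell+1})$. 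The kernel vector then has two nonempty sign classes, so positive $\lambda_j$ (small in one sign class, large in the other) can be chosen with $\sum_j c_j\lambda_j = 0$, yielding a dependent configuration. The main content is the inradius computation, which is what makes $1/\ell$ exactly the critical constant; the rest is the separating hyperplane theorem plus elementary inner-product estimates.
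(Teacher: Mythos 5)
Your proof is correct, and it takes a genuinely different route from the paper's. The paper argues through the determinant: it shows the determinant of the $(\ell+1)\times(\ell+1)$ matrix is positive when $\vw_j = \vv_j$ via cofactor expansion, then uses a continuity/intermediate-value argument to reduce the problem to whether $\ell$ of the $\vw_j$ can lie on a hyperplane $Z \ni \vec0$; it then optimizes the radius over all such hyperplanes, identifying the worst $Z$ as the one normal to $\vv_{\ell+1}$ by analyzing the cube $X = \{\vz : \abs{\vz\cdot\vv_j} \le 1\}$. Your approach instead reformulates the conclusion directly as a containment statement — the span-plus-one-sign condition on the unique kernel relation is equivalent to $\vec0$ lying in the interior of the simplex $\mathrm{conv}(\vw_1,\ldots,\vw_{\ell+1})$ — and then gets the threshold directly from the inradius of the regular simplex via the separating hyperplane theorem, with a clean explicit perturbation for tightness. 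Both proofs pivot on the same underlying fact (the inradius $1/\ell$ is the minimum over unit normals of the largest projection of the vertices, which is the paper's polytope-$X$ computation in dual form), but your reduction to it avoids the cofactor bookkeeping and the IVT step, replacing them with standard convexity; the paper's version is more hands-on with the matrix structure, which perhaps makes it more self-contained for readers not wanting to invoke convex geometry. One small point worth making explicit in a write-up: $\vec0 \in \operatorname{int}\mathrm{conv}(\vw_j)$ with exactly $\ell+1$ points in $\R^\ell$ forces affine independence, so the barycentric coordinates of $\vec0$ are unique and strictly positive, which is what gives the one-sign kernel vector — you state this correctly but compress it into a single clause, and a reader may want the affine-independence step spelled out.
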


\begin{figure}
\begin{tikzpicture}[thick,scale=.8]
\draw[dashed,darkgreen,fill=lightgreen] (90:2) circle (1);
\draw[dashed,darkgreen,fill=lightgreen] (210:2) circle (1);
\draw[dashed,darkgreen,fill=lightgreen] (330:2) circle (1);
\draw[darkblue] (90:2) -- (210:2) -- (330:2) -- cycle;
\fill (330:2) circle (.075) node[below] {$\vv_1$};
\fill (330:2) +(.1,.35) circle (.075) node[above] {$\vw_1$};
\fill (90:2) circle (.075) node[above] {$\vv_2$};
\fill (90:2) +(-.6,0) circle (.075) node[below] {$\vw_2$};
\fill (210:2) circle (.075) node[below] {$\vv_3$};
\fill (210:2) +(.1,.5) circle (.075) node[left] {$\vw_3$};
\fill (0,0) circle (.075); \draw (240:.35) node {$\vec0$};
\draw (330:2) -- +(30:1); \draw (330:2) +(0:.6) node {$\frac12$};
\draw (0,0) -- (90:2); \draw (.2,.7) node {$1$};
\draw (300:3) -- (120:3) node[above] {$Z$};
\draw[->] (0,0) -- (30:2); \draw (20:1.5) node {$\vz$};
\draw (-3.5,2) node {(a)};
\tikzset{yshift=-6cm};
\draw[darkblue] (30:4) -- (120:2.309) -- (210:4) -- (300:2.309) -- cycle;
\draw[very thick,darkgreen] (0,0) circle (2);
\draw[<->] (270:2) -- (90:2); \draw (102:1.25) node {$\vv_2$};
\draw (250:1.25) node {$-\vv_2$};
\draw[<->] (330:2) -- (150:2); \draw (318:1.25) node {$\vv_1$};
\draw (165:1.35) node {$-\vv_1$};
\draw[->] (0,0) -- (30:2); \draw (20:1.5) node {$\vz$};
\fill (0,0) circle (.075); \draw (210:.35) node {$\vec0$};
\fill (30:4) circle (.075); \draw[darkred] (30:4) circle (.2);
\draw (-3.25,-1) node {$X$};
\draw (-3.5,2) node {(b)};
\end{tikzpicture}
\caption{Two steps in the proof of \Lem{l:simplex}.  Diagram (a) illustrates
    in dimension $\ell=2$ that no line $Z$ through $\vec0$ contains two of
    the three vectors $\vw_1$, $\vw_2$, $\vw_3$.  Diagram (b) illustrates
    that in any dimension, the worst-case hyperplane $Z$ is perpendicular to
    a $\vz$ that points toward a furthest corner of the parallelepiped $X$.}
\label{f:simplex} \end{figure}
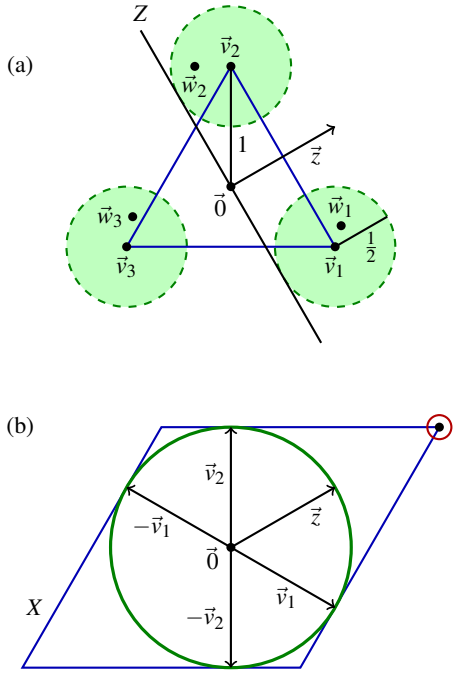

\begin{proof} Let $r > 0$ and suppose that each $\vw_j \in B_r(\vv_j)$.
We also order the vectors so that $\vv_1,\vv_2,\ldots,\vv_\ell$ is a positive
basis of $\R^\ell$.  We first consider the special case in which $\vw_j =
\vv_j$ for all $j$.   In this case, each of the $\ell+1$ terms
in the cofactor expansion of the determinant
\[ \det \begin{bmatrix} \vv_1 & \vv_2 & \cdots & \vv_{\ell+1} \\
    \lambda_1 & \lambda_2 & \cdots & \lambda_{\ell+1} \end{bmatrix} \]
is a positive number, so the determinant is positive and the
columns are a positive basis.  If the determinant vanished when we
replace each $\vv_j$ by $\vw_j$, then at least one of the cofactors would
have to be non-positive, without loss of generality the first cofactor
\[ \det \begin{bmatrix} \vw_1 & \vw_2 & \cdots & \vw_\ell \end{bmatrix}. \]
By continuity and the intermediate value theorem, if each $\vw_j \in
B_r(\vv_j)$ and this cofactor can take a negative value, then it can also
vanish.  To prove the lemma, it suffices to show that $r$
is small enough that the cofactors cannot vanish, equivalently that the
vectors $\vw_1,\vw_2,\ldots,\vw_\ell$ cannot lie in a hyperplane $Z \subseteq
\R^\ell$ that passes through $\vec0$.

Conversely, if $\vw_1,\ldots,\vw_\ell$ can lie in a hyperplane $Z \ni
\vec0$, then they can be moved to the same side of $Z$ as $\vw_{\ell+1}$,
since the vectors are constrained by open conditions.   In this case,
if $\vz$ is normal to $Z$ and on the same side, then we can let $\lambda_j =
\vz \cdot \vw_j$.  Then
\[ (\vz,-1) \cdot (\vw_j,\lambda_j) = 0 \]
for all $j$, so that the vectors $\{(\vw_j,\lambda_j)\}$ are not a basis of
$\R^{\ell+1}$.   In conclusion, the lemma is equivalent to finding the
largest $r$ such that $\vw_1,\ldots,\vw_\ell$ cannot lie in a hyperplane
$Z \ni \vec0$ when each $\vw_j \in B_r(\vv_j)$.   This is equivalent to
the smallest $r$ such that $\vw_1,\ldots,\vw_\ell$ can lie in $Z$ when
each $\vw_j$ is in the closed ball $\overline{B_r(\vv_j)}$.  The lemma
asserts that $r = 1/\ell$ is the optimal value.

At this point, we switch from optimizing over $r$ to optimizing over the
hyperplane $Z$ which is defined by a unit normal $\vz$.  Given $\vz$,
the least value of $r$ such that each $\overline{B_r(\vv_j)}$ reaches $Z$ is
\[ r = \max_{j \le \ell} \abs{\vz \cdot \vv_j}. \]
The optimum $\vz$ is one that points toward a furthest corner of the
polytope $X$ defined by the equations $\abs{\vz \cdot \vv_j} \le 1$
for all $j \le \ell$.  $X$ is a cube stretched along its long diagonal
parallel to $\vv_{\ell+1}$, so $\vz = \pm \vv_{\ell+1}$.  Thus
\[ \abs{\vv_{\ell+1} \cdot \vv_j} = \frac1\ell \]
is the optimal value of $r$, as desired.

\Fig{f:simplex} illustrates two steps in the argument.
\end{proof}

Now let
\[ \Lambda \defeq 2R_1(k+1)\Lambda_1 \qquad \Lambda = R_1^{C_1k}, \]
where $C_1$ (and therefore $\Lambda_1$) is an integer which is not yet
specified, and let
\[ r_1 \defeq \frac{1}{R_1} \qquad r_2 \defeq \frac{r_1}{2(k+1)}
    \qquad Y_0 \defeq \{\lambda\vy_0\}_{1 \le \lambda \le \Lambda_1}. \]

\begin{corollary}[of \Lem{l:simplex}]  If $\vy_0$ is chosen so that its
orbit $Y$ intersects every ball $B_{r_2}(\vv)$ with $\vv \in H^\#$, then
the conclusions of \Lem{l:basis} hold.
\label{c:simplex} \end{corollary}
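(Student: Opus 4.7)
After the reduction to $H = H_1$ explained just above the corollary, the dual $H^\#$ is a connected $(k{-}\ell)$-torus in $(\R/\Z)^k$, and by \Lem{l:feature} its feature length satisfies $r_0 \ge 1/\Delta \ge r_1$.  Thus, inside the ball $B_{r_1}(\vec 0) \subseteq (\R/\Z)^k$, the set $H^\#$ agrees, under the canonical lift, with the linear subspace $H_\R^\perp$.  The plan is to fix vertices $\vv_1,\ldots,\vv_{k+1-\ell}$ of a regular simplex in $H_\R^\perp$ centered at $\vec 0$ with circumradius $\rho \defeq r_1(k{-}\ell)/(k{+}1)$; since $\rho < r_1 \le r_0$, these vertices lie in $H^\# \cap B_{r_0}(\vec 0)$.

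By the density hypothesis applied at each $\vv_j$, there exists an integer $1 \le \lambda_j \le \Lambda_1$ with $\|\lambda_j \vy_0 - \vv_j\|_2 \le r_2$.  Since $\|\vy_1 - \vy_0\|_2 \le 1/\Lambda$ and $\Lambda = 2R_1(k{+}1)\Lambda_1$, a single triangle inequality gives
\[ \|\widetilde{\lambda_j \vy_1} - \vv_j\|_2 \;\le\; r_2 + \Lambda_1/\Lambda \;=\; \frac{r_1}{k{+}1}. \]
This one estimate is designed to serve two ends simultaneously.  First, it yields the norm bound $\|\widetilde{\lambda_j \vy_1}\|_2 \le \rho + r_1/(k{+}1) = r_1 k/(k{+}1) \le r_1$, which is the first conclusion of \Lem{l:basis}.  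Second, it places each $\widetilde{\lambda_j \vy_1}$ within distance $\rho/(k{-}\ell) = r_1/(k{+}1)$ of the simplex vertex $\vv_j$, which is exactly the tolerance demanded by \Lem{l:simplex} in the $(k{-}\ell)$-dimensional space $H_\R^\perp$.

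To finish, I would invoke \Lem{l:simplex} (rescaled to unit circumradius) inside $H_\R^\perp \oplus \R \subseteq \R^{k+1}$ to conclude that $\{(\widetilde{\lambda_j \vy_1},\lambda_j)\}_{j=1}^{k+1-\ell}$ is a basis of $H_\R^\perp \oplus \R$, and therefore a linearly independent set in $\R^{k+1}$, giving the remaining conclusion of \Lem{l:basis}.  The main obstacle is the three-way bookkeeping on $\rho$: the perturbation scale $r_1/(k{+}1)$ must be compatible with the coarse density radius $r_2$, must leave the resulting lifts inside $B_{r_1}(\vec 0)$, and must match the $1/(k{-}\ell)$ tolerance of \Lem{l:simplex}.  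The choice $\rho = r_1(k{-}\ell)/(k{+}1)$ is engineered precisely so that all three constraints line up for every admissible $(k,\ell)$ with $\ell < k$.
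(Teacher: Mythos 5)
Your approach is essentially the paper's: reduce to $H = H_1$ so that the lift of $H^\#$ near $\vec0$ is the linear space $H_\R^\perp$, place a rescaled regular simplex of dimension $k-\ell$ there, use the density hypothesis to find orbit points near each vertex, pull in the noise bound $\norm{\vy_1-\vy_0}_2 \le 1/\Lambda$ via the triangle inequality, and then invoke \Lem{l:simplex}.  However, there is one real gap in the final step.  You conclude that $\{(\widetilde{\lambda_j\vy_1},\lambda_j)\}$ is ``a basis of $H_\R^\perp \oplus \R$,'' but these vectors do \emph{not} lie in $H_\R^\perp \oplus \R$: $\vy_1$ is a noisy perturbation of $\vy_0$, so $\widetilde{\lambda_j\vy_1}$ has a (small but generically nonzero) component in $H_\R$, outside of $H_\R^\perp$.  \Lem{l:simplex} lives entirely inside $\R^\ell \subseteq \R^{\ell+1}$, so to apply it in $H_\R^\perp$ you must first orthogonally project: set $\vw_j = \widetilde{\lambda_j\vy_1}^\perp$.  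Since the simplex vertices $\vv_j$ lie in $H_\R^\perp$ and orthogonal projection onto $H_\R^\perp$ is a contraction, your estimate $\norm{\widetilde{\lambda_j\vy_1} - \vv_j}_2 \le r_1/(k+1)$ transfers directly to the projections, so \Lem{l:simplex} yields a basis $\{(\widetilde{\lambda_j\vy_1}^\perp,\lambda_j)\}$ of $H_\R^\perp \oplus \R$.  You then need the additional (easy) observation that if the images under the linear projection $(\vx,\lambda) \mapsto (\vx^\perp,\lambda)$ are linearly independent, then so are the original vectors $(\widetilde{\lambda_j\vy_1},\lambda_j)$; that is the conclusion \Lem{l:basis} actually asks for.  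This is exactly the missing bridge in the paper's argument.

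Two small quantitative remarks.  First, $\rho + r_1/(k+1) = r_1(k-\ell+1)/(k+1)$, not $r_1 k/(k+1)$; the bound $\le r_1$ still holds for all $0 \le \ell < k$, so this is only an arithmetic slip.  Second, your choice $\rho = r_1(k-\ell)/(k+1)$ puts the perturbation exactly at the tolerance $\rho/(k-\ell)$ of \Lem{l:simplex}, which is stated with an open ball; this is only safe if the hypothesis ``intersects $B_{r_2}(\vv)$'' is also read as an open ball.  The paper's larger choice $\rho = r_1(k-\ell)/(k+1-\ell)$ gives a strict margin $r_1/(k+1) < r_1/(k+1-\ell)$ whenever $\ell \ge 1$, which makes the inequality comfortably strict (with both choices coinciding only in the $\ell=0$ case).
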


\begin{proof} Let
\[ \tY_0 \defeq \{\widetilde{\lambda\vy_0}\}_{1 \le \lambda \le \Lambda_1}
    \subseteq H_\R^\perp \subseteq \R^k \]
be the lift of $Y_0$ from $(\R/\Z)^k$
to $\R^k$.  We will also need the orbit
\[ \tY_1 \defeq \{\widetilde{\lambda\vy_1}^\perp\}_{1
    \le \lambda \le \Lambda_1} \subseteq H_\R^\perp, \]
where $\widetilde{\lambda\vy_1}^\perp$ is the orthogonal projection of
$\widetilde{\lambda\vy_1}$ to $H_\R^\perp$.  By hypothesis,
\[ \norm{\vy_1 - \vy_0}_2 \le \frac{1}\Lambda
    = \frac1{2R_1(k+1)\Lambda_1}. \]
Therefore when $1 \le \lambda \le \Lambda_1$,
\[ \norm{\widetilde{\lambda\vy_1} - \widetilde{\lambda\vy_0}}_2 \le r_2, \]
except when $\lambda\vy_0$ is near the boundary of the cube $[-1/2,1/2]^k$.
Finally $\widetilde{\lambda\vy_0}$ lies in $H^\perp \oplus \R$, while
$\widetilde{\lambda\vy_1}$ only gets closer if we project it to $H^\perp
\oplus \R$.

We apply \Lem{l:simplex} by substituting $k-\ell$ for $\ell$ and $H_\R^\perp$
for $\R^{k-\ell}$.  In addition, we rescale the geometry in \Lem{l:simplex}
by a factor of
\begin{eq}{e:rescale} \frac{k-\ell}{k+1-\ell}r_1. \end{eq}
Before rescaling, the vectors $\vv_j$ listed in \Lem{l:simplex}
had norm 1, so that after rescaling, they have norm given by
\eqref{e:rescale}.  If $\tY_0$ intersects each ball $B_{r_2}(\vv_j)$
with $\vv_j \in B_{r_1}(\vec0) \subseteq H_\R^\perp$, then $\tY_1$
intersects each ball $B_{2r_2}(\vv_j)$ by the triangle inequality.
Given the value of $r_2$, \Lem{l:simplex} then implies that
$\{\widetilde{(\lambda_j\vy_1}^\perp,\lambda_j)\}$ is a basis of $H_\R^\perp
\oplus \R$. Therefore $\{\widetilde{(\lambda_j\vy_1},\lambda_j)\}$
is a linearly independent set, since a linear projection of a linearly
dependent set would also have been linearly dependent.  We also learn from
the triangle inequality that
\[ \norm{\widetilde{\lambda_j\vy_1}-\vv_j}_2 \le r_2
    + \norm{\widetilde{\lambda_j\vy_0}-\vv_j}+2 \le 2r_2, \]
and thus that
\[ \norm{\widetilde{\lambda_j\vy_1}}_2 \le 2r_2
    + \frac{k-\ell}{k+1-\ell}r_1 \le r_1. \qedhere \]
\end{proof}

To conclude the proof of \Lem{l:basis}, we will prove that the hypothesis
of \Cor{c:simplex} (that $Y_0$ intersects every ball $B_{r_2}(\vv)$)
holds with probability at least 3/4 when $\Lambda$ and $\Lambda_1$ are
large enough but not too large, $\vy_0$ is randomly chosen, and $\vy_1$ is
adversarially chosen.  More precisely, we will give a proof by contradiction
that the hypothesis of \Cor{c:simplex} probably holds if $C_1$ is a large
enough constant.  Given that $\Lambda = 2R_1(k+1)\Lambda_1$ and that $R_1
\ge 2^k$, we can then set $\Lambda = R_1^{Ck}$ with a constant $C > C_1$.

Let
\[ r_3 = r_1^4 \le r_2^2 \qquad R_3 = \frac{1}{r_3}, \]
and fix some $\vv \in H^\#$.  Consider the function
\begin{eq}{e:sigdef} \sigma(\vy) \defeq \sum_{\vz \in \Z^k}
    \exp(-\pi R_3^2\norm{\tvy-\tvv+\vz}_2^2), \end{eq}
where we choose some lifts $\tvv,\tvy \in \R^k$ of $\vv,\vy \in (\R/\Z)^k$.
The specific choices of these lifts do not affect the value of $\sigma(\vy)$;
but note that $\sigma$ implicitly depends on the fixed choice of $\vv$.

Henceforth let $Y = Y_0$, and let $\delta_Y$ be the TD with a unit measure
at each point in $Y$.  Using a finitary version of Weyl's criterion for
equidistribution of orbits, we will show that $Y$ contains a point in
$B_{r_2}(\vv)$.  Informally, we will show that the average of $\sigma$
over $Y$ is close to its average over $H^\#$:
\begin{eq}{e:wapprox} \frac{\braket{\delta_Y|\sigma}}{\Lambda} \sim
    \frac{\braket{\delta_{H^\#}|\sigma}}{\sqrt{\Delta}}. \end{eq}
We establish \Lem{l:basis} with the following three bounds
associated with \equ{e:wapprox}:
\begin{enumerate}
\item An upper bound on the left side of \equ{e:wapprox} when
$Y$ is disjoint from $B_{r_2}(\vv)$.
\item A lower bound on the right side of \equ{e:wapprox}.
\item An upper bound on the difference between the left and right
sides of \equ{e:wapprox} that holds with probability at least 3/4
and that makes the bounds on the two sides conflict.
\end{enumerate}

\Sec{ss:fourier} already has the main work for the upper bound on the left
side of \equ{e:wapprox}.  Suppose that $\vy \notin B_{r_2}(\vv)$, and
choose lifts $\tvv$ and $\tvy$ such that $\tvy-\tvv \in [-1/2,1/2]^k$.
Then we can apply \equ{e:glattice} to \equ{e:sigdef}, with $\tvy-\tvv$
substituted for $\vy$ and $R_3$ substituted for $S$.  We obtain:
\[ \sigma(\vy) \le \exp(-\pi r_2^2/r_3^2)
    + \exp(-\Omega(R_3^2))
    = \exp(-\Omega(R_3)). \]
If we average this estimate over $Y$, we obtain
\begin{eq}{e:wupper} \frac{\braket{\delta_Y|\sigma}}{\Lambda_1}
    = \exp(-\Omega(R_3)). \end{eq}
To obtain a lower bound on the right side of \equ{e:wapprox}, we calculate
\begin{align}
\frac{\braket{\delta_{H^\#}|\sigma}}{\sqrt{\Delta}} &=
    \int_{H^\#} \frac{\sigma(\vy)}{\sqrt{\Delta}} \;d\vy \nonumber \\
    &\ge \int_{H_\R^\perp} \frac{\exp(-\pi R_3^2 \norm{\vy-\vv}_2^2)}
    {\sqrt{\Delta}}\;d\vy
    = \frac{r_3^{k-\ell}}{\sqrt{\Delta}}. \label{e:wlower}
\end{align}

The hardest step, and the last one remaining, is to bound the difference
between the two sides of \eqref{e:wapprox} with good probability.  We first
apply a Fourier transform to each function and TD:
\begin{eq}{e:pairings} \braket{\delta_Y|\sigma} = \braket{\hdelta_Y|\htau}
    \qquad \braket{\delta_{H^\#}|\sigma} = \braket{\hdelta_{H^\#}|\htau}.
\end{eq}
By \equ{e:gsum} and \Prop{p:prodconv},
\[ \htau(\vx) = r_3^k \exp(-\pi r_3^2 \norm{\vx}_2^2 -2\pi i\vx \cdot \vv) \]
for $\vx \in \Z^k$.  Then by equations~\eqref{e:gsum} and
\eqref{e:glattice},
\begin{align}
\norm{\htau}_1
    &= \sum_{\vx \in \Z^k} r_3^k \exp(-\pi r_3^2 \norm{\vx}_2^2) \nonumber \\
    &= \sum_{\vz \in \Z^k} \exp(-\pi R_3^2 \norm{\vz}_2^2) \nonumber \\
    &= 1 + \exp(-\Omega(R_3^2)). \label{e:hgnorm}
\end{align}
Meanwhile,
\begin{align}
\frac{\hdelta_{H^\#}(\vx)}{\sqrt{\Delta}} &= \delta_H(\vx)
    = \begin{cases} 1 & \text{if $\vx \in H$} \\
    0 & \text{if $\vx \notin H$} \end{cases} \nonumber \\
\frac{\hdelta_Y(\vx)}{\Lambda_1}
    &= \sum_{\lambda=1}^{\Lambda_1} \frac{\exp(-2\pi i
        \lambda \vx \cdot \vy_2)}{\Lambda_1}. \label{e:hdely}
\end{align}
We learn from this that $\hdelta_Y/\Lambda_1$ and
$\hdelta_{H^\#}/\sqrt{\Delta}$ agree on $H \subseteq \Z^k$, which is also
the support of $\hdelta_{H^\#}$.  We also learn that
\[ \frac{\abs{\hdelta_Y(\vx)}}{\Lambda_1} \le 1 \]
for all $\vx \in \Z^k$.

To establish an upper bound between the sides of \eqref{e:wapprox}, the
comparison of $\hdelta_Y$ and $\hdelta_{H^\#}$ yields
\begin{eq}{e:weylerr}
\left\lvert \frac{\braket{\hdelta_Y|\htau}}{\Lambda_1} -
\frac{\braket{\hdelta_{H^\#}|\htau}}{\sqrt{\Delta}} \right\rvert
\le \sum_{\substack{\vx \in \Z^k \setminus H}}
\back \frac{\abs{\hdelta_Y(\vx)}\;\abs{\htau}}{\Lambda_1}. \end{eq}
We split the sum in \equ{e:weylerr} into two parts, according to whether
$\norm{\vx}_\infty \ge R_3^2/2$ or $\norm{\vx}_\infty < R_3^2/2$.  The first
part can be bounded by applying \equ{e:gtail}:
\begin{eq}{e:xlarge} \sum_{\substack{\vx \in \Z^k \setminus H \\
    \norm{\vx}_\infty \ge R_3^2/2}} \back \back
    \frac{\abs{\hdelta_Y(\vx)}\;\abs{\htau(\vx)}}{\Lambda_1} \le
    \back \sum_{\substack{\vx \in \Z^k \setminus H \\
    \norm{\vx}_\infty \ge R_3^2/2}}
    \back \abs{\htau(\vx)} = \exp(-\Omega(R_3^2)). \end{eq}

Bounding the other part of \equ{e:weylerr} is the probabilistic part of
\Lem{l:basis}, and the part that most resembles the standard Weyl trick.
We set
\[ \zeta(\vx) \defeq \exp(-2\pi i \vx \cdot \vy_2). \]
Since \equ{e:hdely} is a finite geometric series, we obtain
\[ \abs{\hdelta_Y(\vx)} = \left\lvert\frac{1-\zeta(\vx)^{\Lambda_1}}
    {1-\zeta(\vx)}\right\rvert \le \frac{2}{\abs{1-\zeta(\vx)}}. \]
If $\vx \notin H$, then $\zeta(\vx)$ is a uniformly random element of
the unit circle $S^1 \subseteq \C$.  (This remark about the behavior of
$\zeta(\vx)$ requires the hypothesis that $H = H_1$.  For general $H$,
we can only say this about $\zeta(\vx)$ for $\vx \notin H_1$.)  Therefore
\[ \Pr\left[\frac{1}{\abs{1-\zeta(\vx)}} \ge 8R_3^{2k} \right]
    < \frac{r_3^{2k}}{4}. \]
Applying the union bound over all $\vx \notin H$ with $\norm{\vx}_\infty <
R_3^2/2$, we get
\begin{eq}{e:halfprob}
\Pr\Bigg[\max_{\substack{\vx \in \Z^k \setminus H \\
    \norm{\vx}_\infty < R_3^2/2}} \back \frac{1}{\abs{1-\zeta(\vx)}}
    \le 8R_3^{2k} \Bigg] > \frac{3}{4}. \end{eq}
Assume from here that the conditional event in \equ{e:halfprob}
holds.  If so, then \equ{e:xlarge} tells us that
\begin{eq}{e:xsmall}
\sum_{\substack{\vx \in \Z^k \setminus H \\ \norm{\vx}_\infty < R_3^2/2}}
    \back \back \frac{\abs{\hdelta_Y(\vx)}\;\abs{\htau(\vx)}}{\Lambda_1}
    \le \frac{16R_3^{2k}}{\Lambda_1} + \exp(-\Omega(R_3^2)). \end{eq}
By combining equations \eqref{e:pairings}, \eqref{e:weylerr},
\eqref{e:xlarge}, and \eqref{e:xsmall}, we obtain the rigorous version of
\equ{e:wapprox} that we wanted, namely that
\begin{eq}{e:wbound} \left\lvert \frac{\braket{\hdelta_Y|\htau}}{\Lambda_1} -
\frac{\braket{\hdelta_{H^\#}|\htau}}{\sqrt{\Delta}} \right\rvert
\le \frac{16R_3^{2k}}{\Lambda_1} + \exp(-\Omega(R_3^2)) \end{eq}
with good probability.

Finally, we combine equations \eqref{e:wupper}, \eqref{e:wlower}, and
\eqref{e:wbound}, and we recall that
\[ R_3 = R_1^4 \qquad R_1 \ge \max(C,\Delta) > C_1. \]
We obtain:
\begin{align*}
\frac{r_3^{k-\ell}}{\sqrt{\Delta}} &\le \frac{16R_3^{2k}}{\Lambda_1}
    + \exp(-\Omega(R_3)) \\
\Lambda_1 &\le 16\sqrt{\Delta}R_3^{3k-\ell} + R^{k-\ell}\exp(-\Omega(R_3)) \\
    &\le 16R_1^{12k-4\ell+1/2} + \exp(-\Omega(R_1^4)).
\end{align*}
If $C_1$ is large enough, this contradicts the formula $\Lambda_1 =
R_1^{C_1k}$.  In turn, this confirms the hypothesis of \Cor{c:simplex} with
probability at least $3/4$, and thus concludes the proof of \Lem{l:basis}.

\subsection{LLL and rational approximation}
\label{ss:lllrat}

In this subsection, we will analyze steps 4--7 of \Alg{a:abelian}.  Then we
will combine various key estimates to prove \Thm{th:qrst}, which as stated
means that we will set and justify the parameters $Q$, $R$, $S$, and $T$
to make \Alg{a:abelian} work and run in $\BQP$.  The relevant estimates
are stated as Lemmas~\ref{l:feature}, \ref{l:unoise}, and \ref{l:basis}
in previous subsections, and additional lemmas and inequalities in this
subsection.  Along the way, we will also estimate the constants $R_1$ and
$\Lambda$ that are involved in \Lem{l:basis} and that connect $R$ and $T$.

Note that all of these parameters can only depend on the bit complexity $n
= \norm{M}_\bit$ of the output $M$ of the algorithm.  For concreteness, we
assume that the entries of $M$ have $d$ binary digits in total, and that with
some notation for the signs and separator symbols of the entries, $n > d$.

Step 4 of \Alg{a:abelian} applies the LLL algorithm \cite{LLL:factoring}
to the lattice $L$ described in that step.  $L$ is given as a lattice in
$\Q^{k+1} \subseteq \R^{k+1}$; more precisely,
\[ L \subseteq \frac{1}{\lcm(T,Q)}\Z^{k+1}. \]
If $T,Q = \exp(\poly(n))$, then LLL runs in $\ccP$.  We will
also need the following fundamental properties of LLL.

\begin{theorem}[Lenstra, Lenstra, and Lovasz
    {\cite[Prop.~1.12]{LLL:factoring}}]
There is an algorithm in $\ccP$ that takes as input a rank $k$ lattice $L
\subseteq \Q^k$ with basis matrix $M$, and produces a short basis
\[ B = [\vb_1,\vb_2,\ldots,\vb_k] \]
of $L$.  If $\vx_1,\vx_2,\ldots,\vx_j \in L$ are any $m$ linearly
independent vectors in $L$, then
\[ \norm{\vb_j}_2 \le 2^{k-1}
    \max(\norm{\vx_1}_2,\norm{\vx_2}_2,\ldots,\norm{\vx_j}_2). \]
\label{th:lll} \end{theorem}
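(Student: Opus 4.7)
The plan is to recapitulate the original construction of Lenstra, Lenstra, and Lov\'asz, emphasizing the two ingredients needed downstream: polynomial-time termination and the inequality relating basis vectors to arbitrary short independent vectors. First I would clear denominators in $M$ so that $L \subseteq \Z^k$, which preserves bit complexity up to a polynomial factor. Then I would introduce the Gram--Schmidt orthogonalization $\vb^*_1,\ldots,\vb^*_k$ of a working basis $\vb_1,\ldots,\vb_k$ together with the coefficients $\mu_{ij} = \langle\vb_i,\vb^*_j\rangle/\norm{\vb^*_j}^2_2$. A basis is LLL-reduced (with parameter $3/4$) when it is \emph{size-reduced}, $|\mu_{ij}| \le 1/2$ for all $i > j$, and satisfies the \emph{Lov\'asz condition} $\norm{\vb^*_{i+1}+\mu_{i+1,i}\vb^*_i}^2_2 \ge \tfrac{3}{4}\norm{\vb^*_i}^2_2$ at every $i$.

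The algorithm alternates two elementary operations. Size-reduction replaces $\vb_i$ by $\vb_i - \lceil\mu_{ij}\rfloor\vb_j$ for each $j<i$, bringing $|\mu_{ij}|$ into $[-1/2,1/2]$. A swap exchanges $\vb_i$ with $\vb_{i+1}$ whenever the Lov\'asz condition fails at index $i$, and the Gram--Schmidt data is updated locally. To bound the number of swaps, I would introduce the potential $\Phi = \prod_{i=1}^{k-1} \det G_i$, where $G_i$ is the Gram matrix of $\vb_1,\ldots,\vb_i$. Each swap multiplies $\Phi$ by a factor of at most $3/4$, while $\Phi$ is a positive integer throughout because $L$ is integral; this caps the number of swaps at $O(k^2\norm{M}_\bit)$. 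Each size-reduction and Gram--Schmidt update costs polynomial time, provided one maintains $\mu_{ij}$ and $\norm{\vb^*_i}^2_2$ as rationals with denominators dividing fixed sub-determinants of the starting $M$.

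For the short-vector bound, I would proceed in two steps. Iterating the Lov\'asz condition yields $\norm{\vb^*_i}^2_2 \le 2^{j-i}\norm{\vb^*_j}^2_2$ for $i \le j$, which combined with size reduction and the Pythagorean identity $\norm{\vb_j}^2_2 = \norm{\vb^*_j}^2_2 + \sum_{i<j}\mu_{j,i}^2\norm{\vb^*_i}^2_2$ gives $\norm{\vb_j}^2_2 \le 2^{j-1}\norm{\vb^*_j}^2_2$. In the other direction, given linearly independent $\vx_1,\ldots,\vx_j \in L$, I would expand each as $\vx_r = \sum_i c_{r,i}\vb^*_i$ and let $i(r) = \max\{i : c_{r,i} \ne 0\}$. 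Linear independence forces the $i(r)$ to be distinct, so $\max_r i(r) \ge j$; since $\vx_r \in L$, the leading coefficient $c_{r,i(r)}$ is a non-zero integer, whence $\norm{\vx_r}^2_2 \ge \norm{\vb^*_{i(r)}}^2_2 \ge 2^{-(k-j)}\norm{\vb^*_j}^2_2$ for that $r$. Multiplying the two inequalities yields $\norm{\vb_j}^2_2 \le 2^{k-1}\max_r\norm{\vx_r}^2_2$, which already implies the stated bound on $\norm{\vb_j}_2$.

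The main obstacle I expect is the bit-complexity analysis of the intermediate rationals. A naive Gram--Schmidt run can blow up denominators exponentially, which would destroy the $\ccP$ claim. The standard fix, due to LLL, is to represent all $\mu_{ij}$ and $\norm{\vb^*_i}^2_2$ with denominators that divide sub-determinants of $M$; these sub-determinants themselves remain bounded by $\exp(\poly(\norm{M}_\bit))$ throughout the algorithm, so every number written down has polynomial bit complexity. This bookkeeping is the technical heart of the original LLL paper, and for a fully rigorous treatment I would defer to \cite{LLL:factoring}.
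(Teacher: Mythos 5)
The paper never proves this statement: it is quoted as an external result, with a citation to Prop.~1.12 of Lenstra--Lenstra--Lov\'asz, and used as a black box in steps 4--5 of Algorithm A. So there is no internal proof to compare against, and your reconstruction of the original argument is the right thing to attempt. Your outline is the standard one (size reduction, Lov\'asz condition, the potential $\prod_i \det G_i$ of an integral lattice for termination, and the two-sided comparison through the Gram--Schmidt norms $\norm{\vb^*_i}_2$), and deferring the bit-complexity bookkeeping of the rationals $\mu_{ij}$, $\norm{\vb^*_i}_2^2$ to the original paper is reasonable for a quoted theorem.

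One step as written is wrong, though the damage is local. You assert that linear independence of $\vx_1,\ldots,\vx_j$ forces the leading indices $i(r)$ to be distinct; this is false, e.g.\ $\vx_1 = \vb_2$ and $\vx_2 = \vb_1+\vb_2$ are independent but have $i(1)=i(2)=2$. What you actually need is only the weaker conclusion $\max_r i(r) \ge j$, and that is true for the right reason: if every $i(r)$ were $< j$, then all $j$ vectors $\vx_r$ would lie in the $(j-1)$-dimensional span of $\vb_1,\ldots,\vb_{j-1}$, contradicting independence. With that one-line repair, your chain $\norm{\vb_j}_2^2 \le 2^{j-1}\norm{\vb^*_j}_2^2 \le 2^{j-1}\,2^{i(r)-j}\norm{\vb^*_{i(r)}}_2^2 \le 2^{k-1}\norm{\vx_r}_2^2$ goes through, and, as you observe, this squared bound implies the unsquared form stated in the theorem. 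A small expository point: the claim that the leading coefficient $c_{r,i(r)}$ in the Gram--Schmidt expansion is a nonzero integer is correct but deserves a sentence --- it equals the integer coordinate of $\vx_r$ on $\vb_{i(r)}$ in the basis $B$, since $\vb_{i(r)}$ is the only one of $\vb_1,\ldots,\vb_{i(r)}$ that contributes to $\vb^*_{i(r)}$, and it does so with coefficient $1$.
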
 \eatline

The parameters $R$ and $T$ must both be chosen suitably for step 4 to
make effective use of LLL.  We first consider $R$.  Step 4 keeps all of
the vectors $\vb_j$ in the basis $B$ such that $\norm{\vb_j}_2 < r = 1/R$.
The point of this restriction is to keep $\vb_j$ when its denoised version
lies in $H_\R^\perp \oplus \R$.  Thus we can take $2r \le r_0$, where $r_0$
is the feature length of $H^\#$ discussed in \Sec{ss:dual} and illustrated
in \Fig{f:feature}, and the factor of 2 is a constant factor noise allowance.
In turn, \Lem{l:feature} says that
\[ r_0 \le \frac1{\Delta} = \frac1{\det(M^TM)}, \]
where $\Delta$ is the Gram determinant of $L$ defined in \eqref{e:gram}.
If $\vm_j$ is the $j$th column of $M$, then the Hadamard bound tells us that
\[ \Delta \le \norm{\vm_1}_2^2\;\norm{\vm_2}_2^2
    \cdots \norm{\vm_\ell}_2^2. \]
Suppose that the entries of $\vm_j$ have $d_j$ digits, and with a total of
$d$ digits across all columns.   Then $\norm{\vm_j}_2 \le 2^{d_j}$, so that
\begin{eq}{e:dbound} \Delta \le 2^{2d} < 2^{2n}, \end{eq}
and we can let
\begin{eq}{e:rnbound} R \ge 2^{2n+1}. \end{eq}

While we need to make $R$ large enough that step 4 only keeps vectors
$\vb_j \approxin H_\R^\perp \oplus \R$, we also need to make $T$ large
enough that step 4 retains an approximate basis of $H_\R^\perp \oplus \R$.
\Lem{l:basis} establishes that $L$ includes a basis of $H_\R^\perp \oplus \R$
whose vectors have norm at most $r_1 = 1/R_1$ in the $H_\R^\perp$ summands.
If we let
\begin{eq}{e:tbound} T \ge \Lambda R_1 \end{eq}
where $\Lambda$ is taken from \Lem{l:basis}, then $L$ includes a basis of
$H_\R^\perp \oplus \R$ whose vectors have norm at most $r_1$ in both
summands, and thus have norm at most $\sqrt{2}r_1$.  Since \Thm{th:lll}
requires an extra factor of $2^k$, it suffices to let
\begin{eq}{e:r1rbound} R_1 \ge 2^{k+1} R. \end{eq}

In step 4, we also need to bound the noise term $\vu$ in the sample
$\vy_1$ as discussed in \Lem{l:unoise}.  If we compare \Lem{l:unoise}
to \Lem{l:basis}, it suffices to satisfy the latter lemma by letting
\begin{eq}{e:sbound1} S \ge \Lambda\sqrt{k}. \end{eq}
This bound shows that LLL will $k+1-\ell$ vectors of norm at most $r$,
but we also want to know that any such vector is a vector in $H_\R^\perp
\oplus \R$ plus a noise term. In other words, we want to know that there
is little enough noise in $\vb_j$ with $\norm{\vb_j}_2 \le r$ that it must
be displaced from the stripe through the origin in \Fig{f:feature} and
not any parallel stripe.

The basis $E$ in \eqref{e:lbasis}, has only one vector with a non-zero
last component, namely $(\tvy_1,t)$.  Therefore each LLL vector $\vb_j$
with $\norm{\vb_j}_2 \le r$ uses $(\tvy_1,t)$ with a coefficient $\lambda_j$
with $\abs{\lambda_j} \le rT$.  Using the fact that $r$ is at most half of
the feature length $r_0$, and again using \Lem{l:unoise} the corresponding
noiseless vector $\lambda_j\vy_0 \in H^\#$ lies on the stripe through the
origin with good probability if
\begin{eq}{e:sbound2}
rTs\sqrt{k} < r \quad \iff \quad S > T\sqrt{k}.
\end{eq}

The first part of step 5 finds a nonsingular square submatrix $B_2$ of
$B_1$ to then form $B_3 = B_1 B_2^{-1}$.  The matrix $B_3$ is essentially
the reduced column echelon form of $B_1$ and can be found efficiently.
We will need an upper bound on the operator norm $\norm{B_2^{-1}}_{2 \to
2}$ in order to calculate a noise bound and then another bound on $S$
in the second part of step 5.

\begin{lemma} $B_1$ has a nonsingular $(k+1-\ell) \times (k+1-\ell)$
submatrix $B_2$ that contains the last row and that can be found in $\ccP$.
If $R \ge \sqrt{k+1}$ and $T > 1$, then
\[ \abs{\det(B_2)} \ge t \qquad \norm{B_2^{-1}}_{2 \to 2} \le T. \]
\label{l:b2inv} \end{lemma}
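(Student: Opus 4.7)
The plan is to establish the lemma in three stages: showing that every column of $B_1$ has nonzero last coordinate, constructing $B_2$ efficiently, and proving the two numerical bounds. For the first stage, I would observe that any lattice vector $\vb \in L$ with vanishing last coordinate lies in $L \cap (\R^k \oplus 0) = \Z^k \oplus 0$, and hence satisfies $\norm{\vb}_2 \ge 1$ unless $\vb = \vec 0$. Since each column $\vb_j$ of $B_1$ obeys $\norm{\vb_j}_2 \le r = 1/R \le 1/\sqrt{k+1} < 1$, and since the columns of $B_1$ are linearly independent (being part of an LLL basis), each $\vb_j$ must have a nonzero last coordinate $\lambda_j t$ with $\lambda_j \in \Z \setminus \{0\}$. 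In particular, the last row of $B_1$ is entrywise nonzero.

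To exhibit $B_2$, I would use that $B_1$ has rank $k+1-\ell$ and its last row is nonzero: by the exchange lemma, the last row extends to a basis of the row span of $B_1$, with the other $k-\ell$ basis rows chosen from rows $1, \ldots, k$. A single pass of Gaussian elimination locates such a choice in polynomial time, and the corresponding rows index the desired nonsingular $B_2$. For the lower bound $\abs{\det(B_2)} \ge t$, let $L_1$ denote the column lattice of $B_1$ and let $\pi:\R^{k+1} \to \R^{k+1-\ell}$ be the coordinate projection onto the rows used in $B_2$. Because $B_2$ is nonsingular, $\pi|_{L_1}$ is injective and $\pi(L_1)$ is a rank-$(k+1-\ell)$ lattice of covolume $\abs{\det(B_2)}$. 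The projection of $\pi(L_1)$ onto its last coordinate equals $g t \Z$ with $g = \gcd(\lambda_1, \ldots, \lambda_{k+1-\ell}) \ge 1$, while its kernel $\pi(L_1 \cap (\R^k \oplus 0))$ is a sublattice of $\Z^{k-\ell}$ (the chosen non-last rows lying among the integer rows $1, \ldots, k$) and hence has $(k-\ell)$-dimensional covolume at least $1$. The standard short exact sequence of lattices then yields $\abs{\det(B_2)} \ge g t \ge t$.

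For the operator-norm bound, I would use the cofactor formula: each entry of $B_2^{-1}$ is $\pm M_{ij}/\det(B_2)$, where $M_{ij}$ is a $(k-\ell) \times (k-\ell)$ minor of $B_2$. Each column of such a minor is a restriction of some $\vb_{j'}$ with $\norm{\vb_{j'}}_2 \le r$, so Hadamard's inequality gives $\abs{M_{ij}} \le r^{k-\ell}$, whence $\abs{(B_2^{-1})_{ij}} \le r^{k-\ell}/t$. The Frobenius bound then yields $\norm{B_2^{-1}}_{2 \to 2} \le (k+1-\ell)\, r^{k-\ell}/t$, which collapses to $T$ once $R^{k-\ell} \ge k+1-\ell$; this follows from $R \ge \sqrt{k+1}$ when $k-\ell \ge 2$, and the boundary cases $k-\ell \in \{0,1\}$ are trivial (in the case $k = \ell$ the bound simply reads $\abs{\lambda_1 t}^{-1} \le T$, while $k-\ell = 1$ reduces to $R \ge 2$, vastly weaker than \equ{e:rnbound}). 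The main delicate point will be the covolume factorization in the determinant argument, which hinges on the integrality of the kernel lattice; this is secured precisely by the first-stage fact that no nonzero column of $B_1$ can have vanishing last coordinate.
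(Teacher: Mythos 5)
Your proof is correct, but it reaches both bounds by routes genuinely different from the paper's, and the comparison is instructive. For the determinant, the paper argues on the column side: since every $\vb_j$ has a nonzero last coordinate (the same observation you make first), the list $\vb_1,\ldots,\vb_{k+1-\ell},\ve_1,\ldots,\ve_k$ spans $\R^{k+1}$, so greedily extracting a basis yields a matrix $B_4$ containing $B_1$ together with $\ell$ standard basis vectors; after reordering, $B_4$ is block triangular with blocks $B_2$ and $I_\ell$, and since the columns of $B_4$ lie in $L$ (covolume $t$) and are independent, $\abs{\det(B_2)} = \abs{\det(B_4)} \ge t$. This produces $B_2$ and its determinant bound in one stroke. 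Your row-side argument --- project the column lattice $L_1$ onto the $B_2$-rows and factor the covolume of the projection as (covolume of an integral kernel lattice) times $gt$ --- is a valid, self-contained alternative, at the price of invoking the exact-sequence covolume formula. The larger divergence is in the norm bound. The paper avoids cofactors entirely: $\norm{B_2}_{2\to2} \le \norm{B_1}_{2\to2} \le r\sqrt{k+1-\ell} \le 1$ by the hypothesis $R \ge \sqrt{k+1}$, so all singular values of $B_2$ are at most $1$; since their product equals $\abs{\det(B_2)} \ge t$, the smallest is at least $t$, giving $\norm{B_2^{-1}}_{2\to2} \le T$ exactly under the stated hypotheses. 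Your adjugate/Hadamard/Frobenius route is lossier: it needs $R^{k-\ell} \ge k+1-\ell$, which does not follow from $R \ge \sqrt{k+1}$ when $k-\ell = 1$ and $k \le 2$, and you patch that case by citing \eqref{e:rnbound}, a parameter constraint of the surrounding algorithm rather than a hypothesis of the lemma. That suffices for the application, but to prove the lemma as stated note that in the $2\times 2$ case $\norm{B_2^{-1}}_{2\to2} = \norm{B_2}_{2\to2}/\abs{\det(B_2)} \le 1/t$, which is precisely the singular-value observation; adopting it in general removes the edge cases and the dependence on \eqref{e:rnbound}.
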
 \eatline

\begin{proof} Recall that the defining basis \eqref{e:lbasis} of $L$ consists
of the standard basis of $\R^k$ plus one other vector $(\tvy_1,t)$.  $L$
therefore has lattice volume $t$.  Since $r < 1$, the columns of $B_1$
must use the last coordinate.  Thus the vectors
\begin{eq}{e:span} \vb_1,\vb_2,\ldots,\vb_{k+1-\ell},
    \ve_1,\ve_2,\ldots,\ve_k \end{eq}
must span $\R^{k+1}$ (over $\R$).  We can make a basis of $\R^{k+1}$
by taking those vectors in \eqref{e:span} that are linearly independent
from the previous vectors; we let $B_4$ be the matrix of these vectors.
The vectors in $B_1$ itself (since they come from an LLL basis of $L$),
so $B_1$ is a submatrix of $B_4$.  After reordering the basis of $\R^k$,
the matrix $B_4$ has the form
\[ B_4 = \begin{bmatrix} B_5 & I_\ell \\ B_2 & 0 \end{bmatrix}. \]
Moreover, the columns of $B_4$ all lie in $L$ and generate either
$L$ or a sublattice, so
\[ \abs{\det(B_2)} = \abs{\det(B_4)} \ge t. \]
This proves that we can find $B_2$ in $\ccP$, and it establishes
the lower bound on its determinant.

For the upper bound on $\norm{B_2^{-1}}_{2 \to 2}$, the fact that
$\norm{\vb_j}_2 \le r$ for each $1 \le j \le k+1-\ell$ tells us that
$\norm{B_1}_{1 \to 2} \le r$ and therefore that
\begin{multline}
\norm{B_2}_{2 \to 2} \le \norm{B_1}_{2 \to 2} \\
    \le \sqrt{k+1-\ell}\norm{B_1}_{1 \to 2} \le r\sqrt{k+1-\ell} \le 1.
\label{e:b1b2bound} \end{multline}
Let
\[ 1 \ge \sigma_1 \ge \sigma_2 \ge \cdots \ge \sigma_{k+1-\ell} > 0 \]
be the singular values of $B_2$, and recall that
\[ \prod_j \sigma_j = \abs{\det(B_2)} \ge t. \]
Therefore the smallest singular value of $B_2$ satisfies
\[ \sigma_{k+1-\ell} \ge t. \]
Since the singular values of $B_2^{-1}$ are
\[ \sigma_{k+1-\ell}^{-1} \ge \sigma_{k-\ell}^{-1}
    \ge \cdots \ge \sigma_1^{-1} > 0, \]
we learn that
\[ \norm{B_2^{-1}}_{2 \to 2} = \sigma_{k+1-\ell}^{-1} \le T, \]
as desired.
\end{proof}

To complete step 5, we need noise bounds for each matrix in the equation $B_3
= B_1 B_2^{-1}$.  For this purpose, let $A_1$ be the noiseless version of the
matrix $B_1$, in which the component $\lambda_j\vy_1$ in each term $\vb_j$
is replaced with $\lambda_j\vy_0$.  Likewise let $A_2$ be the submatrix
of $A_1$ in the same position as $B_1$ lies in $B_2$, and let $A_3 =
A_1 A_2^{-1}$.  We also denote the differences as matrix noise terms:
\begin{align*}
B_1 &= A_1 + U_1 & B_2 &= A_2 + U_2 \\
B_3 &= A_3 + U_3 & B_2^{-1} &= A_2^{-1} + V_2.
\end{align*}

\begin{lemma} If
\[ R \ge k+1 \qquad T > 1 \qquad S \ge 2T^2 \]
and the hypotheses of \Lem{l:unoise} hold, then
\[ \norm{U_3}_{2 \to 2} < 4sT^3 \]
with probability at least $3/4$.
\label{l:u3} \end{lemma}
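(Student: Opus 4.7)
The plan is to expand $U_3$ as a sum of two terms each involving the column-noise matrix $U_1$, then bound every factor using the hypotheses $R \ge k+1$, $T>1$, and $S \ge 2T^2$ together with the noise bound from \Lem{l:unoise}.

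First I would write the basic algebraic identity
\[ U_3 \;=\; B_1B_2^{-1} - A_1A_2^{-1}
    \;=\; B_1(B_2^{-1}-A_2^{-1}) + (B_1-A_1)A_2^{-1}
    \;=\; B_1 V_2 + U_1 A_2^{-1}, \]
together with the resolvent-style identity $V_2 = -A_2^{-1} U_2 B_2^{-1}$. The key structural observation is that every column of $U_1$ differs from the corresponding column of $A_1$ only through the coefficient $\lambda_j$ of the last basis vector $(\tvy_1,t)$; concretely, $U_1 = (\vu,0)\vec{\lambda}^{\,T}$ is a rank-one matrix, where $\vu = \tvy_1 - \tvy_0$ and $\vec\lambda = (\lambda_1,\dots,\lambda_{k+1-\ell})$. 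Hence $\norm{U_1}_{2\to 2} = \norm{\vu}_2\norm{\vec\lambda}_2$, and the same identity applies to the submatrix $U_2$, giving $\norm{U_2}_{2\to 2}\le \norm{U_1}_{2\to 2}$.

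Next I would assemble the three quantitative bounds needed. By \Lem{l:unoise}, $\norm{\vu}_2 \le s\sqrt{k}$ with probability at least $3/4$; I will work inside this event from here on. Since each $\abs{\lambda_j} \le rT = T/R$, the assumption $R\ge k+1$ forces $\norm{\vec\lambda}_2 \le \sqrt{k+1-\ell}\,T/R \le T/\sqrt{k+1}$, so $\norm{U_1}_{2\to 2} \le sT$. From \Lem{l:b2inv} we already have $\norm{B_2^{-1}}_{2\to 2}\le T$ and $\norm{B_1}_{2\to 2}\le 1$. The hypothesis $S \ge 2T^2$ (i.e.\ $s \le 1/(2T^2)$) then gives $\norm{U_2 B_2^{-1}}_{2\to 2} \le sT^2 \le 1/2$, so the Neumann expansion of $A_2^{-1} = B_2^{-1}(I-U_2 B_2^{-1})^{-1}$ yields $\norm{A_2^{-1}}_{2\to 2} \le 2T$ and consequently $\norm{V_2}_{2\to 2} \le \norm{A_2^{-1}}\norm{U_2}\norm{B_2^{-1}} \le 2sT^3$.

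Combining these estimates in the decomposition above gives
\[ \norm{U_3}_{2\to 2} \;\le\; \norm{B_1}\norm{V_2}
    + \norm{U_1}\norm{A_2^{-1}}
    \;\le\; 2sT^3 + 2sT^2 \;\le\; 4sT^3, \]
where the last inequality uses $T>1$. The main obstacle I anticipate is keeping the factors of $\sqrt{k}$ under control so that the clean hypothesis $S\ge 2T^2$ (rather than something like $S\ge 2T^2\sqrt{k}$) suffices; this is exactly where the rank-one structure of $U_1$ and the hypothesis $R\ge k+1$ must be used in tandem, since a naive column-by-column Frobenius bound would lose the factor $\sqrt{k+1}$ that absorbs $\norm{\vu}_2 \le s\sqrt{k}$.
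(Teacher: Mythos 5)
Your proof is correct and takes essentially the same route as the paper's: bound $\norm{U_1}_{2\to 2}, \norm{U_2}_{2\to 2}$ by $sT$ using $\abs{\lambda_j} \le rT$, the noise bound of \Lem{l:unoise}, and $R \ge k+1$; control $B_2^{-1}$ via \Lem{l:b2inv} together with a Neumann-series argument made possible by $S \ge 2T^2$; and combine with $T>1$ to get $4sT^3$. The only cosmetic differences are your two-term splitting $U_3 = B_1V_2 + U_1A_2^{-1}$ (the paper uses $U_1B_2^{-1} + B_1V_2 - U_1V_2$) and your exact rank-one identity for $\norm{U_1}_{2\to 2}$ where the paper uses a $\sqrt{k+1-\ell}$ column-norm estimate; both yield the same bounds.
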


\begin{proof} By \Lem{l:unoise},
\[ \norm{U_1}_{1 \to 2} \le rT\sqrt{k}s \]
with good probability. Therefore
\begin{multline}
\norm{U_2}_{2 \to 2} \le \norm{U_1}_{2 \to 2} \le
    \sqrt{k+1-\ell}\norm{U_1}_{1 \to 2} \\ \le \sqrt{(k+1-\ell)k}rsT < sT,
\label{e:u1u2bound} \end{multline}
where the last step uses the hypothesis $R \ge k+1$.

In the remainder of the proof, we omit the subscript from the matrix norm:
\[ \norm{X} \defeq \norm{X}_{2 \to 2}. \]
Using the hypothesis $S \ge 2T^2$ and \Lem{l:b2inv}, we learn that
\begin{eq}{e:b2u2}
\norm{B_2^{-1}U_2} \le \norm{B_2^{-1}}\;\norm{U_2} < sT^2 \le \frac12.
\end{eq}
Using this, we next obtain a bound on $\norm{V_2}$ using a power
series method.

If we write
\[ A_2^{-1} = (B_2-U_2)^{-1} = \big(B_2(I-B_2^{-1}U_2)\big)^{-1}, \]
then by \eqref{e:b2u2},
\[ A_2^{-1} = (I + B_2^{-1}U_2 + (B_2^{-1}U_2)^2 + \cdots)B_2^{-1} \]
is a convergent matrix power series.  Moreover
\begin{align}
\norm{V_2} &\le \sum_{j=1}^\infty
        \norm{B_2^{-1}U_2}^j\;\norm{B_2^{-1}} \nonumber \\
    &\le \norm{B_2^{-1}U_2}\left(\sum_{j=0}^\infty
        \frac1{2^j}\right)\;\norm{B_2^{-1}} \nonumber \\
    &= 2\norm{B_2^{-1}U_2}\;\norm{B_2^{-1}} < 2sT^3. \label{e:v2bound}
\end{align}

Finally
\begin{align}
B_3 &= B_1B_2^{-1} = (A_1 + U_1)(A_2^{-1} + V_2) \nonumber \\
    &= A_1A_2^{-1} + U_1(A_2^{-1}+V_2) + (A_1+U_1)V_2 -U_1V_2 \nonumber \\
    &= A_1A_2^{-1} + U_1B_2^{-1} + B_1V_2-U_1V_2. \label{e:b3}
\end{align}
Combining \equ{e:b3} with the inequalities \eqref{e:b1b2bound},
\eqref{e:u1u2bound}, \eqref{e:v2bound}, $S \ge 2T^2$, and $T > 1$,
we obtain
\begin{align*}
\norm{U_3} &= \norm{B_3-A_1A_2^{-1}} \\
    &\le \norm{U_1}\;\norm{B_2^{-1}} + \norm{B_1}\;\norm{V_2}
        + \norm{U_1}\;\norm{V_2} \\
    &< sT^2 + 2sT^3 + 2s^2T^4 < 4sT^3,
\end{align*}
as desired.
\end{proof}

Following step 5, we reorder the $k$ coordinates of $\Z^k$ so that
$B_2$ occupies the last $k+1-\ell$ rows of $B_1$.  It follows that
\[ A_3 = \begin{bmatrix} A_4 \\ I_{k-\ell} \end{bmatrix} \oplus [1]. \qquad
B_3 = \begin{bmatrix} B_4 \\ I_{k-\ell} \end{bmatrix} \oplus [1]. \]
\Lem{l:u3} implies that each entry of $A_4$ differs from the
corresponding entry of $B_4$ by less than $4sT^3$.

The remainder of step 5 removes the noise $U_3$ from $B_3$ in each entry
separately using an estimate of Legendre:

\begin{theorem}[Legendre {\cite[Thm.~184]{HW:numbers}}] If $x \in \R$
and $a/b \in \Q$ satisfy
\[ \left\vert x-\frac{a}{b}\right\vert < \frac1{2b^2}, \]
then $a/b$ is a convergent in the continued fraction expansion of $x$.
\label{th:legendre} \end{theorem}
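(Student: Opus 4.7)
The plan is to derive the classical result by expressing $x$ in terms of the convergents of $a/b$ with one extra formal partial quotient, and then showing that this partial quotient exceeds $1$, which forces $a/b$ to appear as a convergent of $x$. Without loss of generality we may assume $\gcd(a,b)=1$ and $b\ge 1$.

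First I would fix a continued fraction expansion $a/b = [a_0;a_1,\ldots,a_n]$ of $a/b$ and let $p_k/q_k$ denote the $k$-th convergent, so that $p_n = a$ and $q_n = b$. Using the ambiguity $[a_0;\ldots,a_n] = [a_0;\ldots,a_n-1,1]$ available whenever $a_n > 1$, I would choose the parity of $n$ so that $(-1)^n$ matches the sign of $x-a/b$. Next I would introduce the unique real number $\omega$ satisfying
\[
x \;=\; \frac{\omega p_n + p_{n-1}}{\omega q_n + q_{n-1}},
\]
so that appending $\omega$ as a further partial quotient produces $x$ exactly. Using the standard identity $p_{n-1}q_n - p_n q_{n-1} = (-1)^n$, this formula gives
\[
x - \frac{a}{b} \;=\; \frac{(-1)^n}{q_n(\omega q_n + q_{n-1})}.
\]

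The next step is to exploit the hypothesis $|x-a/b| < 1/(2b^2)$. Combining with the displayed identity and $q_n = b$, I get
\[
\frac{1}{q_n(\omega q_n + q_{n-1})} < \frac{1}{2q_n^{2}},
\]
hence $\omega q_n + q_{n-1} > 2 q_n$, i.e.\ $\omega > 2 - q_{n-1}/q_n$. Since $0 \le q_{n-1} < q_n$, this yields $\omega > 1$. The parity choice also ensures $\omega > 0$ so the formal continued fraction makes sense.

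Finally, because $\omega > 1$, I can legitimately expand $\omega$ itself as a (finite or infinite) simple continued fraction $\omega = [b_0;b_1,\ldots]$ with $b_0 \ge 1$. Concatenating gives a bona fide continued fraction expansion
\[
x \;=\; [a_0;a_1,\ldots,a_n,b_0,b_1,\ldots],
\]
in which $a/b = p_n/q_n$ is by construction the $n$-th convergent. The main obstacle I anticipate is the bookkeeping around the parity choice and the possibility $a_n = 1$ at the end of the initial expansion; this is a minor but essential subtlety, since otherwise the sign of $(-1)^n$ could force $\omega < 0$ and the argument would collapse. All other steps are routine identities among convergents.
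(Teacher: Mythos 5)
Your proof is correct: it is exactly the classical argument (choose the parity of the expansion of $a/b$ so that $(-1)^n$ matches the sign of $x-a/b$, define $\omega$ by $x=(\omega p_n+p_{n-1})/(\omega q_n+q_{n-1})$, and deduce $\omega>1$ from the $1/(2b^2)$ bound), which is the proof of Theorem~184 in the Hardy--Wright reference that the paper cites; the paper itself states the result without proof. The only nit is that $q_{n-1}<q_n$ can fail (e.g.\ $n=1$, $a_1=1$), but $q_{n-1}\le q_n$ always holds and still gives $\omega>1$, so nothing breaks.
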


If we use the continued fraction algorithm and \Thm{th:legendre} to find
$A_4$ from $B_4$, then we need an upper bound on the denominators that
arise in $A_4$.  Borrowing from step 6, the columns of
\[ A_5 = \begin{bmatrix} I_\ell \\ -A_4^T \end{bmatrix} \]
are a basis of the kernel of $A_3^T$ in $\R^k \subseteq \R^{k+1}$.
Since the columns of $A_3$ are an $\R$-basis of $H_\R^\perp \oplus \R$
in the success case, the columns of $A_5$ are then a basis of $H_\R$.
The columns of the hidden matrix $M$ are another basis of $H_\R$.  Thus $A_5$
is the reduced column echelon form (RCEF) of $M$.  The matrix
\[ M_2 = M\;M_1^{-1} = \begin{bmatrix} I_\ell \\ M_3 \end{bmatrix}, \]
where $M_1$ is the initial $\ell \times \ell$ submatrix of $M$, is also
the RCEF of $M$.  Thus
\[ A_4 = -M_3^T. \]
By the adjugate formula for the inverse of a matrix, the denominators in
$M_1^{-1}$ and therefore in $M_2$ and $M_3$ all divide $\abs{\det(M_1)}$.
By the same Hadamard bound argument that provides the upper bound
\eqref{e:dbound}, and assuming \eqref{e:rnbound}, we learn that
\[ \abs{\det(M_1)} < 2^n < R. \]
It is thus valid to use $R$ as the denominator cutoff in the continued
fraction algorithm.   Moreover, if
\begin{eq}{e:rst} 4sT^3 \le r^2 \quad \iff \quad S \ge 4R^2T^3, \end{eq}
then \Lem{l:u3} yields the hypothesis of \Thm{th:legendre}.

To conclude the analysis of step 5 and the constants that govern
\Alg{a:abelian}, we draw together the hypotheses of Lemmas~\ref{l:feature},
\ref{l:unoise}, \ref{l:basis}, \ref{l:b2inv}, and \ref{l:u3}, together
with the inequalities \eqref{e:dbound}--\eqref{e:sbound2}, and \eqref{e:rst}:
\begin{gather*}
\Delta < 2^{2n} \qquad R \ge 2^{2n+1} \qquad
    R \ge \sqrt{k+1} \qquad R \ge k+1 \\
R_1 \ge \max(\Delta,2^k,C) \qquad R_1 \ge 2^{k+1} R
    \qquad \Lambda = R_1^{Ck} \\
T \ge \Lambda R_1 \qquad T > 1 \qquad S \ge \max(\Delta^2,C) \qquad S \ge
    \Lambda \sqrt{k} \\
S > T \sqrt{k} \qquad S \ge 2T^2 \qquad S \ge 4R^2T^3 \qquad Q \ge S^2.
\end{gather*}
(Recall that $C$ is a universal constant large enough to make everything
work.)  The inequalities can all be satisfied if we take, in order:
\begin{gather*}
R = 2^{\Theta(n)} \qquad R_1 = 2^{\Theta(k)}R = 2^{\Theta(nk)} \\
\Lambda = R_1^{\Theta(k)} = 2^{\Theta(nk^2)} \qquad
T = \Lambda R_1 = 2^{\Theta(nk^2)} \\
S = 4R^2T^3 = 2^{\Theta(nk^2)} \qquad Q = S^2 = 2^{\Theta(nk^2)}.
\end{gather*}
This establishes the first half of \Thm{th:qrst}, that
\[ Q,R,S,T = \exp(\poly(n)). \]

We have also confirmed the second half of \Thm{th:qrst} through step 5,
given that steps 1--3 run in $\BQP$, steps 4 and 5 run in $\ccP$, and
\Lem{l:unoise} guarantees success with good probability.  Step 6 also runs
in $\ccP$ using the Kannan--Bachem theorem that the Smith normal form (SNF)
of an integer matrix can be computed in $\ccP$.  Finally step 7 runs in
$\BQP$ because it is the Shor--Kitaev algorithm.

The only remaining task is that we need a certain generalization of Smith
normal form and the Kannan--Bachem result.  In the standard version, if $A
\in M(k \times \ell,\Z)$ is a $k \times \ell$ matrix with entries in $\Z$,
then its SNF
\[ X  = VAW \in M(k \times \ell,\Z) \]
is a non-negative diagonal matrix such that each diagonal entry divides
the next one, and such that $V \in \GL(k,\Z)$ and $W \in \GL(\ell,\Z)$.
The same definition works equally well when $A,X \in M(k \times \ell,\Q)$,
as long as $V$ and $W$ are still integer matrices.  This generalization of
Smith normal form is equivalent to the standard one, because we can rescale
$A$ to make it an integer matrix, and rescale $X$ by the same ratio.  Also,
just as the SNF of an integer matrix generalizes the GCD of two integers,
the SNF of a rational matrix generalizes the fact that any two rational
numbers have a GCD if we interpret $\Q$ as a $\Z$-module. For example,
\[ \gcd{}_\Z\bigl(\frac13,\frac34\bigr) = \frac1{12}. \]

\section{\Thm{th:shift}: AHShP and VAHSP}
\label{s:ahshp}

In this section, we outline the proof of \Thm{th:shift} by giving
the corresponding algorithm, which we call a collimation sieve.
In Sections~\ref{ss:phase}-\ref{ss:measure}, we will first discuss ideas
in the algorithm that are based on earlier hidden shift algorithms as well
as \Sec{s:abelian}.  The algorithm itself is in \Sec{ss:sieve}.

Throughout this section, let $k \ge 1$ be an integer, and let $H \subseteq
\Z^k$ be a visible subgroup of $\Z^k$ given by a generator matrix $M$.
Let $\vs \in \Z^k/H$ be a shift hidden by a pair of injective hiding
functions
\[ f_0,f_1:\Z^k/H \into X. \]
It will be convenient to combine these two hiding functions into one
$H$-periodic hiding function
\begin{eq}{e:combhide}
    f:\Z^k/H \times \{0,1\} \to X \qquad f(\vx,a) = f_a(\vx). \end{eq}
We assume that the hidden shift is represented by a vector $\vs \in \Z^k$
such that
\begin{eq}{e:vsbound} \norm{\vs}_\infty \le 2^{t-1}, \end{eq}
so that the $k$ components of $\vs$ together have bit complexity at most
$n = kt$.  We allow the computed answer to be any element of the coset
$\vs+H$, not necessarily one that satisfies \eqref{e:vsbound}, despite
our assumption that such a representative exists.  Note that finding a
low-norm representative $\vs$ is the close vector problem (CVP) for the
lattice $H$, which can be solved exactly in time $2^{O(k)}$ for the norm
$\norm{\vs}_2$ \cite{MV:voronoi} and within a factor of $1+\eps$ for any
$\eps > 1$ for the norm $\norm{\vs}_\infty$ \cite{BN:sampling}.

In \Sec{s:abelian}, $s = 1/S$ is used to denote the reciprocal width
of an approximate Gaussian state $\ket{\psi_{GC}}$.  Since we are now
using $s$ for a hidden shift, for the moment we let $G$ be the width of
$\ket{\psi_{GC}}$, and we let $g = 1/G$.  We also keep the QFT radix $Q$
and its reciprocal $q = 1/Q$, and we keep the relation $Q \ge G^2$.

\subsection{Phase vectors}
\label{ss:phase}

A \emph{phase vector} is a quantum state of the form
\begin{eq}{e:phase} \ket{\phi_Y} \propto \sum_{a=0}^{\ell-1}
    \exp(2\pi i \vy_a \cdot \vs)\ket{a}, \end{eq}
where the \emph{phase multipliers} $\vy_a \in H^\# \subseteq (\R/\Z)^k$
are stored classically as a list
\[ Y = (\vy_0,\vy_1,\ldots,\vy_{\ell-1}). \]
(In some cases, the multiplier list $Y$ is stored in a compressed form.)
Following Peikert \cite{Peikert:csidh} and earlier work by the author
\cite{K:dhsp2}, phase vectors are the main data structure in our algorithm.
In this subsection, we give some of their basic properties.

If we add a constant $\vz$ to each $\vy_a$, the phase vector $\ket{\phi_Y}$
changes by a global phase and thus represents the same quantum state
$\ketbra{\phi_Y}$.  We are thus free to translate $Y$ by any fixed $\vz \in
(\R/\Z)^k$ without materially altering $\ket{\phi_Y}$.

The author's first algorithm \cite{K:dhsp} used qubit phase vectors (or
\emph{phase qubits})
\begin{eq}{e:qphase} \ket{\phi_{\vy}} = \frac{\ket{0}
    + \exp(2\pi i \vy \cdot \vs)\ket{1}}{\sqrt2} \end{eq}
with a single nonzero multiplier $\vy$.  In the algorithm for \Thm{th:shift},
phase qubits arise at the beginning and end of the collimation sieve.
At the beginning, they are created from the hiding function (\Sec{ss:create})
with an approximately random value of $\vy$.  The sieve itself uses longer
phase vectors in its intermediate stages.  At the end, it produces a qubit
phase vector whose multiplier $\vy$ approximates a desired target value
(\Sec{ss:sieve}).

Let $\ket{\phi_Y}$ and $\ket{\phi_Z}$ be two phase vectors with multipliers
\[ Y = \vy_0,\vy_1,\ldots,\vy_{\ell-1}
    \qquad Z = \vz_0,\vz_1,\ldots,\vz_{j-1}. \]
Their direct sum $\ket{\phi_Y} \oplus \ket{\phi_Z}$ is a phase vector with
multiplier list $(Y,Z)$, meaning here the concatenation of $Y$ followed
by $Z$.  Their tensor product $\ket{\phi_Y} \tensor \ket{\phi_Z}$ is also a
phase vector, with multipliers $\vy_a+\vz_b$ for all $0 \le a < \ell$ and
$0 \le b < j$.  We can think of the multiplier list of $\ket{\phi_Y}
\tensor \ket{\phi_Z}$ as a convolution $Y*Z$.

If $\ket{\phi_Y}$ is a phase vector of length $\ell$, we can make a shorter
phase vector by measuring the value of any function $c:[0,\ell)_\Z \to
C$, where $C$ is an unstructured set.  We can restore the standard form
\eqref{e:phase} by enumerating the solutions to the equation $c(a) = x$,
where $x$ is the measured value.

We will combine these primitives to make Algorithms \ref{a:shsieve} and
\ref{a:shfinal} in \Sec{ss:sieve}.  One primitive that we still clearly
need is phase vector creation, which we describe in the next subsection.

\subsection{Creating phase vectors}
\label{ss:create}

We can follow the first three steps of \Alg{a:abelian} to approximate
a phase qubit $\ket{\phi_{\vy}}$ as in \eqref{e:qphase}, where
$\vy \in H^\#$ is approximately uniformly distributed.  We can produce an
approximate version of $\ket{\phi_{\vy}}$ by evaluating the hiding function
$f$ in \eqref{e:combhide} and then applying a QFT measurement.

As in \Sec{ss:ahsp}, let $U_f$ be the unitary dilation of $f$ restricted
to the region $\norm{\vx}_\infty < Q/2$, which is in turn interpreted as
a subset of the finite group $(\Z/Q)^k$.  $U_f$ has two input registers,
a $(\Z/Q)^k$ register and a qubit register, and an output register that can
hold an element of $X$ (or a finite subset if $X$ is infinite).  Since $U_f$
is a unitary dilation, as usual it retains the input along with the output.

To approximately create $\ket{\phi_{\vy}}$, we first make
\[ U_f(\ket{\psi_{GC}} \tensor \ket{+}), \]
where $\ket{\psi_{GC}}$ is the approximate Gaussian state in \Alg{a:abelian}
and
\[ \ket{+} \propto \ket{0} + \ket{1} \]
is the standard plus state. As in \Alg{a:abelian}, we discard the output
register, and we measure the input $(\Z/Q)^k$ register in the Fourier basis.
Let
\[ \vy_1 \in q(\Z/Q)^k \subseteq (\R/\Z)^k \]
be the rescaled result of the Fourier measurement, and let $\vy_2$ be the
point in $H^\#$ closest to $\vy_1$.  Both to find $\vy_2$ and then later
to bound statistical error, we can use a counterpart to \Lem{l:unoise}
with the probability $3/4$ replaced by a very high probability (and with
$s$ replaced by $g$ in this section).

Calculating $\vy_2$ from $\vy_1$ amounts to CVP as follows.  We can choose
a lift $\tvy_1 \in \R^k$ of $\vy_1$, and then equivalently find a closest
point $\tvy_2 \in H^\bullet$.  We can find the closest point $\vz \in
H^\perp$ with linear algebra, and then the closest point to $\tvy_1-\vz
\in H_\R$ in the lattice $H^\circ = H_\R \cap H^\bullet$ is $\tvy_2-\vz$.
If the Gaussian width $G$ is large enough and $Q \ge G^2$ (and given a
suitable counterpart to \Lem{l:unoise}), we can use the LLL algorithm.

The posterior state in the qubit register is in general a mixed state
$\rho_{\vy_1}$, because the output register of $U_f$ was discarded
rather than measured.  Again if $G$ is large enough and $Q \ge
G^2$, then $\rho_{\vy_1}$ is a good approximation to the phase vector
$\ket{\phi_{\vy_2}}$.  We can then replace one by the other when analyzing
the algorithms in \Sec{ss:sieve}.  The distribution of $\vy_2 \in H^\#$
is also close enough to uniform that we can assume a uniform distribution
in this analysis.

The difference between the actual measured state $\rho_{\vy_1}$ and an
ideal phase vector state $\ket{\phi_{\vy}}$ can be described in three steps.

First, we can replace the state $\ket{\psi_{GC}}$ with the state
$\ket{\psi_G}$, and replace the Fourier operator on $(\Z/Q)^k$ with the
Fourier operator
\[ F:\ell^2(\Z^k) \to L^2((\R/\Z)^k). \]
The measured Fourier mode is now $\vy_3 \in (\R/\Z)^k$, the posterior qubit
state is a slightly different mixed state $\sigma_{\vy_3}$, and we
can let $\vy_4 \in H^\#$ be the closest point to $\vy_3$.  A calculation
similar to the proof of \Lem{l:pffpgc} yields the trace distance relation
\[ d(\rho_{\vy_1},\sigma_{\vy_3}) = O(g^2). \]
Moreover, by a Hessian calculation similar to the proof of \Lem{l:yquant}
and \equ{e:u1u2noise}, we obtain:
\begin{align*}
d(\vy_2,\vy_4) &= O(\sqrt{k}g^2) \\
d(\ket{\phi_{\vy_2}},\ket{\phi_{\vy_4}}) &= O(\sqrt{k}g^2\norm{\vs}_2).
\end{align*}
Finally, $\vy_4$ is exactly uniformly random on $H^\#$.

Second, if we measure the value of $U_f$ instead of discarding it, then
before taking the Fourier transform the full input register has a state
\begin{multline*}
\ket{\psi_{\vv,\vs}} = \frac{1}{\sqrt{2\braket{\psi_G|\psi_G}}}
    \biggl(\sum_{\vx \in H+\vv} \exp(-\pi g^2 \norm{\vx}_2^2) \ket{\vx,0} \\
    + \sum_{\vx \in H+\vv+\vs} \exp(-\pi g^2 \norm{\vx}_2^2)
    \ket{\vx,1}\biggr).
\end{multline*}
This state is similar to \eqref{e:gcos}, except that here we normalize
each $\ket{\psi_{\vv,\vs}}$ using the total normalization of all terms in
$\ket{\psi_G}$.  These two sums generally differ in their total norm and
their Fourier spectra.  Hence, after the Fourier measurement of $\vy_3$
the coefficients of $\ket{0}$ and $\ket{1}$ in the posterior qubit state
typically have unequal amplitudes.  The error from this discrepancy is
$O(g\norm{\vs}_2)$, ultimately because the Gaussian function $\exp(-\pi
g^2 \norm{\vx}_2^2)$ has Lipschitz constant $O(g)$.

Third, when $\vy_3$ is replaced by the closest point $\vy_4 \in H^\#$,
then $\ket{\phi_{\vy_3}}$ changes to $\ket{\phi_{\vy_4}}$.  The difference
$\vu = \vy_4-\vy_3$ has the same behavior as the noise term $\vu_2$
in \eqref{e:u1u2noise}, and the distance between $\ket{\phi_{\vy_1}}$
and $\ket{\phi_{\vy}}$ is $O(\vu \cdot \vs)$.  On average this is also
$O(g\norm{\vs}_2)$.  If $G \ge \sqrt{k}$, then the average distance between
$\rho_{\vy_1}$ and $\ket{\phi_{\vy}}$ is $O(g\norm{\vs}_2)$.

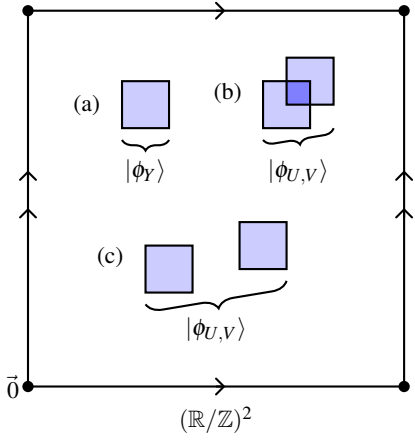
\begin{figure}
\begin{tikzpicture}[thick,scale=1.25]
\begin{scope}[shift={(1.25,1)}]
\draw[fill=lightblue] (0,0) rectangle ++(.5,.5);
\draw[fill=lightblue] (1,.25) rectangle ++(.5,.5);
\draw[decoration={brace,mirror,amplitude=1.5ex},decorate]
    (0,-.125) -- node[below=1.5ex] {$\ket{\phi_{U,V}}$} ++(1.5,.25);
\draw (-.375,.375) node {(c)};
\end{scope}
\begin{scope}[shift={(2.5,2.75)}]
\fill[lightblue] (0,0) rectangle ++(.5,.5);
\fill[lightblue] (.25,.25) rectangle ++(.5,.5);
\fill[halfblue] (.25,.25) rectangle ++(.25,.25);
\draw (0,0) rectangle ++(.5,.5);
\draw (.25,.25) rectangle ++(.5,.5);
\draw[decoration={brace,mirror,amplitude=1ex},decorate]
    (0,-.125) -- node[below=1.5ex] {$\ket{\phi_{U,V}}$} ++(.75,.125);
\draw (-.375,.375) node {(b)};
\end{scope}
\begin{scope}[shift={(1,2.75)}]
\draw[black,fill=lightblue] (0,0) rectangle ++(.5,.5);
\draw[decoration={brace,mirror,amplitude=1ex},decorate]
    (0,-.125) -- node[below=1ex] {$\ket{\phi_Y}$} ++(.5,0);
\draw (-.375,.25) node {(a)};
\end{scope}
\draw (0,0) rectangle (4,4);
\draw[-angle 90] (2.0,0) -- (2.1,0); \draw[-angle 90] (2.0,4) -- (2.1,4);
\draw[-angle 90] (0,1.8) -- (0,1.9); \draw[-angle 90] (0,2.2) -- (0,2.3);
\draw[-angle 90] (4,1.8) -- (4,1.9); \draw[-angle 90] (4,2.2) -- (4,2.3);
\fill (0,0) circle (.06) node[left] {$\vec0$}; \fill (4,0) circle (.06);
\fill (0,4) circle (.06); \fill (4,4) circle (.06);
\draw (2,0) node[below=1ex] {$(\R/\Z)^2$};
\end{tikzpicture}
\caption{Three possibilities for a collimated phase vector, with (a)
    one spot, (b) two overlapping spots, or (c) two disjoint spots.}
\label{f:spots}
\end{figure}

If $G \ge \sqrt{k}$, then the total of all three sources of error is
$O(g\norm{\vs}_2)$. Using the bound \eqref{e:vsbound},
\[ \norm{\vs}_2 \le \sqrt{k}\norm{\vs}_\infty \le \sqrt{k}2^t. \]
The algorithms in \Sec{ss:sieve} will need to create
$2^{O(\sqrt{n})}$ phase qubits (where $\norm{\vs}_\bit \le n$
and $\norm{M}_\bit \le h$), so it suffices to let
\begin{eq}{e:gqbound} Q \ge G^2 \qquad G =
    2^{O(\sqrt{n})}\norm{\vs}_2 = 2^{O(\sqrt{n}+t)} = 2^{O(n)}. \end{eq}
We can choose $G$ at this scale so that each sampled state $\rho_{\vy_1}$
is close enough to $\ket{\phi_{\vy}}$ that the noise makes no difference
for the algorithm.  The same estimates also imply that we can take $\vy_2
\in H^\#$ to be uniformly random, since this is exactly true of $\vy_4$.
Finally, \eqref{e:gqbound} is generous enough to use the LLL algorithm to
find $\vy_2$.

\subsection{Collimation}
\label{ss:collim}

Let
\[ R = [-r,r]^k + \vv \subseteq (\R/\Z)^k \]
be a cubical region centered at some vector $\vv \in (\R/\Z)^k$.  Then a
phase vector $\ket{\phi_Y}$ is \emph{$R$-collimated} means that every
multiplier $\vy_a \in Y$ lies in $R$.  We also say that $\ket{\phi_Y}$ is
\emph{single-spot collimated} (or a \emph{single-spot phase vector} and that
$R$ is a \emph{collimation window}.  As a variation, if $\ket{\phi_U}$ is
$R$-collimated and $\ket{\phi_V}$ is $S$-collimated for two cubical regions
\[ R = [-r,r]^k + \vv \subseteq (\R/\Z)^k \qquad
    S = [-r,r]^k + \vw \subseteq (\R/\Z)^k \]
then the direct sum
\begin{eq}{e:2spot} \ket{\phi_{U,V}} = \ket{\phi_U} \oplus \ket{\phi_V}
\end{eq}
is \emph{$(R,S)$-collimated} or \emph{double-spot collimated}. The cubes
$R$ and $S$ in a double-spot collimation might or might not intersect,
as in \Fig{f:spots}. Recall also that the state $\ket{\phi_Y}$ is only
meaningful up to a global phase, and likewise the multiplier list $Y$ is only
considered up to a constant $\vz \in (\R/\Z)^k$.  Thus if $\ket{\phi_Y}$
is $R$-collimated, we are free to translate the multiplier list $Y$ and
the collimation window $R$ in tandem by $\vz$.

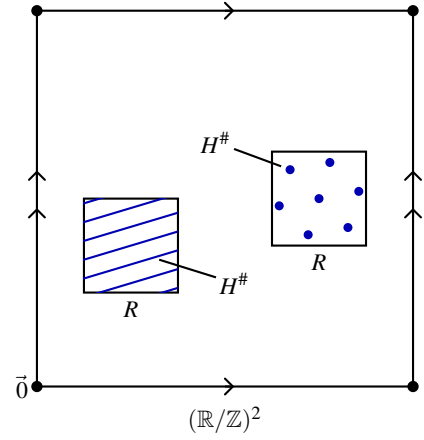
\begin{figure}
\begin{tikzpicture}[thick,scale=1.25]
\draw (0,0) rectangle (4,4);
\begin{scope}
\draw[clip] (.5,1) rectangle ++(1,1);
\foreach \y in {3,4,...,9} { \draw[darkblue] (0,{\y*.2}) -- ++(4,1.2); }
\end{scope}
\draw (2.1,1.1) node[inner sep=1pt] (Hsh1) {$H^\#$};
\draw (Hsh1) -- (1.3,1.35);
\draw (1,1) node[below] {$R$};
\draw (2.5,1.5) rectangle ++(1,1);
\foreach \n in {15,18,23,26,29,34,37} {
    \fill[darkblue] ({\n/26+2},{Mod(\n*19/13,4)}) circle (.048); }
\draw (1.9,2.6) node[inner sep=1pt] (Hsh2) {$H^\#$};
\draw ({18/26+2},{Mod(18*19/13,4)}) node[inner sep=3pt] (dot) {};
\draw (Hsh2) -- (dot);
\draw (3,1.5) node[below] {$R$};
\draw[-angle 90] (2.0,0) -- (2.1,0); \draw[-angle 90] (2.0,4) -- (2.1,4);
\draw[-angle 90] (0,1.8) -- (0,1.9); \draw[-angle 90] (0,2.2) -- (0,2.3);
\draw[-angle 90] (4,1.8) -- (4,1.9); \draw[-angle 90] (4,2.2) -- (4,2.3);
\fill (0,0) circle (.06) node[left] {$\vec0$}; \fill (4,0) circle (.06);
\fill (0,4) circle (.06); \fill (4,4) circle (.06);
\draw (2,0) node[below=1ex] {$(\R/\Z)^2$};
\end{tikzpicture}
\caption{An intersection of a collimation window $R$ with two possibilities
    for $H^\#$.}
\label{f:rhwindow}
\end{figure}

If $\ket{\phi_Y}$ is an unrestricted phase vector, then we will consider
it to be trivially single-spot collimated with $R = (\R/\Z)^k$; we will
also write $R = [-1/2,1/2]^k$.  We will likewise consider a direct sum
\eqref{e:2spot} of two unrestricted phase vectors $\ket{\phi_U}$ and
$\ket{\phi_V}$ to be trivially double-spot collimated.

\begin{remark} We use the term ``collimation'' by analogy with the concept
of collimated light in optics.  In earlier work, the author \cite{K:dhsp2}
defined collimation for phase multipliers $y_a \in \Z/(2^n)$ by requiring
that low bits vanish.  Peikert \cite{Peikert:csidh} introduced a variation
in which the multipliers $y_a \in \Z/N$ lie in an interval instead.
Our definition is a higher-dimensional generalization of Peikert's version
of collimation.  As Peikert noted, there are other useful collimation
windows.  In general, a sieve algorithm AHShP in an abelian group $A$
can make use of any subgroup or approximate subgroup of the Pontryagin
dual $\hA$ as a collimation window.
\end{remark}

Suppose that
\[ R = [-r,r]^k+\vv \subseteq (\R/\Z)^k \]
is a collimation window of a phase vector $\ket{\phi_Y}$ as defined above.
Since by definition every phase multiplier $\vy$ lies in $H^\# \subseteq
(\R/\Z)^k$, we can also say that the collimation window $\ket{\phi_Y}$
is the intersection
\[ R_H = ([-r,r]^k+\vv) \cap H^\#. \]
The region $R_H$ can be a complicated approximate subgroup of $H^\#$,
as indicated in \Fig{f:rhwindow}. It can also be a single point or the
empty set.

\begin{figure}
\begin{tikzpicture}[thick,scale=1.5]
\draw (0,0) rectangle (4,4);
\begin{scope}[shift={(.5,2.75)}]
\draw (-.25,.75) node {(a)};
\fill[lightblue] (0,0) rectangle ++(1,1);
\fill[darkred] (.25,.5) rectangle ++(.25,.25);
\foreach \t in {0,.25,...,1} {
    \draw (0,\t) -- ++(1,0); \draw (\t,0) -- ++(0,1); }
\draw (.375,.625) -- ++(1,0) node[right] {$\ket{\phi_Y}$};
\draw[decoration={brace,mirror,amplitude=1.5ex},decorate] (0,-.15)
    -- node[below=1.5ex] {$\ket{\phi_U} \tensor \ket{\phi_V}$} ++ (1,0);
\end{scope}
\begin{scope}[shift={(.5,.75)}]
\draw (-.25,.75) node {(b)};
\fill[lightblue] (0,0) rectangle ++(1,1) (2,.5) rectangle ++(1,1);
\fill[darkred] (.25,.5) rectangle ++(.25,.25) (2.25,1) rectangle ++(.25,.25);
\draw (.375,.625) -- ++(2,.5);
\draw (1.5,.675) node {$\ket{\phi_{Y,Z}}$};
\foreach \t in {0,.25,...,1} {
    \draw (0,\t) -- ++(1,0); \draw (\t,0) -- ++(0,1);
    \draw (2,{.5+\t}) -- ++(1,0); \draw ({2+\t},.5) -- ++(0,1); }
\draw[decoration={brace,mirror,amplitude=1.5ex},decorate] (0,-.2)
    -- node[below=1.5ex] {$\ket{\phi_{U,V}} \tensor \ket{\phi_W}$} ++(3,.5);
\end{scope}
\draw[-angle 90] (2.0,0) -- (2.1,0); \draw[-angle 90] (2.0,4) -- (2.1,4);
\draw[-angle 90] (0,1.8) -- (0,1.9); \draw[-angle 90] (0,2.2) -- (0,2.3);
\draw[-angle 90] (4,1.8) -- (4,1.9); \draw[-angle 90] (4,2.2) -- (4,2.3);
\fill (0,0) circle (.05) node[left] {$\vec0$}; \fill (4,0) circle (.05);
\fill (0,4) circle (.05); \fill (4,4) circle (.05);
\draw (2,0) node[below=1ex] {$(\R/\Z)^2$};
\end{tikzpicture}
\caption{Two examples of a collimation measurement of a tensor product of
    two phase vectors to produce a new phase vector with smaller support.
    In (a), a measurement of $\ket{\phi_U} \tensor \ket{\phi_V}$ produces
    $\ket{\phi_Y}$.  In (b), a measurement of $\ket{\phi_{U,V}} \tensor
    \ket{\phi_W}$ produces $\ket{\phi_{Y,Z}}$.}
\label{f:collim}
\end{figure}
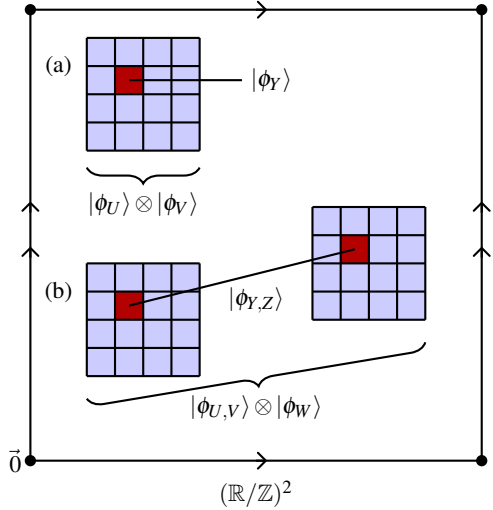

Finally in this subsection, we describe the sieve step that converts
two collimated phase vectors to a more sharply collimated phase vector.
Suppose first that $\ket{\phi_U}$ and $\ket{\phi_V}$ are both single-spot
collimated phase vectors of comparable but possibly unequal lengths,
and with the same collimation window $R = [-r,r]^k$.  Then their tensor
product $\ket{\phi_U} \tensor \ket{\phi_V}$ has collimation window $R+R =
[-2r,2r]^k$.  We can tile $R+R$ by $2^{km+k}$ subcubes that are each
translates of $[-2^{-m}r,2^{-m}r]^k$.  If $C$ is the set of centers of
these tiles, we obtain a function
\[ \vc:[0,\ell)_\Z {\times} [0,j)_\Z \to C, \]
where
\[ T = [-r2^{-m},r2^{-m}]^k + \vc(a,b)\]
is the tile that contains the phase multiplier $\vy_a+\vz_b$ of $\ket{\phi_U}
\tensor \ket{\phi_V}$.  If we use $\vc$ to measure $\ket{\phi_U} \tensor
\ket{\phi_V}$, then the posterior state $\ket{\phi_Y}$ is $T$-collimated.
\Fig{f:collim}(a) shows an example.

Second, suppose that $\ket{\phi_{U,V}}$ is double-spot $(R,S)$-collimated
with
\[ R = [-r,r]^k \qquad S = [-r,r]^k + \vy, \]
and that $\ket{\phi_W}$ is single-spot $R$-collimated.  Then we can
likewise tile the cubes $R+R$ and $R+S$ by subcubes that are translates
of $[-r2^{-m},r2^{-m}]^k$, and likewise measure $\ket{\phi_{U,V}} \tensor
\ket{\phi_W}$ to obtain a more sharply collimated double-spot phase vector
$\ket{\phi_{Y,Z}}$.  In this case, we measure which pair of tiles contain
a multiplier $\vy_a + \vz_b$, without measuring which of the two spots it
lies in.  Since the two spots are kept separate in the direct sum structure
\eqref{e:2spot}, we can define such a measurement even if the cubes $R+R$
and $R+S$ overlap, and even if the paired tiles overlap.  \Fig{f:collim}(b)
shows an example in the simpler case with disjoint cubes.

\subsection{Final measurement}
\label{ss:measure}

In this subsection, we describe how to learn the hidden shift $\vs$ by
measuring a phase vector $\ket{\psi_A}$, where $A \subseteq H^\#$ is a
suitably chosen finite subgroup of $H^\#$.  The technique is inspired by
a similar method due to Peikert \cite{Peikert:csidh}.

We recall the phase vector formula \eqref{e:phase}, except with the extra
property that the phase multiplier list is a finite group:
\[ \ket{\phi_A} \propto \sum_{\va \in A}
    \exp(2\pi i \va \cdot \vs)\ket{\va}. \]
Since $A \subseteq H^\# \subseteq (\R/\Z)^k$, it follows that $H \subseteq
A^\# \subseteq \Z^k$, and the quotient $\Z^k/A^\# \cong \hA$ is the
Pontryagin dual of $A$ (\Sec{ss:dual}).  $A$ will be small enough that we
can calculate a description of it in polynomial time from the matrix $M$
of $H$.  Given this, we can measure $\ket{\phi_A}$ in the Fourier basis on
$A$ in $\BQP$ \cite{Kitaev:stab,HH:fourier}.   The value of this measurement
is then $\vs$ as an element of $\hA \cong \Z^k/A^\#$, which is a quotient
of $\Z^k/H$.  We want $A$ to be a good enough approximation to $H^\#$
that we can learn $\vs \in \Z^k/H$ from its residue in $\Z^k/A^\#$.

The isomorphisms \eqref{e:noncanon} imply that $H^\#$ is (non-canonically)
isomorphic to $(H^\#/H_1^\#) \times H_1^\#$, in other words that $H_1^\#$
is a complemented subgroup of $H^\#$.  Let $A_1$ be any such complement,
and let $A_2 \subseteq H_1^\#$ be the set of all $\vy \in H_1^\#$ such
that $2^t\vy = \vec0$, where $t$ comes from the bound \eqref{e:vsbound}.
Finally let $A = A_1 + A_2$.  We can efficiently compute cyclic
generators of $A$ using the Smith normal form of the generator matrix $M$
of $H$.

A vector $\vx \in \Z^k$ lies in $A_2^\#$ if and only if $\vx = \vx_1
+ \vx_2$, where $\vx_1 \in H_1$ and $2^t\vert\vx_2$.  If $\vx \in A^\#
\subseteq A_2^\#$, it therefore also has this form.  Since $\norm{\vs}_\infty
\le 2^{t-1}$ by hypothesis, the shift $\vs \in \Z^k/H$ is then uniquely
determined by its projection to $\Z^k/A^\#$.  Learning $\vs \in \Z^k/H$
amounts to the close vector problem for the quotient group $A^\#/H_1$
as a lattice in the vector space $\R^k/H_\R$.  If we assume a moderately
larger value $e = t+O(k)$, we can compute $\vs$ using the LLL algorithm.

We have constructed a finite group $A \subseteq H^\#$ such that the phase
vector $\ket{\phi_A}$ yields the hidden shift $\vs$.  The other task is
to produce the state $\ket{\phi_A}$.  In \Sec{ss:sieve}, we will describe
a sieve algorithm to produce a close approximation to the phase qubit
$\ket{\phi_{\vy}}$ for any specified $\vy \in H^\#$.  Here we describe
how to combine these phase qubits to make $\ket{\phi_A}$.

We can first decompose $A$ as an internal direct sum of cyclic factors:
\[ A = D_1 \oplus D_2 \oplus \cdots \oplus D_\ell
    \qquad D_j \cong \Z/d_j. \]
This decomposition can be computed from the Smith normal form of $M$,
and it implies a tensor decomposition of $\ket{\phi_A}$
into simpler phase vectors:
\[ \ket{\phi_A} = \ket{\phi_{D_1}} \tensor \ket{\phi_{D_2}} \tensor \cdots
    \tensor \ket{\phi_{D_\ell}}. \]
To create each tensor factor, let $D = D_j$ be generated by $\vy \in H^\#$,
so that
\[ D = \{0,\vy,2\vy,\ldots,(d-1)\vy\} \cong \Z/d. \]
We can create $\ket{\phi_D}$ by first making the phase vector
\[ \ket{\phi_Y} =  (\vec0,\vy,2\vy,\ldots,(2^e-1)\vy), \]
where $e = \ceil{\log_2(d)}$ is the first power of 2 that is
at least $d$.  We can create $\ket{\phi_Y}$ as the tensor product 
\[ \ket{\phi_Y} = \ket{\phi_{\vy}} \otimes \ket{\phi_{2\vy}}
    \otimes \cdots \otimes \ket{\phi_{2^{e-1}\vy}}. \]
If $d = 2^e$, then $\ket{\phi_D} = \ket{\phi_Y}$ and we are done.  Otherwise,
we can obtain $\ket{\phi_D}$ from $\ket{\phi_Y}$ as the posterior state
of a boolean measurement that succeeds with probability $d/2^e > 1/2$.
If the measurement fails, then we can discard the posterior and make
another attempt to create $\ket{\phi_D}$.  (We can also recycle the
disappointing posterior state if $2^e \ge d+2$, but the noise estimates
for the performance of \Alg{a:shfinal} would be more complicated.)

We conclude this subsection with an estimate of how much we need to
collimate an approximation $\ket{\phi_{(\vv,\vw)}}$ to each qubit
$\ket{\phi_{\vy}}$ to yield an adequate approximation to the Fourier
state $\ket{\phi_A}$.  We first estimate how many qubits we will need.
We use at most $2\log_2(\abs{A_1})$ qubits for the $A_1$ factor, and
$(k-\ell)t \le kt$ qubits for the $A_2$ factor. To estimate $\abs{A_1}$,
we combine the estimates \eqref{e:h1bound} and \eqref{e:dbound} for the
Gram determinant \eqref{e:gram}, to obtain
\[ |A_1| = [H^\#:H_1^\#] = [H_1:H] < 2^h \qquad h = \norm{M}_\bit. \]
Meanwhile $A_2$ is the group of $2^t$-torsion points in $H_1^\#$, which is
a $(k-\ell)$-torus, so this factor requires $(k-\ell)t \le kt = n$ qubits.
In total, we need at most $n+2h$ collimated qubits to make $\ket{\phi_A}$.

As in \eqref{e:qphase}, we have 
\begin{align*}
\ket{\phi_{(\vv,\vw)}} &= \frac{\exp(2\pi i \vv \cdot \vs)\ket{0}
    + \exp(2\pi i \vw \cdot \vs)\ket{1}}{\sqrt2} \\
\ket{\phi_{\vy}} &= \frac{\ket{0}
    + \exp(2\pi i \vy \cdot \vs)\ket{1}}{\sqrt2}.
\end{align*}
Moreover, $\ket{\phi_{(\vv,\vw)}}$ will be $(R,S)$-collimated with
\begin{align*}
\vv \in R &= [-r,r]^k \subseteq (\R/\Z)^k \\
    \vw \in S &= [-r,r]^k+\vy \subseteq (\R/\Z)^k.
\end{align*}
From this, we obtain the trace distance estimate
\begin{multline*}
\frac{\tr\bigl(\bigl\vert\;\ketbra{\psi_{\vv,\vw}}-
    \ketbra{\psi_{\vy}}\;\bigr\vert\bigr)}2 \le
    \abs{(\vw-\vv)\cdot\vs - \vy\cdot\vw} \\ \le
    \norm{\vw-\vv-\vy}_\infty\norm{\vs}_\infty \le kr2^{t-1}.
\end{multline*}
Since we need at most $n+2h$ of these qubits to make $\ket{\phi_A}$,
we need
\begin{eq}{e:rktbound} r^{-1} \ge (n+2h)k2^{t+1} \end{eq}
to get within trace distance $1/2$ of $\ket{\phi_A}$ and thus learn $\vs$
with probability at least $1/2$.

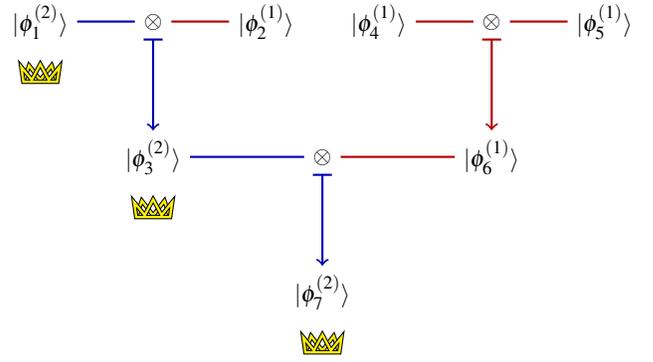
\begin{figure}
\begin{tikzpicture}[thick,xscale=1.25,yscale=.8]
\draw (-3,4.5) node (p1) {$\ket{\phi_1^{(2)}}$};
\draw (-.6,4.5) node (p2) {$\ket{\phi_2^{(1)}}$};
\draw (-1.8,4.5) node (t12) {$\otimes$};
\draw (-1.8,2.25) node (p3) {$\ket{\phi_3^{(2)}}$};
\draw (.6,4.5) node (p4) {$\ket{\phi_4^{(1)}}$};
\draw (3,4.5) node (p5) {$\ket{\phi_5^{(1)}}$};
\draw (1.8,4.5) node (t45) {$\otimes$};
\draw (1.8,2.25) node (p6) {$\ket{\phi_6^{(1)}}$};
\draw (0,2.25) node (t36) {$\otimes$};
\draw (0,0) node (p7) {$\ket{\phi_7^{(2)}}$};
\draw[darkblue] (p1) -- (t12) (p3) -- (t36);
\draw[darkred] (t12) -- (p2) (p4) -- (t45) -- (p5) (t36) -- (p6);
\draw[darkblue,|->] (t12) -- (p3);
\draw[darkred,|->] (t45) -- (p6);
\draw[darkblue,|->] (t36) -- (p7);
\foreach \x/\y in {-3/3.5,-1.8/1.25,0/-1} {
    \begin{scope}[shift={(\x,\y)},scale=.09,thin]
    \draw[double=yellow,double distance=.9pt] (-2,0) -- (-1.33,3) --
        (0,.3) -- (1.33,3) -- (2,0);
    \draw[double=yellow,double distance=.9pt] (-2,0) -- (-2.25,2.75) --
        (-1,.35) -- (0,3.25) -- (1,.35) -- (2.25,2.75) -- (2,0) -- cycle; 
    \end{scope} }
\end{tikzpicture}
\caption{A depth-first collimation sieve.  The phase vectors are created
    in the order of their subscripts; they superscript indicate whether
    they are single spot or double spot.  The double-spot phase vectors
    propagate as a royal lineage.}
\label{f:royal}
\end{figure}

\subsection{The sieve}
\label{ss:sieve}

In this subsection, we combine the ideas of Sections~\ref{ss:create},
\ref{ss:collim}, and \ref{ss:measure} into an algorithm for AHShP, which we
have divided into three algorithms.  Following the author's second algorithm
\cite{K:dhsp2}, the heart of the current algorithm is \Alg{a:shsieve}, a
multi-stage sieve based on the collimation step in \Sec{ss:collim}.  As in
\Sec{ss:create}, the sieve begins with \Alg{a:shhidden}, which creates
approximate phase qubits $\ket{\phi_{\vy}}$ with nearly uniformly random
values of $\vy \in H^\#$.  After tensoring a moderate number of these two
make longer phase vectors, the sieve eventually produces an approximate
qubit $\ket{\phi_{\vy}}$ with a prescribed (rather than random) $\vy
\in H^\#$.  As in \Sec{ss:measure}, after tensoring together a moderate
number of these qubits, \Alg{a:shfinal} determines the hidden shift $\vs$
with a Fourier measurement.

As in previous algorithms for DHSP or AHShP, if $\norm{\vs}_\bit = n$, the
sieve progressively collimates $\vy$ in $\Theta(\sqrt{n})$ stages to reach
a desired total precision of $n$ bits.  In the current $k$-dimensional
context, this means that each component of $\vy$ is collimated to $n/k$
bits of precision.  The phase vectors in the middle of the sieve have
length $2^{\Theta(\sqrt{n})}$, and the total work of the algorithm is also
$2^{\Theta(\sqrt{n})}$.  Moreover, as Regev realized \cite{Regev:dhsp},
a collimation sieve can be depth first rather than breadth first.

The current algorithm is a collimation sieve that can look roughly the
same for any periodicity $H \subseteq \Z^k$. As stated in \Thm{th:shift},
the time complexity is then nearly uniform in the choice of $H$.  The main
new technique is sieve steps that combine single-spot and double-spot phase
vectors to produce a double-spot phase vector at the end.  As explained in
\Sec{ss:collim}, a collimation step in the sieve combines a double-spot phase
vector and a single-spot phase vector to produce a double-spot phase vector,
or it combines two single-spot phase vectors to produce a single-spot phase
vector. As indicated in \Fig{f:royal}, the resulting sieve is asymmetrical.
The sieve is organized into a tree with a single path of double-spot phase
vectors; the other nodes are all single-spot phase vectors.

Here is the algorithm in three parts.

\begin{algorithm}{SH}[Phase qubit from hiding function] The input is a
dimension $k > 0$, a generator matrix $M$ for a lattice $H \subseteq \Z^k$,
a joint hiding function $f$ as in \eqref{e:combhide}, and positive integer
parameters $Q = 1/q$ and $G = 1/g$.  The output is an approximate qubit
phase vector $\ket{\phi_{\vy}}$, where $\vy \in H^\# \subseteq (\R/\Z)^k$
is approximately uniformly random.
\begin{enumerate}
\item Make the state $\ket{\psi_{GC}}$ as in step 1 of \Alg{a:abelian},
with $G = 1/g$ replacing $S = 1/s$ there. Tensor it with $\ket{+}$.
\item Let $U_f$ be the unitary dilation of the hiding function $f$
restricted to the discrete cube $\norm{\vx}_\infty < Q/2$.  Then evaluate:
\[ U_f(\ket{\psi_{GC}} \tensor \ket{+}) \propto \back\back
    \sum_{\substack{\vx \in \Z^k,\ b \in \{0,1\} \\ \norm{\vx}_\infty < Q/2}}
    \back\back \exp(-\pi g^2 \norm{\vx}_2^2) \ket{\vx,b,f_b(\vx)}. \]
Discard the output register to obtain a mixed state $\rho$ on the $(\vx,b)$
input register.
\item Using the embedding of the discrete interval $\abs{x} < Q/2$ into
$\Z/Q$, apply the Fourier operator $F_{(\Z/Q)^k}$ to the $\vx$ register
of the state $\rho$.  Then measure the Fourier mode $\vy_0 \in (\Z/Q)^k$,
and let
\[ \vy = q\vy_0 \in q(\Z/Q)^k \subseteq (\R/\Z)^k. \]
\item Return the posterior state, which approximates $\ket{\phi_{\vy}}$.
\end{enumerate}
\label{a:shhidden} \end{algorithm}

\begin{algorithm}{SC}[Collimation sieve] The input is three integers $m >
0$, $j \ge 0$, $p \in \{1,2\}$, and a target vector $\vy \in H^\#$,
in addition to the input to \Alg{a:shhidden}.  If $p=1$, the output is
an $R$-collimated phase vector $\ket{\phi_Y}$ of length $4^{km}
\le \ell < 4^{km+1}$, where
\[ R = [-r,r]^k \qquad r = 2^{-jm-1}. \]
If $p=2$ and $j < km$, the output is an $(R,S)$-collimated phase vector
$\ket{\phi_{Y,Z}}$, where also
\[ S = [-r,r]^k + \vy, \]
and where the length of each summand is $4^{km} \le \ell < 4^{km+1}$.
If $p=2$ and $j = km$, the output is a phase qubit that approximates
$\ket{\phi_{\vy}}$.
\begin{enumerate}
\item If $j=0$ and $p=1$, form $\ket{\phi_Y}$ as the tensor product of $2km$
phase qubits produced by \Alg{a:shhidden}, and return the result.
\item If $j=0$ and $p=2$, form 
\[ \ket{\phi_{Y,Z}} = \ket{\phi_Y} \oplus \ket{\phi_Z} \]
as the tensor product of $2km+1$ phase qubits produced by \Alg{a:shhidden},
where the $2 \cdot 4^{km}$ phase terms are split equally and randomly between
the two summands.  Return the result.
\item If $j > 0$ and $p=1$, do the following:
\begin{enumerate}
\item Recursively call \Alg{a:shsieve} twice with $(j',p') = (j-1,1)$
(and the same values of $m$ and $\vy$), to obtain two single-spot phase
vectors $\ket{\phi_U}$ and $\ket{\phi_V}$, both collimated at the scale $r'
= 2^{-(j-1)m-1}$.
\item Perform a collimation measurement as in \Fig{f:collim}(a) to obtain
a single-spot phase vector $\ket{\phi_Y}$ collimated at the scale $r =
2^{-jm-1}$.
\item If $|Y| < 4^{km}$, reject $\ket{\phi_Y}$ and go back to step 3(a).
\item If $|Y| \ge 4^{km+1}$ (as is usually
the case), partition $\ket{\phi_Y}$ into summands of nearly equal lengths
$\ell$ with $4^{km} \le \ell < 4^{km+1}$, and measure $\ket{\phi_Y}$
according to this partition to shorten it, and reassign $Y$
to call the shortened version $\ket{\phi_Y}$.
\item Return $\ket{\phi_Y}$.
\end{enumerate}
\item If $j > 0$ and $p=2$, do the following:
\begin{enumerate}
\item Recursively call \Alg{a:shsieve} with $(j',p') = (j-1,2)$ to obtain
a double-spot phase vector $\ket{\phi_{U,V}}$, and a second time with
$(j',p') = (j-1,1)$ to obtain a single-spot phase vector $\ket{\phi_W}$,
both collimated at the scale $r' = 2^{-(j-1)m-1}$.
\item Perform a collimation measurement as in \Fig{f:collim}(b) to obtain
a double-spot phase vector $\ket{\phi_{Y,Z}}$, collimated at the scale $r
= 2^{-jm-1}$.
\item If either $|Y| < 4^{km}$ or $|Z| < 4^{km}$, reject $\ket{\phi_{Y,Z}}$
and go back to step 4(a).
\item If either $|Y| \ge 4^{km+1}$ or $|Z| \ge 4^{km+1}$, apply the
shortening procedure in Step 3(d) to each summand $\ket{\phi_Y}$ and
$\ket{\phi_Z}$ separately, and reassign $Y$ and $Z$ to call the shortened
version $\ket{\phi_{Y,Z}}$.
\item If $j = km$, partition $\ket{\phi_{Y,Z}}$ into pairs of basis
states, one each in $\ket{\phi_Y}$ and $\ket{\phi_Z}$, plus unpaired
states in either $\ket{\phi_Y}$ or $\ket{\phi_Z}$, whichever is longer.
Measure $\ket{\phi_{Y,Z}}$ according to this partition.  If the posterior
state is a qubit, return it; it approximates $\ket{\phi_{\vy}}$.  If the
posterior state is unpaired, go back to Step 4(a).
\item If $j < km$, return $\ket{\phi_{Y,Z}}$.
\end{enumerate}
\end{enumerate}
\label{a:shsieve} \end{algorithm}

\begin{algorithm}{SF}[Final measurement] The input is an integer $t > 0$
such that the hidden shift $\vs$ satisfies $\norm{\vs}_\infty \le 2^{t-1}$,
in addition to the input to \Alg{a:shhidden}.  The output is a value $\vs
\in \Z^k/H$ which is the hidden shift with probability at least $1/2$.
\begin{enumerate}
\item Let $m \ge 2$ be the first integer such that
\begin{eq}{e:mbound} 2^{km^2} > k(n+2h)2^t. \end{eq}
\item As in \Sec{ss:collim}, construct the group
\[ A = A_1 + A_2 \subseteq H^\# \]
where $A_1$ is a complement of $H_1^\#$ in $H^\#$ and $A_2$ is the group
of $2^t$-torsion of points in $H_1^\#$.
\item Decompose $A$ into a product of cyclic factors.
For each cyclic factor $D \cong \Z/d$ of
$A$, do the following, as in \Sec{ss:collim}:
\begin{enumerate}
\item Pick a cyclic generator $\vy \in D \subseteq (\R/\Z)^n$.  Let $e =
\ceil{\log_2(d)}$ be the first power of 2 that is at least $d$.
\item For each $\ell$ with $0 \le \ell < e$, call \Alg{a:shsieve}
with $\vy' = 2^\ell\vy$, the given value of $m$, $j = km$,
and $p=2$, to produce an approximate version of $\ket{\phi_{\vy'}}$.
\item Perform a boolean measure the tensor product of the states
\[ \ket{\phi_Y} = \ket{\phi_{\vy}} \otimes \ket{\phi_{2\vy}}
    \otimes \cdots \otimes \ket{\phi_{2^{e-1}\vy}} \]
from step 3(b) in the computational basis with the posterior state
$\ket{\phi_D}$ if the measurement succeeds.  If the measurement fails,
go back to step 3(b).
\end{enumerate}
\item Once $\ket{\phi_D}$ is created for every cyclic factor $D = D_j$
of $A$, form
\[ \ket{\phi_A} = \ket{\phi_{D_1}} \tensor \ket{\phi_{D_2}} \tensor \cdots
    \tensor \ket{\phi_{D_\ell}}. \]
Perform a Fourier measurement on $\ket{\phi_A}$ to obtain $\vs \in \Z^k/H$,
and return this answer.
\end{enumerate}
\label{a:shfinal} \end{algorithm}

Here are some points that partly explain why the overall algorithm works.

Note that \Alg{a:shhidden} produces an approximation $\rho$ (in general
a mixed state if the hiding function value is left unmeasured) to
$\ket{\phi_{\vy}}$ with $\vy$ random.  Then \Alg{a:shsieve} produces
$\ket{\phi_{\vy'}}$ which only approximates $\ket{\phi_{\vy}}$, because
$\vy'$ approximates $\vy$.  Each algorithm produces approximations which
are then used in the next algorithm as if they were exact.  This is valid if
the total trace distance between every output $\rho$ and each corresponding
$\ketbra{\phi_y}$ is low enough to leave a good chance that the algorithms
will not notice the discrepancies.

If we know that \Alg{a:shhidden} is only called $2^{O(\sqrt{n})}$ times,
then \equ{e:gqbound} states that $G$ and $Q$ can be both $2^{O(n)}$
and large enough for accurate output.  Since \Alg{a:shhidden} runs in
quantum polynomial time in $\norm{G}_\bit$ and $\norm{Q}_\bit$, this
contributes an acceptable factor to the total running time, one bounded by
$2^{O(\sqrt{n})}$ if the query cost of the hiding function is also bounded
by $2^{O(\sqrt{n})}$.

\Alg{a:shsieve} produces a sufficiently accurate approximation to each target
$\ket{\phi_{\vy}}$ when \eqref{e:rktbound} holds, which is then satisfied
by \eqref{e:mbound}.  Meanwhile, the running time of \Alg{a:shsieve}
is $2^{O(km)}$, provided that the chance of rejection in steps 3(c) and
4(c) is bounded away from 1.  To bound the running time in terms of $n$,
we can combine \eqref{e:mbound} with the relation $n=kt$ and the theorem
hypothesis $n = \Omega((k+\log(h))^2)$ to obtain:
\[ k^2m^2 < k\bigl[(t+2)+\log(k)+\log(n+2h)\bigr] = O(n). \]
Thus $km = O(\sqrt{n})$ and the running time of \Alg{a:shsieve} is
$2^{O(\sqrt{n})}$.  Meanwhile, \Alg{a:shfinal} runs in quantum polynomial
time, outside of its function calls to the other algorithms.

The remaining question in the algorithm analysis is whether \Alg{a:shsieve}
usually passes steps 3(c) and 4(c).  It does on average, because the
phase vectors in the sieve are generously long in relation to the amount
of collimation.  \Ie, given that $|U|,|V|,|W| \ge 4^{km}$, the typical
values of $|Y|$ and $|Z|$ are at least
\[ \frac{4^{2km}}{2^{k(m+1)}} = 2^{3km-k} \ge 4^{k(m+1)}. \]
However, as indicated in \Fig{f:rhwindow} combined with \Fig{f:collim},
the phase multipliers in $\ket{\phi_U} \tensor \ket{\phi_V}$ (for example)
cannot be uniformly distributed in the partition of the cube $[-2r',2r']^k$,
because they are restricted to $H^\#$.  Even if $H^\# = (\R/\Z)^k$, we
cannot expect the phase multipliers in $\ket{\phi_U} \tensor \ket{\phi_V}$
to be uniformly distributed.  If the multipliers in $\ket{\phi_U}$
and $\ket{\phi_V}$ are both uniformly distributed in $[-r',r']^k$,
the convolved multipliers $U*V$ in $\ket{\phi_U} \tensor \ket{\phi_V}$
are not uniformly distributed in $[-2r',2r']^k$, because the convolution
of the uniform distribution on an interval with itself is not uniform;
the density is a tent function.

However, step 3(c) in \Alg{a:shsieve} is robust regardless of the
distribution on $\ket{\phi_U} \tensor \ket{\phi_V}$, while step 4(c) is
robust under a heuristic assumption.  If we think of each tile $[-r,r]^k+\vv$
as a ``household'' in which the phase multipliers ``reside'', then we can
apply the ``average household'' principle:  Given $N$ people distributed
arbitrarily among $M$ households, then a randomly chosen household has at
least $N/2M$ residents with probability at least $1/2$.

To show that double-spot collimation in step 4(c) usually succeeds,
we must show that the two summands $\ket{\phi_Y}$ and $\ket{\phi_Z}$ of
$\ket{\phi_{Y,Z}}$ are often long at the same time, while the average
household principle only shows that $\ket{\phi_{Y,Z}}$ is often long.
To address this, note that the intersection $H^\# \cap ([-2r',2r']^k+\vy)$
is a translate of $H^\# \cap [-2r',2r']^k$, because $\vy H^\#$ and $H^\#$
is a subgroup of $(\R/\Z)^k$.  If we assume that the multipliers of long
phase vectors in \Alg{a:shsieve} resemble independent samples in the
infinite-length limit, then the non-uniformity from the convolution in the
multiplier list $U*V$ of $\ket{\phi_U} \tensor \ket{\phi_W}$ likewise has
the same form in $\ket{\phi_V} \tensor \ket{\phi_W}$.  In other words,
both of these sources of non-uniformity in the measurement in step 4(c)
and \Fig{f:collim}(b) are the same in tandem in the two spots.  Given the
statistical assumption, the summands $\ket{\phi_Y}$ and $\ket{\phi_Z}$
will usually be close in length.

Finally, we point out that \Alg{a:shsieve} (which is the core of the
algorithms taken together) is not fully optimized.  We have presented it
in a simplified form to establish a
\[ 2^{O(\sqrt{n})} = O(C^{\sqrt{n}}) \]
time complexity bound.  It should be possible to optimize the constant $C$,
and after that look for smaller accelerations, with methods similar to those
explored by the author \cite{K:dhsp2} and Peikert \cite{Peikert:csidh}.
Among other things, the phase vectors in \Alg{a:shsieve} are longer than
necessary.  It is also probably not optimal to use the same value of $m$
at every stage in the sieve.  It might also be better to collimate the
phase vectors in one or just a few directions at each stage, rather than
all $k$ directions in parallel.

That said, full optimization of \Alg{a:shsieve} (or any style of collimation
sieve) depends on the specific query cost of the hiding function, and can
depend on specific properties of the periodicity lattice $H \subseteq
\Z^k$.  Moreover, if there is an algorithm that is much faster than
$2^{O(\sqrt{n})}$, then finding it would be a different type of breakthrough,
because of known reductions from lattice problems (with no hiding function)
to the abelian hidden shift problem \cite{Regev:quantum}.

\bibliography{books,fa,gr,gt,mg,nt,qp,me}

\end{document}